\documentclass[11pt]{article}
\def\submission{1} 

\usepackage[utf8]{inputenc}
\usepackage{setspace}
\usepackage{fullpage}
\usepackage{graphicx}
\usepackage{bbm}
\usepackage{colortbl}
\graphicspath{ {./figures/} }
\usepackage{subcaption}
\usepackage{amsmath,amsfonts,amssymb,amsthm}
\usepackage{array}
\usepackage{multirow}
\usepackage{xcolor}
\usepackage[colorlinks=true,
            citecolor=blue]{hyperref}
\usepackage{fullpage}
\usepackage{algorithm}
\usepackage{makecell}
\usepackage[noend]{algpseudocode}

\usepackage{framed}

\usepackage{bbm}
\usepackage{pgfplots}
\pgfplotsset{compat=1.16}
\usepackage{mathtools}
\usepackage{float}
\usepackage{indentfirst}
\usepackage{tikz}
\usepackage[capitalize]{cleveref}
\usepackage{accents}
\usepackage{hhline}
\usetikzlibrary{external, matrix, positioning, patterns, arrows.meta, decorations.markings, decorations.pathmorphing, calc, shapes.misc}
\tikzset{cross/.style={cross out, draw=black, minimum size=2*(#1-\pgflinewidth), inner sep=0pt, outer sep=0pt},
cross/.default={1pt}}
\allowdisplaybreaks
\newtheorem{theorem}{Theorem}[section]
\newtheorem{corollary}{Corollary}[section]
\newtheorem{lemma}{Lemma}[section]

\newtheorem{fact}{Fact}[section]
\newtheorem*{prop*}{Proposition}
\newtheorem*{lemma*}{Lemma}
\newtheorem{conjecture}{Conjecture}[section]
\newtheorem*{definition*}{Definition}
\newtheorem*{theorem*}{Theorem}

\theoremstyle{definition}
\newtheorem{definition}{Definition}[section]
\newtheorem{example}{Example}[section]
\theoremstyle{remark}
\newtheorem{remark}{Remark}[section]
\numberwithin{equation}{section}
\usepackage{xspace}
\crefname{prop}{Proposition}{Propositions}

\makeatletter
\AddToHook{cmd/appendix/before}{\def\cref@section@alias{appendix}\def\cref@subsection@alias{appendix}}
\makeatother

\newcommand{\supp}{\operatorname{supp}}
\newcommand{\rank}{\operatorname{rank}}

\newcommand{\tr}{\operatorname{Tr}}

\newcommand{\etal}{\emph{et al.\@}}

\newcommand{\accept}{\textsf{accept}}
\newcommand{\reject}{\textsf{reject}}

\newcommand{\Var}{\mathrm{Var}}

\newcommand{\vecmap}{\mathrm{vec}}

\newcommand{\Cov}{\mathrm{Cov}}
\newcommand{\cyc}{\mathrm{cyc}}

\newcommand{\QChi}[2]{\mathrm{D}_{\chi^2}\left(#1 \| #2\right)}
\newcommand{\QChiAlpha}[3]{\mathrm{D}_{\chi^2_{#1}}\left(#2 \| #3\right)}
\newcommand{\Inner}[2]{\langle #1, #2 \rangle}
\newcommand{\InnerNamed}[3]{\langle #1, #2 \rangle_{#3}}

\newcolumntype{C}{>{\centering\arraybackslash}p{1.3cm}}

\newcommand{\eqdef}{\coloneqq}

\newcommand{\eps}{\varepsilon}

\usepackage[backend=biber,style=alphabetic,maxbibnames=99,maxalphanames=99,url=false]{biblatex}
\providecommand{\email}[1]{\href{mailto:#1}{\nolinkurl{#1}\xspace}}

\title{Distributed Property Testing with (Quantum) Carrier Pigeons: Tight Bounds on State Certification}

\ifnum\submission=1 
\author{Kenny Chen\thanks{The University of Sydney.
Email: \email{kche5493@uni.sydney.edu.au}}}
\else 
\author{(Authors withheld)}
\fi
\date{June 2026}

\begin{document}
\maketitle 
\begin{abstract}
    Recently, Doosti \etal~introduced the problem of distributed quantum state verification, where $m$ distributed nodes are given a copy of an unknown state $\rho$, and can send limited one way communication to a central node, who has a complete description of a known state $\sigma$. They ask how many distributed nodes $m$ are required, before the central node can succeed at distinguishing whether $\rho=\sigma$ or $\|\rho-\sigma\|_1\geq\eps$ with high probability. In the setting where only quantum communication is allowed, Doosti \etal~exhibit conditional lower bounds in both the public and private-coin settings, and a matching upper bound in the public-coin setting. We extend these results, and show unconditional lower bounds for when both classical and quantum communication are permitted. We show the public-coin lower bound is tight by giving an algorithm with a matching upper bound. We also show an almost tight upper bound in the private-coin setting when only quantum communication is permitted.
\end{abstract}
\section{Introduction}
This paper is concerned with the task of (quantum) state certification: that is, the task of checking whether an unknown state $\rho$ is identical to a known state $\sigma$, or whether it is ``far'' from it. Formally, it is defined as follows:
\begin{definition}[Quantum State Certification]\label{def:quantum_state_certification}
    Given $\eps,\delta\in(0,1]$, a complete classical description of a quantum state $\sigma$, an algorithm is said to \emph{$(\eps,\delta)$-certify $\sigma$} if, given multiple copies of an unknown state $\rho$, it satisfies the following:
    \begin{enumerate}
        \item If $\rho=\sigma$, the algorithm outputs ``\accept'' with probability at least $1-\delta$;
        \item If $\|\rho-\sigma\|_1\geq\eps$, the algorithm outputs ``\reject'' with probability at least $1-\delta$.
    \end{enumerate}
    When $\delta$ is set to $1/3$, we simply say that the algorithm $\eps$-certifies $\sigma$. The number of copies of the unknown state $\rho$ (in the worst case over $\rho,\sigma$) is said to be the \emph{sample complexity} of the algorithm.
\end{definition}
The distance measure above is the trace distance, and while the computational complexity of the algorithm is important in its own right, the main focus is typically the sample complexity of the task as a function of $\eps,\delta$ and the dimension $d$ (defined as the minimum sample complexity of any algorithm for $(\eps,\delta)$-certification). This quantity is indeed of particular importance in the quantum setting, since, due to the no cloning theorem, obtaining a large number of a state $\rho$ is generally very difficult, so algorithms should aim to be very efficient in the number of copies required. This problem has been very well studied in a variety of settings, with tight bounds known, see \cite{DBLP:conf/stoc/ODonnellW15} and \cite{DBLP:conf/stoc/BadescuO019}, establishing that this task needs $\Theta\left(\frac{d}{\eps^2}\right)$ copies of $\rho$. However, these settings assume the tester themselves have access to the unknown state $\rho$. In practice, the verifier, who has $\sigma$, may be physically separated from the party, or parties, that have $\rho$. In this sense, it is not possible for the certifier themselves to perform operations on $\rho$, and they need to rely on other parties to do that processing for them. The results are then sent to the central node by the other parties, which then allows them to perform the certification. There is also nuance in this communication though, being that specifically quantum communication is difficult. Indeed, states are delicate and easily disturbed, and our quantum carrier pigeons may need to travel large distances, so we ought not burden them with more communications than is necessary. This raises the main question of this paper: 

\begin{framed}\itshape
\noindent Given several parties, each holding one copy of $\rho$ and located away from the main certifier, who has $\sigma$, with limited information being able to be transmitted to the central certifier, how many copies of $\rho$ are needed for them to perform quantum state certification successfully?
\end{framed}

Aiming to formalize this type of questions, Doosti \etal~\cite{doosti2026distributedquantumpropertytesting} recently introduced the setting of \emph{distributed quantum property testing}. Here copies of an unknown quantum state are given to multiple distributed nodes, all of which can process the quantum state, then communicate to a central referee node whose task is, after obtaining the information from the distributed nodes, to run a learning or testing algorithm to determine some property of the unknown quantum state. 
\begin{definition}[Distributed Quantum Inference]\label{def:distributed_quantum_inference}
    The $(n_c,n_q,R)$\footnote{Doosti \etal~actually define an even more general model where entanglement is allowed between distributed nodes, with an extra parameter $E$ quantifying the number of entangled qubits shared among distributed nodes. However, we never deal with entanglement in this paper, and hence omit it in this definition.} model of distributed quantum inference is the model where no communication is allowed between distributed nodes, one-way communication is allowed from distributed nodes to the central node, and:
    \begin{enumerate}
        \item Each of the distributed nodes may communicate at most $n_c$ classical bits and $n_q$ qubits to the central node.
        \item $R\in\{\text{public, private}\}$ denotes whether all nodes only have access to a private source of randomness (private-coin protocols), or if all nodes have access to a shared public source of randomness (public-coin protocols).
    \end{enumerate}
The resource we seek to minimize is the number of distributed nodes $m$ to solve the prespecified task at hand. It is worth discussing here the parameter regimes of $n_q$ and $n_c$ that make this problem interesting and nontrivial. We paraphrase this discussion from that of Doosti \etal~\cite{doosti2026distributedquantumpropertytesting}:
\begin{itemize}
    \item If $n_q\geq n$, then each distributed node can simply send the entire quantum state to the central node. Then, this is equivalent to the regular problem of quantum state certification, which is fully characterized by Badescu \etal~\cite{DBLP:conf/stoc/BadescuO019}.
    \item If $n_q=0$ but $n_c\geq n$, each distributed node can perform a single copy measurement, and send it to the classical node to postprocess. Then, this is equivalent to state certification with nonadaptive single-copy measurements, which is fully characterized by Chen \etal~\cite{DBLP:conf/focs/Chen0HL22}
    \item If $n_q=0$ and $n_c<n$, this setting is equivalent to the above single copy measurement setting, with additional communication restrictions. Namely, the measurement operators can only have $2^{n_c}$ outcomes. This setting is also fully characterized by Liu \etal~\cite{DBLP:journals/corr/abs-2408-17439}.
\end{itemize}
\end{definition}
Hence, the interesting parameter regime is the one where $n_c+n_q<n$, which is the one we focus on. More broadly, this problem is the quantum analogue of classical distributed property testing, and the quantum generalization of the framework of Acharya \etal~\cite{DBLP:journals/tit/AcharyaCLST22}.
This model, quite natural to consider in view of the increased interest in distributed quantum computing \cite{DBLP:journals/cn/CaleffiAFIMC24}, aligns with the motivation outlined above: and indeed, the flagship problem Doosti \etal~focused on to illustrate this distributed model is that of {quantum state certification}, as introduced above (and which itself is the quantum generalization of distributed \emph{identity testing} for classical probability distributions). 
Translating state certification into the distributed quantum inference setting, the $m$ distributed nodes each will be given a copy of the unknown state $\rho$, whilst the central referee node will be given a complete classical description of the state $\sigma$. Then, the question is what the bounds on $m$ (number of nodes, and, equivalently, sample complexity) are for protocols for this task\footnote{In what follows, and consistent with the literature on distributed inference, we will interchangeably use the words \emph{algorithms} and \emph{protocols}.} such that the central node can succeed at distinguishing the two cases.\medskip

In this paper, we make significant progress towards resolving this question, essentially settling it for all but one setting. Before stating our results, we summarize the prior work of Doosti \etal\ First, in the public-coin setting, where no classical communication is allowed, they establish the following upper bound:\footnote{We hereafter focus on $\eps$-certification and state all results for fixed $\delta=1/3$, as upper bounds for $(\eps,\delta)$-certification then immediately follow by a standard majority vote, at the cost of a $O(\log(1/\delta))$ factor in the sample complexity.}
\begin{theorem}[{\cite[Theorem 1.3]{doosti2026distributedquantumpropertytesting}}]\label{thm:doosti_public_upper_bound}
    Fix any $1\leq n_q\leq \log d$. There is a protocol for $\eps$-certification of $d$-dimensional qudit states in the $(0,n_q,\text{public})$ model with sample complexity $O\left(\frac{d^2}{2^{n_q}\eps^2}\right)$.
\end{theorem}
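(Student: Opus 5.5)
The plan is to reduce to centralized state certification on a smaller space via a shared random projection. Set $k=2^{n_q}$. Using public coins, all nodes sample a Haar-random unitary $U$ on $\mathbb{C}^d$. Partition the computational basis into $d/k$ blocks of size $k$, with $d/k$ assumed an integer after padding. Each node measures $U\rho U^\dagger$ with the projective measurement $\{P_j\}$ onto these blocks. Since $\rho$ is unknown, the node cannot condition on the outcome in a useful way, so we simplify: the public coin also selects a single block $j$. Each node applies the two-outcome instrument $\{P_j,\,I-P_j\}$ and sends the post-measurement $k$-dimensional state on success. On failure it sends a fixed flag state, which costs one extra dimension. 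This can be absorbed by using blocks of size $k-1$, changing constants only.

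The central node then sees i.i.d.\ copies of the $k$-dimensional (sub-normalized plus flag) state
\[
\tilde\rho = P_jU\rho U^\dagger P_j \oplus \bigl(1-\tr(P_jU\rho U^\dagger)\bigr)|\bot\rangle\langle\bot| ,
\]
and it can compute $\tilde\sigma$ the same way. The key step is a distance-preservation lemma. If $\|\rho-\sigma\|_1\ge\eps$, then with constant probability over $U$ and $j$ we should have $\|\tilde\rho-\tilde\sigma\|_2^2 \gtrsim (k/d)^2\|\rho-\sigma\|_2^2\ge k^2\eps^2/d^3$. To get this, write $\Delta=\rho-\sigma$, which is traceless and Hermitian. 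The compressed block $P_jU\Delta U^\dagger P_j$ has expected squared Frobenius norm $\approx (k/d)^2\|\Delta\|_2^2$ by Haar second moments, since roughly $k^2$ of the $d^2$ matrix entries are kept. A fourth-moment (Paley--Zygmund) bound then shows this holds with constant probability. The extra $\oplus$ term only increases the distance.

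Next, the referee runs the centralized $\ell_2$ certifier of Badescu--O'Donnell--Wright on dimension $k+1$. That certifier distinguishes $\|\tilde\rho-\tilde\sigma\|_2\ge\eta$ using $O(1/\eta^2)$ copies when the states have small purity, and in general $O(\max(\|\tilde\sigma\|_2,\eta)/\eta^2)$ copies. Here $\eta^2 = k^2\eps^2/d^3$, and typically $\|\tilde\sigma\|_2\approx 1-k/d$, dominated by the flag.

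This is the main obstacle. The flag atom makes the naive $\ell_2$ bound give $d^3/(k^2\eps^2)$, which is too large. To fix it, I would instead measure all blocks: the node measures $\{P_j\}$, obtains outcome $j$, sends $j$ classically, and sends the post-measurement state. Classical bits are not allowed, though, so the index must be absorbed. Alternatively, the referee can exploit that in the far case the discrepancy spreads across all $d/k$ blocks. Conditioned on success, the effective sample count per block is $m\,k/d$, and the normalized block state has distance $\approx \sqrt{k/d}\cdot\eps\cdot(\text{const})$ in trace norm after rescaling by $d/k$. Centralized $k$-dimensional trace certification then needs $k/(\eps^2 k/d)=d/\eps^2$ successful copies. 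Dividing by the success rate $k/d$ gives $m=O(d^2/(k\eps^2))=O(d^2/(2^{n_q}\eps^2))$, matching the claim. Making this rigorous requires proving that, for random $U$, the normalized block state $\tilde\rho_j$ has $\|\tilde\rho_j-\tilde\sigma_j\|_1\gtrsim \eps\sqrt{k/d}$ with constant probability. The cross-block contributions of $\Delta$ must also be controlled, via the success-probability test $|\tr P_jU\Delta U^\dagger|$, which is handled separately with a simple Bernoulli count. I expect this random-compression trace-norm lemma to be the crux. I would prove it via the Frobenius estimate above together with $\|X\|_1\ge\|X\|_2^2/\|X\|_{\infty}$ and concentration of the operator norm of the compressed block.
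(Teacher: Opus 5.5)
Your final protocol can be made to work, and your bookkeeping $m\approx \frac{d}{k}\cdot\frac{d}{\eps^2}$ is right. By that protocol I mean: a public-coin Haar $U$ and block $j$, the two-outcome instrument $\{P_j,I-P_j\}$ with a flag, and post-selection at the referee plus a count test on the success frequency. The gap is the lemma you single out as the crux. It is not established, and the route you propose for it does not give the needed bound. The inequality $\|X\|_1\ge\|X\|_2^2/\|X\|_\infty$ is lossy whenever the compressed block has a few outlier eigenvalues sitting on a flat bulk.

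Concretely, take $\Delta=\rho-\sigma$ with a rank-two part of eigenvalues $\pm\eps/(2\sqrt d)$ plus a flat part with eigenvalues $\pm\Theta(\eps/d)$. Then $\|\Delta\|_1\approx\eps$ and $\|\Delta\|_2^2\approx 3\eps^2/(2d)$.
\begin{itemize}
\item For $1\ll k\ll d$, the block $\tilde\Delta=P_jU\Delta U^\dagger P_j$ has $\|\tilde\Delta\|_2^2\approx\frac{3k^2\eps^2}{2d^3}$.
\item The Rayleigh quotient along $P_jU\phi_+$ (with $\phi_+$ the top eigenvector) already gives $\|\tilde\Delta\|_\infty\gtrsim\frac{\eps k}{2d^{3/2}}$. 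This dominates the bulk edge $\approx\frac{2\sqrt k\,\eps}{d^{3/2}}$.
\item The ratio bound therefore certifies only $\|\tilde\Delta\|_1\gtrsim\frac{3k\eps}{d^{3/2}}$. That is a factor $\Theta(\sqrt k)$ short of the $\eps(k/d)^{3/2}$ you need.
\end{itemize}
The lemma itself is still true in this example, since the compression of the flat part alone has trace norm $\Theta(\eps(k/d)^{3/2})$ and the spike perturbs it by only $O(\eps k/d^{3/2})$. Your argument just cannot see this. Fed into the $O(k/\eps'^2)$ trace-distance certifier with $\eps'\approx\eps/\sqrt d$, the weaker bound yields only $m=O(d^2/\eps^2)$, so the entire gain from $k=2^{n_q}$ is lost. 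Operator-norm concentration would typically cost an extra $\sqrt{\log k}$ as well.

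The repair is to drop trace distance on the block and test in Hilbert--Schmidt distance. This needs only the Frobenius second/fourth-moment estimate you already sketched. With $p_\rho=\tr(P_jU\rho U^\dagger)$ and $p_\sigma$ defined likewise,
\[
\|\tilde\rho_j-\tilde\sigma_j\|_2\;\ge\;\frac{\|P_jU\Delta U^\dagger P_j\|_2-|p_\rho-p_\sigma|}{p_\rho}.
\]
Work on the constant-probability event where $\|P_jU\Delta U^\dagger P_j\|_2\gtrsim\frac{k}{d}\|\Delta\|_2$ and $p_\rho,p_\sigma=\Theta(k/d)$. On that event one of two things happens:
\begin{itemize}
\item Either $|p_\rho-p_\sigma|\gtrsim\frac{k\eps}{d^{3/2}}$, which the count test detects with $m=O(d^2/(k\eps^2))$.
\item Or the post-selected states are at HS distance $\gtrsim\|\Delta\|_2\ge\eps/\sqrt d$. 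Then the BOW estimator needs only $O(d/\eps^2)$ successful copies, which again gives $m=O(d^2/(k\eps^2))$.
\end{itemize}

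The route the paper relies on is simpler still. It is Doosti et al.'s, reproduced here as the $K=1$ case of the Haar instrument: replace the block projection by the partial trace $X\mapsto\tr_B(UXU^\dagger)$ onto a $k$-dimensional register. Then no copies are discarded, and there is no flag, no normalization mismatch and no separate count test. \cref{thm:Haar_compression} gives $\|\tr_B(U\Delta U^\dagger)\|_2^2\ge\frac{k}{4d}\|\Delta\|_2^2\ge\frac{k\eps^2}{4d^2}$ with constant probability. \cref{lemma:hilbert_schmidt_estimator} with $t=O(d^2/(k\eps^2))$ copies, and $O(1)$ independent repetitions of $U$, finishes the proof.
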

Notice this theorem is stated for \emph{qudit} states, so the restriction on $n_q$ is equivalent to the discussion above for qubit states. Turning to lower bounds, Doosti \etal~introduce the notion of \emph{mixedness preservation}: roughly speaking, a distributed node is mixedness-preserving if, when given as input the maximally mixed state, the message sent to the central node is also a maximally mixed state (potentially of a different dimension). Under the restriction that nodes are mixedness-preserving, they obtain the following two lower bounds:
\begin{theorem}[{\cite[Theorem 1.4]{doosti2026distributedquantumpropertytesting}}]\label{thm:doosti_public_lower_bound}
    Any protocol for $\eps$-certification of $d$-dimensional states in the $(0,n_q,\text{public})$ model \emph{with mixedness-preserving nodes} must have sample complexity $\Omega\left(\frac{d^2}{2^{n_q}\eps^2}\right)$.
\end{theorem}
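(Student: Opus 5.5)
The approach is a Paninski-style mixture lower bound, run under the mixedness-preservation assumption and then pushed through the quantum channels the nodes apply. Since public randomness can be fixed by an averaging argument, we may assume that for each seed $r$ every node $j$ applies a fixed channel $\Phi_j^{(r)}$ from $d$-dimensional states to $2^{n_q}$-dimensional states, with $\Phi_j^{(r)}(I/d)=I/2^{n_q}$. We take $\sigma=I/d$ and the hard family
\[
\rho_U = \frac{I}{d} + \frac{c\,\eps}{d}\, U D U^\dagger,
\]
where $U$ is Haar random and $D=\mathrm{diag}(\pm1,\dots,\pm1)$ is traceless. By standard concentration, $\|\rho_U-\sigma\|_1\geq\eps$ with high probability. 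It then suffices to show that for each fixed $r$, the $\chi^2$-divergence between $\mathbb{E}_U\bigotimes_j \Phi_j(\rho_U)$ and $\bigotimes_j I/2^{n_q}$ is $o(1)$ unless $m=\Omega(d^2/(2^{n_q}\eps^2))$. Averaging over $r$ preserves this, since the null distribution does not depend on $U$.

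Write $\Phi_j(\rho_U)=I/2^{n_q}+\frac{c\eps}{d}\Phi_j(UDU^\dagger)$; mixedness preservation is exactly what makes the null output the maximally mixed state. The standard Ingster computation then gives
\[
1+\chi^2 = \mathbb{E}_{U,V}\prod_{j=1}^m\Big(1+\frac{c^2\eps^2 2^{n_q}}{d^2}\,\Inner{\Phi_j(UDU^\dagger)}{\Phi_j(VDV^\dagger)}\Big),
\]
with $U,V$ independent. For each $j$, let $\mathcal{M}_j=\Phi_j^\dagger\Phi_j$ restricted to traceless Hermitian matrices. The inner products become $\langle UDU^\dagger,\mathcal{M}_j VDV^\dagger\rangle$, and the exponent is a quadratic form in the random traceless matrices $X=UDU^\dagger$ and $Y=VDV^\dagger$.

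The key spectral facts are these. Since $\Phi_j$ is trace preserving with output dimension $2^{n_q}$, the rank of $\mathcal{M}_j$ on the traceless subspace is at most $4^{n_q}-1$. Since $\Phi_j$ is unital-like (it maps $I/d$ to $I/2^{n_q}$) and contractive, the operator norm of $\mathcal{M}_j$ in Hilbert–Schmidt is at most about $d/2^{n_q}$. Together these give
\[
\tr \mathcal{M}_j\lesssim 2^{n_q}d \qquad\text{and}\qquad \|\mathcal{M}_j\|_F^2\lesssim \frac{d^2\,2^{n_q}\cdot 2^{n_q}}{4^{n_q}}\approx d^2.
\]
The precise trade-off is what the final count needs.

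Bounding $1+x\le e^x$, we need to control $\mathbb{E}\exp(\lambda\langle X,\mathcal{M}Y\rangle)$ with $\mathcal{M}=\sum_j\mathcal{M}_j$ and $\lambda=c^2\eps^2 2^{n_q}/d^2$. For this we use Haar concentration: $\langle X,\mathcal{M}Y\rangle$ is a Lipschitz function of $U$, so it is subgaussian with variance proxy on the order of $\|\mathcal{M}\|_F^2/d^2$, using $\mathbb{E}[X\otimes X]\approx \frac{1}{d}(\text{projector onto the traceless part})$. This yields
\[
\chi^2\lesssim \lambda^2\|\mathcal{M}\|_F^2/d^2\le \lambda^2 m\sum_j\|\mathcal{M}_j\|_F^2/d^2 .
\]
That crude bound only gives a weaker rate. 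To get the sharp rate we must use $\|\mathcal{M}\|_F^2\le\|\mathcal{M}\|_{op}\tr\mathcal{M}$ together with $\tr\mathcal{M}\le m\,2^{n_q}d$ and $\|\mathcal{M}\|_{op}\lesssim m\,d/2^{n_q}$. After normalization this gives
\[
\chi^2\lesssim m^2\eps^4\,2^{2n_q}/d^4,
\]
which is small precisely when $m\ll d^2/(2^{n_q}\eps^2)$, as required.

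The main obstacle is making this moment-generating-function bound rigorous. Higher-order terms in the expansion of the product need Weingarten or log-Sobolev control, and the constant in the operator-norm bound on $\mathcal{M}_j$ has to actually follow from mixedness preservation. If the norm bound fails, the first fallback is to restrict to a random subspace, i.e.\ use a Gaussian $G$ in place of $UDU^\dagger$ to get exact Gaussian chaos MGF formulas (Hanson–Wright), at the cost of a truncation to keep $\rho$ positive.
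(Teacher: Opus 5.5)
There is a genuine gap at the moment-generating-function step, which you flag yourself, and the justification you sketch for it would not give the sharp rate. For fixed $Y=VDV^\dagger$, the map $U\mapsto\langle UDU^\dagger,\mathcal{M}Y\rangle$ has Hilbert--Schmidt Lipschitz constant $2\|\mathcal{M}Y\|_2$. Its worst case over $Y$ is $2\|\mathcal{M}\|_{\mathrm{op}}\sqrt d$, and the joint map in $(U,V)$ behaves the same way. A uniform Lipschitz or log-Sobolev argument therefore gives a variance proxy of order $\|\mathcal{M}\|_{\mathrm{op}}^2$, not $\|\mathcal{M}\|_F^2/d^2$. The bound $\|\mathcal{M}\|_{\mathrm{op}}\le md/d_q$ is attained, e.g.\ when every node applies $\tr_B$. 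So this route only yields $\chi^2\lesssim\lambda^2m^2d^2/d_q^2\approx(\eps^2m/d)^2$, i.e.\ $m=\Omega(d/\eps^2)$, which loses a factor $d/d_q$. The quantity $\|\mathcal{M}\|_F^2/d^2$ is only the \emph{typical} value of the conditional proxy $\|\mathcal{M}Y\|_2^2/d$.

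The missing step is to decouple the two Haar variables:
\begin{itemize}
\item Conditionally on $V$, Haar concentration together with $\mathbb{E}_U[UDU^\dagger]=0$ gives $\mathbb{E}_U e^{\lambda\langle X,\mathcal{M}Y\rangle}\le\exp(C\lambda^2\|\mathcal{M}Y\|_2^2/d)$.
\item Then $V\mapsto\|\mathcal{M}VDV^\dagger\|_2$ is $2\|\mathcal{M}\|_{\mathrm{op}}$-Lipschitz, with mean at most $\sqrt{\mathbb{E}\|\mathcal{M}Y\|_2^2}\approx\|\mathcal{M}\|_F/\sqrt d$.
\item Hence the outer expectation is $\exp(O(\lambda^2\|\mathcal{M}\|_F^2/d^2))$, provided $\lambda\|\mathcal{M}\|_{\mathrm{op}}/d\le c^2\eps^2m/d^2$ is below a small constant.
\end{itemize}
That proviso is the analogue of the paper's case split on $\lambda\le 1/(2\|A\|_\infty)$. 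When it fails, you already have $m=\Omega(d^2/\eps^2)$.

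Everything else is sound and parallels the paper, which obtains this theorem as the mixedness-preserving, $n_c=0$ case of its general public-coin bound. With $\tau=I/d_q$ the augmented map is $\Lambda=\sqrt{d_q}\,\Phi$, so $T_\Lambda=\frac{d_q}{m}\mathcal{M}$, and your two spectral facts are exactly its Liouville-matrix lemmas.

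Your operator-norm bound does follow from mixedness preservation. The map $\frac{d_q}{d}\Phi_j$ is unital, so Kadison--Schwarz gives $\Phi_j(X)^\dagger\Phi_j(X)\preceq\frac{d}{d_q}\Phi_j(X^\dagger X)$; taking traces yields $\|\Phi_j(X)\|_2^2\le\frac{d}{d_q}\|X\|_2^2$.

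Your ``crude'' bound is not weaker than the ``sharp'' one. From $\|\mathcal{M}\|_F\le\sum_j\|\mathcal{M}_j\|_F\le md$ you get the same $m^2d^2$ as from $\|\mathcal{M}\|_{\mathrm{op}}\tr\mathcal{M}$. This is in fact the route the paper takes, via $\|T_\Lambda\|_2\le\frac1m\sum_i\|M_{\Lambda_i}\|_2\|M_{\Lambda_i}\|_\infty$.

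The real difference is the hard family. The paper uses Rademacher perturbations $\rho_z=I/d+\bar\Delta_z$ along an orthonormal traceless basis. For these, the bilinear-chaos bound $\mathbb{E}_{z,z'}e^{\lambda z^\top Az'}\le e^{C\lambda^2\|A\|_2^2}$ is available off the shelf, at the price of the cap $N_z$ needed to keep $\rho_z$ positive. Your Haar family gets positivity and farness for free; indeed $\|\rho_U-I/d\|_1=c\eps$ holds deterministically. The cost is that you must actually carry out the two-step Haar concentration above.
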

\begin{theorem}[{\cite[Theorem 1.5]{doosti2026distributedquantumpropertytesting}}]\label{thm:doosti_private_lower_bound}
    Any protocol for $\eps$-certification of $d$-dimensional states in the $(0,n_q,\text{private})$ model \emph{with mixedness-preserving nodes} must have sample complexity $\Omega\left(\frac{d^3}{2^{2n_q}\epsilon^2}\right)$.
\end{theorem}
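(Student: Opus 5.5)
The statement to prove is \cref{thm:doosti_private_lower_bound}: the private-coin lower bound $\Omega(d^3/(2^{2n_q}\eps^2))$ for mixedness-preserving nodes. I would use Le Cam's method with a random-perturbation lower bound instance. Take $\sigma = I/d$ as the reference state and the standard Paninski-type family as the alternative:
\[
\rho_U = \frac{I}{d} + \frac{\eps}{d}\, U D U^\dagger .
\]
Here $U$ is Haar random and $D$ is diagonal with entries $\pm 1$ in equal numbers. Then $\|\rho_U - I/d\|_1 = \eps$.

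Each private-coin node $i$ applies a fixed channel $\Phi_i$ with output dimension $k = 2^{n_q}$. Mixedness preservation gives $\Phi_i(I/d) = I/k$. By linearity, the node's message is
\[
\Phi_i(\rho_U) = \frac{I}{k} + \frac{\eps}{d}\,\Phi_i(UDU^\dagger).
\]
The referee therefore sees either $(I/k)^{\otimes m}$ or the mixture $\mathbb{E}_U \bigotimes_i \Phi_i(\rho_U)$. It suffices to show that the trace distance between these two states is $o(1)$ unless $m = \Omega(d^3/(k^2\eps^2))$.

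I would bound this distance by the quantum $\chi^2$-divergence with respect to the maximally mixed reference. Since the reference is $I/k$ on each factor, this divergence is simply $k^m$ times a Hilbert–Schmidt norm. Expanding the tensor product and introducing an independent copy $V$ of $U$ gives
\[
1 + \chi^2 = \mathbb{E}_{U,V} \prod_{i=1}^m \Bigl(1 + \frac{k\eps^2}{d^2}\,\tr\bigl[\Phi_i(UDU^\dagger)\,\Phi_i(VDV^\dagger)\bigr]\Bigr).
\]
The cross terms vanish because $\tr\,\Phi_i(X) = \tr X = 0$. Writing $Z_i$ for the trace inside the product, I bound $1+x \le e^x$ to get $\mathbb{E}\exp\bigl(\frac{k\eps^2}{d^2}\sum_i Z_i\bigr)$.

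Next I view $Z_i = \langle UDU^\dagger, M_i\, VDV^\dagger\rangle$ as a bilinear form in the traceless parts. Here $M_i = \Phi_i^\dagger\Phi_i$ restricted to traceless operators; it is positive semidefinite. Its rank is at most $k^2 - 1$, since the output space of traceless $k\times k$ matrices has that dimension. Its operator norm is $O(1)$, because a mixedness-preserving channel (unital in the appropriate sense) is Hilbert–Schmidt contractive. This contractivity should follow from the Schwarz/Kadison inequality applied to the map $\frac{d}{k}\Phi_i$, which is unital, so I expect the constant to work out. Consequently $\tr M_i \le k^2$.

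Next comes the concentration step. The matrix $UDU^\dagger$ behaves like a random traceless matrix with Hilbert–Schmidt norm squared equal to $d$. Summing over nodes gives $\sum_i Z_i = \langle A, M B\rangle$ with $M = \sum_i M_i$, $\|M\|_{\mathrm{HS}}^2 \le m k^2$ (up to $O(1)$), and $\|M\|_{op} \le m$. Its typical size is $d \|M\|_{\mathrm{HS}}/d^2 \cdot d$; by the Haar second moments, $\mathbb{E}\langle A,MB\rangle^2 \approx \frac{d^2}{d^4}\|M\|_{\mathrm{HS}}^2 \cdot d^2$. I would compute this carefully using Weingarten calculus. The expected outcome is that the exponent has variance of order $\frac{k^2\eps^4}{d^4}\cdot m k^2 \cdot \frac{d^2}{d^2}$, roughly $\frac{m k^4 \eps^4}{d^4}$. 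After accounting for normalization, the exponent should be small when $m \ll d^3/(k^2\eps^2)$.

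The main obstacle is controlling the exponential moment, not merely the variance. I would use Hanson–Wright-type concentration for bilinear forms in Haar-random matrices, for instance via Lipschitz concentration on the unitary group (Gromov–Milman). This yields a subexponential tail governed by $\|M\|_{\mathrm{HS}}$ and $\|M\|_{op}$, and the $d^3/k^2$ threshold should come exactly from the Frobenius term, with the operator-norm term dominated. If the constants fail, a fallback is to replace Haar unitaries with random Gaussian or $\pm1$-sign perturbations, for which exact Gaussian/Rademacher chaos bounds apply. Finally, $\chi^2 = o(1)$ implies trace distance $o(1)$, which contradicts success probability $2/3$.
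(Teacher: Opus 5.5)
\textbf{There is a genuine gap, and it is structural rather than a matter of constants.} Your hard instance is a single prior (Haar-rotated perturbations) that is fixed independently of the nodes' channels. No channel-independent prior can certify more than the public-coin rate $d^2/(k\eps^2)$.

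Here is a concrete counterexample. Write $\mathbb{C}^d\cong\mathbb{C}^k\otimes\mathbb{C}^{d/k}$ and let every node apply the same deterministic, mixedness-preserving channel $\tr_B$. Then $\mathbb{E}_U\|\tr_B(UDU^\dagger)\|_2^2=\frac{d}{k}\cdot\frac{k^2-1}{d^2-1}\,d\approx k$. By \cref{thm:Haar_compression} (with $K=1$), this quantity is $\Omega(k)$ with constant probability. So the referee holds $m$ copies of a $k$-dimensional state at squared Hilbert--Schmidt distance $\Omega(k\eps^2/d^2)$ from $I/k$. The estimator of \cref{lemma:hilbert_schmidt_estimator} detects this once $m=O(d^2/(k\eps^2))$.

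Your $\chi^2$ is therefore already $\Omega(1)$ at $m\approx d^2/(k\eps^2)\ll d^3/(k^2\eps^2)$. No Hanson--Wright or Gromov--Milman argument can repair this, and your Gaussian/Rademacher fallback fails the same way if its directions are fixed in advance. More generally, derandomizing the public-coin protocol of \cref{thm:doosti_public_upper_bound} against any fixed prior yields a deterministic mixedness-preserving protocol with $O(d^2/(k\eps^2))$ nodes.

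Two norm estimates in your sketch hide this failure.
\begin{itemize}
\item Kadison--Schwarz for the unital map $\frac{k}{d}\Phi_i$ (note $\Phi_i(I)=\frac{d}{k}I$) only gives $\|\Phi_i(X)\|_2^2\le\frac{d}{k}\|X\|_2^2$. This is attained by $\tr_B$ on $X=Y\otimes I_B$, so $\|\Phi_i^\dagger\Phi_i\|_\infty$ can be $d/k$, not $O(1)$.
\item $\|\sum_iM_i\|_{\mathrm{HS}}^2$ is not $O(mk^2)$. For identical channels it equals $m^2\|M_1\|_{\mathrm{HS}}^2\approx m^2d^2$.
\end{itemize}
With the correct values, your own heuristic $\lambda^2\|M\|_{\mathrm{HS}}^2/d^2$ with $\lambda=k\eps^2/d^2$ equals $k^2\eps^4m^2/d^4$. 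This yields only $m=\Omega(d^2/(k\eps^2))$. Separately, an exponent variance linear in $m$, as you wrote, would give the wrong $\eps^{-4}$ scaling.

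\textbf{The missing idea is the max--min order of quantifiers for private coins} (\cref{lemma:instrument_minimax_maximin_bounds}). After averaging each node's private randomness into a fixed channel, the perturbation must be chosen \emph{after} those channels and adapted to them.

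The paper, generalizing Doosti \etal, uses the Rademacher family of \cref{def:hard_instances}. It takes $V_1,\dots,V_\ell$, with $\ell=d^2/2$, to be the eigenvectors of $T=\frac1m\sum_iM_{\Lambda_i}^\dagger M_{\Lambda_i}$ on the traceless subspace with the \emph{smallest} eigenvalues.
\begin{itemize}
\item In the normalization $\Lambda_i=\sqrt{k}\,\Phi_i$, each node contributes trace at most $dk^2$ (rank at most $k^2$, operator norm at most $d$).
\item Hence each of these $\ell$ eigenvalues is at most $\tr T/(d^2-\ell-1)=O(k^2/d)$, so $\|\mathcal{V}^\dagger T\mathcal{V}\|_2=O(k^2)$.
\item The Rademacher-chaos MGF bound in \cref{lemma:instrument_T_bound} then gives $m=\Omega(d\ell/(k^2\eps^2))=\Omega(d^3/(k^2\eps^2))$.
\end{itemize}
In the partial-trace example these directions lie in the kernel of $T$, so the perturbation is invisible to every node. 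That is exactly what a fixed Haar-invariant prior cannot arrange.
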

The authors then conjecture that the assumption of mixedness preservation can be removed:
\begin{conjecture}[\cite{doosti2026distributedquantumpropertytesting}]\label{conj:doosti}
    The lower bounds of \cref{thm:doosti_public_lower_bound} and \cref{thm:doosti_private_lower_bound} are tight for \emph{all} protocols, not only mixedness-preserving ones.
\end{conjecture}
\subsection{Our Results}
We improve on the above results in 3 ways: (1) by considerably generalizing the above lower bounds, (2) by obtaining a (near)-matching private-coin upper bound in the quantum communication setting, and (3) by providing  an optimal public-coin upper bound in the mixed (classical and quantum) communication one. 

Indeed, our first results answer \cref{conj:doosti} in the affirmative, removing the requirement of mixedness preservation for both public- and private-coin protocols:
\begin{theorem}\label{thm:public_coin_no_classical_lb}
    Any protocol for $\eps$-certification of $d$-dimensional states in the $(0,n_q,\text{public})$ model must have sample complexity $\Omega\left(\frac{d^2}{2^{n_q}\eps^2}\right)$.
\end{theorem}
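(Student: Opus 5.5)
We will prove the bound with the standard Le Cam–style argument for distributed testing (in the style of Acharya \etal), using a random perturbation of the maximally mixed state. Take $\sigma=I/d$. The hard alternatives are
\[
\rho_U=\frac{1}{d}\left(I+\eps\, U\Lambda U^\dagger\right),
\]
where $U$ is Haar random and $\Lambda$ is a fixed diagonal matrix with entries $\pm 1$ on half the coordinates. Then $\|\rho_U-\sigma\|_1=\eps$ for every $U$. Since public randomness can be fixed by averaging, it suffices to show the following: for every fixed choice of channels $\Phi_1,\dots,\Phi_m$, with $\Phi_i$ mapping $d$-dimensional states to $2^{n_q}$-dimensional states, the referee cannot distinguish
\[
\bigotimes_i \Phi_i(\sigma)\quad\text{from}\quad \mathbb{E}_U \bigotimes_i \Phi_i(\rho_U)
\]
unless $m=\Omega(d^2/(2^{n_q}\eps^2))$.

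\textbf{Step 1: reduce to a $\chi^2$ bound.} Let $\tau_i=\Phi_i(\sigma)$. We bound the trace distance by the quantum $\chi^2$-divergence $\QChi{\mathbb{E}_U\bigotimes_i\Phi_i(\rho_U)}{\bigotimes_i\tau_i}$, normalized by $\tau=\bigotimes_i\tau_i$, i.e.\ in the inner product $\langle A,B\rangle_\tau=\tr(A^\dagger\tau^{-1/2}B\tau^{-1/2})$. This choice is exactly what removes mixedness preservation: Doosti \etal\ need $\tau_i\propto I$, whereas here $\tau_i$ can be arbitrary. Write $\Phi_i(\rho_U)=\tau_i+\frac{\eps}{d}\Phi_i(U\Lambda U^\dagger)$. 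Expanding the product and using the tensor structure gives the usual identity
\[
1+\chi^2=\mathbb{E}_{U,V}\prod_{i=1}^m\left(1+\frac{\eps^2}{d^2}\,\big\langle \Phi_i(U\Lambda U^\dagger),\Phi_i(V\Lambda V^\dagger)\big\rangle_{\tau_i}\right)
\le \mathbb{E}_{U,V}\exp\!\Big(\frac{\eps^2}{d^2}\sum_i \langle\cdot,\cdot\rangle_{\tau_i}\Big).
\]
The cross terms are linear in the perturbation. They vanish because $\mathbb{E}_U U\Lambda U^\dagger=\frac{\tr\Lambda}{d}I=0$.

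\textbf{Step 2: recast as a quadratic form in Haar variables.} Define the self-adjoint map $M_i=\Phi_i^*\circ\mathcal{T}_{\tau_i}^{-1}\circ\Phi_i$, where $\mathcal{T}_{\tau}^{-1}(X)=\tau^{-1/2}X\tau^{-1/2}$. This is a positive superoperator on $d\times d$ matrices, and the exponent becomes $\frac{\eps^2}{d^2}\langle X_U, M X_V\rangle_{HS}$ with $M=\sum_i M_i$ and $X_U=U\Lambda U^\dagger$. The crucial bounds concern $M_i$ restricted to traceless matrices:
\begin{itemize}
\item $\tr(M_i)\le 2^{n_q}$ (the sum of eigenvalues is at most the output dimension, minus the trivial identity direction);
\item $\|M_i\|_{op}\le 1$ (a data-processing-type bound).
\end{itemize}
The trace bound follows from $\sum_{a,b}\tr(\Phi_i(E_{ab})^\dagger\tau_i^{-1/2}\Phi_i(E_{ab})\tau_i^{-1/2})$ together with $\Phi_i(I)=d\tau_i$, via a Cauchy–Schwarz/Kadison-type step; I expect this to give $\le d\cdot 2^{n_q}$ appropriately normalized. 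Hence $\tr M\le m2^{n_q}$ and $\|M\|_{op}\le m$.

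\textbf{Step 3: control the moment generating function.} We need
\[
\mathbb{E}_{U,V}\exp\big(\lambda\langle X_U,MX_V\rangle\big)\le 1+o(1),\qquad \lambda=\eps^2/d^2.
\]
Conditioning on $V$, the function $f(U)=\langle X_U, MX_V\rangle$ has mean zero. By concentration of measure on the unitary group, it is subgaussian with variance proxy of order $\|MX_V\|_2^2/d$. Taking the expectation over $V$ of $\|MX_V\|_2^2\approx \frac{d}{d^2}\tr(M^2)\cdot d$, one then aims for a bound like $\exp(O(\lambda^2\|M\|_F^2))$ whenever $\lambda\|M\|_{op}\lesssim 1$. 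Using $\|M\|_F^2\le\|M\|_{op}\tr M\le m^2 2^{n_q}$ gives
\[
1+\chi^2\le\exp\big(O(\eps^4 m^2 2^{n_q}/d^4)\big),
\]
which is small when $m\ll d^2/(2^{n_q}\eps^2)$. This proves the theorem.

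\textbf{Main obstacle.} The main obstacle is Step 3, specifically obtaining a Hanson–Wright-type decoupled bound for Haar-random $X_U$ rather than for Gaussian matrices. The first approach would be to replace $X_U$ by a GUE-like surrogate, namely a random $\pm1$-spectrum matrix approximated by a Gaussian matrix $G/\sqrt d$, and to control the coupling error. Failing that, I would use a direct Weingarten computation of the moments $\mathbb{E}\langle X_U,MX_V\rangle^k$.

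A secondary obstacle is making the trace bound $\tr M_i\le 2^{n_q}$ rigorous when $\tau_i$ is rank-deficient. The fix is to restrict $\tau_i^{-1}$ to the support of $\tau_i$, noting that $\Phi_i(X)$ lies in that support by complete positivity.
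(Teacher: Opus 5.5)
Your skeleton is the paper's. The $\tau_i^{-1/2}$-weighted inner product is exactly the paper's augmented map $\Lambda_i(X)=\tau_i^{-1/4}\Phi_i(X)\tau_i^{-1/4}$, and your $M_i$ is $M_{\Lambda_i}^\dagger M_{\Lambda_i}$. Your step $\|M\|_F^2\le\|M\|_{op}\tr M$ plays the role of the H\"older step $\|M_\Lambda^\dagger M_\Lambda\|_2\le\|M_\Lambda\|_2\|M_\Lambda\|_\infty$. The only structural change is a Haar-rotated prior in place of the sign-vector prior of \cref{def:hard_instances}.

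\textbf{The gap.} The quantitative content of Steps 2 and 3 is wrong, and the errors do not cancel. Both bounds in Step 2 are false. Take $\mathbb{C}^d=A\otimes B$ with $\dim A=2^{n_q}$ and $\Phi=\tr_B$. Then $\tau=I_A/2^{n_q}$ and $M(X)=2^{n_q}\tr_B(X)\otimes I_B$. This map equals $d$ times the identity on the $(4^{n_q}-1)$-dimensional space $\{Y\otimes I_B:\tr Y=0\}$ and vanishes on its complement, so $\|M_i\|_{op}=d$ and $\tr M_i=d(4^{n_q}-1)$. The correct general bounds are as follows.
\begin{itemize}
\item $\|M_i\|_{op}\le d$. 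This is data processing applied to $\QChi{\rho}{I/d}=d\|\rho-I/d\|_2^2$, i.e.\ \cref{lemma:instrument_infty_bound}.
\item $\tr M_i\le d\cdot 4^{n_q}$. The rank is at most $4^{n_q}$, the dimension of the output \emph{operator} space; this is \cref{lemma:instrument_2_bound} with $K=1$.
\end{itemize}
The $d\cdot 2^{n_q}$ you anticipate is what a $2^{n_q}$-outcome measurement gives. The $d_q$ versus $d_q^2$ distinction is precisely what separates $2^{n_c/2}$ from $2^{n_q}$ in \cref{thm:public_coin_lb}.

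\textbf{Step 3 is also mis-scaled.} By Schur's lemma, $\mathbb{E}[X_U\otimes X_U]$ restricted to traceless Hermitian matrices is $\frac{d}{d^2-1}$ times the identity. So $\mathbb{E}_V\|MX_V\|_2^2=\frac{d}{d^2-1}\|M\|_F^2$, not $\approx\|M\|_F^2$, and the right target is $\exp(O(\lambda^2\|M\|_F^2/d^2))$.

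\textbf{Consequence.} As written, your chain gives $1+\chi^2\le\exp(O(\eps^4m^2\,2^{n_q}/d^4))$. This is small for all $m\ll d^2/(2^{n_q/2}\eps^2)$, which contradicts \cref{thm:doosti_public_upper_bound}. For $n_q=\log d$ it would give a lower bound of $d^{3/2}/\eps^2$ against a known $O(d/\eps^2)$ protocol. So the argument proves something false.

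\textbf{The repair.} With the corrected bounds, $\|M\|_F^2\le(md)(md\,4^{n_q})$. The exponent is then $O(\eps^4\|M\|_F^2/d^6)=O(\eps^4m^2\,4^{n_q}/d^4)$, which is small exactly when $m\ll d^2/(2^{n_q}\eps^2)$.

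You still need to prove the MGF bound itself. It does go through via concentration on $SU(d)$; the map $U\mapsto X_U$ is phase-invariant, so you may work there.
\begin{itemize}
\item Conditionally on $V$, $f(U)=\langle X_U,MX_V\rangle$ is mean-zero and $2\|MX_V\|_2$-Lipschitz.
\item $V\mapsto\|MX_V\|_2$ is $2\|M\|_{op}$-Lipschitz.
\item With subgaussian proxy $O(L^2/d)$, this yields the bound whenever $\lambda\|M\|_{op}\lesssim d$, i.e.\ $m\lesssim d^2/\eps^2$.
\end{itemize}
The paper avoids this by using sign vectors $z$ and \cref{lemma:exponential_expectation_bound}. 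Your prior does have the advantage of avoiding the truncation $N_z$ and the ``almost-$\eps$'' bookkeeping.

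\textbf{Minor point.} The linear cross terms vanish \emph{pointwise in each factor}, because $\tr[\tau_i^{-1/2}\tau_i\tau_i^{-1/2}\Phi_i(X_U)]=\tr X_U=0$ using \cref{lemma:instrument_support}. It is not because $\mathbb{E}_UX_U=0$; the product formula needs the per-factor identity.
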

\begin{theorem}\label{thm:private_coin_no_classical_lb}
    Any protocol for $\eps$-certification of $d$-dimensional states in the $(0,n_q,\text{private})$ model must have sample complexity $\Omega\left(\frac{d^3}{2^{2n_q}\eps^2}\right)$.
\end{theorem}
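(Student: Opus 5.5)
We prove the private-coin lower bound with the standard Paninski-style mixture argument, adapted to quantum messages. Fix $\sigma=I/d$. As the alternative, take $\rho_U=\frac{1}{d}(I+\eps' U\Delta U^\dagger)$ with $\Delta=\mathrm{diag}(1,\dots,1,-1,\dots,-1)$, $U$ Haar-random and $\eps'=\Theta(\eps)$, so that $\|\rho_U-\sigma\|_1=\eps'$. In the private-coin model, node $i$ applies a fixed channel $\Phi_i$ from $\mathbb{C}^d$ to $\mathbb{C}^{2^{n_q}}$; we may fold its private randomness into the channel. The referee therefore sees either $\bigotimes_i\Phi_i(I/d)$ or the mixture $\mathbb{E}_U\bigotimes_i\Phi_i(\rho_U)$. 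It suffices to show the chi-square-type divergence between these two is $o(1)$ unless $m=\Omega(d^3/(2^{2n_q}\eps^2))$.

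\textbf{Step 1 (whitening without mixedness preservation).} Let $\tau_i=\Phi_i(I/d)$, which need not be maximally mixed; this is exactly the assumption Doosti \etal\ needed. We use the quantum $\chi^2$ divergence relative to $\tau=\bigotimes\tau_i$, with the symmetrized weight $\Omega_\tau^{-1}(X)=\tau^{-1/2}X\tau^{-1/2}$, so that $\mathrm{D}_{\chi^2}(\cdot\|\tau)$ upper bounds the squared trace distance. Writing $\Phi_i(\rho_U)=\tau_i+\frac{\eps'}{d}\Phi_i(U\Delta U^\dagger)$, and using that $\Phi_i(U\Delta U^\dagger)$ is traceless, the cross terms vanish and
\[
1+\chi^2=\mathbb{E}_{U,V}\prod_{i=1}^m\Big(1+\tfrac{\eps'^2}{d^2}\,\langle \Phi_i(U\Delta U^\dagger),\Phi_i(V\Delta V^\dagger)\rangle_{\tau_i}\Big),
\]
where $\langle A,B\rangle_{\tau}=\tr(A\tau^{-1/2}B\tau^{-1/2})$. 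We then define the operator $M_i=\Phi_i^\dagger\circ\Omega_{\tau_i}^{-1}\circ\Phi_i$ acting on traceless Hermitian matrices of $\mathbb{C}^{d\times d}$. This operator is PSD. Its trace is at most $\dim(\text{output traceless space})$, i.e.\ at most $2^{2n_q}$: the normalized map $\tau_i^{-1/4}\Phi_i(\cdot)\tau_i^{-1/4}$ lands in a space of dimension $2^{2n_q}$. The key extra fact is that its operator norm is $O(d)$, since $\Phi_i(X)\preceq d\,\|X\|_\infty\,\Phi_i(I/d)$ for PSD $X$. These two bounds replace the mixedness-preservation hypothesis.

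\textbf{Step 2 (moment bound on $U,V$).} Bounding $1+x\le e^x$, the task reduces to controlling $\mathbb{E}_{U,V}\exp\!\big(\frac{\eps'^2}{d^2}\langle A_U,MA_V\rangle\big)$, where $M=\sum_i M_i$ and $A_U=U\Delta U^\dagger$. Replacing $V$ by $UV$, this becomes a function of one Haar $W$ applied to the fixed traceless $\Delta$. We would like a sub-exponential tail bound: by Haar Weingarten/Lipschitz concentration, $f(W)=\langle\Delta,M W\Delta W^\dagger\rangle$ has mean $0$ and fluctuations governed by $\|M\|_{HS}\le\sqrt{\|M\|_{op}\tr M}$. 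Summing over nodes gives $\tr M\le m2^{2n_q}$, but $\|M\|_{HS}^2$ should scale like $\sum_i\|M_i\|_{HS}^2\le \sum_i\|M_i\|_{op}\tr M_i\lesssim m\cdot\frac{d\cdot 2^{2n_q}}{d^2}$ after normalization. Concentration (via log-Sobolev on $SU(d)$, whose constant is $\sim 1/d$) then yields
\[
\log(1+\chi^2)\lesssim \frac{\eps'^4}{d^4}\cdot d^2\,\mathbb{E}\,f^2\approx \frac{\eps'^4\,\|M\|_{HS}^2}{d^2},
\]
and plugging in the bounds gives $\chi^2=O\!\big(m^2\eps^4 2^{4n_q}/d^6\big)$. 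Hence $m=\Omega(d^3/(2^{2n_q}\eps^2))$. The cross-node correlations are the delicate point: the sum $\sum_i$ sits inside the exponent through one common $W$, so the bound must be done on the total $M$, not node by node.

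\textbf{Main obstacle.} The hardest step is Step 2: proving the exponential-moment bound for the Haar quadratic form with the correct dependence on $\|M\|_{HS}$ rather than on $\tr M\cdot\|M\|_{op}$, while using only the trace and operator-norm bounds from Step 1. We would first attempt the exact second-moment computation via Weingarten calculus. The Haar mean $\mathbb{E}_W W\Delta W^\dagger\otimes W\Delta W^\dagger$ is explicit, giving $\mathbb{E}f^2\approx\|M\|_{HS}^2/d^2\cdot\Theta(1)$. We would then upgrade this to the exponential moment using Haar concentration of Lipschitz functions, where the Lipschitz constant of $f$ is controlled by $\|M\|_{op}\le O(md)$, restricted to the regime $m\le d^3/(2^{2n_q}\eps^2)$. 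If that fails, the fallback is to choose $U$ from a random Clifford or orthogonal design with a unitary $4$-design truncation, which permits a direct moment expansion.
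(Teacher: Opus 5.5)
There is a genuine gap, and it lies in the choice of the alternative, not in the concentration machinery of Step 2. Your ensemble $\{\rho_U\}$ is fixed before the channels, so anything it proves is automatically a public-coin lower bound. If a public-coin protocol succeeded, some value of the shared randomness would give fixed channels $\Phi_1,\dots,\Phi_m$ distinguishing $I/d$ from $\mathbb{E}_U\rho_U$, which is exactly what you rule out. Write $d_q=2^{n_q}$. Public-coin certification is possible with $O(d^2/(d_q\eps^2))$ nodes (\cref{thm:doosti_public_upper_bound}), a factor $d/d_q$ below your target. So the estimate claimed in Step 2 is false, and no sharper Haar concentration or design argument can rescue it.

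Concretely, the heuristic $\|M\|_{HS}^2\approx\sum_i\|M_i\|_{HS}^2$ fails when nodes use correlated channels, which private-coin protocols are free to do. Suppose every node applies $\tr_B$ (with $d=d_qb$). Then $\tau_i=I/d_q$ and $M_i=d\,\Pi_W$, where $\Pi_W$ is the projection onto $\{Y\otimes I_B\}$. Hence $\|M\|_{HS}^2=m^2d^2(d_q^2-1)$, the second moment of your exponent is of order $m^2\eps^4d_q^2/d^4$, and you get only $m=\Omega(d^2/(d_q\eps^2))$. This is tight for your prior: with constant probability over $U$, $\|\tr_B(\rho_U)-I/d_q\|_2^2=\Theta(d_q\eps^2/d^2)$, and the estimator of \cref{lemma:hilbert_schmidt_estimator} detects this from $O(d^2/(d_q\eps^2))$ copies. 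Your trace bound also loses a factor $d$: rank at most $d_q^2$ times operator norm at most $d$ gives $\tr M_i\le d\,d_q^2$, as in \cref{lemma:instrument_2_bound}; for example, $\tr(d\,\Pi_W)=d(d_q^2-1)$.

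The missing idea is to use the order of quantifiers in the private-coin half of \cref{lemma:instrument_minimax_maximin_bounds}: fix the (derandomized) channels first, then choose a prior adapted to them. The paper takes $T=\frac1m\sum_iM_{\Lambda_i}^\dagger M_{\Lambda_i}$ (your $\frac1m\sum_iM_i$) on traceless Hermitian matrices. It lets $V_1,\dots,V_\ell$ be the eigenvectors for the $\ell=d^2/2$ smallest eigenvalues, and perturbs $I/d$ along $\sum_jz_jV_j$ with uniformly random signs $z$. Since $\tr T\le d\,d_q^2$, each of these eigenvalues is at most about $2\tr T/d^2$, so $\|\mathcal V^\dagger T\mathcal V\|_2=O(d_q^2)$. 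A channel-oblivious prior must instead contend with $\|T\|_2$, which is $d\sqrt{d_q^2-1}$ in the example above. The Rademacher bilinear moment bound (\cref{lemma:exponential_expectation_bound}) then replaces your Step 2 and gives $1+\chi^2\le\exp(O(m^2\eps^4d_q^4/d^6))+o(1)$, hence $m=\Omega(d^3/(d_q^2\eps^2))$.

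Your Step 1 (symmetrized $\chi^2$, whitening by $\tau_i^{-1/4}$) matches the paper. One caveat: the domination $\Phi_i(X)\preceq d\|X\|_\infty\tau_i$ you cite bounds $\langle X,M_iX\rangle$ by $d^2\|X\|_\infty^2$, not $d\|X\|_2^2$. That is why the paper proves the operator-norm bound via Kadison--Schwarz (\cref{lemma:instrument_infty_bound}).
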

In fact, these two results are a special case of stronger lower bounds we establish, in the setting where both classical \emph{and} quantum communication are allowed. This setting, while defined in~\cite{doosti2026distributedquantumpropertytesting}, was not addressed in their work.
\begin{theorem}[General Public-Coin Lower Bound]\label{thm:public_coin_lb}
    Any protocol for $\eps$-certification of $d$-dimensional states in the $(n_c,n_q,\text{public})$ model must have sample complexity $\Omega\left(\frac{d^2}{2^{ n_q+n_c/2}\eps^2}\right)$.
\end{theorem}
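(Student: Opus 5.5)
The plan is to adapt the classical public-coin lower bound of Acharya \etal~\cite{DBLP:journals/tit/AcharyaCLST22} to quantum messages, using a Paninski-style hard instance and a chi-square fluctuation bound. We take $\sigma=I/d$ and $\rho_Z=\frac{1}{d}\left(I+\frac{c\eps}{\sqrt d}\,Z\right)$. Here $Z$ is a random traceless Hermitian matrix drawn from a suitably normalized ensemble: for instance $Z=\sum_{k} z_k B_k$ with $B_k$ an orthonormal (in Hilbert--Schmidt) basis of traceless Hermitians and $z_k$ i.i.d.\ Rademacher, suitably scaled, or $Z$ a conditioned GUE-type matrix. The ensemble should satisfy $\|\rho_Z-\sigma\|_1\geq\eps$ with high probability and $\|Z\|_{\mathrm{op}}\le O(\sqrt d)$ so that $\rho_Z\succeq 0$.

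Because the randomness is public, it suffices to fix the shared randomness and bound the distinguishability for an arbitrary tuple of channels $N_1,\dots,N_m$. Each $N_i$ maps $d\times d$ matrices to classical-quantum states $\bigoplus_{x\in[2^{n_c}]}\omega_x$, where each block $\omega_x$ is a $2^{n_q}\times 2^{n_q}$ operator; averaging over the coin is then handled by convexity. The central node's view is $\bigotimes_i N_i(\rho_Z)$ in the far case versus $\bigotimes_i N_i(\sigma)$ in the null case.

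We bound the trace distance between $\mathbb{E}_Z\bigotimes_i N_i(\rho_Z)$ and $\bigotimes_i N_i(\sigma)$ through the quantum $\chi^2$-divergence taken with respect to the reference $\Omega_i=N_i(I/d)$. No mixedness preservation is assumed, so $\Omega_i$ is arbitrary; we use the inner product $\langle X,Y\rangle_{\Omega}=\tr(X\,\Omega^{-1/2}Y\Omega^{-1/2})$, or any monotone version of it. Tensorization of this $\chi^2$ over product references gives
\begin{equation*}
1+\QChi{\mathbb{E}_Z\textstyle\bigotimes_i N_i(\rho_Z)}{\textstyle\bigotimes_i\Omega_i}=\mathbb{E}_{Z,Z'}\prod_{i=1}^m\Bigl(1+\tfrac{c^2\eps^2}{d^3}\,\langle Z, A_i Z'\rangle_{HS}\Bigr),
\end{equation*}
where $A_i=N_i^\dagger\circ\Omega_i^{-1/2}(\cdot)\Omega_i^{-1/2}\circ N_i$ is a positive superoperator on traceless Hermitians. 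Writing $A=\sum_i A_i$ and using $1+x\le e^x$, it then suffices to control $\mathbb{E}_{Z,Z'}\exp\bigl(\tfrac{c^2\eps^2}{d^3}\langle Z,AZ'\rangle\bigr)$. For a bilinear form in independent sub-Gaussian vectors, this is $1+O\bigl(\tfrac{\eps^4}{d^6}\|A\|_F^2\bigr)$, up to normalization conventions that I will fix so that the scaling matches, provided $\tfrac{\eps^2}{d^3}\|A\|_{\mathrm{op}}$ is small. Since $\|A\|_F^2\le m\sum_i\|A_i\|_F^2$ is too lossy, I would instead use the sub-Gaussian estimate in the form $\mathbb{E}\exp(\lambda Z^\top AZ')\le\exp(O(\lambda^2\|A\|_F^2))$ together with $\|A\|_F\le\sum_i\|A_i\|_F\le m\max_i\|A_i\|_F$.

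The key single-node estimate is $\|A_i\|_F^2\le O\bigl(d^2\,2^{n_c}\,4^{n_q}\bigr)$, which I would obtain from two facts.
\begin{enumerate}
\item \emph{Operator norm.} We have $\|A_i\|_{\mathrm{op}}\le d$. This holds because $\langle X,A_iX\rangle=\chi^2$-type energy of $N_i(X)$ relative to $N_i(I/d)$, and the data-processing inequality for this $\chi^2$ bounds it by $\tr(X\,(I/d)^{-1}X)=d\|X\|_F^2$.
\item \emph{Rank.} The rank of $A_i$ is at most the real dimension of the space of Hermitian cq operators, which is $2^{n_c}\cdot 4^{n_q}$, because $A_i$ factors through the output space.
\end{enumerate}
Hence $\|A_i\|_F^2\le \rank(A_i)\,\|A_i\|_{\mathrm{op}}^2\le 2^{n_c}4^{n_q}d^2$. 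Substituting this bound, the $\chi^2$ stays below a small constant unless $m^2\eps^4\,2^{n_c}4^{n_q}d^2/d^6=\Omega(1)$, that is, unless $m=\Omega\bigl(d^2/(2^{n_q+n_c/2}\eps^2)\bigr)$. This is exactly the claimed bound, and the classical bits contribute only a square-root factor because they add rank linearly in $2^{n_c}$, while each qubit contributes a factor $4$ to the rank.

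The main obstacle is making the $\chi^2$ step rigorous without mixedness preservation. Three things must hold together: the chosen quantum $\chi^2$ must tensorize over product references; it must upper-bound the trace distance of the mixture; and its bilinear expansion must be exactly $\langle Z,A_iZ'\rangle$ with $A_i$ positive semidefinite and obeying data processing. I would first try the Petz/Bures-type $\chi^2$ with $\Omega^{-1/2}(\cdot)\Omega^{-1/2}$ weighting, which is monotone and satisfies $\|\mu-\nu\|_1^2\le\QChi{\mu}{\nu}$. A secondary technical point is handling the rare $Z$ with large operator norm, which I would deal with by conditioning $Z$ on a good event and absorbing the lost total-variation mass.
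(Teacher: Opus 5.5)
Your proposal is correct and follows essentially the same route as the paper: you fix the public coin, apply the symmetrized ($\alpha=1/2$) $\chi^2$ with its Ingster--Suslina tensorization to the Rademacher perturbation of $I/d$, and bound the Frobenius norm of your superoperator $A_i$ by $\sqrt{\rank(A_i)}\cdot\|A_i\|_{\mathrm{op}}\le\sqrt{Kd_q^2}\cdot d$; your $A_i$ is exactly the paper's $M_{\Lambda_i}^\dagger M_{\Lambda_i}$ for the augmented map $\Lambda_i(X)=\tau_i^{-1/4}\mathcal{I}_i(X)\tau_i^{-1/4}$. The only real difference is that you get $\|A_i\|_{\mathrm{op}}\le d$ from data processing for the symmetrized $\chi^2$, whereas the paper proves the same contraction directly using Kadison--Schwarz and an AM--GM step.
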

\begin{theorem}[General Private-Coin Lower Bound]\label{thm:private_coin_lb}
    Any protocol for $\eps$-certification of $d$-dimensional states in the $(n_c,n_q,\text{private})$ model must have sample complexity $\Omega\left(\frac{d^3}{2^{2n_q+n_c}\eps^2}\right)$.
\end{theorem}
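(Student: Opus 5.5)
We follow the standard Paninski-style lower-bound template adapted to the quantum setting, as used for distributed classical identity testing by Acharya \etal\ and for single-copy quantum certification by Chen \etal. We take $\sigma=I/d$ and consider the perturbed family $\rho_U=\frac{1}{d}(I+\eps\, U\Lambda U^\dagger)$, where $U$ is Haar random and $\Lambda$ is a fixed traceless diagonal matrix with entries $\pm 1$ (times a constant), so that $\|\rho_U-\sigma\|_1=\Theta(\eps)$. Each node $j$ applies a channel $\Phi_j$ mapping $\mathbb{C}^{d}$ to a classical-quantum system with $2^{n_c}$ classical outcomes and $2^{n_q}$-dimensional quantum registers. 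Writing the output as $\bigoplus_x \Phi_j^x(\cdot)$ with $\Phi_j^x$ completely positive and trace-non-increasing, the referee sees $\bigotimes_j \Phi_j(\rho_U)$ averaged over $U$ under the alternative, versus $\bigotimes_j \Phi_j(I/d)$ under the null. Since no shared randomness is available, the $\Phi_j$ are fixed. It therefore suffices to show that the chi-square divergence between the mixture $\mathbb{E}_U\bigotimes_j\Phi_j(\rho_U)$ and $\bigotimes_j\Phi_j(I/d)$ is $o(1)$ unless $m=\Omega(d^3/(2^{2n_q+n_c}\eps^2))$.

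\textbf{Step 1: reduce to mixedness-preserving nodes.} A node need not be mixedness preserving, so we first whiten. Let $\tau_j=\Phi_j(I/d)$ and define the likelihood-ratio operators $\tau_j^{-1/2}\Phi_j(\Delta_U)\tau_j^{-1/2}$, where $\Delta_U=\eps U\Lambda U^\dagger/d$. Using the quantum chi-square divergence $\QChi{\cdot}{\cdot}$ built from the $\tau$-weighted inner product, the standard identity for mixtures of product states gives
\[
1+\QChi{\mathbb{E}_U\otimes_j\Phi_j(\rho_U)}{\otimes_j\tau_j}=\mathbb{E}_{U,V}\prod_{j}\bigl(1+\InnerNamed{\Phi_j(\Delta_U)}{\Phi_j(\Delta_V)}{\tau_j}\bigr),
\]
where $\InnerNamed{A}{B}{\tau}=\tr(A\tau^{-1/2}B\tau^{-1/2})$. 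The problem then reduces to the random variable $Z=\sum_j\InnerNamed{\Phi_j(\Delta_U)}{\Phi_j(\Delta_V)}{\tau_j}$: we need $\mathbb{E}e^{Z}=1+o(1)$.

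\textbf{Step 2: write $Z$ as a quadratic form in $U\Lambda U^\dagger$ and $V\Lambda V^\dagger$.} For each $j$, let $M_j=\Phi_j^{\dagger}\circ\Omega_{\tau_j}\circ\Phi_j$, a positive map on traceless $d\times d$ matrices, where $\Omega_\tau$ is the whitening. Then $Z=\frac{\eps^2}{d^2}\tr\bigl(U\Lambda U^\dagger\,M(V\Lambda V^\dagger)\bigr)$ with $M=\sum_j M_j$. The key structural bound is that each $M_j$, viewed as an operator on the $(d^2-1)$-dimensional space of traceless Hermitian matrices with the Hilbert--Schmidt inner product, has operator norm at most $d$ and trace at most $d\cdot 2^{2n_q+n_c}$ (up to constants). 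Heuristically, the output space has dimension $2^{n_c}\cdot 2^{2n_q}$, so $M_j$ has rank at most $2^{n_c+2n_q}$, and its eigenvalues are at most $d$ because the whitened channel is a contraction relative to $I/d$ (data processing for $\chi^2$). Proving the trace bound without any mixedness assumption is exactly where the assumption of Doosti \etal\ is removed. Our plan is to use the output-dimension rank bound together with data processing: $\InnerNamed{\Phi(X)}{\Phi(X)}{\tau}\le d\|X\|_2^2$. This is the main obstacle. The delicate case is classical outcomes $x$ whose $\tr\Phi^x(I/d)$ is tiny; there the whitening blows up, but the Petz/data-processing inequality still caps the contribution.

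\textbf{Step 3: Haar moment bound.} With $U,V$ independent Haar unitaries, $Z$ is a degree-2 polynomial in Haar entries of mean zero. Its variance is $\approx\frac{\eps^4}{d^4}\cdot\frac{\|\Lambda\|_2^4}{d^4}\|M\|_{HS}^2$, which is at most $\frac{\eps^4}{d^4}\|M\|_{op}\tr M\lesssim \frac{\eps^4}{d^4}\cdot md\cdot m d\, 2^{2n_q+n_c}$. A sharper bound, using $\|M\|_{HS}^2\le \sum_j\|M_j\|_{op}\tr M_j$ plus cross terms handled by $\|M\|_{op}\le md$ where needed, gives $\mathrm{Var}\,Z\lesssim m^2\eps^4 2^{2n_q+n_c}/d^6$. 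To pass from variance to $\mathbb{E}e^Z$, we use concentration of Lipschitz functions on the unitary group (or the hypercontractivity argument of Chen \etal), which gives subexponential tails at the scale $\sqrt{\mathrm{Var}}$ plus an operator-norm term. Hence $\mathbb{E}e^Z=1+o(1)$ whenever $m\ll d^3/(2^{n_q+n_c/2}\cdot 2^{n_q}\eps^2)=d^3/(2^{2n_q+n_c}\eps^2)$. Finally, Le Cam's method converts the small $\chi^2$ divergence, via a trace-distance bound, into failure probability above $1/3$, which completes the proof.
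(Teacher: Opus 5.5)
Your Steps~1--2 are sound and match the paper's ingredients: the whitening via the symmetrized $\chi^2$, $\|M_j\|_{op}\le d$, and $\tr M_j\le d\,2^{2n_q+n_c}$ are exactly \cref{lemma:instrument_infty_bound,lemma:instrument_2_bound}. Step~3, however, has a structural gap. Your hard ensemble $\rho_U$ is fixed in advance, independently of the channels, and no channel-independent ensemble can give a bound above the \emph{public}-coin complexity. A public-coin tester (run with accuracy $\Theta(\eps)$) accepts $I/d$ and rejects every $\rho_U$ with probability $2/3$. Averaging over its shared seed, some fixed seed therefore yields fixed instruments $\Phi_1,\dots,\Phi_m$ that distinguish $\mathbb{E}_U\bigotimes_j\Phi_j(\rho_U)$ from $\bigotimes_j\Phi_j(I/d)$ with constant advantage. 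By \cref{thm:doosti_public_upper_bound} this already happens at $m=O(d^2/(2^{n_q}\eps^2))$ for $n_c=0$, far below $d^3/(2^{2n_q}\eps^2)$ once $d\gg 2^{n_q}$.

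Concretely, suppose every node applies $X\mapsto\tr_B X$ with $\dim B=d/d_q$. Then $\tau_j=I/d_q$ and every $M_j$ equals $d\,\Pi_W$, where $W=\{Y\otimes I_B:\tr Y=0\}$ has dimension $d_q^2-1$. Hence $\|M\|_{HS}^2=m^2d^2(d_q^2-1)$ and $\mathrm{Var}\,Z\approx\eps^4\|M\|_{HS}^2/d^6\approx\eps^4m^2d_q^2/d^4$, which is of constant order at $m\approx d^2/(d_q\eps^2)$. So your ``sharper bound'' $\mathrm{Var}\,Z\lesssim m^2\eps^4 2^{2n_q+n_c}/d^6$ is false. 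The cross terms $\tr(M_jM_k)$, $j\neq k$, are largest exactly when nodes use aligned channels, and nothing in your argument controls them. With the correct prefactor $\eps^4/d^6$, the honest estimate $\|M\|_{HS}^2\le\|M\|_{op}\tr M\le m^2d^2\,2^{2n_q+n_c}$ gives the threshold $m\approx d^2/(2^{n_q+n_c/2}\eps^2)$. That is at best a proof of the public-coin bound \cref{thm:public_coin_lb}. (Your closing arithmetic is also off: $2^{n_q+n_c/2}\cdot 2^{n_q}=2^{2n_q+n_c/2}$, not $2^{2n_q+n_c}$.)

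The missing idea is to use the max--min order that private coins permit (\cref{lemma:instrument_minimax_maximin_bounds}). After averaging each node's private randomness into a single instrument, the instruments are fixed \emph{before} the hard instance, so the perturbation may depend on them. The paper takes $V_1,\dots,V_\ell$, with $\ell=d^2/2$, to be eigenvectors of $T=\frac1m\sum_jM_j$ (on traceless Hermitian matrices) with the $\ell$ smallest eigenvalues. It perturbs by $\Delta_z=\frac{c\eps}{\sqrt{d\ell}}\sum_{i\le\ell}z_iV_i$ with Rademacher $z$. Then $\sum_jZ_j=\lambda\,z^\top Az'$ with $\lambda=\Theta(m\eps^2/(d\ell))$ and $A=\mathcal{V}^\dagger T\mathcal{V}$ diagonal, and only $\tr T$ matters. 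Each of the $\ell$ smallest eigenvalues is at most $\tr T/(d^2-\ell-1)=O(2^{2n_q+n_c}/d)$, so $\|A\|_2=O(2^{2n_q+n_c})$. \cref{lemma:exponential_expectation_bound} then keeps the divergence small unless $m=\Omega(d\ell/(\eps^2\|A\|_2))=\Omega(d^3/(2^{2n_q+n_c}\eps^2))$. In words, the adversary hides the perturbation in the directions the fixed nodes collectively ignore, which a channel-independent Haar rotation cannot do.
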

 We then complement these negative results with (nearly) matching upper bounds. First, we go back to the private-coin setting with no classical communication. for which we provide a protocol with sample complexity tight up to a single logarithmic factor:
\begin{theorem}[Private-Coin Upper Bound]\label{thm:private_coin_no_classical_ub}
    Fix any $1\leq n_q\leq \log d$. There is a protocol for $\eps$-certification of $d$-dimensional states in the $(0,n_q,\text{private})$ model with sample complexity $O\left(\frac{d^3}{2^{2n_q}\eps^2}\cdot \log d\right)$.
\end{theorem}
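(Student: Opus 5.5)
We design a private-coin protocol in the $(0,n_q,\text{private})$ model and reduce to the public-coin protocol of \cref{thm:doosti_public_upper_bound}. Write $k \eqdef 2^{n_q}$. The public-coin protocol presumably has each node apply a shared random unitary $U$ and project onto a fixed $k$-dimensional subspace. Without shared randomness, we instead have each node $i$ sample its own basis change from a small fixed family. A natural choice is a partition of $[d]$ into $d/k$ blocks of size $k$ after a private Haar-random (or $2$-design) unitary $U_i$. The node measures which block the rotated state falls into, but it cannot send this block label, since there is no classical channel. So the node sends only the post-measurement $k$-dimensional state, and the referee does not know which block or which $U_i$ was used.

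To avoid the referee's ignorance of $U_i$, we would instead use the following scheme. Fix in advance a finite family $\mathcal{U}=\{U_1,\dots,U_L\}$ with $L=O(d^2/k^2\cdot\text{poly})$. Node $j$ is assigned deterministically (by index, which is allowed since nodes may be distinguished) the unitary $U_{j \bmod L}$ and a block $B_{j}$. The node applies $U$ and then the compression map $\Phi_B(\rho)=\Pi_B U\rho U^\dagger \Pi_B + \tr((I-\Pi_B)U\rho U^\dagger)\,|\bot\rangle\langle\bot|$, sending $n_q$ qubits; we reserve one dimension for the flag $\bot$, which costs a constant factor in $k$. Because the assignment is known to the referee, it can compute $\Phi_B(\sigma)$ and test each group of nodes sharing the same $(U,B)$ with a standard centralized certifier \cite{DBLP:conf/stoc/BadescuO019}, or with a collision/$\ell_2$-statistic estimator, and then aggregate. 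The key quantitative step is to show that for a random $U$ and block $B$,
\[
\mathbb{E}\,\|\Phi_B(\rho)-\Phi_B(\sigma)\|_2^2 \gtrsim \frac{k^2}{d^2}\,\|\rho-\sigma\|_2^2 \ge \frac{k^2\eps^2}{d^3}.
\]
The effective signal per node is then $\eps_{\mathrm{eff}}^2\approx k^2\eps^2/d^3$ in Hilbert--Schmidt norm on a $k$-dimensional space. An $\ell_2$ tester would need about $1/(\sqrt{k}\cdot \text{dist}_2^2)$-type samples per group, and combining over groups should give $d^3/(k^2\eps^2)$ total. This differs from the public-coin $d^2/(k\eps^2)$ because the private setting cannot exploit coherent aggregation across random rotations: each group must be tested separately or by summing unbiased statistics.

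To make the aggregation work, we would use a sum over all node pairs within the same $(U,B)$ group of the unbiased estimator of $\|\Phi(\rho)-\Phi(\sigma)\|_2^2$, implemented via swap tests. The mean is the expectation displayed above, and the variance is controlled as in classical $\ell_2$ identity testing. We would derandomize the choice of $U$ by taking $L$ independent Haar samples fixed in the protocol description, and argue by a concentration/net argument that with positive probability the fixed family satisfies the displayed inequality for all $\rho$ simultaneously, uniformly up to constants. The union bound over a net of the $d^2$-dimensional space of $\Delta=\rho-\sigma$ is where the $\log d$ factor should enter.

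The main obstacle is this derandomization/uniformity step. We need a fixed family of private bases such that every traceless $\Delta$ with $\|\Delta\|_1\ge\eps$ retains $\Omega(k^2/d^2)$ of its Frobenius mass after compression on average over the family. We would first try a matrix concentration argument: write the averaged map $\frac1L\sum_\ell \Phi_\ell^\dagger\Phi_\ell$ as a sum of independent PSD superoperators with expectation $\approx \frac{k^2}{d^2}\,\mathrm{id}$ on traceless matrices, and apply matrix Chernoff in dimension $d^2$. This yields $L=O(\frac{d^2}{k^2}\log d)$ and hence the extra $\log d$ factor in the sample complexity.
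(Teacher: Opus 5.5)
\textbf{The gap: your compression map loses a factor $d/k$ in signal, and the sample-count accounting does not close.} Your architecture matches the paper's: a fixed family of $L=O(\frac{d^2}{k^2}\log d)$ Haar unitaries certified by matrix Chernoff, deterministic assignment of nodes to groups, and averaging unbiased Hilbert--Schmidt estimates over groups. The problem is the map. Projecting onto a $k$-dimensional block plus a flag keeps only a $k/d$ fraction of the probability mass. One computes $\mathbb{E}_U\|\Phi_B(X)\|_2^2=\Theta(\frac{k^2}{d^2})\|X\|_2^2$; the flag contributes only $O(\frac{k}{d^2})\|X\|_2^2$. So the uniform signal is $\beta\approx\frac{k^2\eps^2}{d^3}$, and no choice of family can do better, since the averaged superoperator has trace $\Theta(k^2)$ on a $(d^2-1)$-dimensional space.

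With the estimator you invoke (variance $O(\frac{1}{t^2}+\frac{D}{t})$ per group), the averaged statistic has null variance $\approx\frac{1}{Lt^2}$. Chebyshev then forces $t\gtrsim\frac{1}{\beta\sqrt L}$, so $m=Lt\gtrsim\frac{\sqrt L}{\beta}\approx\frac{d^4\sqrt{\log d}}{k^3\eps^2}$. That is roughly a factor $d/k$ above the target.

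Your sentence ``an $\ell_2$ tester would need about $1/(\sqrt k\cdot \mathrm{dist}_2^2)$ samples per group, and combining over groups should give $d^3/(k^2\eps^2)$'' does not compute. It gives $\frac{d^3}{k^{5/2}\eps^2}$ per group over $L\approx d^2/k^2$ groups, which is nowhere near the target. The $\sqrt k$ saving of classical collision testing is also unavailable here. The swap operator squares to the identity, so the quantum $1/t^2$ term does not shrink for near-maximally-mixed outputs the way $\|p\|_2^2/t^2$ does. In any case your flagged outputs are far from uniform.

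\textbf{The fix: compress by a partial trace rather than a subspace projection.} Write $\mathbb{C}^d\cong A\otimes B$ with $\dim A=k$ and $\dim B=d/k$, and send $\Phi_U(\rho)=\tr_B(U\rho U^\dagger)$. Then $\|\tr_B Y\|_2^2=\frac{d}{k}\|\Pi_W Y\|_2^2$, where $W=\{Y_A\otimes I_B:\tr Y_A=0\}$ has dimension $k^2-1$. By unitary invariance (Schur's lemma), $\mathbb{E}_U\|\Phi_U(X)\|_2^2=\frac{d}{k}\cdot\frac{k^2-1}{d^2-1}\|X\|_2^2\approx\frac{k}{d}\|X\|_2^2$, a factor $d/k$ better than your map.

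Matrix Chernoff applied to the norm-one, rank-$(k^2-1)$ superoperator projections $\Pi_{U_r}$ gives $\frac1L\sum_r\|\Phi_{U_r}(X)\|_2^2\ge\frac{k}{4d}\|X\|_2^2$ for every traceless $X$, with $L=O(\frac{d^2}{k^2}\log d)$. Hence $\beta=\frac{k\eps^2}{4d^2}$. Taking $t=O(d/\eps^2)$ nodes per group and applying Chebyshev to $\Var[\hat D]\le C(\frac{1}{Lt^2}+\frac{D}{Lt})$ gives $m=Lt=O(\frac{d^3}{k^2\eps^2}\log d)$.

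If you kept the block map, you would instead need a refined, state-dependent variance bound exploiting that the $\bot$ branch is classical. You do not supply one, and it is not clear it would hold uniformly over $\sigma$.
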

Next, we turn to the public-coin setting, with \emph{both} classical and quantum communication permitted, for which we give a sample-optimal algorithm:
\begin{theorem}[General Public-Coin Upper Bound]\label{thm:public_coin_ub_inf}
    Fix any $1\leq n_q+n_c\leq \log d$. There is a protocol for $\eps$-certification of $d$-dimensional states in the $(n_c,n_q,\text{private})$ model with sample complexity $O\left(\frac{d^2}{2^{n_q+n_c/2}\eps^2}\right)$.
\end{theorem}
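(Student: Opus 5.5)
The plan is to reduce to the purely quantum public-coin protocol of \cref{thm:doosti_public_upper_bound}, adapted so that classical bits substitute for qubits at an exchange rate of two classical bits per qubit. (The statement says ``private''; given its title and the target bound, I read it as the public-coin model.)

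\textbf{Step 1: the protocol.} Each node uses the shared public randomness to draw a Haar-random unitary $U$ and applies it to its copy of $\rho$. It then projects onto a random block decomposition of $\mathbb{C}^d$ into $d/2^{k}$ subspaces of dimension $2^{k}$, where $k=n_q+\lfloor n_c/2\rfloor$. The resulting state is compressed in two parts:
\begin{itemize}
\item[(a)] the index of the subspace it landed in, which fits in the remaining classical bits when $d/2^{k}\le 2^{n_c - 2\lfloor n_c/2\rfloor}$; otherwise the index is sampled jointly using public coins so it need not be sent;
\item[(b)] the post-measurement state on $k$ qubits.
\end{itemize}
The $n_q$ qubits are sent directly. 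The remaining $2\lfloor n_c/2\rfloor$ classical bits replace the other $\lfloor n_c/2\rfloor$ qubits via a measure-and-report step. Concretely, the node measures those qubits in a public random basis of a $2^{\lfloor n_c/2\rfloor}$-dimensional factor. Squaring that dimension to count outcome pairs costs $n_c$ bits, which mirrors how teleportation trades two classical bits for one qubit given shared randomness.

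\textbf{Step 2: the referee's statistic.} The referee computes, from $\sigma$ and the public randomness, the corresponding transformed reference state $\sigma'$ on the effective $2^{k}$-dimensional message space. It then forms an unbiased estimator of $\|\rho'-\sigma'\|_2^2$ from the messages, pairing messages across nodes as in a collision/swap-test statistic: quantum registers are handled with swap-test-style observables and classical outcomes with collision counts.

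\textbf{Step 3: the analysis.} Using Haar averaging (Weingarten second moments), the random compression should preserve the Hilbert--Schmidt distance in expectation up to a factor $2^{k}/d^2$ times $\|\rho-\sigma\|_2^2\ge \eps^2/d$. The variance must then be bounded by a Chebyshev argument. With $m$ messages, the signal is about $\eps^2 2^{k}/d^{2}$ per pair. Requiring $m$ to beat the fluctuations at message dimension $2^{k}$ gives
\[
m=O\bigl(d^2/(2^{k}\eps^2)\bigr)=O\bigl(d^2/(2^{n_q+n_c/2}\eps^2)\bigr).
\]

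\textbf{Main obstacle.} The hardest step is the variance bound for the hybrid estimator. The concern is that classical measurement outcomes may lose more than a square-root factor of signal. I would first show that a random-basis measurement of a $D$-dimensional register yields collision statistics whose distinguishing power matches that of a $\sqrt{D}$-dimensional quantum register, via the known $d^2/(\text{outcomes})$ versus $d^2/D$ scaling in single-copy certification (\cite{DBLP:journals/corr/abs-2408-17439}). I would then combine the classical and quantum parts with a tensor-product fourth-moment computation.
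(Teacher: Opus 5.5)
Reading the model as public-coin matches the paper, but the reduction at the heart of your plan does not exist in this model.

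\textbf{Step 1 cannot be realized.} Teleportation converts two classical bits into one qubit only by consuming a pre-shared ebit. The nodes here share randomness, not entanglement, and classical bits alone cannot carry quantum information. Measuring a $2^{\lfloor n_c/2\rfloor}$-dimensional factor in any basis yields $\lfloor n_c/2\rfloor$ bits and dephases that factor, so there are no ``outcome pairs'' on which to spend $n_c$ bits. Step 1(a) also fails. The block in which the rotated state lands is a $\rho$-dependent measurement outcome, so public coins cannot supply it. Sending it costs $\log_2(d/2^k)\ge\lceil n_c/2\rceil$ bits, which exceeds the at most one leftover bit whenever $n_c\ge 2$. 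Hence there is no effective $2^{n_q+\lfloor n_c/2\rfloor}$-dimensional quantum message.

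\textbf{Step 3 puts the $\sqrt{K}$ in the wrong quantity.} Consider a realizable message of this kind: $K=2^{n_c}$ labels measured after a Haar rotation together with a $d_q$-dimensional register, i.e.\ the paper's Haar instrument. For it, \cref{lemma:first_moment_upsilon} gives $\mathbb{E}_U\|\mathcal{I}_U(\rho-\sigma)\|_2^2=\frac{d_qd-b}{d^2-1}\|\rho-\sigma\|_2^2\approx\frac{d_q}{d}\|\rho-\sigma\|_2^2$, not $\frac{2^k}{d}\|\rho-\sigma\|_2^2$. Dephasing the label register discards the off-diagonal label blocks, which carry all but about a $1/K$ fraction of the weight. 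So the signal is only $\Gamma\approx d_q\eps^2/d^2$. With the standard BOW variance $O(1/m^2+\Gamma/m)$ you would still need $m\approx d^2/(d_q\eps^2)$, and the classical bits would buy nothing.

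\textbf{The missing idea is that the $\sqrt{K}$ gain comes from the variance, not the signal.} The paper builds a CQ analogue of the BOW estimator whose two-copy term $S_{ij}=\sum_c F_{Q_iQ_j}\otimes\Pi_c^{C_i}\otimes\Pi_c^{C_j}$ swaps the quantum registers only when the labels agree. Its second moment is therefore $\sum_c r_c^2$ rather than $1$. From $t$ copies this yields $\Var(\hat\Gamma)=O\bigl(r_\infty\Gamma/t+(\chi_r+\chi_s)/t^2\bigr)$ (\cref{thm:cq_hs_estimator}).

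The paper then proves two properties of a publicly sampled Haar instrument:
\begin{itemize}
\item For each \emph{fixed} state it spreads the label distribution: $\mathbb{E}_U\sum_c p_c(U)^2\le 2/K$ (\cref{thm:Haar_spread}). So with constant probability $\chi_\rho,\chi_\sigma=O(1/K)$ and $r_\infty=O(1/\sqrt K)$.
\item By Paley--Zygmund on the first two moments of $\|\mathcal{I}_U(X)\|_2^2$, one has $\Gamma\ge d_q\eps^2/(4d^2)$ with constant probability (\cref{thm:Haar_compression}).
\end{itemize}
Chebyshev then needs only $t\gtrsim 1/(\sqrt K\,\Gamma)$ nodes per sampled instrument, i.e.\ $m=O\bigl(d^2/(d_q\sqrt K\eps^2)\bigr)$ overall.

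Your fallback does not substitute for this. You propose importing the ``$K$ outcomes $\approx\sqrt K$ dimensions'' scaling from single-copy certification and combining it with a fourth-moment computation. But equal sample complexities for each resource in isolation do not compose into a joint protocol. The paper's example of two CQ states with identical classical and quantum marginals shows the two parts must be tested jointly. The label-matched estimator and its $K$-dependent variance bound have to be constructed and proved directly.
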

We summarize these results in \cref{tab:results_table}:

\begin{table}[H]
    \centering
    \scriptsize
    \resizebox{\textwidth}{!}{%
    \begin{tabular}{|p{2cm}|l|c|c|c|c|}
    \hhline{|-|-|-|-|-|-|}
    \multicolumn{2}{|l|}{} & \multicolumn{2}{c|}{Public-Coin} & \multicolumn{2}{c|}{Private-Coin} \\ \hhline{|-|-|-|-|-|-|}
    \multirow{2}{=}{Quantum Communication Only}
     & Upper Bound & \multicolumn{2}{c|}{\cellcolor{gray!20}$O\left(\frac{d^2}{2^{n_q}\eps^2}\right)$} & \multicolumn{2}{c|}{$O\left(\frac{d^3}{4^{n_q}\eps^2}\log d\right)$} \\ \hhline{|~|-|-|-|-|-|}
     & Lower Bound & \cellcolor{gray!20}$\Omega\left(\frac{d^2}{2^{n_q}\eps^2}\right)^\dagger$ & $\Omega\left(\frac{d^2}{2^{n_q}\eps^2}\right)^*$ & \cellcolor{gray!20}$\Omega\left(\frac{d^3}{4^{n_q}\eps^2}\right)^\dagger$ & $\Omega\left(\frac{d^3}{4^{n_q}\eps^2}\right)^*$ \\ \hhline{|-|-|-|-|-|-|}
    \multirow{2}{=}{Quantum and Classical Communication}
     & Upper Bound & \multicolumn{2}{c|}{$O\left(\frac{d^2}{2^{n_q+n_c/2}\eps^2}\right)$} & \multicolumn{2}{c|}{} \\ \hhline{|~|-|-|-|-|-|}
     & Lower Bound & \multicolumn{2}{c|}{$\Omega\left(\frac{d^2}{2^{n_q+n_c/2}\eps^2}\right)$} & \multicolumn{2}{c|}{$\Omega\left(\frac{d^3}{2^{2n_q+n_c}\epsilon^2}\right)$} \\ \hhline{|-|-|-|-|-|-|}
    \end{tabular}%
    }
    \caption{Upper and lower bounds for both public- and private-coin models. The shaded cells represent previously known results by Doosti \etal, and the unshaded cells represent our results. The $\dagger$ indicates a conditional bound, and the $*$ represents that the result is a special case of a more general result we establish.}
    \label{tab:results_table}
\end{table}
\subsection{Our Techniques}
\subsubsection{Lower Bound Techniques}
We start by giving an overview of the methodology of Doosti \etal, which builds upon and extends the classical counterpart developed by~\cite{DBLP:journals/tit/AcharyaCT20}, and which we use as inspiration for our techniques. The starting point is that $n_q$ qubits can be instead represented as one $d_q$-dimensional qudit, where $d_q \eqdef 2^{n_q}$. Hence, without loss of generality, we assume from now on that the distributed nodes are communicating $d_q$-dimensional qudits. At a high level, Doosti \etal~show it is hard to distinguish the maximally mixed state from a state close to the maximally mixed state, by explicitly constructing such a hard family. Then they analyze the $\chi^2$-divergence\footnote{See \cref{def:quantum_chi_squared} for a definition.} between the qudits received, which upper bounds their trace distance but is more tractable to analyze. After handling a fair amount of technicalities, they are able to reduce this to the problem of lower bounding the norm of a certain operator they define. This operator depends on a quantum channel $\Phi$ (which can be thought of the quantum analogue of a classical communication channel) which the distributed nodes use to send qudits to the central referee node. Crucially, one key step of their argument requires to evaluate $\Phi(I/d)^{-1}$, that is, the inverse of the state which the quantum channel maps the maximally mixed state to. This is where the mixedness preservation assumption comes in: indeed, under this assumption on $\Phi$, this quantity is simply the inverse of another maximally mixed state, which becomes straightforward to evaluate. However, it was unclear how to handle this inverse for general quantum channels.

Our approach to remove the mixedness preserving assumption relies on a few key observations. First, we use a different, \emph{symmetrized} version of $\chi^2$-divergence, which has some more amenable properties for the proof, and also may be of independent interest. We then must re-establish several key results for this symmetrized $\chi^2$ divergence, in order for the overall argument to go through. After this, this symmetrized divergence allows us to proceed as follows: denoting, for an arbitrary quantum channel\footnote{See \cref{sec:quantum_channels_instruments} for a definition.} $\Phi(X)$, the output of the maximally mixed state under the channel $\Phi$ as $\tau \eqdef \Phi(I/d)$, we define an ``augmented'' channel $\Psi(X) \eqdef \tau^{-1/4}\Phi(X)\tau^{-1/4}$. One can show $\Psi(X)$ is well-defined, and analyzing this channel instead paves the way towards establishing \cref{thm:public_coin_no_classical_lb} and \cref{thm:private_coin_no_classical_lb}.

While doing so would enable us to obtain these two theorems, we do not actually go through with this augmented channel $\Psi$ directly, as our goal is to instead establish the more general versions allowing classical communication as well. To do so, the next insight is that instead of using a quantum \emph{channel}, as Doosti \etal~do, one can instead use a \emph{quantum instrument},\footnote{See \cref{sec:quantum_channels_instruments} for the definition.} which allows for communication of both quantum qudits and classical bits. With the notion of quantum instrument in hand, the idea is then similar: one can define an augmented quantum instrument in the spirit of the above paragraph, prove it is well-defined with nice properties, and then follow the framework of Doosti \etal~to reprove lower bounds using this augmented quantum instrument, instead of a quantum channel. Doing so yields \cref{thm:public_coin_lb} and \cref{thm:private_coin_lb}. We emphasize that although Doosti \etal~do provide the blueprint for the proof, much of the work is adapting their framework to fit these augmented quantum instruments, which is a nontrivial task. The results we establish for quantum instruments are novel, and also may be of independent interest for other problems. To the best of our knowledge this is the first application of quantum instruments in the learning setting.
\subsubsection{Upper Bound Techniques}
The private-coin upper bounding framework starts from a similar point as Doosti \etal\ However, instead of using the Hilbert-Schmidt certifier, which, given $m$ copies of two states, tests whether two states are close or far in terms of Hilbert--Schmidt distance, we instead use the Hilbert-Schmidt estimator, a subroutine of Badescu \etal~for the certifier, which, given $m$ copies of the two states, directly outputs an estimate of the trace distance between two states. Then the high-level idea is the same, where we aim to estimate the Hilbert-Schmidt distance after each state $\rho$ has been rotated by a unitary, and hope that the Hilbert-Schmidt distance for when $\rho=\sigma$ and when $\rho$ is far from $\sigma$ are separated. The key difficulty is that in the public-coin case, every node is aware of the other unitaries used by the other nodes, but this is not true of the private-coin case. Instead, we show the existence of a set of ``good'' unitaries, such that it preserves, up to constant factors, the Hilbert-Schmidt distance between $\rho$ and $\sigma$ (see \cref{thm:good_unitaries}). Then, the nodes in the private-coin setting can pre-agree on these unitaries, leading to a nearly tight \emph{deterministic} protocol for the private-coin setting.

The general public-coin algorithm goes in four main steps, which are explained further in \cref{sec:outline}. First, we do a more specific analysis of the Badescu \etal~'s Hilbert--Schmidt estimator, to show that when quantum instruments spread out over all the available classical strings, this corresponds to a lower variance. To be specific, whilst Badescu \etal's variance bound depends inversely on $m$, we derive a bound that depends inversely on $ms$, where $m$ is the number of copies and $s$ is the number of classical strings that can be output by the quantum instrument. This is done mainly by following the structure of their proof, but adapted for the specific case of classical-quantum states. Even in this specific setting, we do not claim a generic improvement, but instead identify sufficient conditions where we can get such an improvement. Specifically, if $s$ is constant, meaning the quantum instruments use few classical responses, then the improvement to variance will get absorbed into the leading constant factors; however, if $s$ is large enough, this will lead to a genuine improvement to the variance. In particular, if $s\approx K$, then we will be able to get variance that depends inversely on $K$. With this in mind, we design quantum instruments based on sampling Haar random unitaries such that they do spread their responses over all the possible classical strings, corresponding to the setting where $s\approx K$, which we call \emph{Haar instruments}. In other words, they can possibly respond with many classical strings of length $n_c$, with roughly equal probability, rather than responding with a few with high probability. These may also be of independent interest. This then allows us to realize the improvements to the Hilbert-Schmidt distance estimator, yielding a lower variance. We also then show this quantum instrument does not contract the Hilbert-Schmidt distance too much, making it still possible to distinguish $\rho=\sigma$ from $\|\rho-\sigma\|_1\geq\eps$. The actual algorithm then, samples Haar random unitaries to design these quantum instruments. Each distributed node then sends a quantum state to the central node through the quantum instruments. The central node, with full knowledge of $\sigma$, as well as the quantum instruments used, then estimates the distance between $\sigma$ passed through the quantum instruments, and the received states. If $\rho=\sigma$, most of these estimates will be below some threshold, and the central node can accept. Else, the Hilbert--Schmidt distances will be large, and hence most of the estimates large (with high probability), leading to the central node rejecting.

\subsection{Related Work}
As mentioned above, the setting of this work is a generalization of classical distributed property testing, such as studied in \cite{DBLP:journals/tit/AcharyaCT20} and \cite{DBLP:journals/tit/AcharyaCLST22}, in which they obtain tight bounds in the classical setting. For more exposition on how these particular bounds relate to this work, as well as discussion on quantum (nondistributed) state testing, we refer to the introduction by Doosti \etal~\cite{doosti2026distributedquantumpropertytesting}.

There has also been much growing interest in distributed quantum algorithms. In this setting, the problem occurs on a network, usually represented by a graph, where every node represents an instance of an algorithm. Every round, each node can perform computations, receive messages from neighboring nodes, or send messages to them, with the goal to compute something about the graph. In this sense, one can think of distributed property testing as a problem occurring on the star graph with one round of interaction allowed, which permits the distributed nodes to send information to the central node. It particularly mirrors the CONGEST model, where communication between the nodes in a round are limited (see \cite{355459} for more on the CONGEST model). Many problems have been studied in this distributed model, such as leader election, MST, and cycle detection, among others \cite{DBLP:journals/corr/abs-2602-15529}, \cite{10.1145/3662158.3662767}.

\section{Preliminaries}
Throughout, we use standard asymptotic notation, as well as $\tilde{O}(\cdot)$ to hide polylogarithmic factors in the argument. We start with some key definitions, used throughout the paper.
\begin{definition}[Schatten $p$ norms]
    For matrices/linear operators $T$ and $p\in[1,\infty]$, the \emph{Schatten $p$ norm} of $T$ is given by
    \begin{equation*}
        \|T\|_p=[\tr|T|^p]^{1/p},
    \end{equation*}
    where $|T|=\sqrt{T^\dagger T}$, for $p<\infty$. For $p=\infty$, this is defined as $\|T\|_\infty=\max_{\|u\|=1}\|Tu\|$, where the vector norm is the standard Euclidean norm.
\end{definition}
\begin{definition}[Quantum $\chi^2$ Divergence, \cite{TemmeDivergence}]\label{def:quantum_chi_squared}
    Let $\rho,\sigma\in\mathbb{C}^{d\times d}$. When $\supp(\rho)\subseteq\supp(\sigma)$, the \emph{$\alpha$-$\chi^2$ divergence between $\rho$ and $\sigma$} is defined as 
    \begin{equation*}
        \QChiAlpha{\alpha}{\rho}{\sigma}=\tr[(\rho-\sigma)\sigma^{-\alpha}(\rho-\sigma)\sigma^{\alpha-1}]=\tr[\rho\sigma^{-\alpha}\rho\sigma^{\alpha-1}]-1.
    \end{equation*}
\end{definition}
\noindent In particular, \cite{doosti2026distributedquantumpropertytesting} use this divergence when $\alpha=0$, i.e., 
    \begin{equation*}
        \QChiAlpha{0}{\rho}{\sigma}=\tr[(\rho-\sigma)^2\sigma^{-1}].
    \end{equation*}
In contrast, we will use the ``symmetrized'' version of this $\chi^2$ divergence, which is when $\alpha=\frac{1}{2}$. From this point on, unless specifically pointed out otherwise, we will use this symmetrized version and write $\QChi{\rho}{\sigma}$ for $\QChiAlpha{1/2}{\rho}{\sigma}$. We then have
\begin{align}
    \QChi{\rho}{\sigma}&=\tr[(\rho-\sigma)\sigma^{-1/2}(\rho-\sigma)\sigma^{-1/2}]\notag\\
    &=\tr[\sigma^{-1/2}(\rho-\sigma)\sigma^{-1/2}(\rho-\sigma)]\notag\\
    &=\tr[(\sigma^{-1/4}(\rho-\sigma)\sigma^{-1/4})^2] \label{eq:useful:symmetrization}\\
    &=\tr[\rho\sigma^{-1/2}\rho\sigma^{-1/2}]-1. \notag
\end{align}
Notice the third equality, \eqref{eq:useful:symmetrization}, is what makes this ``symmetric'', since one can collapse the trace into a square. We have that this symmetrized version still upper bounds the trace distance:
\begin{lemma}[{\cite[Lemma 5]{TemmeDivergence}}]\label{lemma:1_norm_qchi_inequality}
    Let $\rho$ and $\sigma$ be two quantum states. Then, $\|\rho-\sigma\|_1\leq\sqrt{\QChi{\rho}{\sigma}}$.
\end{lemma}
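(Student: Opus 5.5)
The plan is a two-step comparison: pass from the symmetrized $\chi^2$ divergence to the trace distance through the fidelity, then bound the fidelity by a quantity that the definition of $\QChi{\rho}{\sigma}$ controls directly. We may assume $\supp(\rho)\subseteq\supp(\sigma)$, since otherwise the divergence is not finite, and restrict everything to $\supp(\sigma)$, where $\sigma$ is invertible.

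\textbf{Step 1 (trace distance via fidelity).} Let $F(\rho,\sigma)=\|\sqrt{\rho}\sqrt{\sigma}\|_1$. The Fuchs--van de Graaf inequality gives
\[
\|\rho-\sigma\|_1\le 2\sqrt{1-F^2}.
\]
This may cost a constant factor relative to the stated bound, so we will keep track of constants and fall back to the cleaner route described at the end if needed.

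\textbf{Step 2 (fidelity lower bound).} Write $\QChi{\rho}{\sigma}=\tr[\rho\sigma^{-1/2}\rho\sigma^{-1/2}]-1=\|\sigma^{-1/4}\rho\sigma^{-1/4}\|_2^2-1$. Since $\tr[\sigma^{1/4}(\sigma^{-1/4}\rho\sigma^{-1/4})\sigma^{1/4}]=\tr\rho=1$, Cauchy--Schwarz applied to the pair $\sigma^{1/2}$ and $\sigma^{-1/4}\rho\sigma^{-1/4}$ gives
\[
1\le \|\sigma^{1/2}\|_2\,\|\sigma^{-1/4}\rho\sigma^{-1/4}\|_2 = 1\cdot\sqrt{1+\QChi{\rho}{\sigma}}.
\]
To relate this to the fidelity, we use monotonicity of sandwiched Rényi divergences in $\alpha$. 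The quantity $\tr[(\sigma^{-1/4}\rho\sigma^{-1/4})^2]$ is exactly $\exp(\widetilde D_2(\rho\|\sigma))$, the sandwiched Rényi-2 divergence. This is at least $\exp(\widetilde D_{1/2}(\rho\|\sigma))=F^{-2}$. Hence $F^2\ge 1/(1+\chi^2)$, and so
\[
1-F^2\le \frac{\chi^2}{1+\chi^2}\le \chi^2.
\]
This yields $\|\rho-\sigma\|_1\le 2\sqrt{\chi^2}$, which is off by a factor of $2$.

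\textbf{Direct route for the sharp constant.} To obtain the exact constant, argue directly. Let $X=\sigma^{-1/4}(\rho-\sigma)\sigma^{-1/4}$, which is Hermitian, so that $\rho-\sigma=\sigma^{1/4}X\sigma^{1/4}$. By Hölder's inequality,
\[
\|\rho-\sigma\|_1\le \|\sigma^{1/4}\|_4\,\|X\|_2\,\|\sigma^{1/4}\|_4,
\]
and $\|\sigma^{1/4}\|_4^2=(\tr\sigma)^{1/2}=1$. Therefore $\|\rho-\sigma\|_1\le\|X\|_2=\sqrt{\QChi{\rho}{\sigma}}$, using \eqref{eq:useful:symmetrization}. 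This is the argument to present. The only point needing care is the generalized Hölder inequality for three factors with exponents $1/4+1/2+1/4=1$, which is standard, together with the support restriction. So the direct route avoids the constant-factor obstacle of the fidelity route entirely.
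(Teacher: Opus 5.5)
Your direct route is correct and complete, and it is the argument you should present. Since $\supp(\rho)\subseteq\supp(\sigma)$, with $P$ the projector onto $\supp(\sigma)$ you get $\sigma^{1/4}X\sigma^{1/4}=P(\rho-\sigma)P=\rho-\sigma$. Three-factor H\"older with $\|\sigma^{1/4}\|_4=(\tr\sigma)^{1/4}=1$ then gives $\|\rho-\sigma\|_1\le\|X\|_2=\sqrt{\QChi{\rho}{\sigma}}$. Drop the fidelity detour from the write-up. As you noticed, Fuchs--van de Graaf only yields the inequality with an extra factor of $2$, so that part does not prove the statement as written.

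The paper gives no proof of its own; it imports the inequality from Temme et al. The cited source treats a whole family of monotone $\chi^2$ divergences, and the standard argument there is essentially the dual of yours. One writes $\|\rho-\sigma\|_1=\tr[U(\rho-\sigma)]$ with $U=P_+-P_-$ the sign of $\rho-\sigma$, so $U^2\preceq I$. Cauchy--Schwarz in the $\sigma$-weighted inner product then gives $\|\rho-\sigma\|_1\le\|\sigma^{1/4}U\sigma^{1/4}\|_2\,\sqrt{\QChi{\rho}{\sigma}}$. A second Cauchy--Schwarz bounds $\|\sigma^{1/4}U\sigma^{1/4}\|_2^2=\tr[\sigma^{1/2}U\sigma^{1/2}U]\le\tr[\sigma U^2]\le 1$.

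Your H\"older version avoids introducing the optimizer $U$. It also extends verbatim to every $\alpha\in[0,1]$ in the paper's definition: factor $\rho-\sigma=\sigma^{\alpha/2}Y\sigma^{(1-\alpha)/2}$ with $Y=\sigma^{-\alpha/2}(\rho-\sigma)\sigma^{(\alpha-1)/2}$ and use exponents $2/\alpha,\,2,\,2/(1-\alpha)$. However, it needs the weighting to factor as left and right multiplication by powers of $\sigma$. The duality argument only needs the weaker property $\tr[U\,\Omega_\sigma^{-1}(U)]\le\tr[\sigma U^2]$. That is why it covers weights that do not factor this way, which is what the cited source needs.
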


We also briefly recall what a \emph{classical quantum  (CQ)  state} is. Hereafter, we denote the ``classical register'' by $C$ and the ``quantum register'' by $Q$, with dimension $K$ and $d_q$ respectively. Then, a classical quantum state is a special quantum state of the following, specific form:
\begin{equation*}
    \omega_{CQ}=\sum_{c=1}^K\omega_c\otimes|c\rangle\langle c|_C,
\end{equation*}
where each $\omega_c$ is a $d_q$-dimensional positive semidefinite operator with trace at most $1$. That is, it alone is not a valid quantum state, but the overall state $\omega_{CQ}$ has trace 1, and hence is a valid state.

Finally, before turning to the definition of quantum channels and instruments, we recall a useful probabilistic inequality, due to Paley and Zygmund:
\begin{lemma}[Paley-Zygmund inequality; see, e.g.,\cite{steele2004paley}]\label{lemma:Paley_Zygmund}
    Let $Y\geq 0$ be a random variable with finite first moment. Then, for any $\theta\in [0,1]$, we have
    \begin{equation*}
        \Pr[Y\geq \theta\mathbb{E}[Y]]\geq (1-\theta)^2\frac{(\mathbb{E}[Y])^2}{\mathbb{E}[Y^2]}.
    \end{equation*}
\end{lemma}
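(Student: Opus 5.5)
The statement to prove is the Paley--Zygmund inequality (\cref{lemma:Paley_Zygmund}). I will split $Y$ according to whether it falls below or above the threshold $\theta\mathbb{E}[Y]$, and bound the part above the threshold using Cauchy--Schwarz.

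First, dispose of degenerate cases. If $\mathbb{E}[Y^2]=\infty$, the right-hand side is $0$ and there is nothing to prove. If $\mathbb{E}[Y]=0$, then the right-hand side is $0$ (or we adopt that convention), and again there is nothing to prove. So assume $0<\mathbb{E}[Y]$ and $\mathbb{E}[Y^2]<\infty$.

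Write
\begin{equation*}
Y = Y\mathbbm{1}\{Y<\theta\mathbb{E}[Y]\} + Y\mathbbm{1}\{Y\geq\theta\mathbb{E}[Y]\}.
\end{equation*}
Take expectations and bound each term separately.

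The first term is pointwise at most $\theta\mathbb{E}[Y]$, because $Y\geq 0$. Hence its expectation is at most $\theta\mathbb{E}[Y]$.

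For the second term, apply Cauchy--Schwarz to the product of $Y$ and the indicator:
\begin{equation*}
\mathbb{E}\bigl[Y\mathbbm{1}\{Y\geq\theta\mathbb{E}[Y]\}\bigr]\leq \sqrt{\mathbb{E}[Y^2]}\,\sqrt{\Pr[Y\geq\theta\mathbb{E}[Y]]}.
\end{equation*}
This uses that the square of an indicator is the indicator itself.

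Combining the two bounds gives
\begin{equation*}
(1-\theta)\mathbb{E}[Y]\leq \sqrt{\mathbb{E}[Y^2]\Pr[Y\geq\theta\mathbb{E}[Y]]}.
\end{equation*}
Since $\theta\in[0,1]$, the left-hand side is nonnegative, so squaring preserves the inequality. Squaring and dividing by $\mathbb{E}[Y^2]>0$ yields the claim. Here $\mathbb{E}[Y^2]>0$ holds because $\mathbb{E}[Y]>0$ forces $Y$ to be nonzero with positive probability.

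None of these steps is a real obstacle. The only point needing care is the sign check before squaring, which relies on $\theta\le 1$ and $Y\ge 0$, together with the degenerate cases handled at the start.
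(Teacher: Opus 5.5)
Your proof is correct. It is the standard Paley--Zygmund argument: split $Y$ at the threshold $\theta\mathbb{E}[Y]$, bound the lower part by $\theta\mathbb{E}[Y]$, and apply Cauchy--Schwarz to the upper part. The paper itself states this lemma without proof and cites the reference, which uses this same argument; your degenerate cases are also fine, since when $\mathbb{E}[Y]=0$ the left-hand side equals $1$, so any convention for the $0/0$ on the right works.
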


\subsection{Quantum Channels and Quantum Instruments}\label{sec:quantum_channels_instruments}
In this section, we establish the main tools used by the distributed nodes to communicate with the central referee node. Firstly, when only quantum communication is allowed, each node can design a \emph{quantum channel} $\Phi$, which they can use to send a $d_q$-dimensional qudit to the central referee node.
\begin{definition}[Quantum Channel]\label{def:quantum_channel}
    A map $\Phi\colon\mathbb{C}^{d\times d}\to\mathbb{C}^{d'\times d'}$ is a \emph{quantum channel} if it is a completely positive trace preserving map (CPTP map).
\end{definition}
Analogously to how bits can be transmitted across classical channels, these are the quantum generalization that allows transmissions of quantum states. By design, quantum channels cannot communicate classical bits: the generalization that allows this is a \emph{quantum instrument}.
\begin{definition}[Quantum Instrument, \cite{hashim2026understandingquantuminstruments}]\label{def:quantum_instrument}
    A \emph{quantum instrument} (QI) $\mathcal{I}\colon\mathbb{C}^{d\times d}\to \mathbb{C}^{d_q\times d_q}\otimes \mathbb{C}^{K\times K}$ is a completely positive trace preserving map:
    \begin{equation*}
        \mathcal{I}(\rho)=\sum_{i\in K}\mathcal{E}_i(\rho)\otimes|i\rangle\langle i|,
    \end{equation*}
    where each $\mathcal{E}_i$ is a completely positive process such that the probability a measurement outcome $i$ is observed given a state $\rho$ is given by
    $
    \Pr[i\mid \rho]=\tr[\mathcal{E}_i(\rho)].
    $   
    Then, the post-measurement state, conditioned on the classical outcome $i$, is 
    \begin{equation*}
        \rho_i\eqdef\frac{\mathcal{E}_i(\rho)}{\tr[\mathcal{E}_i(\rho)]}.
    \end{equation*}
\end{definition}
We make a few comments on the above definition before continuing. First, notice that although each individual $\mathcal{E}_i$ is not necessarily a quantum channel (since it may not preserve traces) we require $\mathcal{I}$ overall to preserve traces. Then, we have that 
\begin{equation*}
    \tr[\rho]=\sum_i\tr[\mathcal{E}_i(\rho)]=1.
\end{equation*}
Furthermore, the post-measurement state $\rho_i$, conditioned on seeing the classical outcome $i$, is a valid quantum state. Second, when we use ``measurement'' in the above definition, it is better interpreted in the colloquial sense of ``extracting information from a quantum system,'' and not the mathematical formalism of applying a measurement on the qubit system (though one can think of it as the latter). We give some examples to help illustrate quantum instruments, and hopefully make the definition and comments clear.
\begin{example}
    As a sanity check, let us assume the quantum instrument has $K=1$ (corresponding to $n_c=0$), so no classical communication is allowed. Then, we have that $\mathcal{I}(\rho)=\mathcal{E}(\rho)$. In this case, we have that $\mathcal{E}$ is in fact a CPTP map, and this reduces to a quantum channel, as is expected.
\end{example}
\begin{example}
    To illustrate the second point, let us assume a distributed node receives the state $\rho$, and does the following. For the classical string, they sample an $n_c$ bit string $s$ uniformly at random, and send that. Then, they use some quantum channel $\Phi$, independent of the random string $s$, and send the quantum qudit $\Phi(\rho)$, meaning they send the pair $\{s, \Phi(\rho)\}$. Then, denoting $K=2^{n_c}$, each string $s$ appears with probability $1/K$. Correspondingly, for each of the strings, they have $\mathcal{E}_s(\rho)=\frac{1}{K}\Phi(\rho)$. Hence, the full quantum instrument is
    \begin{equation*}
        \mathcal{I}(\rho)=\sum_{i\in\{0,1\}^{n_c}}\mathcal{E}_i(\rho)\otimes|i\rangle\langle i|.
    \end{equation*}
    Then the referee node, who receives the output $\mathcal{I}(\rho)$, can measure the classical register, and see one string $i=s$ with probability $1/K$. Then, conditioned on seeing that string $s$, what the referee node has remaining is the quantum state
    \begin{equation*}
        \frac{\frac{1}{K}\Phi(\rho)}{\tr[\frac{1}{K}\Phi(\rho)]}=\Phi(\rho).
    \end{equation*}
    Hence, the referee node sees a classical string uniformly at random, and the output of $\rho$ fed into the quantum channel, which is exactly the process we described above.
\end{example}
\begin{example}
    Now, instead of seeing a random quantum string, say regardless of whatever $\rho$ is, the distributed nodes will send the same classical string $s$, and use the quantum channel $\Phi$ for quantum communication. Then, we have that $\mathcal{E}_s(\rho)=\Phi(\rho)$, and $\mathcal{E}_i(\rho)=0$, for all $i\neq s$, with some abuse of notation for the dimension of the zero state in the latter. Then, we have that the corresponding quantum instrument is
    \begin{equation*}
        \mathcal{I}(\rho)=\Phi(\rho)\otimes|s\rangle\langle s|.
    \end{equation*}
    Then, the referee receives the output of the quantum instrument and can do the same: they measure the classical register to see the classical string $s$, and what remains is the output of the channel $\Phi(\rho)$.
\end{example}
It will also be helpful to outline some alternative ways of expressing these channels and instruments. One of these is the \emph{Liouville representation}.
\begin{definition}[Vectorization, Liouville Matrix Representation]\label{def:vectorization}
    The \emph{vectorization} map $\vecmap :\mathbb{C}^{d\times d}\to\mathbb{C}^{d^2}$ flattens a matrix into a vector. This naturally preserves inner product, where if $X$ and $Y$ are matrices, then $\InnerNamed{X}{Y}{HS}=\Inner{\vecmap(X)}{\vecmap(Y)}$, where the first inner product is the Hilbert--Schmidt inner product ($\tr[X^\dagger Y]$) and the latter is the inner product on vectors.

    The \emph{Liouville matrix} of a quantum channel $\Phi:\mathbb{C}^{d\times d}\to\mathbb{C}^{d'\times d'}$ is the unique matrix $M_\Phi\in\mathbb{C}^{d'^2\times d^2}$ such that $\vecmap(\Phi(X))=M_\Phi\vecmap(X)$.
\end{definition}
The other is the \emph{Kraus decomposition}.
\begin{definition}[Kraus Operator Decomposition]
    Any completely positive operator $\Phi:\mathbb{C}^{d\times d}\to\mathbb{C}^{d'\times d'}$ can be described by a set of operators $\{A_k\}\subseteq\mathbb{C}^{d'\times d}$, where $k<dd'$, such that the action of the channel is given by $\Phi(X)=\sum_k A_kXA_k^\dagger$, and $\sum_k A_k^\dagger A_k=I$.
\end{definition}
The existence of the above decomposition is known as the Choi-Kraus theorem, a standard tool: see for instance \cite[Chapter 4]{Wilde_2016}.
\section{Lower Bound Approach}
In this section we lay out some key results and describe our lower bound approach. First, as pointed out previously, we will focus on proving \cref{thm:public_coin_lb} and \cref{thm:private_coin_lb}, the lower bounds in the setting where both classical and quantum communication are allowed. By simply setting the classical communication $n_c=0$, we recover \cref{thm:public_coin_no_classical_lb} and \cref{thm:private_coin_no_classical_lb}. 

We start by describing the hard instances we will use in our lower bounds. These are the same ones as Doosti \etal~and previous work, and consist of ``small perturbations around the maximally mixed state'':
\begin{definition}\label{def:hard_instances}
    Let $\{V_i\}_{i\in[d^2]}$ form an orthonormal basis for $\mathbb{C}^{d\times d}$ w.r.t. the Hilbert-Schmidt (HS) inner product, with $V_{d^2}\eqdef I_d/\sqrt{d}$. Then for some integer $d^2/2\leq \ell\leq d^2-1$, and $z\in\{-1,1\}^\ell$, we define the perturbation as
    \begin{equation*}
        \Delta_z=\frac{c\eps}{\sqrt{d}}\cdot\frac{1}{\sqrt\ell}\sum_{i=1}^\ell z_iV_i,\;N_z\eqdef\min\left\{1,\frac{1}{d\|\Delta_z\|_\infty}\right\},\;\bar{\Delta}_z\eqdef\Delta_z\cdot N_z\,,
    \end{equation*}
    where $c>0$ is an absolute constant. 
    In other words, $\Delta_z$ is a perturbation in each of the directions of the $\ell$ orthonormal basis matrices in the direction given by $z$, and $\bar\Delta_z$ caps the perturbation so that it does not blow the maximally mixed state too much. Then, we define the state $\rho_z=I/\sqrt d+\bar\Delta_z$. For convenience, we also denote $\mathcal{V}=[\vecmap(V_1),\dots,\vecmap(V_\ell)]\in\mathbb{C}^{d^2\times \ell}$.
\end{definition}
It was shown by Liu \etal~\cite{pmlr-v247-liu24a} that for any choice of bases $\{V_i\}$ and a uniformly random $z\sim\{-1,1\}^\ell$, the state $\rho_z$ is always a valid quantum state, and is in fact $\eps$-far from the maximally mixed state with high probability. We will describe such a distribution over states as \emph{almost-$\eps$ perturbations}.
\begin{definition}[Almost-$\eps$ Perturbations]\label{def:almost_epsilon_perturbations}
    An ensemble $D$ of quantum states is an almost-$\eps$ perturbation of a state $\sigma$ if $\Pr_{\rho\sim D}[\|\rho-\sigma\|_1\geq\eps]\geq \frac{1}{2}$. Let the set of all almost-$\eps$ perturbations of $\sigma$ be $\mathcal{D}_\eps(\sigma)$.
\end{definition}
The main tool we will use for lower bound is the following lemma:
\begin{lemma}\label{lemma:instrument_minimax_maximin_bounds}
    Let $n_c\geq 0$, $d\geq d_q\geq 2$. Then, for any public-coin protocol to succeed at $\eps$-certifying $d$-dimensional states using only $n_c$ length classical messages and $d_q$-dimensional quantum messages, the number of distributed nodes $m$ must be large enough such that 
    \begin{equation*}
        \min_{D\in\mathcal{D}_\eps(I_d/d)}\max_{\mathcal{I}_1,...,\mathcal{I}_m}\QChi{\mathbb{E}_{\rho\sim D}\left[\bigotimes_{i=1}^m\mathcal{I}_i(\rho)\right]}{\bigotimes_{i=1}^m\mathcal{I}_i(I_d/d)}\geq\frac{1}{16},
    \end{equation*}
    where each $\mathcal{I}_i$ is a quantum instrument outputting a length $n_c$ classical string and a $d_q$-dimensional quantum state. On the other hand, for any private-coin protocol to succeed in the same setting, we must have:
    \begin{equation*}
        \max_{\mathcal{I}_1,...,\mathcal{I}_m}\min_{D\in\mathcal{D}_\eps(I_d/d)}\QChi{\mathbb{E}_{\rho\sim D}\left[\bigotimes_{i=1}^m\mathcal{I}_i(\rho)\right]}{\bigotimes_{i=1}^m\mathcal{I}_i(I_d/d)}\geq\frac{1}{16}.
    \end{equation*}
\end{lemma}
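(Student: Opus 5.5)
The plan is the standard Le Cam two-point argument, run on the referee's view and then converted to the symmetrized $\chi^2$ divergence through \cref{lemma:1_norm_qchi_inequality}.

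\textbf{Modelling the protocol.} In a public-coin protocol the shared randomness $U$ is drawn first. Conditioned on $U=u$, node $i$ applies some quantum instrument $\mathcal{I}_i^u$ to its copy of $\rho$. The referee therefore receives $\bigotimes_i\mathcal{I}_i^u(\rho)$, together with $u$, and applies a two-outcome measurement. A private-coin node can absorb its private randomness into its instrument: a convex mixture of instruments is again an instrument, so we may take the instruments $\mathcal{I}_1,\dots,\mathcal{I}_m$ to be fixed. Fix an almost-$\eps$ perturbation $D\in\mathcal{D}_\eps(I_d/d)$. If $\rho=I_d/d$, the protocol accepts with probability at least $2/3$. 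If $\rho\sim D$, then with probability at least $1/2$ the state is $\eps$-far, in which case it rejects with probability at least $2/3$; overall it rejects with probability at least $1/3$. Subtracting, the acceptance probabilities under the two hypotheses differ by at least $2/3-2/3\cdot\tfrac12=1/3$.

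\textbf{From distinguishing advantage to divergence.} By Helstrom, this gap is at most $\tfrac12\|\mathbb{E}_{\rho\sim D}[\bigotimes_i\mathcal{I}_i^u(\rho)]-\bigotimes_i\mathcal{I}_i^u(I_d/d)\|_1$, averaged over $u$. (The constants need care: a $1/16$ threshold corresponds to $\|\cdot\|_1\ge 1/4$, so I would use a slightly looser advantage bound, or a $\delta$ adjustment via repetition, to land exactly on $1/16$.) Applying \cref{lemma:1_norm_qchi_inequality} gives $\mathbb{E}_u\sqrt{\QChi{\cdot}{\cdot}}\ge 1/4$. Hence some $u$ satisfies $\QChi{\cdot}{\cdot}\ge 1/16$, and the maximum over instruments is at least $1/16$.

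One technical point: \cref{lemma:1_norm_qchi_inequality} and the definition of $\QChi{\cdot}{\cdot}$ require $\supp(\rho)\subseteq\supp(\sigma)$. Here $\sigma=\bigotimes_i\mathcal{I}_i(I_d/d)$, and since $I_d/d$ has full support, every $\mathcal{I}_i(\rho)$ has support inside that of $\mathcal{I}_i(I_d/d)$, by complete positivity and $\rho\le I_d$. So the divergence is well defined, with inverses taken on the support.

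\textbf{Order of quantifiers.}
\begin{itemize}
\item \emph{Public coin.} The protocol must succeed against every $D$, and for each fixed $D$ we found instruments (depending on $u$) achieving divergence at least $1/16$. This yields the $\min_D\max_{\mathcal{I}}$ form.
\item \emph{Private coin.} The instruments are fixed before $D$ is chosen, and the argument holds for every $D$, so $\min_D$ of the divergence at those instruments is at least $1/16$. This is even stronger than the stated $\max_{\mathcal{I}}\min_D$ form.
\end{itemize}

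The main obstacle I expect is the constant bookkeeping between the success guarantee and the $1/16$ threshold, together with justifying the support condition for general, non-mixedness-preserving instruments. The latter is handled as above by domination by the maximally mixed input.
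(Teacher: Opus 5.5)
\textbf{Gap in the advantage computation.} Your route is the same as the paper's: Helstrom on the referee's view, the inequality $\|\cdot\|_1\le\sqrt{\QChi{\cdot}{\cdot}}$ from \cref{lemma:1_norm_qchi_inequality}, averaging over the public string, and absorbing private coins into averaged instruments. However, the step that produces the distinguishing advantage is wrong, and as written the argument yields nothing.

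With success probability $2/3$, and an ensemble $D$ that is only guaranteed to be $\eps$-far with probability $1/2$, you get $\Pr[\accept\mid\rho=I_d/d]\ge\frac23$ and $\Pr[\accept\mid\rho\sim D]\le 1-\frac23\cdot\frac12=\frac23$. The difference of these bounds is $0$, not $1/3$. Your expression $\frac23-\frac23\cdot\frac12$ subtracts the lower bound on the \emph{rejection} probability under $D$, not an upper bound on its acceptance probability. The constraints you invoke are consistent with acceptance probability exactly $2/3$ under both hypotheses. So no lower bound on $\|P_u-Q_u\|_1$ follows, and the $1/4$ you then feed into \cref{lemma:1_norm_qchi_inequality} is unsupported. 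The repetition you mention in passing is therefore not cosmetic constant bookkeeping; it is essential. In general, success probability $1-\delta$ only gives advantage $\frac{1-3\delta}{2}$, so you need $\delta<1/3$.

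\textbf{How the paper avoids this.} The paper assumes at the outset that the protocol succeeds with probability $3/4$. This is reachable from $2/3$ by a majority vote over $O(1)$ independent groups of nodes, which changes $m$ only by a constant factor. With equal priors on $\rho\sim D$ and $\rho=I_d/d$, the success probability is then at least $\frac12\cdot\frac34\cdot\frac12+\frac12\cdot\frac34=\frac{9}{16}$. Hence some public string $r$ has conditional success at least $\frac{9}{16}$. Helstrom gives $\frac12+\frac14\|P_r-Q_r\|_1\ge\frac{9}{16}$, i.e.\ $\|P_r-Q_r\|_1\ge\frac14$, and \cref{lemma:1_norm_qchi_inequality} then gives exactly $\QChi{P_r}{Q_r}\ge\frac1{16}$.

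With that correction, the rest of your proposal goes through and matches the paper. This includes the quantifier discussion for the public- and private-coin cases, and your support-containment remark via $\mathcal{I}_i(\rho)\preceq\mathcal{I}_i(I_d)$, which is a nice detail the paper does not spell out in this proof.
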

This is the quantum instrument equivalent of \cite[Lemma 4.2]{doosti2026distributedquantumpropertytesting}. The key point is that \cref{lemma:instrument_minimax_maximin_bounds} allows us to turn the task of lower bounding the complexity into an equivalent one of upper bounding the $\chi^2$ distance. The approach is adapting that of Doosti \etal, which itself is a result of generalizing the approach of Acharya \etal~\cite{DBLP:journals/tit/AcharyaCLST22} and  Liu \etal~\cite{DBLP:journals/corr/abs-2401-09650}. We include a proof in \cref{app:lemma_instrument_minimax_maximin_bounds}. The next tool we need is the equivalent of the quantum Ingster-Suslina lemma for our symmetrized $\chi^2$-divergence.
\begin{lemma}[Symmetrized Quantum Ingster Suslina]\label{lemma:quantum_ingster-suslina_symm}
    Let $\theta$ be a random variable that parametrizes states $\rho_{i,\theta}\in \mathbb{C}^{d\times d}$ for all $i\in[m]$. Let $\sigma_i\in\mathbb{C}^{d\times d}$ be quantum states. Let
    \begin{equation*}
        \rho_\theta^{(m)}=\bigotimes_{i=1}^m\rho_{i,\sigma},~\sigma^{(m)}=\bigotimes_{i=1}^m\sigma_i.
    \end{equation*}
    Then,
    \begin{equation*}
        1+\QChi{\mathbb{E}_\theta[\rho_\theta^{(m)}]}{\sigma^{(m)}}=\mathbb{E}_{\theta,\theta'}\left[\prod_{i=1}^m(1+Z_i(\theta,\theta'))\right]\leq\mathbb{E}_{\theta,\theta'}\left[\exp\left(\sum_{i=1}^m Z_i(\theta,\theta')\right)\right],
    \end{equation*}
    where $\theta,\theta'$ are i.i.d., and
    \begin{equation*}
        Z_i(\theta,\theta')\eqdef \tr[\sigma_i^{-1/2}(\rho_{i,\sigma}-\sigma_i)\sigma_i^{-1/2}(\rho_{i,\sigma'}-\sigma_i)].
    \end{equation*}
\end{lemma}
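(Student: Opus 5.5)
We work directly from the definition $\QChi{\rho}{\sigma}=\tr[\rho\sigma^{-1/2}\rho\sigma^{-1/2}]-1$ (the last line of \eqref{eq:useful:symmetrization}), applied with $\rho=\mathbb{E}_\theta[\rho_\theta^{(m)}]$ and $\sigma=\sigma^{(m)}$. Throughout we read the statement with the evident typo corrected, so that $\rho_\theta^{(m)}=\bigotimes_i\rho_{i,\theta}$ and $Z_i$ involves $\rho_{i,\theta}$ and $\rho_{i,\theta'}$. We assume $\supp(\rho_{i,\theta})\subseteq\supp(\sigma_i)$, so that $\sigma_i^{-1/2}$ is the inverse on the support and the divergence is finite.

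The first step is to expand the quadratic expression using bilinearity. Writing $\mathbb{E}_\theta[\rho^{(m)}_\theta]$ twice and introducing an independent copy $\theta'$ gives
\begin{equation*}
1+\QChi{\mathbb{E}_\theta[\rho_\theta^{(m)}]}{\sigma^{(m)}}=\mathbb{E}_{\theta,\theta'}\tr\Bigl[\rho^{(m)}_\theta(\sigma^{(m)})^{-1/2}\rho^{(m)}_{\theta'}(\sigma^{(m)})^{-1/2}\Bigr].
\end{equation*}
The second step uses the product structure. Since $(\sigma^{(m)})^{-1/2}=\bigotimes_i\sigma_i^{-1/2}$, and both products and traces factor over tensor products, the trace inside equals
\begin{equation*}
\prod_{i=1}^m\tr[\rho_{i,\theta}\sigma_i^{-1/2}\rho_{i,\theta'}\sigma_i^{-1/2}].
\end{equation*}

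The third step is to show that each factor equals $1+Z_i(\theta,\theta')$. Expand $Z_i$ by writing $\rho_{i,\theta}=(\rho_{i,\theta}-\sigma_i)+\sigma_i$, and similarly for $\theta'$. The cross terms simplify by cyclicity. For instance,
\begin{equation*}
\tr[\sigma_i\sigma_i^{-1/2}\rho_{i,\theta'}\sigma_i^{-1/2}]=\tr[\Pi_i\rho_{i,\theta'}]=1,
\end{equation*}
where $\Pi_i$ is the projector onto $\supp(\sigma_i)$; this uses the support assumption together with $\tr\rho_{i,\theta'}=1$. Likewise $\tr[\sigma_i\sigma_i^{-1/2}\sigma_i\sigma_i^{-1/2}]=\tr\sigma_i=1$. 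It follows that $\tr[\rho_{i,\theta}\sigma_i^{-1/2}\rho_{i,\theta'}\sigma_i^{-1/2}]=1+Z_i$, which gives the stated equality.

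For the inequality, apply $1+x\le e^x$. Here the factors are not obviously nonnegative, so we cannot take the product of the termwise bounds directly. We first check that $Z_i$ is real: taking the adjoint inside the trace and using cyclicity shows that $\overline{Z_i(\theta,\theta')}=Z_i(\theta',\theta)$, and the same manipulation shows this equals $Z_i(\theta,\theta')$. We then need $1+Z_i\ge 0$ so that the termwise bounds multiply. This follows because
\begin{equation*}
1+Z_i=\tr\bigl[(\sigma_i^{-1/4}\rho_{i,\theta}\sigma_i^{-1/4})(\sigma_i^{-1/4}\rho_{i,\theta'}\sigma_i^{-1/4})\bigr]
\end{equation*}
is the trace of a product of two PSD operators, which is always nonnegative. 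With nonnegativity in hand, $\prod_i(1+Z_i)\le\prod_i e^{Z_i}=\exp(\sum_i Z_i)$, and taking expectations finishes the proof.

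The main obstacle is only this last point: justifying nonnegativity and realness so that the product inequality holds. The symmetric $\alpha=1/2$ form is precisely what makes that step clean.
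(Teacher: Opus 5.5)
Your proof is correct and follows the paper's argument essentially step for step: introduce an independent copy $\theta'$, factor the trace over the tensor product, and reduce each factor to $1+Z_i(\theta,\theta')$ by splitting off $\sigma_i$. Your extra care is well placed and goes beyond the paper. First, you show $1+Z_i=\tr[(\sigma_i^{-1/4}\rho_{i,\theta}\sigma_i^{-1/4})(\sigma_i^{-1/4}\rho_{i,\theta'}\sigma_i^{-1/4})]\geq 0$; the paper only invokes $1+x\le e^x$ ``when $x\geq 0$'', whereas what the product inequality actually needs is nonnegativity of each factor $1+Z_i$. Second, you use the support projector rather than tacitly assuming $\sigma_i^{-1/2}\sigma_i\sigma_i^{-1/2}=I$.
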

This proof requires a different $Z_i$ but follows the same arguments as O'Donnell \etal (for the $0$-$\chi^2$ divergence), and we include it for completeness in \cref{app:symmetrized_quantum_ingster_suslina}.\footnote{In fact, we believe this general proof method will work for all $\alpha$-$\chi^2$ divergences, though one will get a different representation of $Z_i$ for each.} Note although the proof and statement look very similar, it is exactly this different $Z_i$ term that allows us to remove the mixedness preserving assumption, and prove our more general results. The high level plan is now clear: we will use \cref{lemma:instrument_minimax_maximin_bounds} to bound the $\chi^2$ divergence, and thus bound the complexity. In bounding the $\chi^2$ divergence, we note that by \cref{lemma:quantum_ingster-suslina_symm}, this is now equivalent to finding a bound on the $Z_i$ terms. \medskip

In summary, \cref{def:hard_instances} defines a class of almost-$\eps$ perturbations of the maximally mixed state, while \cref{lemma:instrument_minimax_maximin_bounds} states that the central referee node cannot distinguish the maximally mixed state from almost-$\eps$ perturbations of the maximally mixed state unless the number of distributed nodes $m$ is large enough, and~\cref{lemma:quantum_ingster-suslina_symm} provides a way to bound the terms in~\cref{lemma:instrument_minimax_maximin_bounds} to quantify what ``large enough'' means.\medskip

Before moving on to the proof of the lower bound itself, we require one last lemma about quantum instruments.
\begin{lemma}\label{lemma:instrument_support}
    Let $\mathcal{I}\colon\mathbb{C}^{d\times d}\to\mathbb{C}^{d_q\times d_q}\otimes \mathbb{C}^{K\times K}$ be a quantum instrument, and $\tau=\mathcal{I}(I/d)$. Let $P=\Pi_{\supp(\tau)}$. Then
    \begin{equation*}
        \mathcal{I}(X)=P\mathcal{I}(X)P.
    \end{equation*}
\end{lemma}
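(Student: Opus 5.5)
The plan is to show that every output $\mathcal{I}(X)$ has support (and range) inside $\supp(\tau)$, where $\tau=\mathcal{I}(I/d)$. We treat $\mathcal{I}$ as a completely positive map $\mathbb{C}^{d\times d}\to\mathbb{C}^{d_q\times d_q}\otimes\mathbb{C}^{K\times K}$. Since it is a sum of the CP maps $X\mapsto \mathcal{E}_i(X)\otimes|i\rangle\langle i|$, it has a Kraus decomposition $\mathcal{I}(X)=\sum_k A_kXA_k^\dagger$. Here the $A_k$ can be taken of the form $B_{i,k}\otimes|i\rangle$, obtained from the Kraus operators $B_{i,k}$ of each $\mathcal{E}_i$.

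First, write $\tau=\frac1d\sum_k A_kA_k^\dagger$. This is a sum of positive semidefinite operators, so the kernel of $\tau$ is the intersection of the kernels of the $A_kA_k^\dagger$, which equal the kernels of the $A_k^\dagger$. Concretely, if $\tau v=0$ then
\[
0=\langle v,\tau v\rangle=\tfrac1d\sum_k\|A_k^\dagger v\|^2,
\]
so $A_k^\dagger v=0$ for all $k$. Consequently $(I-P)A_k=0$, that is, $PA_k=A_k$ for every $k$, because $I-P$ is the projector onto $\ker\tau$ (as $\tau$ is Hermitian).

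Next, for arbitrary $X$ (not necessarily PSD),
\[
P\mathcal{I}(X)P=\sum_k PA_kXA_k^\dagger P=\sum_k A_kXA_k^\dagger=\mathcal{I}(X),
\]
using $A_k^\dagger P=(PA_k)^\dagger=A_k^\dagger$.

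There is no real obstacle here. The only point needing care is justifying a Kraus decomposition for the whole instrument, which follows since it is CP, or equivalently by applying the argument blockwise to each $\mathcal{E}_i$. An alternative route that avoids Kraus operators is to note that $0\le X\le \|X\|_\infty I$ for PSD $X$ gives $\mathcal{I}(X)\le d\|X\|_\infty\tau$, so $\supp\mathcal{I}(X)\subseteq\supp\tau$. One then extends to general $X$ by linearity, writing $X$ as a combination of four PSD matrices.
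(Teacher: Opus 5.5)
Your proof is correct and follows the same route as the paper: take a Kraus decomposition of $\mathcal{I}$, use $\langle v,\tau v\rangle=\frac1d\sum_k\|A_k^\dagger v\|^2$ to get $A_k^\dagger v=0$ on $\ker\tau$, and conclude. You also write out the last step that the paper leaves as ``which implies the lemma'': $PA_k=A_k$, hence $P\mathcal{I}(X)P=\mathcal{I}(X)$ for every $X$.
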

\begin{proof}
     Let $F=\mathcal{I}(I)=d\tau$, and $P=\Pi_{\supp(F)}$.We will show that $\mathcal{I}(X)$ has no support on $\ker F$, meaning it is completely defined on the support of $F$. To do so, we write a Kraus operator decomposition of the instrument $\mathcal{I}$ given by $\{A_k\}$, which exists since a quantum instrument is a completely positive operator. Then
     \begin{equation*}
         \mathcal{I}(X)=\sum_{k=1}^n A_k X A_k^\dagger,
     \end{equation*}
     where $n\leq Kdd_q$. Then, we have that
     \begin{equation*}
         F=\mathcal{I}(I)=\sum_{k=1}^n A_kA_k^\dagger.
     \end{equation*}
     For some $|v\rangle\in \ker(F)$, we have that
     \[
         0=\langle v|F|v\rangle =\sum_{k=1}^n\|A^\dagger_k v\|^2
    \]
    and so
    $
         A_k^\dagger|v\rangle = 0,
     $
      which implies the lemma.
\end{proof}
This result implies that while general quantum instruments can map the maximally mixed state to some \textit{a priori} arbitrary state $\tau$, they will never actually map anything outside the support of $\tau$.
\section{General Public-Coin Lower Bound}\label{sec:general_public_lb}
Here, we prove \cref{thm:public_coin_lb}. To do so, we first lay out three key lemmas, and show how they imply the theorem. We afterwards prove the three lemmas. To recap notation, recall that below we set $d_q=2^{n_q}$, and $K=2^{n_c}$.
\begin{lemma}\label{lemma:instrument_T_bound}
    We use the hard instance in \cref{def:hard_instances}. Let $d\geq 6$, $0<\eps<C$, and $K,d_q\geq 1$. For each distributed node, let $\mathcal{I}_i:\mathbb{C}^{d\times d}\to \mathbb{C}^{d_q\times d_q}\otimes \{0,1\}^{n_c\times n_c}$ be a collection of quantum instruments. For each quantum channel, we let $\tau_i=\mathcal{I}_i(I/d)$, the mapping of the maximally mixed state. We define the augmented map $\Lambda_i(X)=\tau^{-1/4}\mathcal{I}_i(X)\tau^{-1/4}$, and $T_\Lambda=\frac{1}{m}\sum_{i=1}^mM_{\Lambda_i}^\dagger M_{\Lambda_i}$, where $M_{\Lambda_i}$ is the Liouville matrix representation of $\Lambda_i$. Then,
    \begin{equation*}
        \QChi{\mathbb{E}_z\left[\bigotimes_{i=1}^m\mathcal{I}_i(\rho_z)\right]}{\bigotimes_{i=1}^m\mathcal{I}_i(I/d)} < \frac{1}{16},
    \end{equation*}
    unless
    \begin{equation*}
        m\geq \Omega\left(\frac{d\ell}{\eps^2}\cdot\frac{1}{\|\mathcal{V}^\dagger T_\Lambda\mathcal{V}\|_2}\right)
    \end{equation*}
\end{lemma}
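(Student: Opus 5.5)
We apply \cref{lemma:quantum_ingster-suslina_symm} with $\theta=z$ uniform on $\{-1,1\}^\ell$, $\rho_{i,z}=\mathcal{I}_i(\rho_z)$ and $\sigma_i=\tau_i=\mathcal{I}_i(I/d)$. By \cref{lemma:instrument_support}, every $\mathcal{I}_i(X)$ lives on $\supp(\tau_i)$, so the inverse powers $\tau_i^{-1/4}$ (taken on the support) are well defined. This also means the $\chi^2$ divergence and $Z_i$ are finite.

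\textbf{Step 1: rewrite $Z_i$ as a quadratic form.} Since $\rho_z-I/d=\bar\Delta_z$ and $\mathcal{I}_i$ is linear,
\[
Z_i(z,z')=\tr\big[\tau_i^{-1/4}\mathcal{I}_i(\bar\Delta_z)\tau_i^{-1/4}\,\tau_i^{-1/4}\mathcal{I}_i(\bar\Delta_{z'})\tau_i^{-1/4}\big]=\langle \Lambda_i(\bar\Delta_z),\Lambda_i(\bar\Delta_{z'})\rangle_{HS}.
\]
This uses that $\Lambda_i(\bar\Delta_z)$ is Hermitian. In Liouville form this equals $\vecmap(\bar\Delta_z)^\dagger M_{\Lambda_i}^\dagger M_{\Lambda_i}\vecmap(\bar\Delta_{z'})$. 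Summing over $i$ and writing $\vecmap(\Delta_z)=\frac{c\eps}{\sqrt{d\ell}}\mathcal{V}z$ gives
\[
\sum_i Z_i=m\,N_zN_{z'}\frac{c^2\eps^2}{d\ell}\, z^\top(\mathcal{V}^\dagger T_\Lambda\mathcal{V})z'.
\]
Here the symmetrized divergence is crucial: it absorbs $\tau_i^{-1/2}$ symmetrically into $\Lambda_i$, with no mixedness-preservation assumption. We also use $0\le N_zN_{z'}\le 1$.

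\textbf{Step 2: remove the normalization and bound the MGF.} The factor $N_zN_{z'}$ couples to the sign of the bilinear form, so it cannot simply be dropped. Following Doosti \etal\ and Liu \etal, we handle it in one of two ways. One option is to condition on the high-probability event that $N_z=1$, using $\|\Delta_z\|_\infty\le 1/d$ w.h.p. by matrix concentration for $d\ge 6$. The other is to bound $\exp(N_zN_{z'}x)\le 1+\exp(x)+\exp(-x)-1$, i.e.\ $\exp(tx)\le \max(1,e^{x})\le e^{x}+e^{-x}$ for $t\in[0,1]$, and then use that $z'\mapsto -z'$ preserves the distribution. Either way it suffices to bound $\mathbb{E}_{z,z'}\exp(\lambda z^\top A z')$ with $A=\mathcal{V}^\dagger T_\Lambda\mathcal{V}$ and $\lambda=mc^2\eps^2/(d\ell)$. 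The matrix $A$ is Hermitian PSD, but it may be complex, so we split it into real and imaginary parts, or equivalently observe the quadratic form is real. For Rademacher $z,z'$, the Hanson--Wright / Rademacher chaos bound gives $\mathbb{E}\exp(\lambda z^\top Az')\le\exp(C\lambda^2\|A\|_2^2)$ whenever $\lambda\|A\|_\infty\le c_0$. First average over $z'$ to get $\prod_j\cosh(\lambda(Az)_j)\le \exp(\lambda^2\|Az\|^2/2)$. Then apply the standard Gaussian-chaos MGF bound to $\|Az\|^2=z^\top A^2 z$, which has trace $\|A\|_2^2$.

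\textbf{Step 3: conclude.} The right side is at most $1+1/16$ as soon as $\lambda\|A\|_2\le c_1$, which uses $\|A\|_\infty\le\|A\|_2$. That condition is $m\le c_1 d\ell/(c^2\eps^2\|A\|_2)$. So the divergence is below $1/16$ unless $m=\Omega\big(\frac{d\ell}{\eps^2\|\mathcal{V}^\dagger T_\Lambda\mathcal{V}\|_2}\big)$.

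The main obstacle is Step 2, namely controlling the MGF of the decoupled chaos while the data-dependent normalization $N_z N_{z'}$ is present. We would first try the sign-symmetry trick $e^{tx}\le e^{x}+e^{-x}$ for $t\in[0,1]$, which costs only a factor of 2. That factor is fine after adjusting constants, since we then need each term to be at most $(1+1/16)/2$ plus slack. If it is not fine, we would fall back to conditioning on $N_z=N_{z'}=1$.
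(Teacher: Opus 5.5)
Your Steps 1 and 3 coincide with the paper's proof. Both use the symmetrized Ingster--Suslina lemma, $Z_i=\tr[\Lambda_i(\bar\Delta_z)\Lambda_i(\bar\Delta_{z'})]$, and the Liouville rewriting $\sum_i Z_i=\lambda N_zN_{z'}\,z^\top Az'$ with $A=\mathcal{V}^\dagger T_\Lambda\mathcal{V}$ and $\lambda=mc^2\eps^2/(d\ell)$. Both then bound the Rademacher-chaos MGF (the paper cites \cref{lemma:exponential_expectation_bound}; your $\cosh$-then-Hanson--Wright derivation is a valid reproof) and conclude with the threshold on $\lambda\|A\|_2$.

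The gap is in the step you single out as the main obstacle. Write $P=\mathbb{E}_z[\bigotimes_i\mathcal{I}_i(\rho_z)]$ and $Q=\bigotimes_i\tau_i$. Your preferred inequality $e^{tx}\le e^{x}+e^{-x}$ only gives $1+\QChi{P}{Q}\le 2\,\mathbb{E}_{z,z'}[e^{\lambda z^\top Az'}]$. Since $\mathbb{E}_{z,z'}[z^\top Az']=0$, Jensen gives $\mathbb{E}_{z,z'}[e^{\lambda z^\top Az'}]\ge 1$. So the right-hand side is at least $2$ however small $\lambda\|A\|_2$ is, and it can never certify $\QChi{P}{Q}<1/16$. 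Your claim that each term can be brought below $(1+1/16)/2$ is therefore false. What must be shown is that the MGF is $1+o(1)$, so any constant-factor loss here is fatal, and Step 3 fails along your primary route.

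Your fallback is to replace $N_zN_{z'}$ by $1$ on the event $N_z=N_{z'}=1$, which has probability $1-O(e^{-d})$ by Liu \etal. This is exactly what the paper does, and with it the proof goes through.

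Do not charge the bad event additively via a sup bound, as the paper's ``$+4e^{-d}$'' suggests. Off the good event the integrand is a priori only bounded by $e^{\lambda\ell\|A\|_\infty}$ (or $e^{mc^2\eps^2}$), which is far larger than $e^{d}$ in the relevant regime. Instead, do one of the following:
\begin{itemize}
\item Restrict the prior on $z$ to the good set $G$. The integrand is nonnegative, so this costs only a multiplicative factor $\Pr[G]^{-2}=1+O(e^{-d})$. The restricted ensemble is still an almost-$\eps$ perturbation, which is all \cref{lemma:instrument_minimax_maximin_bounds} needs.
\item Apply Cauchy--Schwarz on the bad event, together with $e^{tx}\le 1+e^{x}$ for $t\in[0,1]$.
\end{itemize}

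Alternatively, your sign idea can be salvaged with a first-order bound instead of a crude one. By convexity of $s\mapsto e^{sx}$, we have $e^{tx}\le 1+t(e^{x}-1)\le 1+|e^{x}-1|$ for $t\in[0,1]$. With $X=\lambda z^\top Az'$ and using $\mathbb{E}e^{X}\ge 1$, this gives $\QChi{P}{Q}\le\mathbb{E}|e^{X}-1|\le(\mathbb{E}e^{2X}-1)^{1/2}\le(e^{4C\lambda^2\|A\|_2^2}-1)^{1/2}$ whenever $\lambda\|A\|_\infty\le 1/4$. This is below $1/16$ once $\lambda\|A\|_2\le c_1$ for a small absolute constant $c_1$, and needs no high-probability statement about $N_z$ at all.
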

\begin{lemma}\label{lemma:instrument_infty_bound}
    For the parameters defined above and a quantum instrument $\mathcal{I}$, we have that:
    \begin{equation*}
        \|M_{\Lambda}\|_\infty\leq \sqrt{d}.
    \end{equation*}
\end{lemma}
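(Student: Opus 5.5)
The plan is to bound $\|M_\Lambda\|_\infty=\sup_{\|X\|_2=1}\|\Lambda(X)\|_2$ directly in terms of $\mathcal{I}$, using a Cauchy--Schwarz inequality for completely positive maps (Lieb--Ruskai/Kadison--Schwarz type) together with trace preservation. Throughout, $\tau^{-1/2},\tau^{-1/4}$ denote (pseudo)inverses on $\supp(\tau)$. By \cref{lemma:instrument_support}, $\mathcal{I}(X)=P\mathcal{I}(X)P$ with $P=\Pi_{\supp(\tau)}$, so nothing is lost by working on $\supp(\tau)$. Note that $\mathcal{I}(I)=d\tau$.

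\textbf{Step 1 (rewrite the norm).} For any $X$ with $\|X\|_2=1$, expand $\|\Lambda(X)\|_2^2=\tr[\tau^{-1/4}\mathcal{I}(X)^\dagger\tau^{-1/2}\mathcal{I}(X)\tau^{-1/4}]$. Cycling the trace gives
\begin{equation*}
\|\Lambda(X)\|_2^2=\tr[\mathcal{I}(X)^\dagger\tau^{-1/2}\mathcal{I}(X)\tau^{-1/2}]=\InnerNamed{\tau^{-1/2}\mathcal{I}(X)}{\mathcal{I}(X)\tau^{-1/2}}{HS}.
\end{equation*}
By Cauchy--Schwarz this is at most $\|\tau^{-1/2}\mathcal{I}(X)\|_2\,\|\mathcal{I}(X)\tau^{-1/2}\|_2$. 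Since $\mathcal{I}$ is Hermiticity-preserving, $\mathcal{I}(X)^\dagger=\mathcal{I}(X^\dagger)$, so
\begin{equation*}
\|\tau^{-1/2}\mathcal{I}(X)\|_2^2=\tr[\mathcal{I}(X^\dagger)\tau^{-1}\mathcal{I}(X^\dagger)^\dagger].
\end{equation*}
The other factor is the same expression evaluated at $X$ in place of $X^\dagger$. It therefore suffices to show that $\tr[\mathcal{I}(Y)\tau^{-1}\mathcal{I}(Y)^\dagger]\le d$ for every $Y$ with $\|Y\|_2=1$.

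\textbf{Step 2 (operator Cauchy--Schwarz).} I will prove that for a CP map with Kraus operators $\{A_k\}$ and $F=\mathcal{I}(I)=\sum_k A_kA_k^\dagger$,
\begin{equation*}
\mathcal{I}(Y)\,F^{+}\,\mathcal{I}(Y)^\dagger\le \mathcal{I}(YY^\dagger),
\end{equation*}
where $F^+$ is the pseudoinverse. Write $\mathcal{I}(Y)=RS^\dagger$ with $R=[A_1Y,\dots,A_nY]$ and $S=[A_1,\dots,A_n]$, so that $SS^\dagger=F$ and $RR^\dagger=\mathcal{I}(YY^\dagger)$. The claim then becomes $RS^\dagger F^+SR^\dagger\le RR^\dagger$. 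This follows because $S^\dagger F^+S=S^\dagger(SS^\dagger)^+S$ is the orthogonal projection onto the range of $S^\dagger$, hence is $\le I$. By \cref{lemma:instrument_support} nothing leaves $\supp(F)$, so the pseudoinverse causes no issues. Since $F^+=\tau^{-1}/d$, this gives $\mathcal{I}(Y)\tau^{-1}\mathcal{I}(Y)^\dagger\le d\,\mathcal{I}(YY^\dagger)$.

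\textbf{Step 3 (conclude).} Take traces and use trace preservation of $\mathcal{I}$: $\tr[\mathcal{I}(Y)\tau^{-1}\mathcal{I}(Y)^\dagger]\le d\tr[YY^\dagger]=d$. Plugging this into Step 1 gives $\|\Lambda(X)\|_2^2\le\sqrt{d}\cdot\sqrt{d}=d$, hence $\|M_\Lambda\|_\infty\le\sqrt d$.

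The main obstacle is Step 1. A naive route via $\|\Lambda\|_{2\to2}^2\le\|\Lambda(I)\|_\infty\|\Lambda^*(I)\|_\infty$ requires controlling $\mathcal{I}^*(\tau^{-1/2})$, which can be large. The trick is to split the two $\tau^{-1/2}$ factors across the two sides with Cauchy--Schwarz, so that each side carries a full $\tau^{-1}=d\,F^+$, which Kadison--Schwarz handles exactly. I would double-check Step 2 carefully, in particular the pseudoinverse and support argument.
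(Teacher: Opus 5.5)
Your proof is correct and follows essentially the same route as the paper. Both reduce $\|\tau^{-1/4}\mathcal{I}(X)\tau^{-1/4}\|_2^2$ to the one-sided quantities $\tr[\mathcal{I}(X)^\dagger F^{-1}\mathcal{I}(X)]$ and $\tr[\mathcal{I}(X)F^{-1}\mathcal{I}(X)^\dagger]$ (with $F=\mathcal{I}(I)=d\tau$), bound each by the operator Schwarz inequality plus trace preservation, and rescale by $\sqrt{d}$. The differences are minor: you combine the two sides with Hilbert--Schmidt Cauchy--Schwarz where the paper expands in the eigenbasis of $F$ and uses AM--GM entrywise, and your Kraus-factorization proof of the Schwarz inequality with the pseudoinverse handles singular $F$ a bit more cleanly than the paper's appeal to Kadison--Schwarz for $F^{-1/2}\mathcal{I}(\cdot)F^{-1/2}$, which is unital only on $\supp(F)$.
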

\begin{lemma}\label{lemma:instrument_2_bound}
    For the parameters defined above and a quantum instrument $\mathcal{I}$, we have that:
    \begin{equation*}
        \|M_{\Lambda}\|_2\leq d_q\sqrt{dK}
    \end{equation*}
\end{lemma}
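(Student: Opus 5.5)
The plan is to bound the Schatten-$2$ (Frobenius) norm of $M_\Lambda$ by its rank times its operator norm. We use
\begin{equation*}
\|M_\Lambda\|_2^2=\sum_j s_j(M_\Lambda)^2\leq \rank(M_\Lambda)\cdot\|M_\Lambda\|_\infty^2,
\end{equation*}
where the $s_j(M_\Lambda)$ are the singular values. \cref{lemma:instrument_infty_bound} already gives $\|M_\Lambda\|_\infty\leq\sqrt d$. It therefore suffices to show $\rank(M_\Lambda)\leq K d_q^2$, which yields $\|M_\Lambda\|_2\leq \sqrt{Kd_q^2\cdot d}=d_q\sqrt{dK}$, exactly the claimed bound.

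For the rank bound, we look at the range of $\Lambda$. By \cref{def:quantum_instrument}, every output of the instrument has the classical-quantum form $\mathcal{I}(X)=\sum_{c=1}^K\mathcal{E}_c(X)\otimes|c\rangle\langle c|$. Such outputs are block-diagonal with respect to the classical register, with $K$ blocks of size $d_q\times d_q$. In particular $\tau=\mathcal{I}(I/d)=\sum_c \mathcal{E}_c(I/d)\otimes|c\rangle\langle c|$ is block-diagonal in the same way.

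We interpret $\tau^{-1/4}$ as the pseudo-inverse power on $\supp(\tau)$ (zero on the kernel). This choice is consistent with \cref{lemma:instrument_support}, which says $\mathcal{I}(X)=P\mathcal{I}(X)P$, so nothing is lost by discarding the kernel. Functional calculus acts blockwise, so $\tau^{-1/4}=\sum_c(\mathcal{E}_c(I/d))^{-1/4}\otimes|c\rangle\langle c|$ is also block-diagonal. Hence $\Lambda(X)=\tau^{-1/4}\mathcal{I}(X)\tau^{-1/4}$ again lies in the space of block-diagonal operators $\bigoplus_{c=1}^K\mathbb{C}^{d_q\times d_q}$.

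That space has complex dimension $Kd_q^2$. Vectorization is a linear isometry, so the column space of $M_\Lambda$ lies in a subspace of $\mathbb{C}^{(d_qK)^2}$ of dimension $Kd_q^2$. This gives $\rank(M_\Lambda)\leq Kd_q^2$, and combining with the first step finishes the proof.

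I expect no serious obstacle here, since all the analytic work sits in \cref{lemma:instrument_infty_bound}. The one point needing care is checking that the augmentation by $\tau^{-1/4}$ preserves the block structure, so that the rank is governed by $Kd_q^2$ rather than $(Kd_q)^2$. A naive rank bound of $(Kd_q)^2$ would give the weaker $Kd_q\sqrt d$.

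A tempting alternative is to go through the Choi matrix: it is positive semidefinite with trace $\tr\Lambda(I)=d\tr[\tau^{1/2}]\leq d\sqrt{Kd_q}$. That route only gives $d\sqrt{d_qK}$, which is weaker because $d_q\leq d$. So I would commit to the rank-times-operator-norm argument.
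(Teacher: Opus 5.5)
Your proposal is correct and follows the paper's proof: you bound $\|M_\Lambda\|_2^2\le\rank(M_\Lambda)\,\|M_\Lambda\|_\infty^2$, take $\|M_\Lambda\|_\infty\le\sqrt d$ from \cref{lemma:instrument_infty_bound}, and get $\rank(M_\Lambda)\le Kd_q^2$ from the block-diagonal classical--quantum structure of the output. Your explicit check that the pseudo-inverse $\tau^{-1/4}$ is block-diagonal, so that the range of $\Lambda$ lies in $\bigoplus_{c=1}^K\mathbb{C}^{d_q\times d_q}$, justifies the rank count more cleanly than the paper's brief remark does.
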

\begin{remark}
    We comment on some special cases of the three results above, since they are stated in the most general form possible. First, assume that the quantum instruments are actually quantum channels. That is, $n_c=0$, and thus $K=1$. Then, the only result that changes is \cref{lemma:instrument_2_bound}, which becomes $d_q\sqrt{d}$ instead. One might think it strange that the other two results do not change, but this is not that surprising. First, \cref{lemma:instrument_T_bound} will simply have another operator instead of $T_\Lambda$ that instead of being parametrized by quantum instruments, are parametrized by quantum channels. One can specialize our proof to quantum channels, and verify that this does not lead to any quantitative gain in the conclusion.  Second, it intuitively makes sense that the addition of the classical channels does not change \cref{lemma:instrument_infty_bound}. This is because whilst the classical channels add many more orthogonal directions, they do not actually impact the magnitude of any one direction, meaning they should not impact the infinity norm. On the other hand, it makes sense that the $2$-norm would increase by some factor, since it will have contributions from all of these additional orthogonal directions.

    We also point out that one can also recover, from our method, the results of Doosti \etal~for mixedness preserving channels. Indeed, for a mixedness-preserving channel, we have (abusing the dimensionality of $I$) that  $\tau=I/d_q$. Then, our augmented map is given by $\Lambda_i(X) \eqdef \sqrt{d_q}\Phi_i(X)$ and, following the proof that we provide later, one obtains the lower bounds (restricted to mixedness-preserving channels) of \cite{doosti2026distributedquantumpropertytesting}. 
\end{remark}
We now give the main lower bounding result and its proof, assuming the three lemmas above.
\begin{theorem}[\cref{thm:public_coin_lb}, restated]
    \label{thm:public_coin_lb:detailed}
    Let $d\geq 6$, $0<\eps< \eps_0$, for a sufficiently small absolute constant $\eps_0 >0$, let $K=2^{n_c}$. Then consider the public-coin model, where each distributed node receives a state $\rho$ and communicates a $d_q$-dimensional qudit and a length $n_c$ binary string. To $\eps$-certify a $d$-dimensional state, any successful protocol must have $m=\Omega\left(\frac{d^2}{d_q\sqrt{K}\eps^2}\right)$ nodes, that is,
    \begin{equation*}
        m=\Omega\left(\frac{d^2}{2^{n_q+n_c/2}\eps^2}\right).
    \end{equation*}
\end{theorem}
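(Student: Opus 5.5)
We combine \cref{lemma:instrument_minimax_maximin_bounds}, \cref{lemma:instrument_T_bound}, \cref{lemma:instrument_infty_bound} and \cref{lemma:instrument_2_bound}. The overall plan is to bound $\|\mathcal{V}^\dagger T_\Lambda\mathcal{V}\|_2$ uniformly over all choices of instruments, so that the condition in \cref{lemma:instrument_T_bound} becomes a bound depending only on $d,d_q,K,\eps$.

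First, by \cref{lemma:instrument_minimax_maximin_bounds} (public-coin version), a successful protocol requires that for every almost-$\eps$ perturbation $D$ there are instruments $\mathcal{I}_1,\dots,\mathcal{I}_m$ with symmetrized $\chi^2$ divergence at least $1/16$. We take $D$ to be the distribution of $\rho_z$ for uniform $z\in\{-1,1\}^\ell$ from \cref{def:hard_instances}, with $\ell=d^2-1$ (or any $\ell\ge d^2/2$). By the result of Liu \etal~quoted after \cref{def:hard_instances}, for a suitable constant $c$ and $\eps<\eps_0$ this is an almost-$\eps$ perturbation of $I/d$, so it lies in $\mathcal{D}_\eps(I_d/d)$. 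Then for the maximizing instruments the divergence is at least $1/16$, and \cref{lemma:instrument_T_bound} gives
\begin{equation*}
m\geq \Omega\left(\frac{d\ell}{\eps^2\,\|\mathcal{V}^\dagger T_\Lambda\mathcal{V}\|_2}\right).
\end{equation*}

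It remains to upper bound $\|\mathcal{V}^\dagger T_\Lambda\mathcal{V}\|_2$ by $O(d\,d_q\sqrt{K})$. Since $\mathcal{V}$ has orthonormal columns (vectorization preserves the HS inner product), $\mathcal{V}^\dagger T_\Lambda\mathcal{V}$ is a compression of $T_\Lambda$, so $\|\mathcal{V}^\dagger T_\Lambda\mathcal{V}\|_2\le\|T_\Lambda\|_2$. The triangle inequality gives $\|T_\Lambda\|_2\le\frac1m\sum_i\|M_{\Lambda_i}^\dagger M_{\Lambda_i}\|_2$. For each term we use Hölder's inequality for Schatten norms:
\begin{equation*}
\|M^\dagger M\|_2\le\|M\|_\infty\|M\|_2\le\sqrt d\cdot d_q\sqrt{dK}=d\,d_q\sqrt K,
\end{equation*}
where the last inequality follows from \cref{lemma:instrument_infty_bound} and \cref{lemma:instrument_2_bound}. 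Hence $\|\mathcal{V}^\dagger T_\Lambda\mathcal{V}\|_2\le d\,d_q\sqrt K$.

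Plugging this into the bound on $m$ with $\ell=\Theta(d^2)$ gives $m=\Omega\!\left(\frac{d\cdot d^2}{\eps^2 d\,d_q\sqrt K}\right)=\Omega\!\left(\frac{d^2}{d_q\sqrt K\eps^2}\right)$. Substituting $d_q=2^{n_q}$ and $K=2^{n_c}$ yields the stated bound. The only delicate points are to check that the hypotheses of \cref{lemma:instrument_T_bound} ($d\ge6$, small $\eps$) match those assumed here, and to use Hölder rather than the cruder $\|M\|_2^2$ bound, which would lose a factor and give only $d_q^2K$. With the three lemmas assumed, no step is a real obstacle; the substantive work sits inside \cref{lemma:instrument_T_bound}.
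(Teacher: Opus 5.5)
Your proposal is correct and follows the paper's proof essentially step for step. You apply \cref{lemma:instrument_minimax_maximin_bounds} to the hard family, then \cref{lemma:instrument_T_bound}, then bound $\|T_\Lambda\|_2\le \sqrt d\cdot d_q\sqrt{dK}$ via the triangle inequality and H\"older, and finally take $\ell=\Theta(d^2)$. Your compression step is stated slightly more carefully than the paper's: $\mathcal{V}$ has only $\ell<d^2$ orthonormal columns, so $\|\mathcal{V}^\dagger T_\Lambda\mathcal{V}\|_2\le\|T_\Lambda\|_2$ is an inequality rather than the equality the paper writes, and that direction is all the argument needs.
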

\begin{proof}
    By \cref{lemma:instrument_minimax_maximin_bounds}, for all choices of bases $\mathcal{V}$, the number of distributed nodes $m$ must be large enough such that
    \begin{equation*}
        \min_{D\in\mathcal{D}_\eps(I_d/d)}\max_{\mathcal{I}_1,...,\mathcal{I}_m}\QChi{\mathbb{E}_{\rho\sim D}\left[\bigotimes_{i=1}^m\mathcal{I}_i(\rho)\right]}{\bigotimes_{i=1}^m\mathcal{I}_i(I_d/d)}\geq\frac{1}{16}.
    \end{equation*}
    By \cref{lemma:instrument_T_bound}, this means we must have
    \begin{equation*}
        m\geq \Omega\left(\frac{d\ell}{\eps^2}\cdot\frac{1}{\|\mathcal{V}^\dagger T_\Lambda\mathcal{V}\|_2}\right).
    \end{equation*}
    Notice that for every choice of basis $V_1,...,V_{d^2-1}$, $\mathcal{V}$ is an isometry, and thus does not impact norms. Then, we have:
    \begin{align*}
        \|\mathcal{V}^\dagger T_\Lambda\mathcal{V}\|_2&= \|T_\Lambda\|_2\\
        &\leq \frac{1}{m}\sum_{i=1}^m\|M_{\Lambda_i}^\dagger M_{\Lambda_i}\|_2\\
        &\leq \frac{1}{m}\sum_{i=1}^m\|M_{\Lambda_i}^\dagger \|_2\|M_{\Lambda_i}\|_\infty\\
        &\leq d_q\sqrt{dK}\sqrt{d}=d_qd\sqrt K,
    \end{align*}
    where in the second inequality, we used the triangle inequality. In the third, we used Holder's inequality, noting that $\frac{1}{2}+\frac{1}{\infty}=\frac{1}{2}$ by the usual convention. Finally, we used \cref{lemma:instrument_2_bound} and \cref{lemma:instrument_infty_bound} in the last step. Finally, substituting this into \cref{lemma:instrument_T_bound}, and setting $\ell=d^2-1$, we get $m=\Omega\left(\frac{d^2}{d_q\sqrt{K}\eps^2}\right)$, as required. 
\end{proof}
We now aim to prove the three lemmas. To do so, we will need some additional results. First is the Kadison-Schwarz inequality.
\begin{lemma}[Kadison--Schwarz]\label{lemma:Kadison_Schwarz}
    Let $\Gamma$ be a unital completely positive map. Then, we have that
    \begin{equation*}
        \Gamma(X^\dagger X)\succeq\Gamma(X)^\dagger\Gamma(X),
    \end{equation*}
    where $A\preceq B$ if and only if $B-A$ is positive semidefinite. Here, unital means that $\Gamma(I)=I$, abusing notation slightly by leaving out the dimensions.
\end{lemma}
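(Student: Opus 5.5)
We will prove the Kadison--Schwarz inequality using only $2$-positivity of $\Gamma$ and a Schur complement argument. This avoids any need for Stinespring dilations. Let $\Gamma\colon\mathbb{C}^{d\times d}\to\mathbb{C}^{d'\times d'}$ be unital and completely positive, and fix $X\in\mathbb{C}^{d\times d}$.

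The first step is to write down the block operator
\begin{equation*}
    B\eqdef\begin{pmatrix} I & X\\ X^\dagger & X^\dagger X\end{pmatrix}=\begin{pmatrix} I\\ X^\dagger\end{pmatrix}\begin{pmatrix} I & X\end{pmatrix}\in\mathbb{C}^{2d\times 2d}.
\end{equation*}
The factorisation shows $B=W^\dagger W$ with $W=\begin{pmatrix} I & X\end{pmatrix}$, so $B\succeq 0$.

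Since $\Gamma$ is completely positive, $\Gamma\otimes\mathrm{id}_2$ (acting entrywise on $2\times 2$ block matrices) is positive. Applying it to $B$ and using linearity, unitality $\Gamma(I)=I$, and the Hermiticity-preserving property $\Gamma(X^\dagger)=\Gamma(X)^\dagger$ gives
\begin{equation*}
    \begin{pmatrix} I & \Gamma(X)\\ \Gamma(X)^\dagger & \Gamma(X^\dagger X)\end{pmatrix}\succeq 0 .
\end{equation*}
The Hermiticity-preserving property holds for any positive map, and can also be read off the Kraus form $\Gamma(Y)=\sum_k A_kYA_k^\dagger$.

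The final step is the Schur complement criterion. A block matrix $\begin{pmatrix} I & Y\\ Y^\dagger & Z\end{pmatrix}$ with identity top-left block is positive semidefinite if and only if $Z-Y^\dagger Y\succeq 0$. To see this, conjugate by the invertible matrix $\begin{pmatrix} I & -Y\\ 0 & I\end{pmatrix}$, which gives the block-diagonal matrix $I\oplus(Z-Y^\dagger Y)$. Taking $Y=\Gamma(X)$ and $Z=\Gamma(X^\dagger X)$ yields $\Gamma(X^\dagger X)\succeq\Gamma(X)^\dagger\Gamma(X)$, which is the claim.

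The main point needing care is the second step: we must use complete positivity (at least $2$-positivity), not mere positivity, since Kadison--Schwarz can fail for general positive maps.

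As a fallback, the Kraus form gives a direct route. Define the isometry $V\colon|v\rangle\mapsto\sum_k A_k^\dagger|v\rangle\otimes|k\rangle$; unitality gives $V^\dagger V=\sum_kA_kA_k^\dagger=I$. Then
\begin{equation*}
    \Gamma(X)=V^\dagger(X\otimes I)V,
\end{equation*}
and since $VV^\dagger\preceq I$ we get $\Gamma(X)^\dagger\Gamma(X)=V^\dagger(X^\dagger\otimes I)VV^\dagger(X\otimes I)V\preceq V^\dagger(X^\dagger X\otimes I)V=\Gamma(X^\dagger X)$.
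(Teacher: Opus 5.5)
Your proof is correct. Both routes check out: $B=W^\dagger W\succeq 0$; then $2$-positivity, $\Gamma(I)=I$ and $\Gamma(X^\dagger)=\Gamma(X)^\dagger$ give the positive block matrix; and the congruence by $S=\begin{pmatrix} I & -Y\\ 0 & I\end{pmatrix}$ does yield $S^\dagger M S=I\oplus(Z-Y^\dagger Y)$. In the Kraus fallback, $V^\dagger(X\otimes I)V=\sum_k A_kXA_k^\dagger$ and $VV^\dagger\preceq I$, so that argument also goes through.

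The paper gives no proof of this lemma. It quotes it as a standard fact and uses it once, in the argument bounding $\|M_\Lambda\|_\infty$ (\cref{lemma:instrument_infty_bound}), with $\Gamma(X)=F^{-1/2}\mathcal{I}(X)F^{-1/2}$ and $F=\mathcal{I}(I)$. So there is no competing argument; your self-contained derivation (the textbook $2$-positivity/Schur-complement proof, with a Stinespring-type isometry argument as backup) fills that gap.

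Both of your routes actually need only $\Gamma(I)\preceq I$, not $\Gamma(I)=I$:
\begin{itemize}
\item In the block-matrix route, add the positive block $(I-\Gamma(I))\oplus 0$ before taking the Schur complement.
\item In the Kraus route, $V^\dagger V=\Gamma(I)\preceq I$ still makes $V$ a contraction, so $VV^\dagger\preceq I$.
\end{itemize}
This sub-unital version is what the paper's application really requires. $F$ is singular whenever some branch $\mathcal{E}_c$ vanishes, as in the paper's own example where every node sends a fixed string $s$. In that case $F^{-1/2}$ must be read as a pseudo-inverse, so $\Gamma(I)=\Pi_{\supp(F)}$ rather than $I$. \cref{lemma:instrument_support} is then what lets one strip the projection off again after conjugating by $F^{1/2}$.
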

Second is a standard moment bound found in a couple of papers:
\begin{lemma}[{\cite[Lemma 2.7; Claim IV.17; Prop 8.13]{doosti2026distributedquantumpropertytesting, DBLP:journals/tit/AcharyaCLST22, etingof2011introductionrepresentationtheory}}]\label{lemma:exponential_expectation_bound}
    Let $z,z'\sim\{-1,1\}^\ell,~\lambda\in\mathbb{R},~A\in\mathbb{R}^{\ell\times\ell}$. Then,
    \begin{equation*}
        \mathbb{E}_{z,z'}[\exp(\lambda z^\top Az')]\leq \exp(C\lambda^2\|A\|_2^2),
    \end{equation*}
    whenever $\lambda\leq \frac{1}{2\|A\|_\infty}$, $C>0$ is some absolute constant.
\end{lemma}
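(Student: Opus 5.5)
The plan is to integrate out $z$ and $z'$ one at a time. First use sub-Gaussianity of Rademacher linear forms to remove $z$. Then bound the resulting quadratic form in $z'$ by comparing it with a Gaussian quadratic form, which can be computed exactly. Throughout, $\|A\|_2$ is the Schatten-$2$ (Frobenius) norm and $\|A\|_\infty$ is the operator norm.

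\emph{Step 1 (average over $z$).} Condition on $z'$ and set $w \eqdef Az'\in\mathbb{R}^\ell$, so that $z^\top A z' = \sum_j z_j w_j$. Since the $z_j$ are independent and $\cosh x\le e^{x^2/2}$,
\begin{equation*}
    \mathbb{E}_z[\exp(\lambda z^\top w)] = \prod_j \cosh(\lambda w_j) \le \exp\left(\tfrac{\lambda^2}{2}\|w\|^2\right).
\end{equation*}
With $B\eqdef A^\top A\succeq 0$, this reduces the claim to bounding $\mathbb{E}_{z'}[\exp(\mu\, z'^\top B z')]$ with $\mu = \lambda^2/2$.

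\emph{Step 2 (Rademacher-to-Gaussian comparison).} The map $f(x)=\exp(\mu x^\top B x)$ is convex because $B\succeq0$ and $\exp$ is convex and increasing. Write $z' = \mathrm{sign}(g)$ with $g\sim N(0,I_\ell)$. Then $\mathbb{E}[\,|g_j|\mid \mathrm{sign}(g)\,]\cdot\mathrm{sign}(g_j)=\sqrt{2/\pi}\,z'_j$, so $z' = \sqrt{\pi/2}\,\mathbb{E}[g\mid z']$. Jensen's inequality (conditional on $z'$) then gives
\begin{equation*}
    \mathbb{E}_{z'} f(z') \le \mathbb{E}_g f\left(\sqrt{\pi/2}\,g\right) = \mathbb{E}_g \exp\left(\tfrac{\pi\mu}{2} g^\top B g\right).
\end{equation*}

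\emph{Step 3 (Gaussian computation).} Diagonalize $B$ with eigenvalues $s_1,\dots,s_\ell\ge0$. Rotation invariance of $g$ gives
\begin{equation*}
    \mathbb{E}_g\exp\left(\tfrac{\pi\mu}{2}g^\top Bg\right)=\prod_j (1-\pi\mu s_j)^{-1/2},
\end{equation*}
which requires $\pi\mu s_j<1$. The hypothesis $\lambda\le 1/(2\|A\|_\infty)$ gives $\pi\mu s_j \le \tfrac{\pi}{2}\lambda^2\|A\|_\infty^2\le \pi/8<1/2$. Using $-\tfrac12\log(1-x)\le x$ for $x\in[0,1/2]$, the product is at most
\begin{equation*}
    \exp\left(\pi\mu\tr B\right)=\exp\left(\tfrac{\pi}{2}\lambda^2\|A\|_2^2\right).
\end{equation*}
This proves the lemma with $C=\pi/2$.

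\emph{Main obstacle.} The only delicate point is Step 2: the quadratic form in the $\pm1$ vector $z'$ must be controlled by a Gaussian one. This is exactly where convexity, which comes from $B\succeq 0$, and the operator-norm condition on $\lambda$ enter; the latter keeps the Gaussian moment generating function finite with a uniform constant. If the Jensen comparison were unavailable, I would instead split $z'^\top Bz' = \tr B + \sum_{i\ne j}B_{ij}z'_iz'_j$. The off-diagonal Rademacher chaos would then be handled by a Hanson--Wright-type bound $\exp(C\mu^2\|B\|_2^2)$, valid for $\mu\lesssim 1/\|B\|_\infty$, together with $\|B\|_2^2\le\|A\|_\infty^2\|A\|_2^2$. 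This gives the same conclusion up to the constant.
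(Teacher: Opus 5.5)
Your proof is correct and complete, with $C=\pi/2$. The paper itself gives no proof of this lemma; it only imports the bound from the cited works, so the comparison is with how such bounds are usually obtained.

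Your route has three steps: the $\cosh$/Hoeffding bound in $z$, then the Jensen comparison $z'=\sqrt{\pi/2}\,\mathbb{E}[g\mid \mathrm{sign}(g)]$ for the convex function $x\mapsto\exp(\mu\, x^\top A^\top A x)$, then the exact Gaussian moment generating function after diagonalizing $A^\top A$. This is in the same spirit as the Gaussian-decoupling step in Rudelson and Vershynin's proof of the Hanson--Wright inequality. Its advantage is that every constant is explicit, so you get the bound exactly under the stated threshold $\frac{1}{2\|A\|_\infty}$.

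Your fallback splits $z'^\top B z'=\tr B+\sum_{i\neq j}B_{ij}z'_iz'_j$, applies a Hanson--Wright bound to the chaos, and uses $\|B\|_2\le\|A\|_\infty\|A\|_2$. This is the standard black-box route. With Hanson--Wright's unspecified constants, though, it recovers the statement only up to constants in \emph{both} places: it gives the bound for $|\lambda|\le c/\|A\|_\infty$ with some unspecified $c$. So it would not by itself certify the threshold $1/2$, and your main route is the better one.

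One point to make explicit. Step 3 uses $\lambda^2\|A\|_\infty^2\le\frac14$, i.e.\ $|\lambda|\le\frac{1}{2\|A\|_\infty}$, not merely $\lambda\le\frac{1}{2\|A\|_\infty}$. That is the correct reading of the hypothesis, since the left-hand side is even in $\lambda$ (substitute $z\mapsto -z$). The one-sided condition as literally stated is false: take $A=\frac1\ell\mathbf{1}\mathbf{1}^\top$, so $\|A\|_\infty=\|A\|_2=1$, and $\lambda=-2$. Then the left-hand side is at least $4^{-\ell}e^{2\ell}$, which exceeds $e^{4C}$ for large $\ell$. The paper only applies the lemma with $\lambda>0$, so nothing downstream is affected, but you should state that you are assuming $|\lambda|\le\frac{1}{2\|A\|_\infty}$.
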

\noindent We start by proving \cref{lemma:instrument_T_bound}.
\begin{proof}[Proof of~\cref{lemma:instrument_T_bound}]
    For ease of notation, let
    \begin{equation*}
        P=\bigotimes_{i=1}^m\mathcal{I}_i(\rho_z),~Q=\bigotimes_{i=1}^m\mathcal{I}_i(I/d)=\bigotimes_{i=1}^m\tau_i.
    \end{equation*}
    Then, we have that 
    \begin{equation*}
        \QChi{P}{Q}=\tr[(Q^{-1/4}(P-Q)Q^{-1/4})^2].
    \end{equation*}
    By \cref{lemma:quantum_ingster-suslina_symm}, we have that
    \begin{equation*}
        1+\QChi{P}{Q}=\mathbb{E}_{z,z'}\left[\prod_{i=1}^m\tr[\tau_i^{-1/2}\mathcal{I}_i(\rho_z)\tau_i^{-1/2}\mathcal{I}_i(\rho_{z'})]\right].
    \end{equation*}
    Let $\mathcal{I}_i(\rho_z)=\tau_i+\mathcal{I}_i(\bar\Delta_z)$. Then, we use the definition of $Z_i$ to get that
    \begin{align*}
        Z_i(z,z')&=\tr[\tau_{i}^{-1/2}\mathcal{I}_i(\bar\Delta_z)\tau_{i}^{-1/2}\mathcal{I}_i(\bar\Delta_{z'})]\\
        &= \tr[\tau_{i}^{-1/4}\mathcal{I}_i(\bar\Delta_z)\tau_{i}^{-1/4}\tau_{i}^{-1/4}\mathcal{I}_i(\bar\Delta_{z'})\tau_{i}^{-1/4}]\\
        &=\InnerNamed{\Lambda_i(\bar\Delta_z)^\dagger}{\Lambda_i(\bar\Delta_{z'})}{HS}\\
        &=\tr[\Lambda_i(\bar\Delta_z)\Lambda_i(\bar\Delta_{z'})],
    \end{align*}
    where the third equality comes from the definition of the Hilbert-Schmidt inner product, and we can remove the $\dagger$ since all matrices are Hermitian.
    Having obtained the expression $Z_i(z,z')$. We then use the definition 
    \begin{equation*}
        \bar\Delta_z=\frac{c\eps N_z}{\sqrt{d\ell}}\sum_{i=1}^\ell z_iV_i,
    \end{equation*}
    and apply linearity of the quantum instruments to get that
    \begin{align*}
        \mathcal{I}_i(\bar\Delta_z)&=\frac{c\eps N_z}{\sqrt{d\ell}}\mathcal{I}_i\left(\sum_{i=1}^\ell z_iV_i\right)
        =\frac{c\eps N_z}{\sqrt{d\ell}}M_{\mathcal{I}_i}\vecmap\left(\sum_{i=1}^\ell V_iz_i\right)\,,
    \end{align*}
    so that
    \begin{align*}
        \mathcal{I}(\bar\Delta_z)&= \frac{c\eps N_z}{\sqrt{d\ell}}M_\mathcal{I}\mathcal{V}z.
    \end{align*}
    This implies that
    \begin{align*}
        Z_i(z,z')&=\frac{c^2\eps^2N_zN_{z'}}{d\ell}z^\top\mathcal{V}^\dagger M_{\mathcal{I}_i}^\dagger M_{\mathcal{I}_i}\mathcal{V}z'\\
        \sum_{i=1}^mZ_i(z,z')&= \frac{mc^2\eps^2N_zN_{z'}}{d\ell}z^\top\mathcal{V}^\dagger T_\mathcal{I}\mathcal{V}z'.
    \end{align*}
    Hence, letting $A=\mathcal{V}^\dagger T_\mathcal{I}\mathcal{V}$, we have:
    \begin{equation*}
        1+\QChi{P}{S}\leq \mathbb{E}_{z,z'}\left[\exp\left( \frac{mc^2\eps^2N_zN_{z'}}{d\ell}z^\top Az'\right)\right].
    \end{equation*}
    Using the same results as Liu \etal on $N_z$, we find that it equals $1$, except with exponentially small probability \cite{pmlr-v247-liu24a}. This then gives that
    \begin{equation*}
        1+\QChi{P}{S}\leq \mathbb{E}_{z,z'}\left[\exp\left( \frac{mc^2\eps^2}{d\ell}z^\top Az'\right)\right]+\frac{4}{e^d}.
    \end{equation*}
    Then, setting $\lambda=\frac{mc^2\eps^2}{d\ell}>0$, we can analyze two cases. First, if $\lambda>\frac{1}{2\|A\|_\infty}$, we have that
    \begin{align*}
        m&>\frac{d\ell}{2c^2\eps^2\|A\|_\infty}
        \geq\frac{d\ell}{2c^2\eps^2\|A\|_2}
        =\Omega\left(\frac{d\ell}{\eps^2\|A\|_2}\right),
    \end{align*}
    which is the required bound. 
    Otherwise, we have $\lambda\leq\frac{1}{2\|A\|_\infty}$, and then, by \cref{lemma:exponential_expectation_bound}, we get
    \begin{align*}
        \mathbb{E}_{z,z'}\left[\exp\left(\lambda z^\top Az'\right)\right]&\leq \exp(C\lambda^2\|A\|_2)
    \end{align*}
    so that
    \begin{align*}
        1+\QChi{P}{S}&\leq\exp\left(C\frac{m^2\eps^4}{d^2\ell^2}\|\mathcal{V}^\dagger T_\Psi\mathcal{V}\|_2^2\right)+4e^{-d}.
    \end{align*}
    Now, for $d\geq 6$, $4e^{-d}=o(1)$, and so $\QChi{P}{S}\leq 1/16$, unless we have that
    $
        C\frac{m^2\eps^4}{d^2\ell^2}\|\mathcal{V}^\dagger T_\Psi\mathcal{V}\|_2^2=\Omega(1)
    $, or, equivalently,
    \[
        m\geq\Omega\left(\frac{d\ell}{\eps^2}\cdot\frac{1}{\|\mathcal{V}^\dagger T_\Phi\mathcal{V}\|_2}\right),
    \]
    which is the desired bound. Hence, in both cases we get
    \begin{equation*}
        m\geq \Omega\left(\frac{d^3}{Kd_q^2\eps^2}\right),
    \end{equation*}
    as required.
\end{proof}
Next, we prove \cref{lemma:instrument_2_bound}.
\begin{proof}[Proof of~\cref{lemma:instrument_2_bound}]
    We first show $\|F^{-1/4}\mathcal{I}(X)F^{-1/4}\|_2\leq \|X\|_2$. Set $F\eqdef \mathcal{I}(I)=d\tau$, and $\Gamma(X)\eqdef F^{-1/2}\mathcal{I}(X)F^{-1/2}$. Note this is completely positive, unital, and that $F$ is also Hermitian (since it is positive semidefinite). Hence, by Kadison--Schwarz, we get that
    $\Gamma(X)^\dagger\Gamma(X)\preceq \Gamma(X^\dagger X)$, that is, 
    \begin{align*}
       F^{-1/2}\mathcal{I}(X)^\dagger F^{-1}\mathcal{I}(X)F^{-1/2}&\preceq F^{-1/2}\mathcal{I}(X^\dagger X)F^{-1/2}\\
       \mathcal{I}(X)^\dagger F^{-1}\mathcal{I}(X)&\preceq \mathcal{I}(X^\dagger X).
   \end{align*}
   By the same argument, we also get the other direction:
   \begin{equation*}
       \mathcal{I}(X) F^{-1}\mathcal{I}(X)^\dagger\preceq \mathcal{I}(X X^\dagger).
   \end{equation*}
   We can then take traces of both sides, and use the fact that $\Phi$ is trace preserving, to get that
    \begin{align*}
        \tr[\mathcal{I}(X)^\dagger F^{-1}\Phi(X)]&\leq \tr[\mathcal{I}(X^\dagger X)]
        =\tr[X^\dagger X]
        =\|X\|_2^2.
    \end{align*}
    A similar argument yields that
    \begin{equation*}
        \tr[\mathcal{I}(X)F^{-1}\mathcal{I}(X)^\dagger]\leq \|X\|_2^2.
    \end{equation*}
   We can decompose $F$ into its eigenbasis $\{\lambda_a\}$, with eigenvectors $|a\rangle$ respectively. Then, we have that, for every $i,j$,
    \begin{align*}
    \langle i|F^{-1/4}\mathcal{I}(X)F^{-1/4}|j\rangle &=\sum_{k,\ell}\lambda_k^{-1/4}\lambda_\ell^{-1/4}\langle i|k\rangle \langle k|\mathcal{I}(X)|\ell\rangle\langle \ell|j\rangle\\
    &=\lambda_i^{-1/4}\lambda_j^{-1/4}\langle i|\mathcal{I}(X)|j\rangle\\
    &=\lambda_i^{-1/4}\lambda_j^{-1/4}\mathcal{I}_{i,j}(X)
    \end{align*}
    which implies
    \begin{align*}
    \|\langle i|F^{-1/4}\mathcal{I}(X)F^{-1/4}|j\rangle\|^2&=\lambda_i^{-1/2}\lambda_j^{-1/2}\|\mathcal{I}_{i,j}(X)\|^2
    \end{align*}
    and so 
    \begin{align*}
    \|F^{-1/4}\mathcal{I}(X)F^{-1/4}\|_2^2&=\sum_{i,j}\frac{\|\mathcal{I}_{i,j}(X)\|^2}{\sqrt{\lambda_i\lambda_j}}.
    \end{align*}
    By the AM-GM inequality, we have that $\frac{1}{\sqrt{\lambda_i \lambda_j}}\leq \frac{1}{2}(\frac{1}{\lambda_i}+\frac{1}{\lambda_j})$, so
    \begin{align*}
        \|F^{-1/4}\mathcal{I}(X)F^{-1/4}\|_2^2&\leq \frac{1}{2}\sum_{i,j}\left(\frac{1}{\lambda_i}+\frac{1}{\lambda_j}\right)\|\mathcal{I}_{i,j}(X)\|^2\\
        &=\frac{1}{2}\sum_{i,j}\frac{1}{\lambda_i}\|\mathcal{I}_{i,j}(X)\|^2+\frac{1}{2}\sum_{i,j}\frac{1}{\lambda_j}\|\mathcal{I}_{i,j}(X)\|^2.
    \end{align*}
    Considering the first term in the sum:
    \begin{align*}
        \frac{1}{2}\sum_{i,j}\frac{1}{\lambda_i}\|\mathcal{I}_{i,j}(X)\|^2&=\frac{1}{2}\sum_i\frac{1}{\lambda_i}\sum_j\|\mathcal{I}_{i,j}(X)\|^2.
    \end{align*}
    We next note that for any matrix $B$, we have that $(BB^\dagger)_{ac}=\sum_bB_{ab}B^\dagger_{bc}$. Letting $c=a$, we have that $(BB^\dagger)_{aa}=\sum_b\|B_{ab}\|^2$. Substituting this in above, we get that
    \begin{align*}
        \frac{1}{2}\sum_i\frac{1}{\lambda_i}\sum_j\|\mathcal{I}_{i,j}(X)\|^2&=\frac{1}{2}\sum_i\frac{1}{\lambda_i}(\mathcal{I}(X)\mathcal{I}(X)^\dagger)_{ii}
        =\frac{1}{2}\tr[F^{-1}\mathcal{I}(X)\mathcal{I}(X)^\dagger]
        =\frac{1}{2}\tr[\mathcal{I}(X)^\dagger F^{-1}\mathcal{I}(X)]\,.
    \end{align*}
    Similarly, for the second term, we have that:
    \begin{equation*}
        \frac{1}{2}\sum_{i,j}\frac{1}{\lambda_j}\|\mathcal{I}_{i,j}(X)\|^2=\frac{1}{2}\tr[\mathcal{I}(X)F^{-1}\mathcal{I}(X)^\dagger].
    \end{equation*}
    Putting both together, we get that
    \begin{equation*}
        \|F^{-1/4}\mathcal{I}(X)F^{-1/4}\|_2^2\leq \frac{1}{2}\tr[\mathcal{I}(X)^\dagger F^{-1}\mathcal{I}(x)]+\frac{1}{2}\tr[\mathcal{I}(X)F^{-1}\mathcal{I}(X)^\dagger].
    \end{equation*}
    But recall the right hand side now is upper bounded by $\frac{1}{2}\|X\|_2^2+\frac{1}{2}\|X\|_2^2$. Hence, this gives that $\|F^{-1/4}\mathcal{I}(X)F^{-1/4}\|_2^2\leq \frac{1}{2}2\|X\|_2^2=\|X\|^2_2$.
    Then, as a map, this implies that
    \begin{equation*}
        \sup_{\|X\|_2\neq 0}\frac{\|F^{-1/4}\mathcal{I}(X)F^{-1/4}\|_2}{\|X\|_2}\leq 1.
    \end{equation*}
    Further, using that $\tau=F/d$, we get that
    \begin{equation*}
        \sup_{\|X\|_2\neq 0}\frac{\|\Lambda(X)\|_2}{\|X\|_2}\leq \sqrt d.
    \end{equation*}
    Then, recalling the Liouville representation of the numerator $M_\mathcal{I}$, we have that
    \begin{equation*}
        \|M_{\Lambda}\|_\infty=\sup_{\|\vecmap(X)\|_2=1}\|M_{\Lambda}\vecmap(X)\|_2.
    \end{equation*}
    Choosing $X$ such that $\|X\|_2=1$, this implies that $\|\vecmap(X)\|_2=1$, and thus, $\|\Lambda(X)\|_2=\|M_\Lambda\|_\infty\leq\sqrt d$, as required.
\end{proof}
We are left with the last building block, \cref{lemma:instrument_2_bound}.
\begin{proof}[Proof of~\cref{lemma:instrument_2_bound}]
    We start from the above, where $\|M_\Lambda\|_\infty\leq \sqrt{d}$ implies that the singular values are at most $\sqrt{d}$. Now, we defined $\tau=\mathcal{I}(I/d)$. Then, by the definition of the instrument, we have that
    \begin{equation*}
        \tau=\mathcal{I}(I/d)=\sum_{a=1}^K\tau_a\otimes|a\rangle\langle a|,
    \end{equation*}
    where $\tau_a$ = $\mathcal{E}_a(I/d)$, the individual maps comprising the instrument. Also, we know that
    \begin{equation*}
        \|M_\Lambda\|_2^2=\sum_j s_j(M_\Lambda)^2\leq \rank(M_\Lambda{I})\|M_\Lambda\|_\infty^2,
    \end{equation*}
    where $s_j$ denotes the $j$-th singular value. Let $r_a=\rank(\tau_a)$. Recall since this is a $d_q$-dimensional qudit, we have $r_a\leq d_q$. Then, this implies the rank of the operator $\mathcal{E}$ is at most $d_q^2$. Furthermore, there are $K$ basis vectors for the classical portion of the quantum channel, and hence $\rank(M_\Lambda)\leq Kd_q^2$. Then, putting things together, we get that 
    \begin{align*}
        \|M_\Lambda\|_2^2&\leq d_q^2dK
    \end{align*}
    that is $\|M_\Lambda\|_2\leq d_q\sqrt{dK}$,
    as required.
\end{proof}
\section{General Private-Coin Lower Bound}
Having established the three lemmas in the quantum instrument case we have all the tools needed to prove the private-coin lower bound as well, \cref{thm:private_coin_lb}, restated below:
\begin{theorem}[\cref{thm:private_coin_lb}, restated]\label{thm:private_coin_LB}
    In the same parameter setting as~\cref{thm:public_coin_lb:detailed} except in the private-coin case, any private-coin protocol for $\eps$-certifying $d$-dimensional states requires $m\geq\Omega\left(\frac{d^3}{Kd_q^2\eps^2}\right)$ nodes; that is,
    \begin{equation*}
        m\geq\Omega\left(\frac{d^3}{2^{n_c+2n_q}\eps^2}\right).
    \end{equation*}
\end{theorem}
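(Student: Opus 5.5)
The plan is to use the max-min (private-coin) half of \cref{lemma:instrument_minimax_maximin_bounds}. Now the instruments $\mathcal{I}_1,\dots,\mathcal{I}_m$ are fixed first, and only afterwards do we choose the hard distribution. So it suffices to show the following: for every fixed tuple of instruments there is an orthonormal family $\{V_i\}_{i\le \ell}$ in \cref{def:hard_instances} whose induced ensemble $\{\rho_z\}$, an almost-$\eps$ perturbation of $I/d$, keeps the symmetrized $\chi^2$ divergence below $1/16$ unless $m=\Omega(d^3/(Kd_q^2\eps^2))$.

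Fix the instruments and form the augmented maps $\Lambda_i$ and the operator $T_\Lambda=\frac1m\sum_i M_{\Lambda_i}^\dagger M_{\Lambda_i}$. By \cref{lemma:instrument_T_bound}, for any admissible $\mathcal{V}$ we need
\[
m\ge \Omega\left(\frac{d\ell}{\eps^2\|\mathcal{V}^\dagger T_\Lambda\mathcal{V}\|_2}\right).
\]
The basis choice will now depend on $T_\Lambda$, which is exactly what the private-coin order of quantifiers allows. The key is to choose $\mathcal{V}$ to span the eigenvectors of $T_\Lambda$ with the \emph{smallest} eigenvalues, restricted to the traceless Hermitian subspace orthogonal to $I/\sqrt d$.

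To bound the spectrum, first note that $T_\Lambda\succeq 0$. By \cref{lemma:instrument_infty_bound}, $\|T_\Lambda\|_\infty\le d$. By \cref{lemma:instrument_2_bound}, $\tr T_\Lambda=\frac1m\sum_i\|M_{\Lambda_i}\|_2^2\le d_q^2dK$. Sort the eigenvalues as $\mu_1\ge\mu_2\ge\cdots$. Since $\mu_j\le \tr T_\Lambda/j$, take $\ell=d^2/2$ and use the $\ell$ smallest eigendirections within the $(d^2-1)$-dimensional traceless subspace. The Courant--Fischer/interlacing principle, applied to the compression of $T_\Lambda$ onto that subspace, keeps each of these eigenvalues at most about $d_q^2dK/(d^2/2-1)=O(d_q^2K/d)$. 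Then
\[
\|\mathcal{V}^\dagger T_\Lambda\mathcal{V}\|_2^2=\sum_{j}\mu_j^2\le \mu_{\max}\sum_j\mu_j\le O\!\left(\frac{d_q^2K}{d}\right)\cdot d_q^2dK,
\]
which gives $\|\mathcal{V}^\dagger T_\Lambda\mathcal{V}\|_2=O(d_q^2K)$. Substituting into the bound from \cref{lemma:instrument_T_bound} with $\ell=\Theta(d^2)$ yields $m=\Omega(d^3/(d_q^2K\eps^2))$.

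\textbf{Main obstacle.} The hard part is making the chosen $V_i$ legitimate: they must be Hermitian, traceless and Hilbert--Schmidt orthonormal, so that the result of Liu \etal\ still applies and the ensemble is an almost-$\eps$ perturbation. The eigenvectors of $T_\Lambda$ need not be Hermitian matrices. I would first try to show that $T_\Lambda$ commutes with the antiunitary map $X\mapsto X^\dagger$, since each $\Lambda_i$ is Hermiticity-preserving. If so, $T_\Lambda$ restricts to a real symmetric operator on the real space of Hermitian traceless matrices, and we can diagonalize there with a real orthonormal Hermitian eigenbasis. The spectral bounds above then hold for this real restriction, because the trace and the operator norm only decrease under compression. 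A secondary check is that \cref{lemma:instrument_T_bound} applies verbatim for an arbitrary admissible $\mathcal{V}$; its proof uses nothing about the basis beyond orthonormality, so this should go through.
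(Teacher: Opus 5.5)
Your proposal is correct and follows essentially the same route as the paper. You fix the instruments, take $\mathcal{V}$ to be the bottom $\ell=d^2/2$ eigenvectors of the compression of $T_\Lambda$ to the traceless Hermitian subspace, bound those eigenvalues by $\tr T_\Lambda\le dd_q^2K$ (from \cref{lemma:instrument_2_bound}) divided by the number of remaining eigenvalues, and plug $\|\mathcal{V}^\dagger T_\Lambda\mathcal{V}\|_2=O(d_q^2K)$ into \cref{lemma:instrument_T_bound}. The differences are minor. You bound $\sum_j\mu_j^2\le\mu_{\max}\sum_j\mu_j$ where the paper uses $\ell\mu_{\max}^2$, which gives the same order. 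Your check that $T_\Lambda$ commutes with $X\mapsto X^\dagger$, so that the eigenbasis can be taken Hermitian and $\mathcal{V}^\dagger T_\Lambda\mathcal{V}$ is a real diagonal matrix, fills in a step the paper leaves implicit.
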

\begin{proof}
    Fix quantum instruments $\mathcal{I}_1,\dots,\mathcal{I}_m$. It suffices to choose $\mathcal{V}$ to minimize $\|\mathcal{V}^\dagger T_\Lambda \mathcal{V}\|_2$, for fixed channels. We start by restricting to the traceless subspace, $H\eqdef \{X=X^\dagger\mid \tr X=0\}$, which has dimension $d^2-1$. Let $P$ be the projector onto $H$, and consider the positive semidefinite operator on this subspace, $B\eqdef PT_\Lambda P$. Denote its eigenvalues by $0\leq \mu_1\leq...\leq \mu_N$. Let $V_1,...,V_\ell$ be the eigenvectors corresponding to the $\ell$ smallest eigenvalues, for $\ell$ to be chosen momentarily. Then, with this choice, we have that $\mathcal{V}^\dagger T_\Lambda\mathcal{V}$ is a diagonal matrix with entries $\mu_1,...,\mu_\ell$ along its diagonal, and hence
    \begin{equation*}
        \|\mathcal{V}^\dagger T_\Lambda\mathcal{V}\|_2^2=\sum_{i=1}^\ell \mu_i^2.
    \end{equation*}
    Each of the $\ell$ smallest eigenvalues are at most the average of the remaining eigenvalues, so
    \begin{align*}
        \|\mathcal{V}^\dagger T_\Lambda\mathcal{V}\|_2&\leq \sqrt{\ell}\frac{\sum_{i=\ell+1}^{d^2-1}\mu_i}{d^2-\ell-1}\\
        &\leq \sqrt{\ell}\frac{\tr[T_\Lambda]}{d^2-l-1}\\
        &=\frac{\sqrt{\ell}}{d^2-\ell-1}\frac{\sum_{i=1}^m\|M_{\Lambda_i}\|_2^2}{m}\\
        &\leq \frac{\sqrt{\ell}}{d^2-\ell-1}dd_q^2K.
    \end{align*}
    We choose $\ell\eqdef d^2/2$, which gives that $\|\mathcal{V}^\dagger T_\Lambda\mathcal{V}\|_2\leq O(d_q^2 K)$. Plugging this in to \cref{lemma:instrument_T_bound}, we get that
    \begin{equation*}
        m\geq \Omega\left(\frac{d^3}{Kd_q^2\eps^2}\right),
    \end{equation*}
    as required.
\end{proof}
\section{Private-Coin Upper Bound}
In this section, we prove \cref{thm:private_coin_no_classical_ub} by giving a private-coin algorithm with $n_c=0$ (i.e., no classical communication), which in view of~\cref{thm:private_coin_no_classical_lb} is tight up to a logarithmic factor. We assume that $d_q|d$, and as such, $\mathbb{C}^d\cong A\otimes B$, where $\dim(A)=d/d_q$, and $\dim(B)=d_q$. (This is without loss of generality, by a simple padding argument.) We begin with the following key result, which we prove later, and will underlie our algorithm.
\begin{theorem}\label{thm:good_unitaries}
    There exists unitaries $U_1,...,U_L$, where $L=O\left(\frac{d^2}{d_q^2}\log d\right)$, such that for every traceless Hermitian $X\in\mathbb{C}^{d\times d}$,
    \begin{equation*}
        \frac{1}{L}\sum_{i=1}^L\|\tr_B(U_iXU_i^\dagger)\|_2^2\geq \frac{1}{4}\cdot \frac{d_q}{d}\|X\|_2^2.
    \end{equation*}
\end{theorem}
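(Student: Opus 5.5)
Our plan is the probabilistic method: draw $U_1,\dots,U_L$ i.i.d.\ Haar random, show that the average already has the right expectation, and show that with $L=O(\frac{d^2}{d_q^2}\log d)$ samples it concentrates uniformly over all traceless Hermitian $X$. For a fixed $X$, the map $X\mapsto \|\tr_B(UXU^\dagger)\|_2^2$ is a quadratic form $\langle \vecmap(X), M_U\vecmap(X)\rangle$ with $M_U = N_U^\dagger N_U\succeq 0$, where $N_U$ is the Liouville matrix of $X\mapsto\tr_B(UXU^\dagger)$. So the statement is equivalent to a lower bound on the smallest eigenvalue, restricted to the traceless subspace $H$, of the random PSD matrix $\frac1L\sum_i M_{U_i}$. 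This reduces the theorem to matrix concentration, and no net argument over $X$ is needed.

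First we compute the expectation $\mathbb{E}_U[M_U]$ by Haar twirling. Since $U\mapsto UXU^\dagger$ is a two-design average and $\mathbb{E}_U\|\tr_B(UXU^\dagger)\|_2^2$ is invariant under conjugating $X$ by unitaries, Schur's lemma gives $\mathbb{E}[M_U]=\alpha P_H+\beta P_{I}$ on Hermitian matrices. We find $\alpha$ by taking a trace. Writing $d_A=d/d_q$, we have $\|\tr_B Y\|_2^2=\tr[(Y\otimes Y)(\mathrm{SWAP}_A\otimes I_{B\otimes B})]$. The Weingarten second moment for $\mathbb{E}[U^{\otimes 2}(X\otimes X)U^{\dagger\otimes 2}]$ with $\tr X=0$ equals $\frac{\|X\|_2^2}{d^2-1}(\mathrm{SWAP}-I/d)$. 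Pairing this with $\mathrm{SWAP}_A\otimes I$ yields
\[
\alpha=\frac{d_A d_q^2-d_A^2 d_q/d}{d^2-1}=\frac{d_q^2 d_A-d_A}{d^2-1}\ge \frac{d_q}{2d}
\]
for $d_q\ge 2$. Hence the expectation is already at least $\frac{d_q}{2d}\|X\|_2^2$, which leaves a factor of $2$ of slack for the fluctuations.

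Next we apply matrix Chernoff for lower tails to $Y_i=P_HM_{U_i}P_H$. These are independent PSD matrices with $\lambda_{\min}(\mathbb{E}Y_i)=\alpha$, and we need a uniform bound $\|Y_i\|_\infty\le R$. Here $\|M_U\|_\infty=\sup_{\|X\|_2=1}\|\tr_B(UXU^\dagger)\|_2^2\le d_q$ always, since $\|\tr_B Z\|_2^2\le d_q\|Z\|_2^2$ by Cauchy--Schwarz over the $B$ index. Matrix Chernoff then gives
\[
\Pr\Big[\lambda_{\min}\Big(\tfrac1L\textstyle\sum_i Y_i\Big)\le \tfrac12\alpha\Big]\le d^2\exp\!\big(-cL\alpha/R\big).
\]
With $\alpha/R\approx 1/d$ this would require $L\approx d\log d$.

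That requirement is the main obstacle. The ratio $\alpha/R=\Theta(1/d)$ gives $L=O(d\log d)$, which matches the target $\frac{d^2}{d_q^2}\log d$ only when $d_q\approx\sqrt d$. The crude bound $R\le d_q$ is attained only on rare $X$, and a Haar-random unitary does not pin $M_U$ near its worst case. We will therefore first try to replace the worst-case $R$ by a variance proxy and use matrix Bernstein instead of Chernoff. This requires bounding $\|\mathbb{E}[M_U^2]\|_\infty$, which is again a twirl computation, now of the fourth moment, or equivalently of the composite map $N_U^\dagger N_U N_U^\dagger N_U$. We expect it to be of order $\alpha\cdot d_q^2/d_A\cdot(\text{const})$, giving $L\approx \frac{\sigma^2}{\alpha^2}\log d$. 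If the fourth-moment calculation becomes unwieldy, the fallback is to keep Chernoff and accept the bound $L=O(\max(d,\frac{d^2}{d_q^2})\log d)$. Either way, the proof concludes by choosing $L$ so that the failure probability is below $1$, which establishes the existence of good unitaries; the constant $\frac14=\frac12\cdot\frac12$ comes from combining $\alpha\ge\frac{d_q}{2d}$ with the factor-$\frac12$ concentration.
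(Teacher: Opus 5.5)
\textbf{Gap: the bound $R\le d_q$ is wrong, so the proof does not reach the stated $L$.} Your architecture is the same as the paper's: a PSD quadratic form $M_U$, Schur's lemma for $\mathbb{E}[M_U]$ on the traceless subspace $H$, then matrix Chernoff, with no net. Your value $\alpha=\frac{(d/d_q)(d_q^2-1)}{d^2-1}\ge\frac{d_q}{2d}$ is also correct. The ``main obstacle'' comes from mixing two dimension conventions.

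\begin{itemize}
\item In computing $\alpha$ you treat the transmitted register as $d_q$-dimensional and the traced-out factor $B$ as $(d/d_q)$-dimensional. That is the intended setting, and it is what your formula actually encodes.
\item In bounding $R$ you use $\|\tr_B Z\|_2^2\le d_q\|Z\|_2^2$, which is the Cauchy--Schwarz bound for $\dim B=d_q$.
\item With consistent dimensions the bound is $\|\tr_B Z\|_2^2\le \frac{d}{d_q}\|Z\|_2^2$, with equality at $Z=Y\otimes I_B$. So $R=d/d_q$, not $d_q$.
\end{itemize}

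\textbf{The fix is the paper's proof.} Let $W=\{Y\otimes I_B:\ Y=Y^\dagger,\ \tr Y=0\}$ and let $\Pi_U$ be the orthogonal projector onto $U^\dagger W U$. Then $\|\tr_B(UXU^\dagger)\|_2^2=\frac{d}{d_q}\|\Pi_U X\|_2^2$ for traceless $X$, so $P_H M_U P_H=\frac{d}{d_q}\Pi_U$ is exactly a scaled projector of rank $d_q^2-1$. Hence $\alpha/R=\frac{d_q^2-1}{d^2-1}$. Matrix Chernoff with $s=\tfrac12$ bounds the failure probability by $(d^2-1)\exp\bigl(-\tfrac{L}{8}\cdot\tfrac{d_q^2-1}{d^2-1}\bigr)$. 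This is below $1$ once $L=O\bigl(\frac{d^2}{d_q^2}\log d\bigr)$, and it gives the constant $\tfrac12\cdot\tfrac12$ you anticipated.

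\textbf{Two consequences for the proposal as written.}
\begin{itemize}
\item Your intermediate claim that $L\approx d\log d$ suffices is false in general. Each $P_H M_{U_i}P_H$ has rank $d_q^2-1$, so the sum can be positive definite on the $(d^2-1)$-dimensional $H$ only if $L\ge\frac{d^2-1}{d_q^2-1}$. This exceeds $d\log d$ whenever $d_q^2\ll d/\log d$.
\item The matrix Bernstein detour cannot help. Since $P_HM_UP_H=R\,\Pi_U$ with $\Pi_U^2=\Pi_U$, you get $\mathbb{E}\bigl[(P_HM_UP_H)^2\bigr]=R\,\alpha\,P_H$. The variance proxy then reproduces the same ratio $R/\alpha$, and your guessed order for it is off.
\end{itemize}

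So the proposal ends either with an unverified plan or with the fallback $O\bigl(\max(d,\frac{d^2}{d_q^2})\log d\bigr)$. That fallback is weaker than the statement when $d_q>\sqrt d$ and rests on an invalid $R$. The missing observation is the exact projector structure of $M_U$, equivalently the correct value $R=d/d_q$.
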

For the remainder of this section, for shorthand, we will refer Haar-random unitaries simply as unitaries. We also use as a black box subroutine, the Hilbert--Schmidt estimator of \cite{DBLP:conf/stoc/BadescuO019}:
\begin{lemma}[{Hilbert--Schmidt Estimator, see \cite[Prop.~5.6]{DBLP:conf/stoc/BadescuO019}}]\label{lemma:hilbert_schmidt_estimator}
    Given $t$ copies of $\rho$ and $\sigma$, there is an unbiased estimator for $D_{HS}^2(\rho,\sigma)=\tr[(\rho-\sigma)^2]$ with variance bounded by
    \begin{equation*}
        O\left(\frac{1}{t^2}+\frac{D_{HS}^2(\rho,\sigma)}{t}\right).
    \end{equation*}
\end{lemma}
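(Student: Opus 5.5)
The plan is to build the estimator explicitly and compute its second moment. I will expand
\[
D_{HS}^2(\rho,\sigma)=\tr[\rho^2]+\tr[\sigma^2]-2\tr[\rho\sigma],
\]
and estimate each term with swap operators on the joint state $\rho^{\otimes t}\otimes\sigma^{\otimes t}$. Label the copies of $\rho$ by $1,\dots,t$ and the copies of $\sigma$ by $1',\dots,t'$, and write $S_{ab}$ for the swap of registers $a$ and $b$. Recall that $\tr[S(\alpha\otimes\beta)]=\tr[\alpha\beta]$. Define the Hermitian observable
\[
O\eqdef\frac{1}{t(t-1)}\sum_{i\neq j}S_{ij}+\frac{1}{t(t-1)}\sum_{i'\neq j'}S_{i'j'}-\frac{2}{t^2}\sum_{i,j'}S_{ij'}.
\]
Linearity and the swap identity give $\tr[O\,\rho^{\otimes t}\otimes\sigma^{\otimes t}]=D_{HS}^2(\rho,\sigma)$. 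The estimator is the outcome of measuring $O$ in its eigenbasis. It is unbiased, and its variance equals $\tr[O^2\,\rho^{\otimes t}\otimes\sigma^{\otimes t}]-D_{HS}^4$.

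Next I expand $O^2$ as a sum of products of two swaps, and classify the pairs of index pairs by how many registers they share. The classification has three cases.
\begin{enumerate}
\item \emph{Disjoint pairs.} The product of the two swaps acts as a tensor product, so its expectation factors, for example into $\tr[\rho^2]\tr[\rho\sigma]$. Summed over all disjoint pairs, these terms reproduce $D_{HS}^4$ up to lower-order corrections in the counting. Those corrections are $O(1/t)$ multiples of products of quantities like $\tr[\rho^2]$, which are handled together with the next case.
\item \emph{Pairs sharing exactly one register.} A product such as $S_{ij}S_{jk}$ is a $3$-cycle, with expectation $\tr[\alpha\beta\gamma]$ for $\alpha,\beta,\gamma\in\{\rho,\sigma\}$. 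There are $\Theta(t^3)$ such terms against a normalisation of $t^{-4}$, so their weight is $\Theta(1/t)$.
\item \emph{Identical pairs.} Here $S^2=I$, each term has expectation $1$, there are $O(t^2)$ of them, and they contribute $O(1/t^2)$.
\end{enumerate}

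The main obstacle is the $\Theta(1/t)$ contribution from the second case. I must show that the cubic traces combine, after subtracting the matching part of $D_{HS}^4$ coming from the first case, into a quantity of order $D_{HS}^2$ rather than $O(1)$. To do this I will group the one-overlap terms according to the shared register.

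When the shared register is a copy of $\rho$, the surviving combination should be
\[
\tr[\rho(\rho-\sigma)(\rho-\sigma)]-\bigl(\tr[\rho(\rho-\sigma)]\bigr)^2,
\]
which is a variance-like expression. The analogous combination with $\sigma$ appears when the shared register is a copy of $\sigma$. Each such combination is at most $\tr[(\rho-\sigma)\gamma(\rho-\sigma)]$ with $\gamma\in\{\rho,\sigma\}$, because the subtracted square is nonnegative. Since $0\preceq\gamma\preceq I$, this is bounded by $\|\rho-\sigma\|_2^2=D_{HS}^2$.

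If the coefficients from the three sums in $O$ do not line up exactly, I will fall back on a different route. I will write $O-D_{HS}^2$ as a sum of three centered U-statistics and apply the Hoeffding decomposition to each: the first-order projections give the $D_{HS}^2/t$ term, and the degenerate parts give the $1/t^2$ term.

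Adding the three cases then yields a variance of $O(1/t^2+D_{HS}^2(\rho,\sigma)/t)$, which is the statement of \cref{lemma:hilbert_schmidt_estimator}.
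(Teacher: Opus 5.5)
Your proposal is correct and is essentially the Badescu \etal\ argument that the paper cites for this lemma, which the paper reproduces in one-sample CQ form in Appendix~\ref{app:BOW_analysis}: the same observable $\mathcal{O}_{(\rho\rho)}-2\mathcal{O}_{(\rho\sigma)}+\mathcal{O}_{(\sigma\sigma)}$, the same disjoint/identical/one-overlap split, and the one-overlap terms collapsing to $\tr[\gamma\Delta^2]-(\tr[\gamma\Delta])^2\leq D_{HS}^2(\rho,\sigma)$ for $\gamma\in\{\rho,\sigma\}$ and $\Delta=\rho-\sigma$, just as the paper's $A_1$ reduces to the variance of $G_1$. The coefficients indeed do not line up exactly, but the cross terms match and the diagonal ones are off only by $O(1/t^2)$ multiples of bounded traces, with favourable sign, so your primary route works without the fallback; if you do use the fallback, combine the first-order projections of the three U-statistics before taking the variance, because taken separately each is only $O(1)$ and the cancellation is lost.
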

This lemma is only implicit in \cite{DBLP:conf/stoc/BadescuO019}, where it is used as a subroutine in their tester for Hilbert--Schmidt distance. We will refer to this estimator as the BOW estimator. As a final piece of notation, for any unitary $U$, let $\Phi_U(X)$ be the quantum channel such that $\Phi(U)(X)\eqdef \tr_B(UXU^\dagger)$. Our algorithm for the private-coin setting is provided in \cref{alg:private_coin_no_classical}.
\begin{algorithm}[H]
    \caption{Private-Coin Distributed Algorithm for Certification}\label{alg:private_coin_no_classical}
    \begin{algorithmic}[1]
        \State Set $L=O(\frac{d^2}{d_q^2}\log d)$, $t=O(\frac{d}{\eps^2})$, such that $m=Lt$. Set $\beta=0.25\frac{d_q\eps^2}{d^2}$.
        \State \Comment{All nodes have previously agreed on a good set of unitaries $U_1,...,U_L$ satisfying \cref{thm:good_unitaries}, and each node is assigned a unique label of the form $N_{r,a}$, for $r\in [L]$ and $a\in[t]$.}
        \State Each distributed node $N_{r,a}$ receives a copy $\rho$ and sends $\omega_r=\Phi_{U_r}(\rho)$ to the central node.
        \State The central node receives $t$ copies of $\omega_r$ for each unitary $U_r$. 
        \State With full knowledge of $\sigma$, the central node prepares $t$ copies of $\tau_r=\Phi_{U_r}(\sigma)$.
            \State The central node computes an estimate $\hat{D}$ of $D=\frac{1}{L}\sum_{r=1}^L\|\omega_r-\tau_r\|_2^2$, using the Hilbert--Schmidt estimator of \cref{lemma:hilbert_schmidt_estimator} and $t$ copies of $\omega_r$ and $\tau_r$ for each estimate of $\|\omega_r-\tau_r\|_2^2$.
            \If{ $\hat{D}>\beta/2$ }
            \State \Return \reject
            \Else \State \Return \accept
            \EndIf
    \end{algorithmic}
\end{algorithm}
Assuming \cref{thm:good_unitaries}, we analyze \cref{alg:private_coin_no_classical}, thus establishing \cref{thm:private_coin_no_classical_ub}.
\begin{proof}[Proof of~\cref{thm:private_coin_no_classical_ub}]
    For ease of notation, let $\Delta\eqdef \rho-\sigma$. The central node wishes to estimate the quantity
    \begin{equation}\label{eq:estimated_qty}
        D\eqdef\frac{1}{L}\sum_{r=1}^L\|\Phi_{U_r}(\rho)-\Phi_{U_r}(\sigma)\|_2^2=\frac{1}{L}\sum_{r=1}^L\|\Phi_{U_r}(\Delta)\|_2^2\,,
    \end{equation}
    the last equality by linearity of channels. There are two cases:
    \begin{itemize}
        \item if $\sigma=\rho$, then, for each unitary $U_r$, we have that $\|\Phi_{U_r}(\rho)-\Phi_{U_r}(\sigma)\|_2^2=0$, and hence $D=0$;
        \item if $\|\rho-\sigma\|_1\geq \eps$, by Cauchy--Schwarz $\|\Delta\|_2\geq\frac{\|\Delta\|_1}{\sqrt{d}}\geq\frac{\eps}{\sqrt{d}}$. Then, applying \cref{thm:good_unitaries}, we have that 
    \begin{equation*}
        D\geq \frac{1}{4}\cdot\frac{d_q\eps^2}{d^2}=\beta.
    \end{equation*}
    \end{itemize}
    In order to perform $\eps-$certification, it is thus enough to distinguishing between $D=0$ and $D\geq \beta$. Using $t$ copies for each $1\leq r\leq L$, we obtain an estimate of $z_r \eqdef \|\Phi_{U_r}(\Delta)\|_2^2$. This yields an estimate $\hat{D}=\frac{1}{L}\sum_{r=1}^L z_r$, by taking the sum of all the individual estimates. Since the estimate $z_r$ are independent, and since, for each $1\leq r\leq L$,
    \begin{equation*}
        \Var[z_r]\leq C\left(\frac{1}{t^2}+\frac{\mathbb{E}[z_r]}{t}\right),
    \end{equation*}
    for some absolute constant $C>0$, we get
    \begin{align*}
        \Var[\hat D]&\leq \frac{1}{L^2}\sum_{r=1}^L\Var[z_r]
        \leq C\left(\frac{1}{Lt^2}+\frac{D}{Lt}\right),
    \end{align*}
   with the last equality using that  $\mathbb{E}[z_r]=\|\Phi_{U_r}(\Delta)\|_2^2$. Next, we consider the two cases. First, if $\rho=\sigma$, then we have that
    \begin{equation*}
        \Var[\hat D]\leq \frac{C}{Lt^2}.
    \end{equation*}
    Then, by Chebyshev's inequality, we have that
    \begin{align*}
        \Pr[\hat D\geq \beta/2]&\leq O\left(\frac{\Var[\hat D]}{\beta^2}\right)
        =O\left(\frac{1}{\log d}\right),
    \end{align*}
    which is at most $1/3$ for large enough $d$. Else, if $\|\rho-\sigma\|_1\geq\eps$, then we have $D\geq \beta$, so that $D-\beta/2\geq D/2$, and again applying Chebyshev's, we get that
    \begin{align*}
        \Pr]\hat D<\beta/2]&\leq \Pr[|\hat D-D|\geq D/2]\\
        &\leq O\left(\frac{\Var[\hat D]}{D^2}\right)\\
        &\leq O\left(\frac{1}{LtD}+\frac{1}{Lt^2D^2}\right)\\
        &\leq O\left(\frac{1}{Lt\beta}+\frac{1}{Lt^2\beta^2}\right)\\
        &=O\left(\frac{d_q}{d\log d}+\frac{1}{\log d}\right)\\
        &\leq O\left(\frac{1}{\log d}\right).
    \end{align*}
    again at most $1/3$ for large enough $d$. In doing so, with probability $2/3$, \cref{alg:private_coin_no_classical} distinguishes $D=0$ from $D\geq \beta/2$, and hence can $\eps$-certify with probability of success $2/3$. The number of nodes used is 
    \begin{equation*}
        m=Lt=O\left(\frac{d^2}{d_q^2}\log d\frac{d}{\eps^2}\right)=O\left(\frac{d^3}{d_q^2\eps^2}\log d\right).
    \end{equation*}
    Hence, it suffices to take $m= O\left(\frac{d^3}{d_q^2\eps^2}\log d\right)$.
\end{proof}
\subsection{The Proof of \cref{thm:good_unitaries}}
The only thing that remains to be proven is the existence of a (small) set of good unitaries satisfying our requirements. We will use the probabilistic method, showing that a set of Haar-random unitaries satisfies this with positive probability, and hence prove \cref{thm:good_unitaries}. Before going into the proof, we establish some notation and some intermediate results we will use. First, let $H=\{X=X^\dagger, \tr X=0\}$ be the space of traceless Hermitian operators. Since $X$ has dimension $d^2$, $H$ has dimension $d^2-1$. Next, let $W=\{Y\otimes I_B:Y=Y^\dagger,~\tr Y=0\}\subseteq H$ be the subset of traceless Hermitian operators that are identity on the space $B$. Since $A$ has dimension $d_q^2$, this space has dimension $d_q^2-1$. Let $\Pi_W$ be the orthogonal projector onto $W$. Next, for any unitary $U$, let $W_U=U^\dagger W U$, where abusing notation, be the space of operators obtained by taking any operator in $W$ sandwiched between $U^\dagger$ and $U$. Let $\Pi_U$ be the orthogonal projector onto $W_U$. We prove the following lemma first:
\begin{lemma}\label{lemma:W_projector}
    For any $X\in H$, we have that $\|\tr_B[X]\|_2^2=\frac{d}{d_q}\|\Pi_WX\|_2^2$.
\end{lemma}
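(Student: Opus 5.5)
The plan is to identify $\Pi_W$ explicitly: the orthogonal projection of $X$ onto $W$ should be $(\tr_B X)\otimes I_B$ divided by $\dim B$. The identity then follows by computing Hilbert--Schmidt norms of tensor products. I work over the real inner product space $H$ with $\langle X,Y\rangle_{HS}=\tr[XY]$, so $W$ is a real subspace and $\Pi_W$ is its orthogonal projector.

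\emph{Step 1: the adjoint identity.} The key fact is the defining property of the partial trace: for every operator $Y$ on $A$ and every $X$ on $A\otimes B$,
\[
\langle Y\otimes I_B, X\rangle_{HS}=\tr[(Y\otimes I_B)X]=\tr[Y\,\tr_B X].
\]
So $Y\mapsto Y\otimes I_B$ is the adjoint of $\tr_B$. Taking $X=Z\otimes I_B$ also gives
\[
\langle Y\otimes I_B, Z\otimes I_B\rangle_{HS}=\dim(B)\,\tr[YZ].
\]

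\emph{Step 2: identify the projection.} Set $Z\eqdef \tr_B X/\dim(B)$. Since $X$ is Hermitian, $\tr_B X$ is Hermitian, and since $X$ is traceless, $\tr[\tr_B X]=\tr X=0$. Hence $Z\otimes I_B\in W$. For any $Y\otimes I_B\in W$, Step 1 gives
\[
\langle Y\otimes I_B, X-Z\otimes I_B\rangle_{HS}=\tr[Y\tr_B X]-\dim(B)\tr[YZ]=0.
\]
So $X-Z\otimes I_B\perp W$, and therefore $\Pi_W X=\frac{1}{\dim B}(\tr_B X)\otimes I_B$.

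\emph{Step 3: compute norms.} Using $\|Y\otimes I_B\|_2^2=\dim(B)\|Y\|_2^2$,
\[
\|\Pi_WX\|_2^2=\frac{\dim(B)}{\dim(B)^2}\|\tr_B X\|_2^2=\frac{1}{\dim B}\|\tr_B X\|_2^2,
\]
that is, $\|\tr_B X\|_2^2=\dim(B)\,\|\Pi_W X\|_2^2$.

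\emph{Main obstacle: the dimension convention.} The argument is elementary; the only real issue is which factor the statement intends. The lemma's factor $d/d_q$ equals the dimension of the register that is \emph{traced out}. This matches the operational picture, in which $\Phi_U(X)=\tr_B(UXU^\dagger)$ is the $d_q$-dimensional message, so the retained register should have dimension $d_q$ and $B$ dimension $d/d_q$. It also matches the later remark that $W$ has dimension $d_q^2-1$. I will therefore adopt $\dim(\text{kept register})=d_q$ and $\dim(B)=d/d_q$, correcting the earlier line that says $\dim(A)=d/d_q$ and $\dim(B)=d_q$. With that convention, Step 3 gives exactly $\|\tr_B X\|_2^2=\frac{d}{d_q}\|\Pi_W X\|_2^2$. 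I will flag this convention explicitly, since under the literal labels of the section the factor would instead be $d_q$.
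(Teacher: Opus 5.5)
Your proof is correct and matches the paper's: the paper also identifies $\Pi_W X=\frac{1}{b}\tr_B(X)\otimes I_B$, notes that it is traceless, and computes $\|\Pi_W X\|_2^2=\frac{1}{b}\|\tr_B X\|_2^2$, though it states the projection formula without the orthogonality check you give via the adjoint identity. Your dimension correction is also right: the paper's proof implicitly takes $b=\dim B=d/d_q$ (the traced-out register), which agrees with $\dim W=d_q^2-1$ and with the stated factor, so the earlier line assigning $\dim(B)=d_q$ is a labeling slip.
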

\begin{proof}
    Since $X$ is traceless, we have that $\tr_A(\tr_B(X))=\tr_{AB}(X)=0$. We specifically denote under which spaces the traces are over. Next, we can consider the projection, which is $\Pi_W(X)=\frac{1}{b}\tr_B(X)\otimes I_B\in W$, noting that because of the above, this is still traceless. Taking norms, we now get
    \begin{align*}
        \|\Pi_W(X)\|_2^2&=\||\frac{1}{b}\tr_B(X)\otimes I_B\|_2^2
        =\frac{1}{b}\|\tr_B(X)\|_2^2\,,
    \end{align*}
    as claimed.
\end{proof}
Next, we analyze how the action of a channel $\Phi_U$ impacts the norm. Specifically, we show:
\begin{lemma}\label{lemma:channel_norm}
    For any unitary $U$,
    \begin{equation*}
        \|\Phi_U(X)\|_2^2=\frac{d}{d_q}\|\Pi_UX\|_2^2.
    \end{equation*}
\end{lemma}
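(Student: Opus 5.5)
The plan is to reduce \cref{lemma:channel_norm} directly to \cref{lemma:W_projector} using the unitary invariance of the Hilbert--Schmidt norm. As in \cref{lemma:W_projector}, I take $X\in H$. By definition $\Phi_U(X)=\tr_B(UXU^\dagger)$. Set $Y\eqdef UXU^\dagger$. Then $Y$ is Hermitian, and $\tr Y=\tr X=0$, so $Y\in H$. Applying \cref{lemma:W_projector} to $Y$ gives
\begin{equation*}
    \|\Phi_U(X)\|_2^2=\|\tr_B(Y)\|_2^2=\frac{d}{d_q}\|\Pi_W(UXU^\dagger)\|_2^2 .
\end{equation*}
Whatever normalizing constant \cref{lemma:W_projector} actually yields is simply carried through unchanged, since everything after this step is an exact isometry argument.

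It then suffices to show $\|\Pi_W(UXU^\dagger)\|_2=\|\Pi_U X\|_2$. Define the superoperator $\mathcal{A}_U(Z)\eqdef UZU^\dagger$. It is a unitary on $\mathbb{C}^{d\times d}$ with the Hilbert--Schmidt inner product, since $\tr[(UZU^\dagger)^\dagger UZ'U^\dagger]=\tr[Z^\dagger Z']$. It also maps $H$ onto itself, and its inverse is $\mathcal{A}_U^{-1}=\mathcal{A}_{U^\dagger}$. By definition $W_U=U^\dagger WU=\mathcal{A}_U^{-1}(W)$.

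The key claim is the conjugation identity $\Pi_U=\mathcal{A}_U^{-1}\Pi_W\mathcal{A}_U$. The right-hand side is idempotent and self-adjoint, because $\mathcal{A}_U$ is unitary, and its range is $\mathcal{A}_U^{-1}(W)=W_U$. An idempotent self-adjoint map is the orthogonal projector onto its range, so it equals $\Pi_U$. Consequently
\begin{equation*}
    \|\Pi_UX\|_2=\|\mathcal{A}_U^{-1}\Pi_W\mathcal{A}_U X\|_2=\|\Pi_W(UXU^\dagger)\|_2 ,
\end{equation*}
again using that $\mathcal{A}_U^{-1}$ is a Hilbert--Schmidt isometry. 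Substituting into the first display gives the statement.

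The only step needing care is this conjugation identity for the projector, in particular checking that the paper's convention $W_U=U^\dagger WU$ matches the channel $\Phi_U(X)=\tr_B(UXU^\dagger)$, rather than the reversed convention. With $W_U=U^\dagger W U$, an element $Z\in W_U$ satisfies $UZU^\dagger\in W$, which is exactly what the channel acts on. So the conventions align, and no further obstacle is expected.
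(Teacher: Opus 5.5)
Your proposal is correct and matches the paper's proof step for step. The paper likewise writes $\Pi_U=R_U^{-1}\Pi_W R_U$ with $R_U(X)=UXU^\dagger$, uses unitary invariance of the Hilbert--Schmidt norm to get $\|\Pi_U X\|_2=\|\Pi_W(UXU^\dagger)\|_2$, and applies \cref{lemma:W_projector} to $Z=UXU^\dagger$. Your idempotent/self-adjoint/range argument fills in the conjugation identity that the paper only says ``can be verified just by calculation.'' Your explicit restriction to $X\in H$ is needed, since the identity fails for non-traceless $X$: for $X=I$ the right side vanishes but $\tr_B(I)\neq 0$.
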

\begin{proof}
    For a unitary $U$, if $R_U$ denotes the conjugation map $R_U(X)=UXU^\dagger$, then the map $\Pi_U=R^{-1}_U\Pi_WR_U$, which can be verified just by calculation. This then means that $\Pi_UX=U^\dagger\Pi_W(UXU^\dagger)U$. Then, taking norms, and using that $U$ and $U^\dagger$ do not change norms, we get that $\|\Pi_UX\|_2=\|\Pi_W(UXU^\dagger)\|_2$. We denote $Z=UXU^\dagger$. Then, recalling that $\Phi_U(X)=\tr_B(UXU^\dagger)=\tr_B(Z)$, we get from \cref{lemma:W_projector} that $\|\Phi_U(X)\|_2^2=|\tr_B[Z]\|_2^2=\frac{d}{d_q}\|\Pi_WZ\|_2^2$. This then gives that $\|\Phi_U(X)\|_2^2=\frac{d}{d_q}\|\Pi_UX\|_2^2$, as required.
\end{proof}
It is also helpful to recall some properties of the projector $\Pi_U$. Namely, since it is an orthogonal projector, it has only eigenvalues $0$ and $1$. Moreover, we have that $0\preceq \Pi_U\preceq I$. Finally, since it projects onto a subspace of dimension $d_q^2-1$, we have that $\tr_H[\Pi_U]=d_q^2-1$. Here, the trace is of $\Pi_U$ as a linear operator, and not of the quantum state.

The above properties all establish something about a particular $U$. However, we wish to sample a list of random matrices, and then say something about their property. Hence, we now wish to derive properties about the expectation of this projector $\Pi_U$. We will make use of Schur's Lemma, a standard result in representation theory, see for instance \cite[Corollary 1.17]{etingof2011introductionrepresentationtheory}:
\begin{lemma}[Schur's Lemma]\label{lemma:Schur}
    Let $V$ be an algebraically closed field. Then, endomorphisms $\phi\colon V\to V$ of finite dimensional irreducible representations of $V$ are scalar multiples of the identity operator.
\end{lemma}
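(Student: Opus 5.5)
We read the statement in its standard form, since as written the roles of the field and the representation are conflated. Let $k$ be an algebraically closed field, $G$ a group (or algebra), and $V$ a finite-dimensional irreducible representation of $G$ over $k$. Then every $G$-equivariant endomorphism $\phi\colon V\to V$, i.e.\ one with $\phi(g\cdot v)=g\cdot\phi(v)$ for all $g,v$, is a scalar multiple of the identity. The plan is the classical eigenvalue argument, in two steps.

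\textbf{Step 1 (an eigenvalue exists).} Since $V$ is finite-dimensional and nonzero, and $k$ is algebraically closed, the characteristic polynomial $\det(\phi-xI)$ has a root $\lambda\in k$. So $\phi$ has an eigenvalue $\lambda$.

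\textbf{Step 2 (the shifted map vanishes).} Set $\psi\eqdef\phi-\lambda I$. This is again $G$-equivariant, because the identity commutes with the action.
\begin{itemize}
    \item Its kernel $\ker\psi$ is $G$-invariant: if $\psi(v)=0$ then $\psi(g\cdot v)=g\cdot\psi(v)=0$.
    \item $\ker\psi$ is nonzero, since it contains an eigenvector for $\lambda$.
    \item By irreducibility, the only nonzero invariant subspace is $V$ itself, so $\ker\psi=V$.
\end{itemize}
Hence $\psi=0$ and $\phi=\lambda I$. The only place hypotheses are really used is Step 1: algebraic closedness guarantees the eigenvalue, and finite dimensionality guarantees the characteristic polynomial exists. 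Without these the claim fails (for example, rotations of $\mathbb{R}^2$ commute with an irreducible real action). I do not expect any genuine obstacle.

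\textbf{Intended application.} For the subsequent argument, I would apply this with $k=\mathbb{C}$ and $G=U(d)$ acting on $H$ by $X\mapsto V X V^\dagger$. One can complexify to the traceless matrices, which carry the adjoint representation of $\mathfrak{sl}_d$ and are irreducible. The averaged projector $\mathbb{E}_U[\Pi_U]$ commutes with this action by left-invariance of the Haar measure, so Schur gives $\mathbb{E}_U[\Pi_U]=c\,I_H$. Taking traces then fixes the constant,
\begin{equation*}
    c=\frac{d_q^2-1}{d^2-1}.
\end{equation*}
This is the expectation identity needed before a matrix concentration argument over $L$ samples.
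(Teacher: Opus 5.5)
Your proof is correct and is the standard argument: the paper gives no proof of its own but cites Etingof et al.~(Corollary 1.17), which likewise takes an eigenvalue $\lambda$ of $\phi$ (using algebraic closedness and finite dimensionality) and then uses irreducibility to force the non-invertible intertwiner $\phi-\lambda I$ to vanish. Your restatement, with a field $k$, a representation $V$, and the equivariance hypothesis made explicit, is the right reading of the garbled statement. Your complexification remark also correctly handles the fact that $H$ is only a real vector space, a point the paper glosses over when it applies the lemma.
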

We make a few comments on the above. First, for completeness, it is stated more generally than we needed. Indeed, we are trivially in an algebraically closed field, and all endomorphisms will also trivially be finite dimensional, just by our setting.  Furthermore, it is worth pointing out again that the identity operator is not the identity matrix. $I_H$ is the identity map that sends an element of $H$ to itself. We use one final fact of traceless Hermitian matrices:
\begin{fact}\label{lemma:traceless_hermitian_operators}
    The conjugation action of traceless Hermitian matrices is irreducible.
\end{fact}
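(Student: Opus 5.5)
We will interpret \cref{lemma:traceless_hermitian_operators} as the statement that the representation $U\mapsto R_U$, $R_U(X)=UXU^\dagger$, of the unitary group $U(d)$ on $H=\{X=X^\dagger,\ \tr X=0\}$ has no invariant subspaces other than $\{0\}$ and $H$. Since \cref{lemma:Schur} is applied over $\mathbb{C}$, we will also prove the same for the complexification $H_{\mathbb{C}}=\{X\in\mathbb{C}^{d\times d}:\tr X=0\}$. The argument is identical in both cases, so we fix a nonzero invariant subspace $S$ (real in $H$, or complex in $H_{\mathbb{C}}$) and show that $S$ is everything.

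\textbf{Step 1: pass to the Lie algebra.} Let $K$ be Hermitian and $X\in S$. Then $e^{itK}Xe^{-itK}\in S$ for all real $t$. Because $S$ is a closed subspace, differentiating at $t=0$ gives $i[K,X]\in S$. So $S$ is closed under $X\mapsto i[K,X]$ for every Hermitian $K$, and in the complex case under $X\mapsto[K,X]$ for every $K\in\mathbb{C}^{d\times d}$. Note that $i[K,X]$ is again traceless and Hermitian whenever $X$ is.

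\textbf{Step 2: produce a diagonal element.} Pick a nonzero $X\in S$. In the Hermitian case, conjugate $X$ by a unitary to a real diagonal matrix $D=\mathrm{diag}(d_1,\dots,d_d)$. It is traceless and nonzero, so $d_j\neq d_k$ for some $j,k$. In the complex case, if $X$ is not normal, we first bracket it with suitable matrix units to obtain a nonzero off-diagonal $E_{jk}\in S$ directly. This is the one place that needs a short case check.

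\textbf{Step 3: generate all of $H$.} Take $K=E_{jk}+E_{kj}$. Then
\[
i[K,D]=i(d_k-d_j)(E_{jk}-E_{kj}),
\]
so $i(E_{jk}-E_{kj})\in S$. Permutation matrices are unitary, so conjugating by them moves $(j,k)$ to any pair $(j',k')$. Further brackets with $E_{jk}+E_{kj}$ and $i(E_{jk}-E_{kj})$ then yield:
\begin{itemize}
\item $E_{jk}+E_{kj}$ for every pair $j\neq k$;
\item $E_{jj}-E_{kk}$ for every pair $j\neq k$;
\item and the remaining off-diagonal Hermitian generators.
\end{itemize}
These matrices span $H$, and their complex span is $H_{\mathbb{C}}$. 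Hence $S$ is the whole space.

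\textbf{Main obstacle.} The computations themselves are routine. The care lies in Step 2 for the complexified case, and in noting why real irreducibility alone would not suffice for \cref{lemma:Schur}. Over $\mathbb{R}$, the commutant of an irreducible representation could be $\mathbb{C}$ or the quaternions rather than just $\mathbb{R}$. Over $\mathbb{C}$, however, irreducibility of $H_{\mathbb{C}}$ forces every equivariant map, such as $\mathbb{E}_U[\Pi_U]$, to be a scalar multiple of the identity. That complex statement is exactly what the subsequent averaging argument needs.
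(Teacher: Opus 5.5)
Your argument is correct and follows the same route as the paper's one-line justification: differentiating the conjugation action shows that any invariant subspace is an ideal of the traceless matrices, and simplicity then forces it to be everything. The differences are that you prove simplicity directly with matrix-unit commutators where the paper only cites it, and you also supply the complexification that the paper's complex-field Schur's lemma tacitly needs.

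Two minor remarks. First, $T=\mathbb{E}_U[\Pi_U]$ is self-adjoint, so its real eigenspaces are themselves invariant. Real irreducibility of $H$ alone would therefore already force $T=\lambda I_H$, so the complex statement is sufficient but not strictly necessary. Second, the case check you defer in Step 2 is routine. For $a\neq b$ one has $[E_{bb},[E_{aa},X]]=-(x_{ab}E_{ab}+x_{ba}E_{ba})$, and one further bracket with $E_{aa}$ separates the two terms, giving $x_{ab}E_{ab}\in S$ whenever some off-diagonal entry $x_{ab}$ is nonzero.
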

To see this, first note that the adjoints are ideals of the traceless matrices \cite{AlAssal2014InvitationLieAlgebras}, which is a simple space, and hence irreducible \cite{Morgan2022SimpleLieAlgebrasCompactLieGroups}. We can now state and prove the main lemma:
\begin{lemma}\label{lemma:projector_expectation}
    \begin{equation*}
        \mathbb{E}_U[\Pi_U]=\frac{d_q^2-1}{d^2-1}I_{H}.
    \end{equation*}
\end{lemma}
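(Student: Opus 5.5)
The argument is a Schur-lemma computation: first show that $\mathbb{E}_U[\Pi_U]$ commutes with the conjugation action, conclude that it is a scalar multiple of $I_H$, then fix the scalar by taking traces. Throughout, $H$ is a real inner-product space with the Hilbert--Schmidt inner product, $\Pi_U$ is a linear operator on $H$, and $U$ is Haar-distributed on the unitary group $U(d)$.

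\textbf{Step 1: equivariance.} Let $R_V(X)=VXV^\dagger$ be the conjugation representation of $U(d)$ on $H$; it preserves $H$ and is orthogonal for the Hilbert--Schmidt inner product. From the proof of \cref{lemma:channel_norm} we have
\begin{equation*}
\Pi_U=R_U^{-1}\Pi_W R_U .
\end{equation*}
For a fixed unitary $V$ this gives
\begin{equation*}
R_V\,\Pi_U\,R_V^{-1}=R_{UV^\dagger}^{-1}\,\Pi_W\,R_{UV^\dagger}=\Pi_{UV^\dagger}.
\end{equation*}
Right-invariance of the Haar measure says $UV^\dagger$ has the same distribution as $U$. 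Taking expectations, $R_V\,\mathbb{E}_U[\Pi_U]\,R_V^{-1}=\mathbb{E}_U[\Pi_U]$ for every $V$. So $\mathbb{E}_U[\Pi_U]$ is an intertwiner of the representation $R$ on $H$.

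\textbf{Step 2: Schur's lemma.} By \cref{lemma:traceless_hermitian_operators}, $H$ is irreducible under conjugation. There is a subtlety, since $H$ is a real representation and Schur's lemma over $\mathbb{C}$ does not apply to it directly. I would handle this by complexifying. The complexification $H_{\mathbb{C}}$ is the space of all traceless complex matrices $\mathfrak{sl}_d(\mathbb{C})$, which is irreducible under the adjoint action of $U(d)$. This is where I expect the main care to be needed. The concrete justification I would use: any invariant subspace is invariant under the Lie algebra action $X\mapsto[Y,X]$ for all $Y$, and $\mathfrak{sl}_d$ is simple. The intertwiner $\mathbb{E}_U[\Pi_U]$ extends complex-linearly to $H_{\mathbb{C}}$ and still commutes with the action. \cref{lemma:Schur} then gives $\mathbb{E}_U[\Pi_U]=c\,I_H$ for some scalar $c\in\mathbb{C}$. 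The scalar is real, since $\mathbb{E}_U[\Pi_U]$ is self-adjoint (indeed positive semidefinite, being an average of orthogonal projectors).

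\textbf{Step 3: the constant.} Take the trace on $H$ of both sides and use linearity of expectation. As noted after \cref{lemma:channel_norm}, $\tr_H[\Pi_U]=\dim W_U=\dim W=d_q^2-1$ for every $U$, while $\tr_H[I_H]=\dim H=d^2-1$. Hence
\begin{equation*}
c\,(d^2-1)=d_q^2-1,
\end{equation*}
so $c=\frac{d_q^2-1}{d^2-1}$, which is the claim.

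The only genuinely delicate point is the irreducibility in Step 2, specifically applying Schur to a real representation. Complexifying and invoking simplicity of $\mathfrak{sl}_d(\mathbb{C})$ resolves it. Alternatively, one can directly average Haar moments of $U^{\otimes 2}\otimes\bar U^{\otimes 2}$, but that route is more computational.
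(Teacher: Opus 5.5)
Your proof is correct and follows the paper's argument step for step: Haar invariance gives $R_V\,\mathbb{E}_U[\Pi_U]\,R_V^{-1}=\mathbb{E}_U[\Pi_{UV^\dagger}]=\mathbb{E}_U[\Pi_U]$, Schur's lemma forces a scalar, and comparing $\tr_H$ of both sides fixes that scalar at $\frac{d_q^2-1}{d^2-1}$. Your complexification to $\mathfrak{sl}_d(\mathbb{C})$ handles a point the paper skips, since the paper applies Schur's lemma directly to the real representation $H$. A shorter fix is also available: $\mathbb{E}_U[\Pi_U]$ is self-adjoint, so its eigenspaces are invariant real subspaces of $H$, and irreducibility of $H$ leaves room for only one eigenvalue.
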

\begin{proof}
    For shorthand, let $T=\mathbb{E}_U[\Pi_U]$. For any fixed Haar unitary matrix $V$, let $R_V(X)=VXV^\dagger$. Notice that
    \begin{align*}
        R_V W_U&=V(U^\dagger W U)V^\dagger\\
        &=(UV^\dagger)^\dagger W(UV^\dagger)\\
        &=W_{UV^\dagger}.
    \end{align*}
    This implies that $R_V\Pi_UR_V^{-1}=\Pi_{UV^\dagger}$. Notice that since $U$ and $V$ are Haar-random matrices, then $UV^\dagger$ is also a Haar-random matrix. Hence,
    \begin{align*}
        R_VTR_V^{-1}&=R_V\mathbb{E}_U[\Pi_U]R_V^{-1}\\
        &=\mathbb{E}_U[R_V\Pi_UR_V^{-1}]\\
        &=\mathbb{E}_U[P_{UV^\dagger}]\\
        &=T.
    \end{align*}
    Hence, for all Haar-random unitaries, $R_VT=TR_V$, meaning that $T$ commutes with every conjugation action on $H$. Then, by \cref{lemma:traceless_hermitian_operators}, this is irreducible as a representation. Then, this means we can apply Schur's lemma, and $T$ must be a scalar multiple of the identity, so $T=\lambda I_H$. To find the value of $\lambda$, we take traces over the space $H$. First, notice that 
    \begin{align*}
        \tr_H[T]&=\tr[\mathbb{E}[\Pi_U]]=\mathbb{E}_U[\tr_H[\Pi_U]].
    \end{align*}
    Since each $\Pi_U$ has rank $d_q^2-1$, we have that $\tr_H[T]=d_q^2-1$. On the other hand, we can take
    \begin{equation*}
        \tr_H[T]=\lambda\tr_H[I_H]=\lambda(d^2-1).
    \end{equation*}
    Then, equating the two, we get that 
    \begin{align*}
        \lambda&=\frac{d_q^2-1}{d^2-1}\\
        \mathbb{E}_U[\Pi_U]&=\frac{d_q^2-1}{d^2-1}I_H.
    \end{align*}
\end{proof}
We can now proceed to apply the Matrix Chernoff inequalities to our list of unitaries. We state the Matrix Chernoff inequality first \cite[Corollary 5.2]{Tropp_2011}.
\begin{theorem}[Matrix Chernoff Inequality]\label{thm:matrix_chernoff_inequality}
    Consider a finite sequence $\{X_k\}$ of $d$-dimensional, independent, random, self adjoint matrices such that each satisfies $X_k\succeq 0$ and $\lambda_{\max}(X_k)\leq R$ almost surely, where $\lambda_{\max}$ denotes the larges eigenvalue. Then, letting $\lambda_{\min}$ also denote the smallest eigenvalue, denote
    \begin{equation*}
        \mu_{\min}=\lambda_{\min}\left(\sum_k\mathbb{E}X_k\right),~\mu_{\max}=\lambda_{max}\left(\sum_k\mathbb{E}X_k\right).
    \end{equation*}
    Then, for $0<s<1$,
    \begin{equation*}
        \Pr\left[\lambda_{\min{}}\left(\sum_k X_k\right)\leq (1-s)\mu_{\min{}}\right]\leq d\cdot\exp(-s^2\mu_{\min{}}/2R)
    \end{equation*}
\end{theorem}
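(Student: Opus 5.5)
The statement is the matrix Chernoff inequality of Tropp, which the paper cites rather than proves. I would reproduce the standard matrix Laplace-transform argument. Throughout, write $Y=\sum_k X_k$.

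\textbf{Step 1 (reduction and Laplace transform).} By homogeneity I first rescale $X_k\mapsto X_k/R$, so that $0\preceq X_k\preceq I$. This replaces $\mu_{\min}$ by $\mu_{\min}/R$. For any $\theta>0$, the map $A\mapsto e^{-\theta A}$ is eigenvalue-monotone: $\lambda_{\min}(Y)\leq t$ implies $\lambda_{\max}(e^{-\theta Y})\geq e^{-\theta t}$. Since $e^{-\theta Y}$ is positive definite, its largest eigenvalue is bounded by its trace. Markov's inequality then gives
\begin{equation*}
\Pr[\lambda_{\min}(Y)\leq t]\leq e^{\theta t}\,\mathbb{E}\tr e^{-\theta Y}.
\end{equation*}

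\textbf{Step 2 (subadditivity of matrix cumulants).} I would invoke Lieb's concavity theorem: for fixed Hermitian $H$, the map $A\mapsto \tr\exp(H+\log A)$ is concave on positive definite matrices. I apply it by conditioning on all summands but one, using Jensen's inequality, and iterating over $k$ by independence. This yields
\begin{equation*}
\mathbb{E}\tr\exp\Bigl(\sum_k -\theta X_k\Bigr)\leq \tr\exp\Bigl(\sum_k \log \mathbb{E}e^{-\theta X_k}\Bigr).
\end{equation*}
This step is the main obstacle, since it is the only genuinely non-commutative ingredient: a naive Golden--Thompson bound only handles two summands. I would take Lieb's theorem as a black box.

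\textbf{Step 3 (bounding each cumulant).} Since $0\preceq X_k\preceq I$, scalar convexity of $x\mapsto e^{-\theta x}$ on $[0,1]$ gives $e^{-\theta x}\leq 1-(1-e^{-\theta})x$. Applying this eigenvalue-wise gives the matrix bound $e^{-\theta X_k}\preceq I-(1-e^{-\theta})X_k$. Taking expectations preserves this inequality. The logarithm is operator monotone, and $\log(I+A)\preceq A$, so
\begin{equation*}
\log\mathbb{E}e^{-\theta X_k}\preceq -(1-e^{-\theta})\,\mathbb{E}X_k.
\end{equation*}
Monotonicity of $A\mapsto\tr\exp(A)$ under $\preceq$ then gives
\begin{equation*}
\tr\exp\Bigl(-(1-e^{-\theta})\sum_k\mathbb{E}X_k\Bigr)\leq d\,\exp\bigl(-(1-e^{-\theta})\mu_{\min}\bigr),
\end{equation*}
because each of the $d$ eigenvalues of the exponential is at most $e^{-(1-e^{-\theta})\mu_{\min}}$.

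\textbf{Step 4 (optimization).} I set $t=(1-s)\mu_{\min}$ and choose $\theta=-\log(1-s)$. This gives the bound
\begin{equation*}
d\left[\frac{e^{-s}}{(1-s)^{1-s}}\right]^{\mu_{\min}}.
\end{equation*}
To finish, I use the elementary inequality $(1-s)\log(1-s)\geq -s+s^2/2$ for $s\in(0,1)$. This can be checked by differentiating twice, or by comparing Taylor series. It makes the bracket at most $e^{-s^2/2}$. Undoing the scaling by $R$ then yields $d\exp(-s^2\mu_{\min}/2R)$, as claimed.
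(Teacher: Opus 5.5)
Your proof is correct and is the same argument as the paper's source: the paper gives no proof of this statement and simply cites Tropp's Corollary 5.2. Tropp's proof is exactly your chain of the matrix Laplace-transform bound, Lieb-concavity subadditivity of matrix cumulants, the convexity estimate $\mathbb{E}e^{-\theta X_k}\preceq I-(1-e^{-\theta})\mathbb{E}X_k$, the choice $\theta=-\log(1-s)$, and the inequality $(1-s)\log(1-s)\geq -s+s^2/2$. One small detail is worth stating explicitly: both sides of that estimate are $\succeq e^{-\theta}I\succ 0$, which is what lets you apply operator monotonicity of $\log$.
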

Again, we stated the results in more generality than needed. Recall since we will be applying this to projector matrices, we trivially satisfy the required conditions, and we have that $R=1$. We will now aim to prove the following result
\begin{lemma}\label{lemma:deterministic_good_choice}\
    There exist unitaries $U_1,...,U_L$ such that 
    \begin{equation*}
        \frac{1}{L}\sum_{r=1}^L\Pi_{U_r}\succeq\frac{1}{2}\frac{d_q^2-1}{d^2-1}I,
    \end{equation*}
    for $L=O\left(\frac{d^2}{d_q^2}\log d\right)$.
\end{lemma}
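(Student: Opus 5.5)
We use the probabilistic method together with the Matrix Chernoff inequality (\cref{thm:matrix_chernoff_inequality}). Sample $U_1,\dots,U_L$ i.i.d.\ Haar-random. Set $X_r \eqdef \Pi_{U_r}$, viewed as a self-adjoint operator on the real space $H$ of traceless Hermitian matrices, which has dimension $d^2-1$. Each $X_r$ is an orthogonal projector, so $X_r\succeq 0$ and $\lambda_{\max}(X_r)\leq 1$. We may therefore take $R=1$, and the dimension factor in the Chernoff bound is $d^2-1$.

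By \cref{lemma:projector_expectation}, $\sum_{r=1}^L\mathbb{E}[X_r]=L\frac{d_q^2-1}{d^2-1}I_H$. Hence $\mu_{\min}=L\frac{d_q^2-1}{d^2-1}$. Apply \cref{thm:matrix_chernoff_inequality} with $s=1/2$:
\begin{equation*}
\Pr\left[\lambda_{\min}\left(\sum_{r=1}^L\Pi_{U_r}\right)\leq \frac{1}{2}L\frac{d_q^2-1}{d^2-1}\right]\leq (d^2-1)\exp\left(-\frac{L}{8}\cdot\frac{d_q^2-1}{d^2-1}\right).
\end{equation*}
This is strictly less than $1$ once $L> 8\frac{d^2-1}{d_q^2-1}\ln(d^2-1)$. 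Since $d_q\geq 2$ gives $d_q^2-1\geq \frac{3}{4}d_q^2$, it suffices to take $L=O\left(\frac{d^2}{d_q^2}\log d\right)$.

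With positive probability, then, $\lambda_{\min}\left(\frac{1}{L}\sum_r\Pi_{U_r}\right)> \frac{1}{2}\frac{d_q^2-1}{d^2-1}$. This is equivalent to the claimed Loewner inequality on $H$, so some fixed choice of $U_1,\dots,U_L$ satisfies it.

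The main point needing care is that the matrix Chernoff bound is applied on $H$ rather than on all of $\mathbb{C}^{d\times d}$. The statement's $I$ must be read as $I_H$: on the trace direction every $\Pi_U$ vanishes, so the inequality cannot hold there. The $X_r$ are self-adjoint operators on a real inner-product space of dimension $d^2-1$, and Tropp's result applies to real symmetric matrices as well, via a choice of orthonormal basis of $H$ under the Hilbert--Schmidt inner product. The independence and boundedness hypotheses are immediate.

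As a remark on how this lemma is used, \cref{thm:good_unitaries} follows by combining it with \cref{lemma:channel_norm}. For $X\in H$, $\|\Pi_U X\|_2^2=\langle X,\Pi_U X\rangle$, so
\begin{equation*}
\frac{1}{L}\sum_r\|\Phi_{U_r}(X)\|_2^2=\frac{d}{d_q}\left\langle X,\frac{1}{L}\sum_r\Pi_{U_r}X\right\rangle\geq \frac{d}{2d_q}\cdot\frac{d_q^2-1}{d^2-1}\|X\|_2^2\geq \frac{1}{4}\frac{d_q}{d}\|X\|_2^2,
\end{equation*}
using $\frac{d_q^2-1}{d^2-1}\geq \frac{d_q^2}{2d^2}$ for $d_q\geq 2$.
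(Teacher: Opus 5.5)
Your proposal is correct and follows the paper's proof essentially verbatim. You sample i.i.d.\ Haar unitaries, use \cref{lemma:projector_expectation} to get $\mu_{\min}=L\frac{d_q^2-1}{d^2-1}$, apply the matrix Chernoff bound with $s=1/2$, $R=1$ and dimension factor $d^2-1$, and choose $L=O\bigl(\frac{d^2}{d_q^2}\log d\bigr)$ so that the failure probability is below $1$. Your remarks that $I$ must be read as $I_H$ and that Chernoff is applied to $\Pi_{U_r}$ as real symmetric operators on $H$ are correct clarifications that the paper leaves implicit.
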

\begin{proof}
    Recall by \cref{lemma:projector_expectation}, for each projector we have that $\mathbb{E}[P_{U_r}]=\frac{d_q^2-1}{d^2-1}I$, and hence, 
    \begin{equation*}
        \mathbb{E}\left[\sum_{r=1}^L\Pi_{U_r}\right]=L\frac{d_q^2-1}{d^2-1}I.
    \end{equation*}
    Then, this also implies that $\mu_{\min{}}=L\frac{d_q^2-1}{a^2-1}$, since $I$ only has eigenvalue $1$. Then, we can apply the matrix Chernoff bound with $s=0.5$, which gives that
    \begin{equation*}
        \Pr\left[\lambda_{\min}\left(\sum_{r=1}^L\Pi_{U_r}\right)\leq \frac{1}{2}L\frac{d_q^2-1}{d^2-1}\right]\leq (d^2-1)\exp\left(-\frac{1}{8}L\frac{d_q^2-1}{d^2-1}\right).
    \end{equation*}
    We want to make this probability less than $1$, which implies that we need to choose $L= O\left(\frac{d^2}{d_q^2}\log d\right)$. Then, with positive probability, we obtain a set of $L$ projectors such that the minimum eigenvalue of the sum is large enough, meaning such unitaries $U_1,...,U_L$ with
    \begin{equation*}
        \sum_{r=1}^L\Pi_{U_r}\succeq\frac{1}{2}L\frac{d_q^2-1}{d^2-1}I
    \end{equation*}
    exist.
\end{proof}
We finally have all the ingredients to prove \cref{thm:good_unitaries}.
\begin{proof}[Proof of~\cref{thm:good_unitaries}]
    Take any $X\in H$, and fix a good list of unitaries $U_1,..,U_L$, satisfying \cref{lemma:deterministic_good_choice}. Then, we have that
    \begin{equation*}
        \left\langle X, \left(\sum_{r=1}^L\Pi_{U_r}\right)X\right\rangle\geq \frac{1}{2}L\frac{d_q^2-1}{d^2-1}\langle X,X\rangle.
    \end{equation*}
    The right hand side equals $\frac{1}{2}L\frac{d_q^2-1}{d^2-1}\|X\|_2^2$. Using the fact that $\Pi_{U_r}$ is an orthogonal projection, we have that 
    \begin{align*}
        \langle X, \Pi_{U_r}X\rangle&=\langle \Pi_{U_r}X,\Pi_{U_r}X\rangle=\|\Pi_{U_r}X\|_2^2\\
        \left\langle X, \left(\sum_{r=1}^L\Pi_{U_r}\right)X\right\rangle&=\sum_{r=1}^L \langle X, \Pi_{U_r}X\rangle.
    \end{align*}
    Thus, putting everything together and dividing by $L$, we get that
    \begin{equation*}
        \frac{1}{L}\sum_{r=1}^L\|P_{U_r}X\|_2^2\geq \frac{1}{2}\frac{d_q^2-1}{d^2-1}\|X\|_2^2.
    \end{equation*}
    Then, using \cref{lemma:channel_norm}, we have that
    \begin{equation*}
        \frac{1}{L}\sum_{r=1}^L\|\Phi_{U_r}(X)\|_2^2\geq\frac{d}{d_q}\cdot\frac{1}{2}\cdot\frac{d_q^2-1}{d^2-1}\|X\|_2^2.
    \end{equation*}
    Notice that for $d_q\geq 2$, we have that $q^2-1\geq \frac{1}{2}q^2$, and also $d^2-1\leq d^2$. Thus, rearranging, we get that 
    \begin{equation*}
         \frac{1}{L}\sum_{r=1}^L\|\Phi_{U_r}(X)\|_2^2\geq\frac{1}{4}\cdot\frac{d_q}{d}\|X\|_2^2,
    \end{equation*}
    as required. The final thing to point out is that we required $L$ large enough so that \cref{lemma:deterministic_good_choice} held, meaning we required $L=O\left(\frac{d^2}{d_q^2}\log d\right)$, completing the proof.
\end{proof}
\section{General Public-Coin Upper Bound}
In this section, we give an algorithm with sample complexity matching the lower bound given in \cref{sec:general_public_lb}. Before jumping into the algorithm and the proofs, we make a few observations to provide intuition as to why the setting with both classical and quantum communication allowed is significantly more challenging. For example, one might argue as follows: we know how to test when communicating classical bits only, and we now know how to test when communicating qudits only, so we could just run both tests and accept a state if both pass. More specifically, the central node knows the quantum instruments that the distributed nodes are using to communicate. Hence, they can run those quantum instruments on the known state $\sigma$, and compare it to the classical bits obtained from $\rho$, and the quantum qudit obtained from $\rho$. The issue is that this approach ignores the relations between the joint states, and thus it is possible to accept on both the classical and the quantum tests, but still have $\rho$ be far from $\sigma$. We give an example here:
\begin{example}
    Assume we have the unknown state $\rho$, and after passing it through the quantum instrument, we transmit the state
    \begin{equation*}
        \rho_{CQ}=\frac{1}{2}|0\rangle\langle0|_Q\otimes|0\rangle\langle0|_C+\frac{1}{2}|1\rangle\langle1|_Q\otimes|1\rangle\langle1|_C,
    \end{equation*}
    in other words, it is a state where the quantum qubit equals the classical bit. Assume we have a known state $\sigma$, and under the same quantum instrument, we produce the state
    \begin{equation*}
        \sigma_{CQ}=\frac{1}{2}|1\rangle\langle1|_Q\otimes|0\rangle\langle0|_C+\frac{1}{2}|0\rangle\langle0|_Q\otimes|1\rangle\langle1|_C.
    \end{equation*}
    Now, one can try the above, analyzing the marginals obtained by classical communication and the quantum communication separately. In the classical communication, we have that
    \begin{align*}
        \rho_C&=\frac{1}{2}|0\rangle\langle0|+\frac{1}{2}|1\rangle\langle1|\\
        \sigma_C&=\frac{1}{2}|0\rangle\langle0|+\frac{1}{2}|1\rangle\langle1|.
    \end{align*}
    In other words, both see each bit with probability half. A classical only test would see no difference in these two marginals, and accept. Computing the quantum marginals, one would also see that
    \begin{align*}
        \rho_Q&=\frac{1}{2}|0\rangle\langle0|+\frac{1}{2}|1\rangle\langle1|\\
        \sigma_Q&=\frac{1}{2}|1\rangle\langle1|+\frac{1}{2}|0\rangle\langle0|.
    \end{align*}
    These two marginals are also equal, so the quantum only test would accept this part. So both have accepted, yet the corresponding outputs of the quantum instrument are actually far in trace distance.
\end{example}
It is clear from this example that if one wants to succeed in the mixed-communication setting, one must have an algorithm that looks at the classical bits along with the quantum bits together, and not one that isolates each part. Since the remainder of this section is dense, we give a brief overview of the steps we take to prove this upper bound.
\subsection{Outline of Proof}\label{sec:outline}
Overall, our approach to proving a matching upper bound can be broken into four main steps:
\begin{enumerate}
    \item Specific analysis of the BOW estimator for ``smoothed'' quantum instruments. At a high level, we aim to show that the BOW estimator actually has variance that depends on $K$, assuming the CQ states have probability mass that is roughly $O(1/K)$ over each of the classical labels.
    \item Specific design of quantum instruments which, for a fixed state, have a high probability to spread probability mass over all the classical labels, such that each has probability mass $O(1/K)$.
    \item Proof that these quantum instruments satisfy a compression property similar to \cite[Lemma 3.3]{doosti2026distributedquantumpropertytesting}. 
    \item Explicit design of an algorithm using the above.
\end{enumerate}
We now discuss each of these four points in detail, giving the rationale, and the technical details behind each.

\paragraph{Specific analysis of the BOW estimator.} Since our goal is to detect large trace distance, a natural idea is to use the same approach as in the quantum-only communication algorithm of Doosti \etal: that is, use the BOW tester (which relies on the BOW estimator as a subroutine) on the CQ states to test trace distance. After all, the BOW tester works on all quantum states, and CQ states are just specific quantum states. The issue with this is that the variance in the BOW estimator does not have any dependence on $K$, and thus such an approach will not, at least na\"ively, benefit from having classical bits. Moreover, improving the variance bound in the BOW estimator to obtain a dependence on $K$ also seems like a difficult task. An insight, however, is that the very generality of the BOW estimator is where we can try to make improvements, as we only need a good estimator for CQ states, not arbitrary ones. The idea then is to go over the analysis for the BOW estimator, and get a variance that depends, at a high level, on how much the probability spreads over each of the classical labels. Importantly, this does \emph{not} lead to a generic improvement to the BOW tester: indeed, if the CQ state concentrates on one (or, more broadly, few) of the classical strings, with no mass on many, we recover the variance of the BOW tester. Yet, whenever the CQ states spread mass ``roughly uniformly'' across all the classical labels (each label appears with probability $O(1/K)$), we yields a sharper bound on the variance, which crucially depends on $K$, and thus allows us, when applying Chebyshev's inequality, to get a dependence on $K$.

\paragraph{Quantum Instrument Design.} The goal, given the above analysis of the BOW estimator, is then to design quantum instruments that, given the states $\rho$ and $\sigma$, sufficiently spread the communication over all (or most) of the classical labels. Roughly speaking, we want the distributed node to communicate using, on average, almost all of the classical strings available to it. Since the BOW estimator needs, to benefit from this improved variance, each classical label to appear with probability $O(1/K)$, we aim for quantum instruments which use each classical string with probability $O(1/K)$. To this end, we show that the instrument analogue of the quantum channel used by Doosti \etal~suffices: that is, a quantum instrument defined by (1) sampling a Haar random unitary matrix and conjugating a state with it, then (2) measuring the classical register to create the classical message, (3) traces out an unused register, and (4)~sends the remaining state as the quantum message.

\paragraph{Instrument Compression.} The next thing we require is that our instruments satisfy a similar compression result as the channels do for Doosti \etal. At a high level, we require that the instruments do not ``contract'' the Hilbert--Schmidt distance too much, so that we can still distinguish when $\rho=\sigma$ and when $\|\rho-\sigma\|_1\geq\eps$. The approach to this structurally is the same as used by Doosti \etal: one can think of our overall approach as adapting their method to quantum instruments, though we also give a few simplifications and alternative arguments along the way.

\paragraph{Algorithm Design} Having proven the above three, the algorithm we use is akin to the public-coin version of \cref{alg:private_coin_no_classical} with quantum instruments. That is, instead of channels, we use quantum instruments, and instead of choosing a list of good unitaries upfront, we sample them with public randomness, and argue they are good with high probability.

We will now go through each of the four steps. We will introduce the preliminaries and other necessary results within each section. 
\subsection{Analysis of the BOW Estimator}
We start by defining some notation. First, we assume $\rho$ and $\sigma$ are CQ states, meaning that they are of the form
\begin{equation*}
    \rho=\sum_{c=1}^K\rho_c\otimes|c\rangle\langle c|_C, \quad\sigma=\sum_{c=1}^K\sigma_c\otimes|c\rangle\langle c|_C.
\end{equation*}
When it is clear, we also drop the subscript $C$ for the classical register. We also let
\begin{equation*}
    r_c\eqdef \tr[\rho_c], \quad s_c\eqdef\tr[\sigma_c],
\end{equation*}
which are the classical marginal probabilities of seeing the label $c$ under $\rho$ and $\sigma$. Finally, we also let
\begin{equation*}
    r_\infty\eqdef \max_{c}r_c, \quad
    \chi_r\eqdef \sum_cr_c^2, \quad
    \chi_s=\sum_c s_c^2,
\end{equation*}
which are quantities which capturing how ``spread out'' $\rho$ and $\sigma$ are over the classical labels. With this in hand, the main result we prove is the following:
\begin{theorem}\label{thm:cq_hs_estimator}
    Given $m$ copies of a CQ state $\rho$, and full knowledge of a CQ state $\sigma$, there is an estimator $\hat{\Gamma}$ such that
    \begin{equation*}
        \mathbb{E}[\hat{\Gamma}]=\Gamma=D_{HS}^2(\rho,\sigma),
    \end{equation*}
    and 
    \begin{equation*}
        \Var(\hat\Gamma)\leq O\left(\frac{r_\infty\Gamma}{m}+\frac{\chi_r+\chi_s}{m^2}\right).
    \end{equation*}
\end{theorem}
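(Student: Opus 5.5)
Since $\sigma$ is known, I would use a U-statistic with one random argument and one fixed argument. The BOW-type estimator for $D_{HS}^2(\rho,\sigma)=\tr[\rho^2]-2\tr[\rho\sigma]+\tr[\sigma^2]$ uses the swap trick: $\tr[\rho^2]$ is estimated by the average over pairs $i<j$ of a $\pm1$ SWAP-type observable on copies $i,j$. The term $\tr[\rho\sigma]$ is estimated by the average over $i$ of the outcome of measuring $\sigma$ (as an observable) on copy $i$. The constant $\tr[\sigma^2]$ is computed exactly.

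The CQ structure lets me do everything label by label. The referee first measures the classical register of each copy, which does not disturb a CQ state. For each pair $(i,j)$ I define the kernel
\[
h(i,j)=\mathbbm{1}[c_i=c_j]\cdot \mathrm{SWAP}_{Q_i Q_j},
\]
measured as a $\pm1$ outcome. Its mean is $\sum_c \tr[\rho_c^2]=\tr[\rho^2]$. I measure $\sigma_{c_i}$ on the quantum part of copy $i$, with mean $\tr[\rho\sigma]$. Hence
\[
\hat\Gamma=\binom{m}{2}^{-1}\sum_{i<j}h(i,j)-\frac{2}{m}\sum_i g(i)+\tr[\sigma^2]
\]
is unbiased. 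Cleaner still, I would write the estimator as a single symmetric kernel in $\rho-\sigma$: $\hat\Gamma=\binom{m}{2}^{-1}\sum_{i<j}k(x_i,x_j)$ with $\mathbb{E}[k(x,y)\mid y]=\tr[(\rho-\sigma)(\hat y-\sigma)]$ in operator form. Here $k$ is the swap observable restricted to equal labels, minus $\sigma_c$-observables, plus a constant. This form puts the Hoeffding decomposition directly in terms of $\Delta=\rho-\sigma$.

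I then apply the standard U-statistic variance formula,
\[
\Var\hat\Gamma=\binom{m}{2}^{-1}\bigl(2(m-2)\zeta_1+\zeta_2\bigr),
\]
where $\zeta_1=\Var(\mathbb{E}[k\mid x])$ and $\zeta_2=\Var(k)$. The linear term should behave like $\mathbb{E}_x\bigl[\tr[\Delta_{c}O_x]^2\bigr]$, where $O_x$ is the post-measurement operator with $\|O_x\|_\infty\le1$, supported on label $c$. Restricted to label $c$, the weight is $r_c$, and Cauchy–Schwarz gives at most $\sum_c r_c\|\Delta_c\|_2^2\le r_\infty\Gamma$, using $\Gamma=\sum_c\|\Delta_c\|_2^2$. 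This yields the $r_\infty\Gamma/m$ term.

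For $\zeta_2$, the key point is that the swap part is nonzero only when $c_i=c_j$, which happens with probability $\sum_c r_c^2=\chi_r$. Likewise, the $\sigma$-cross terms contribute second moments like $\sum_c r_c\tr[\sigma_c^2]\le\sum_c r_c s_c\le(\chi_r+\chi_s)/2$ and $\sum_c\tr[\sigma_c^2]^2$-type constants bounded by $\chi_s$. Bounding $\mathbb{E}[k^2]$ by $O(\chi_r+\chi_s)$ gives the $1/m^2$ term.

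The main obstacle is the operator bookkeeping in $\zeta_1$. The conditional mean of the SWAP measurement, given one copy's measured outcome, is not literally a state, so the argument must follow BOW's computation, where $\mathbb{E}[\mathrm{SWAP}\mid\text{copy }i]$ acts as $\rho$ applied to copy $i$. I must also check that the restriction to matching labels only introduces factors of $r_c$, never $1/r_c$. If the direct route through $\zeta_1$ becomes messy, I would instead apply BOW's Proposition 5.6 analysis separately inside each label class, conditioning on the label counts $m_c\sim\mathrm{Bin}(m,r_c)$. I would then sum the per-label variances $O(\|\Delta_c\|_2^2 r_c/m+r_c^2/m^2)$ plus the variance from the random counts, which is again controlled by $\chi_r,\chi_s$.
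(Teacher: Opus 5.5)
Your proposal is correct and is essentially the paper's proof. In its single-kernel form your estimator is exactly the observable $\binom{m}{2}^{-1}\sum_{i<j}H_{ij}$ with $H_{ij}=S_{ij}-L_i-L_j+\tr[\sigma^2]I$: the label-matched swap, minus the terms $L_i=\sum_c\sigma_c\otimes\Pi_c^{C_i}$, plus the constant. It must be measured once as a single observable, since swaps on overlapping pairs do not commute. Your $\zeta_1/\zeta_2$ decomposition, with $\zeta_1\le\sum_c r_c\|\Delta_c\|_2^2\le r_\infty\Gamma$ and $\zeta_2\le\mathbb{E}[H_{12}^2]=O(\chi_r+\chi_s)$, is the paper's disjoint/one-overlap/identical case analysis with its quantities $A_1,A_2$.

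The one step to make precise is the bound on $\zeta_1$. There, the Cauchy--Schwarz heuristic on $\tr[\Delta_cO_x]^2$ does not by itself give $\|\Delta_c\|_2^2$; for a general $O_x$ with $\|O_x\|_\infty\le1$ it only gives $\|\Delta_c\|_1^2$. Instead, use the partial trace $\mathbb{E}_2[S_{12}]=\sum_c\rho_c\otimes\Pi_c$, so that $G_1=\sum_c\Delta_c\otimes\Pi_c+\text{const}\cdot I$. Then $\Var_\rho(G_1)\le\sum_c\tr[\rho_c\Delta_c^2]\le\sum_c r_c\|\Delta_c\|_2^2$, exactly as the paper does.
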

(We believe this result could be made more general; however, the version above suffices for our purposes.)\footnote{For instance, the BOW estimator works for $m$ copies of $\rho$ and $n$ copies of $\sigma$, whereas we assume unlimited copies of $\sigma$. In fact, our design of the estimator also implicitly is built by having full knowledge of $\sigma$, whereas the BOW estimator does not make this assumption.} We also emphasize again that this is \emph{not} a general improvement over the results of Badescu \etal: for instance, if the CQ states concentrate on one classical label, then  $r_\infty=1$, and $\chi_r=\chi_s=1$, and we recover the result in \cref{lemma:hilbert_schmidt_estimator}. However, if all $\chi_r=\chi_s=O(1/K)$, which implies $r_\infty=O(1/\sqrt K)$, then we do get a genuine improvement on the variance. This latter regime is the one we care about.

To prove this result, we follow the analysis of Badescu \etal~to design the estimator and to bound the variance. Though we adapt the arguments to work for CQ states, much of the analysis is exactly the same as Badescu \etal, with careful modifications to obtain the more fine-grained variance bound. Due to the large similarity, we defer the proof to the \cref{app:BOW_analysis}.
\subsection{Quantum Instrument Design}
Recall, in light of the above variance results, we need a quantum instrument which sufficiently spreads out over the classical labels. In this section, we give such a design. Without loss of generality, let us assume $d=Kd_qb$. In other words, we define our input Hilbert space $\mathbb{C}^d\cong C\otimes Q\otimes B$, where $\dim C=K,~\dim Q=d_q, \dim B=b$. (As before, if $d$ is not divisible by $Kq$, one can pad the input space, with only a constant factor loss.) At a high level, $C$ will be the classical register we send, $Q$ will be the quantum register we send, and $B$ will be the remaining register we trace out. To do this, we first design a quantum instrument based on sampling Haar-random unitaries, which we call a \emph{Haar instrument}. This instrument, to the best of our knowledge, is novel, and may find interest in more general quantum communication.
\begin{definition}[Haar instrument]\label{def:Haar_instrument}
    A \emph{Haar instrument} is a quantum instrument constructed as follows. First, sample a $d$-dimensional Haar-random unitary $U$. Then for each classical label $c\in[K]$, define the completely positive branch
    \begin{equation*}
        \mathcal{I}_{U,c}(X)\eqdef \tr_B[(\langle c|_C\otimes I_{QB})UXU^\dagger(|c\rangle_C\otimes I_{QB}]=\tr_B(A_c(U)).
    \end{equation*}
    Then, the full quantum instrument is the completely positive trace preserving map
    \begin{equation*}
        \mathcal{I}_U\eqdef\sum_{c=1}^K\mathcal{I}_{U,c}(X)\otimes|c\rangle\langle c|_C.
    \end{equation*}
\end{definition}
Roughly speaking, the user of this instrument conjugates $U$ with their state $X$, traces out the $B$ register, then measures $C$ in the computational basis to obtain a string $c$, which they send along with the $Q$ register. We now argue that this is a valid instrument.
\begin{lemma}\label{lemma:haar_instrument_is_instrument}
    For every positive semidefinite $X\succeq0$, each branch $\mathcal{I}_{U,c}(X)$ is also positive semidefinite. Moreover, $\mathcal{I}_U$ preserves traces, and if $\rho$ is a density matrix, then $\mathcal{I}_U(\rho)$ is a normalized CQ state.
\end{lemma}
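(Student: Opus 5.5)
The plan is to verify the three claims directly from \cref{def:Haar_instrument}, working with the operator $Y \eqdef UXU^\dagger$ on $C\otimes Q\otimes B$ and the isometries $J_c \eqdef |c\rangle_C\otimes I_{QB}$. Each branch is then the composition of three operations:
\begin{equation*}
    X \mapsto UXU^\dagger \mapsto J_c^\dagger\,UXU^\dagger\,J_c \mapsto \tr_B\!\left[J_c^\dagger\,UXU^\dagger\,J_c\right].
\end{equation*}
The first is a unitary conjugation, the second a compression by the fixed operator $J_c^\dagger$, and the third a partial trace. Each of these maps is completely positive: conjugation $Y\mapsto AYA^\dagger$ always is, and partial trace is a CPTP map.

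\textbf{Positivity of each branch.} For positivity specifically, if $X\succeq 0$ then $UXU^\dagger\succeq 0$. Moreover $\langle v|J_c^\dagger Y J_c|v\rangle=\langle J_c v|Y|J_c v\rangle\geq 0$, and partial traces of PSD operators are PSD. So $\mathcal{I}_{U,c}(X)\succeq 0$. Complete positivity, needed for $\mathcal{I}_U$ to be a genuine instrument, follows from the same composition argument once the maps are tensored with an identity on an ancilla. Alternatively, one can exhibit the explicit Kraus operators $(\langle c|_C\otimes I_Q\otimes\langle b|_B)\,U$, indexed by $b\in[b]$.

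\textbf{Trace preservation.} This is the key step, and the place to be careful with the completeness relation $\sum_c J_cJ_c^\dagger = I_{CQB}$. Using the cyclicity of the trace,
\begin{equation*}
    \sum_{c=1}^K \tr\!\left[\mathcal{I}_{U,c}(X)\right]=\sum_{c=1}^K\tr\!\left[J_c^\dagger\, UXU^\dagger J_c\right]=\tr\!\left[\Big(\sum_{c}J_cJ_c^\dagger\Big)UXU^\dagger\right]=\tr\!\left[UXU^\dagger\right]=\tr[X].
\end{equation*}
Since $\tr[\mathcal{E}_c(X)\otimes|c\rangle\langle c|]=\tr[\mathcal{E}_c(X)]$, this gives $\tr[\mathcal{I}_U(X)]=\tr[X]$. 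Equivalently, the Kraus operators above satisfy $\sum_{c,b}U^\dagger(|c\rangle\langle c|\otimes I_Q\otimes|b\rangle\langle b|)U=I$, which is the Kraus-form completeness condition.

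\textbf{Normalized CQ output.} For a density matrix $\rho$, the output $\mathcal{I}_U(\rho)=\sum_c \mathcal{I}_{U,c}(\rho)\otimes|c\rangle\langle c|$ is by construction block diagonal in the classical register. Each block $\mathcal{I}_{U,c}(\rho)$ is PSD by the positivity step, and the blocks have total trace $1$ by trace preservation. This is exactly the CQ form recalled in the preliminaries, with $r_c=\tr[\mathcal{I}_{U,c}(\rho)]=\langle c|\tr_{QB}(U\rho U^\dagger)|c\rangle$.

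There is no real obstacle here. The only point needing care is keeping track of which register the sandwich $\langle c|\cdot|c\rangle$ acts on, so that the completeness relation is applied correctly.
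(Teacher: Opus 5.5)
Your proposal is correct and follows essentially the same route as the paper: positivity because each branch is a composition of unitary conjugation, compression onto label $c$, and a partial trace over $B$; trace preservation from cyclicity of the trace together with the completeness relation $\sum_c \Pi_c = I$, which is your $\sum_c J_cJ_c^\dagger = I$. Your explicit Kraus operators $(\langle c|_C\otimes I_Q\otimes\langle b|_B)U$ are a small but useful addition, since they make complete positivity of the full instrument immediate, which the paper only asserts.
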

\begin{proof}
    To see the first claim, notice each branch consists of a unitary conjugation, projection to a subspace corresponding to a classical label $c$, and then a partial trace over $B$. Each of these operations is completely positive, that is, if $X\succeq0$ then $\mathcal{I}_{U,c}(X)\succeq0$. Let us define the projector
    \begin{equation*}
        \Pi_c\eqdef |c\rangle\langle c|_C\otimes I_Q\otimes I_B.
    \end{equation*}
    The collection of all $K$ of these projectors are orthogonal, and sum to the identity. Then, we have that $\tr[\mathcal{I}_{U,c}(X)]=\tr[\Pi_cUXU^\dagger]$. Thus,
    \begin{align*}
        \tr[\mathcal{I}_U(X)]&=\sum_{c=1}^K\tr[\mathcal{I}_{U,c}(X)]
        =\tr\left[\left(\sum_{c=1}^K\Pi_c\right)UXU^\dagger\right]
        =\tr[UXU^\dagger]
        =\tr[X].
    \end{align*}
    Hence, the instrument is trace-preserving, so if $\rho$ is a density matrix, the output $\mathcal{I}_U(\rho)$ is a normalized CQ state.
\end{proof}
We now move to show this instrument does sufficiently spread out across the classical labels. Since we will work with pairs of states $\rho$, and hence pairs of instruments, we will use the following result, mostly due to Mele~\cite{DBLP:journals/quantum/Mele24} (our contribution being to refine the statement by providing explicit values for the coefficients).
\begin{lemma}\label{lemma:Haar_pair}
    Let $\rho$ be a density matrix on $\mathbb{C}^d$, and define $\eta\eqdef\tr[\rho^2]$. Then, $M=\mathbb{E}_U[(U\rho U^\dagger)^{\otimes 2}]=a_1I+a_2F$, where $F$ is the swap operator, and 
    \begin{equation*}
        a_1=\frac{d-\eta}{d(d^2-1)},~a_2=\frac{d\eta-1}{d(d^2-1)}.
    \end{equation*}
\end{lemma}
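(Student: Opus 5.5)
The plan is the standard Schur--Weyl argument for the second moment of the Haar measure, followed by two trace conditions that pin down the coefficients.

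\textbf{Step 1: commutation.} Set $M=\mathbb{E}_U[(U\rho U^\dagger)^{\otimes 2}]$. By left-invariance of the Haar measure, for any fixed unitary $V$ we have $(V\otimes V)M(V\otimes V)^\dagger = M$, since $VU$ is again Haar distributed. So $M$ lies in the commutant of $\{V\otimes V\}$ acting on $\mathbb{C}^d\otimes\mathbb{C}^d$.

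\textbf{Step 2: Schur--Weyl.} By Schur--Weyl duality, for $t=2$ this commutant is spanned by the permutation operators $I$ and $F$. This is the only genuinely nontrivial input, and we cite it rather than reprove it. If a self-contained argument is preferred, there is an alternative. Decompose $\mathbb{C}^d\otimes\mathbb{C}^d$ into the symmetric and antisymmetric subspaces, which are inequivalent irreducible representations of $U\otimes U$. Schur's lemma (\cref{lemma:Schur}) then forces $M=\alpha P_{\mathrm{sym}}+\beta P_{\mathrm{anti}}$. Since $P_{\mathrm{sym}}=(I+F)/2$ and $P_{\mathrm{anti}}=(I-F)/2$, this gives $M=a_1I+a_2F$. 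The main obstacle is justifying irreducibility and inequivalence, which I would simply quote as standard.

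\textbf{Step 3: two linear equations.} We need the traces of $M$ and $MF$.
\begin{itemize}
\item $\tr[M]=\mathbb{E}\,\tr[U\rho U^\dagger]^2=1$. Using $\tr[I]=d^2$ and $\tr[F]=d$, this gives $d^2a_1+da_2=1$.
\item By the swap trick $\tr[(A\otimes B)F]=\tr[AB]$, we get $\tr[MF]=\mathbb{E}\,\tr[(U\rho U^\dagger)^2]=\tr[\rho^2]=\eta$. Using $F^2=I$, this gives $da_1+d^2a_2=\eta$.
\end{itemize}

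\textbf{Step 4: solve.} The $2\times2$ system has determinant $d^4-d^2=d^2(d^2-1)$, which is nonzero for $d\ge2$. Cramer's rule gives
\[
a_1=\frac{d^2-d\eta}{d^2(d^2-1)}=\frac{d-\eta}{d(d^2-1)},\qquad a_2=\frac{d^2\eta-d}{d^2(d^2-1)}=\frac{d\eta-1}{d(d^2-1)},
\]
matching the claim.
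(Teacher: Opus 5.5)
Your proposal is correct and follows essentially the same route as the paper: the paper likewise cites (via Mele's moment-operator results) that $M$ lies in the span of $I$ and $F$, then imposes $\tr[M]=1$ and $\tr[FM]=\eta$ to obtain $a_1d^2+a_2d=1$ and $a_1d+a_2d^2=\eta$ and solves. Your Cramer's-rule computation and your optional Schur's-lemma justification via the symmetric/antisymmetric decomposition are both sound.
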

\begin{proof}
    We outline the main contributions by Mele, which obtains the general form of the expectation. Our contribution is to work out the exact value of the coefficients $a_1,a_2$. First, this expectation is what is referred to by Mele as a $k$-th moment operator, with $k=2$ \cite[Definition 4]{DBLP:journals/quantum/Mele24}. They then show that for $k=2$, this moment operator is spanned by the identity and the swap operator \cite[Theorem 7, Theorem 9, Example 14]{DBLP:journals/quantum/Mele24}. This is enough to give us that $\mathbb{E}_U[(U\rho U^\dagger)^{\otimes 2}]=a_1I+a_2F$. To work out the constants, we use properties of $\rho$ and equate traces. First, note that $\tr[M]=\tr(\rho\otimes\rho)=1$, since unitaries do not alter the trace, and it is just an expectation over density matrices over the product space. On this joint space $\mathbb{C}^d\otimes\mathbb{C}^d$, it is known that $\tr[I]=d^2$, and $\tr[F]=d$. Hence, we have our first restraint:
    \begin{equation*}
        a_1d^2+a_2d=1.
    \end{equation*}
    Next, we can apply a swap operator to $M$, and then take the trace. On one hand, this gives us
    \begin{align*}
        \tr[F(a_1I+a_2F)]&=a_1\tr[F]+a_2\tr[F^2]
        =a_1\tr[F]+a_2\tr[I]
        =a_1d+a_2d^2.
    \end{align*}
    On the other hand, using the fact that $\tr[F(A\otimes B)]=\tr[AB]$, we have that
    \begin{align*}
        \tr[F(U\rho U^\dagger\otimes U\rho U^\dagger)]&=\tr[(U\rho U^\dagger)^2]
        =\tr[\rho^2]
        =\eta.
    \end{align*}
    Hence, we get our second constraint
    \begin{equation*}
        a_1d+a_2d^2=\eta.
    \end{equation*}
    Solving for $a_1$ and $a_2$ gives the constants above.
\end{proof}
Next, we can give a result related to the probability of a particular classical label $c$.
\begin{lemma}\label{lemma:single_string_second_moment}
    Let $\rho$ be a density matrix, and let $\rho_c(U)=\mathcal{I}_{U,c}(\rho)$. Recall that the probability to observe any specific classical outcome $c$ is $p^\rho_c(U)\eqdef \tr[\rho_c(U)]$. Then, for any $c$, we have
    \begin{equation*}
        \mathbb{E}_U\left[(p_c^\rho(U))^2\right]=a_1r^2+a_2r,
    \end{equation*}
    where $a_1$ and $a_2$ are the coefficients in \cref{lemma:Haar_pair}, and $r$ is the rank of the projector $\Pi_c=d/K$.
\end{lemma}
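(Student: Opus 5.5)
The plan is to write the squared probability as the expectation of a trace against a tensor product, and then apply \cref{lemma:Haar_pair}. Since the partial trace over $B$ does not change the trace, we have
\begin{equation*}
    p_c^\rho(U)=\tr[\mathcal{I}_{U,c}(\rho)]=\tr[\Pi_c U\rho U^\dagger],
\end{equation*}
where $\Pi_c=|c\rangle\langle c|_C\otimes I_Q\otimes I_B$ as in the proof of \cref{lemma:haar_instrument_is_instrument}. Squaring and using the identity $\tr[A]\tr[B]=\tr[A\otimes B]$ gives
\begin{equation*}
    (p_c^\rho(U))^2=\tr\left[(\Pi_c\otimes\Pi_c)(U\rho U^\dagger)^{\otimes 2}\right].
\end{equation*}
By linearity of trace and expectation, this yields $\mathbb{E}_U[(p_c^\rho(U))^2]=\tr[(\Pi_c\otimes\Pi_c)M]$, with $M=\mathbb{E}_U[(U\rho U^\dagger)^{\otimes2}]$.

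Next, substitute $M=a_1I+a_2F$ from \cref{lemma:Haar_pair}, which reduces the problem to two trace computations:
\begin{equation*}
    \mathbb{E}_U[(p_c^\rho(U))^2]=a_1\tr[\Pi_c\otimes\Pi_c]+a_2\tr[(\Pi_c\otimes\Pi_c)F].
\end{equation*}
For the first term, $\tr[\Pi_c\otimes\Pi_c]=(\tr\Pi_c)^2=r^2$. For the second, the swap trick already used in \cref{lemma:Haar_pair}, $\tr[F(A\otimes B)]=\tr[AB]$, applies. Since $F$ commutes with $\Pi_c\otimes\Pi_c$, we get $\tr[(\Pi_c\otimes\Pi_c)F]=\tr[\Pi_c^2]=\tr[\Pi_c]=r$, because $\Pi_c$ is a projector. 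Combining the two gives $a_1r^2+a_2r$.

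It remains to confirm that $r=\rank\Pi_c=\dim Q\cdot\dim B=d_qb=d/K$, which uses the decomposition $d=Kd_qb$. The only step needing care is the swap-trick identity with operators on both sides. I would verify $\tr[(A\otimes B)F]=\tr[AB]$ directly in a product basis, which is routine. I do not anticipate a genuine obstacle here, since the lemma is essentially a direct corollary of \cref{lemma:Haar_pair}.
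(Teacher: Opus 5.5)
Your proposal is correct and follows the paper's proof essentially line for line. Both rewrite $p_c^\rho(U)=\tr[\Pi_c U\rho U^\dagger]$, square it as a trace against $\Pi_c\otimes\Pi_c$, substitute $a_1I+a_2F$ from \cref{lemma:Haar_pair}, and evaluate the two traces as $r^2$ and $r$. Your remark that $F$ commutes with $\Pi_c\otimes\Pi_c$ is unnecessary, since the swap identity applies directly, but it is harmless, and your explicit check that $r=d_qb=d/K$ is a welcome clarification.
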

\begin{proof}
    By definition, we have that $p_c^\rho(U)=\tr[\Pi_cU\rho U^\dagger]$, the trace of $\rho$, after being conjugated by $U$ and projected onto the relevant string $c$. Hence, we have that
    \begin{equation*}
        (p_c^\rho(U))^2=\tr[(\Pi_c\otimes\Pi_c)(U\rho U^\dagger)^{\otimes 2}].
    \end{equation*}
    Taking expectations, and using \cref{lemma:Haar_pair}, this gives that
    \begin{equation}
        \mathbb{E}_U\left[(p_c^\rho(U))^2\right]=\tr[(\Pi_c\otimes\Pi_c)(aI+bF)].
    \end{equation}
    We can compute each of these terms. Notice $\tr[(\Pi_c\otimes\Pi_c)I]=\tr[\Pi_c]^2=r^2$. Next, $\tr[(\Pi_c\otimes\Pi_c)F]=\tr[\Pi_c^2]=\tr[\Pi_c]=r$. Putting these two together with the coefficients proves the result.
\end{proof}
With these two lemmas in hand, we can now prove our Haar instrument sufficiently spreads out the mass among the classical labels. For convenience, let us define:
\begin{equation*}
    \chi_\rho(U) \eqdef \sum_c(p_c^\rho(U))^2.
\end{equation*}
Showing that we have mass sufficiently spread across all possible labels is then equivalent to showing $\chi_\rho$ is small. We prove the following result:
\begin{lemma}\label{thm:Haar_spread}
    For every fixed density matrix $\rho$, we have that
    \begin{equation*}
        \mathbb{E}_U[\chi_\rho(U)]\leq \frac{2}{K}.
    \end{equation*}
\end{lemma}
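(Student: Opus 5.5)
The plan is to use linearity of expectation together with \cref{lemma:single_string_second_moment}. By definition $\chi_\rho(U)=\sum_{c=1}^K (p_c^\rho(U))^2$, so
\begin{equation*}
    \mathbb{E}_U[\chi_\rho(U)]=\sum_{c=1}^K\mathbb{E}_U\left[(p_c^\rho(U))^2\right]=K\left(a_1r^2+a_2r\right),
\end{equation*}
where $r=\tr[\Pi_c]=d/K$ for every label $c$, since $\dim(Q\otimes B)=d_qb=d/K$. Each term is identical because $\Pi_c$ has the same rank for all $c$, and the Haar average depends only on that rank and on $\eta=\tr[\rho^2]$.

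Next, I would substitute the explicit coefficients from \cref{lemma:Haar_pair}. This gives
\begin{equation*}
    K(a_1r^2+a_2r)=\frac{K}{d(d^2-1)}\left((d-\eta)\frac{d^2}{K^2}+(d\eta-1)\frac{d}{K}\right)=\frac{1}{d^2-1}\left(\frac{d(d-\eta)}{K}+d\eta-1\right).
\end{equation*}
This expression is affine in $\eta$ with slope $\frac{d(1-1/K)}{d^2-1}\geq 0$, so it is maximized at $\eta=1$, which corresponds to a pure state. Since $\eta\in[1/d,1]$ for any density matrix, it suffices to plug in $\eta=1$:
\begin{equation*}
    \mathbb{E}_U[\chi_\rho(U)]\leq\frac{1}{d^2-1}\left(\frac{d(d-1)}{K}+d-1\right)=\frac{d}{K(d+1)}+\frac{1}{d+1}.
\end{equation*}

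The final step is to bound this quantity by $2/K$. The first term is at most $1/K$. For the second term, $\frac{1}{d+1}\leq\frac{1}{K}$ holds because $K\leq d$, which follows from $d=Kd_qb$. Adding the two gives the claimed bound $2/K$.

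The only point requiring care is the algebra and the monotonicity in $\eta$. This is essentially routine, so I do not expect a genuine obstacle beyond double-checking the coefficient arithmetic. As a sanity check, when $\eta=1/d$ (the maximally mixed state) the expression equals exactly $1/K$, which is consistent with the bound.
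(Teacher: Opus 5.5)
Your proof is correct and follows the same route as the paper: linearity of expectation over the $K$ labels, \cref{lemma:single_string_second_moment} with $r=d/K$, substitution of the coefficients from \cref{lemma:Haar_pair}, and the bound $\frac{d/K+1}{d+1}\leq\frac{2}{K}$. You also make explicit that the expression is nondecreasing in $\eta$, so $\eta=1$ is the worst case; the paper skips this by writing that the sum ``simplifies to'' $\frac{d/K+1}{d+1}$, which is really only the $\eta=1$ value, and it has a typo ($L$ for $K$) in that sum.
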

\begin{proof}
    We have that, by using \cref{lemma:single_string_second_moment},
    \begin{equation*}
         \mathbb{E}_U[\chi_\rho(U)]=\sum_{c=1}^K\mathbb{E}_U\left[(p_c^\rho(U))^2\right]=L(a_1r^2+a_2r).
    \end{equation*}
    Plugging in the constants, and simplifying, one gets that this quantity simplifies to $\frac{d/K+1}{d+1}\leq \frac{2}{K}$.
\end{proof}
By Markov's inequality, we get the simple corollary:
\begin{corollary}\label{cor:Haar_spread}
    Let $\rho$ be any fixed density matrix. Then, for every $A>0$, we have that
    \begin{equation*}
        \Pr_U\left[\chi_\rho(U)>\frac{A}{K}\right]\leq \frac{2}{A}.
    \end{equation*}
\end{corollary}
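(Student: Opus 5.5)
This follows directly from \cref{thm:Haar_spread} by Markov's inequality. The key point is that $\chi_\rho(U)=\sum_c (p_c^\rho(U))^2$ is a sum of squares, so $\chi_\rho(U)\geq 0$ for every $U$. Markov's inequality therefore applies to the random variable $Y\eqdef\chi_\rho(U)$, where $U$ is drawn from the Haar measure.

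Fix $A>0$ and apply Markov's inequality with threshold $A/K>0$. This gives
\begin{equation*}
    \Pr_U\left[\chi_\rho(U)>\frac{A}{K}\right]\leq \Pr_U\left[\chi_\rho(U)\geq\frac{A}{K}\right]\leq \frac{\mathbb{E}_U[\chi_\rho(U)]}{A/K}.
\end{equation*}
By \cref{thm:Haar_spread}, the numerator is at most $2/K$, so the right-hand side is at most $\frac{2/K}{A/K}=\frac{2}{A}$. This is the claimed bound. When $A\leq 2$ the bound exceeds $1$ and holds trivially, so no case split is needed.

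There is no real obstacle in this step. All the substance lies in \cref{thm:Haar_spread}, which we take as given; we rely on its stated bound $\frac{d/K+1}{d+1}\leq 2/K$, which holds since $K\leq d$. The only other things to confirm are that $\chi_\rho(U)$ is a measurable, nonnegative, integrable function of $U$. This is immediate: each $p_c^\rho(U)=\tr[\Pi_cU\rho U^\dagger]$ is a continuous function of $U$ with values in $[0,1]$, so $\chi_\rho(U)\in[0,1]$.
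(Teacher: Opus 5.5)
Your proposal is correct and is the same argument the paper uses. The paper also derives this corollary by applying Markov's inequality to the nonnegative random variable $\chi_\rho(U)$, with threshold $A/K$, together with the bound $\mathbb{E}_U[\chi_\rho(U)]\leq 2/K$ from \cref{thm:Haar_spread}.
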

Then, taking $A\geq16$, we can use a union bound to show that with good probability, the chosen Haar instrument will spread out both $\rho$ and $\sigma$.
\begin{corollary}\label{cor:Haar_double_good_spread}
    Let $\rho$ and $\sigma$ be fixed density matrices. Then, with probability at least $3/4$ over the Haar-random U sampled with public-coin randomness, we have $\chi_\rho(U), \chi_\sigma(U)\leq 16/K$. Consequently, with probability at least $3/4$, $\chi_\rho(U)+\chi_\sigma(U)\leq 32/K$.
\end{corollary}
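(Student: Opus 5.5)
This follows directly from \cref{cor:Haar_spread} applied twice, combined with a union bound.

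First, fix the density matrix $\rho$ and apply \cref{cor:Haar_spread} with $A=16$. This gives
\begin{equation*}
    \Pr_U\left[\chi_\rho(U)>\frac{16}{K}\right]\leq \frac{2}{16}=\frac{1}{8}.
\end{equation*}
Next, since $\sigma$ is also a fixed density matrix independent of $U$, the same corollary gives $\Pr_U[\chi_\sigma(U)>16/K]\leq 1/8$. Both bounds refer to the same random unitary $U$, sampled once with public randomness. This causes no difficulty, because the union bound requires no independence between the two events. Hence
\begin{equation*}
    \Pr_U\left[\chi_\rho(U)>\tfrac{16}{K}\ \text{or}\ \chi_\sigma(U)>\tfrac{16}{K}\right]\leq\frac{1}{8}+\frac{1}{8}=\frac{1}{4}.
\end{equation*}
Therefore, with probability at least $3/4$, both $\chi_\rho(U)\leq 16/K$ and $\chi_\sigma(U)\leq 16/K$ hold.

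On this event, adding the two inequalities gives $\chi_\rho(U)+\chi_\sigma(U)\leq 32/K$, which is the ``consequently'' part of the statement.

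The substantive input is the bound $\mathbb{E}_U[\chi_\rho(U)]\leq 2/K$ from \cref{thm:Haar_spread}, which I take as given since it is stated earlier. That lemma holds for every fixed density matrix, so it applies equally to $\rho$ and to $\sigma$, and only nonnegativity of $\chi_\rho(U)$ is needed for Markov's inequality. The one point to check is that both $\rho$ and $\sigma$ are fixed before $U$ is drawn, so neither depends on $U$. This holds in the protocol: $\sigma$ is known and $\rho$ is the unknown state, and neither is chosen after the public coin is sampled.
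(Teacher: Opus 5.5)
Your proof is correct and matches the paper's argument. You apply \cref{cor:Haar_spread} with $A=16$ to each of $\rho$ and $\sigma$, so each fails with probability at most $1/8$. A union bound over the single shared $U$ then gives success probability at least $3/4$, and adding the two bounds yields $32/K$.
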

In particular, with high constant probability our Haar instrument does not concentrate too much on any one classical label, and spreads its probability mass around~--~achieving the second item of our overall blueprint for a public-coin algorithm. We can also straightforwardly bound the maximum mass $p_\infty$ as follows:
\begin{corollary}\label{cor:maximum_mass_bound}
    Let $\rho$ be a fixed unitary and $p_\infty^\rho\eqdef \max_cp_c^\rho(U)$. Then, with probability at least $7/8$, $p^\rho_\infty(U)\leq 4/\sqrt K$.
\end{corollary}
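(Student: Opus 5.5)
The bound on $p_\infty^\rho(U)$ follows from the spread bound on $\chi_\rho(U)$: the maximum of a collection of nonnegative numbers is at most the square root of the sum of their squares. Concretely, for every fixed $U$ and every label $c$ we have $p_c^\rho(U)^2\le\sum_{c'}p_{c'}^\rho(U)^2=\chi_\rho(U)$. Hence
\[
p_\infty^\rho(U)\le\sqrt{\chi_\rho(U)}
\]
holds deterministically. (In the statement, ``$\rho$ a fixed unitary'' should read ``$\rho$ a fixed density matrix'', as in \cref{cor:Haar_spread}.)

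It therefore suffices to show that $\chi_\rho(U)\le 16/K$ with probability at least $7/8$ over the Haar-random $U$. This is exactly \cref{cor:Haar_spread} with $A=16$. That corollary comes from Markov's inequality applied to the bound $\mathbb{E}_U[\chi_\rho(U)]\le 2/K$ of \cref{thm:Haar_spread}, and it gives
\[
\Pr_U\bigl[\chi_\rho(U)>16/K\bigr]\le 2/16=1/8.
\]

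On the complementary event, which has probability at least $7/8$, we get $p_\infty^\rho(U)\le\sqrt{16/K}=4/\sqrt K$, which is the claim. No step is a real obstacle. The only point needing care is that \cref{thm:Haar_spread} is correctly invoked for an arbitrary fixed density matrix $\rho$, so that the Markov bound applies directly to this $\rho$.
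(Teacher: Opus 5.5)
Your proposal is correct and matches the paper's proof: it also applies \cref{cor:Haar_spread} with $A=16$ to get $\chi_\rho(U)\le 16/K$ with probability at least $7/8$, and then uses the deterministic bound $(p_\infty^\rho)^2\le\chi_\rho(U)$. Your reading of ``fixed unitary'' as ``fixed density matrix'' is the intended one.
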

\begin{proof}
    By \cref{cor:Haar_spread}, with probability $7/8$, we have that $\chi_\rho(U)\leq 16/K$. Then, since $(p_\infty^\rho)^2\leq \chi_\rho(U)$, we get $p^\rho_\infty(U)\leq 4/\sqrt K$.
\end{proof}
\begin{remark}
    One might ask, given that we now have a CQ HS estimator with sufficiently good variance, can we also try to adapt this to get a private-coin bound. After all, in the private-coin bound without classical communication, we used the HS estimator as a subroutine. The answer here is ``not directly:'' the key reason is that all of our results only hold for a \emph{fixed} density matrix. This is fine in the public-coin protocol, since the input states $\rho$ and $\sigma$ are first fixed, and then the nodes sample the unitaries using public randomness. However, in the private-coin model, the unitaries are first fixed (and thus so are the Haar instruments), and for each such choice there exists $\rho$ such that the instrument is completely supported on one, or very few, classical labels. Then, one cannot obtain any better variance bound than that of Badescu \etal~from our refined analysis with CQ states. In short, we have a high probability statement over $U$ for each input state, while for private-coin protocols one would need a statement that holds uniformly for all states, for fixed $U$.
\end{remark}
\subsection{Quantum Instrument Compression}
With the Haar instruments defined, we aim now to show the second desired property:  that they do not compress the HS distance too much. To do so, we will again adapt the argument by Doosti \etal~for quantum channels. Write as before $d=Kd_qb$, and furthermore let $D=Kd_q$, so $\dim(C\otimes Q)=D$. For ease, , let $\Upsilon_U(X)\eqdef \|\mathcal{I}_U(X)\|_2^2$, for any traceless Hermitian $X$. Due to the orthogonal classical labels, this implies $\Upsilon_U(X)=\sum_{c=1}^K\|\mathcal{I}_{U,c}(X)\|_2^2$. We will then bound the second and fourth moments under the behavior of the Haar instrument, then use the Paley-Zygmund inequality (\cref{lemma:Paley_Zygmund}) to get a compression bound. We can now first come up with the analogues of the swap representations of the channels. The main result we will prove in this subsection is the following:
\begin{theorem}\label{thm:Haar_compression}
    For every traceless Hermitian $X$, there is an absolute constant $\gamma>0$ such that
    \begin{equation*}
        \Pr_U\left[\|\mathcal{I}_U(X)\|_2^2\geq \frac{d_q}{4d}\|X\|_2^2\right]\geq \gamma.
    \end{equation*}
    Moreover, one can take $\gamma=\frac{1}{3456C_W}$, where $C_W$ is a Weingarten coefficient.\footnote{One can find the exact constant by computing the Weingarten function, explicitly given in \cite{collins2006integration}, but for $S_4$, one can think of this as 20.}
\end{theorem}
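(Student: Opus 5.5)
The plan is the second-moment method: compute $\mathbb{E}_U[\Upsilon_U(X)]$ exactly from the Haar second moment, bound $\mathbb{E}_U[\Upsilon_U(X)^2]$ through the fourth moment, and finish with the Paley--Zygmund inequality (\cref{lemma:Paley_Zygmund}) at $\theta=1/2$. Write $Y=UXU^\dagger$. For an operator $Z$ on $C\otimes Q\otimes B$, swap-trick algebra gives $\|\tr_B Z\|_2^2=\tr[(Z\otimes Z)(F_{CQ}\otimes I_{BB'})]$. The classical branches therefore give
\begin{equation*}
    \Upsilon_U(X)=\tr[(Y\otimes Y)\,W],\qquad W\eqdef\sum_{c=1}^K |c\rangle\langle c|\otimes|c\rangle\langle c|\otimes F_Q\otimes I_{BB'},
\end{equation*}
where the tensor factors of $W$ are suitably reordered. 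This writes $\Upsilon_U(X)$ as a linear functional of $Y^{\otimes 2}$.

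\textbf{First moment.} As in \cref{lemma:Haar_pair}, $\mathbb{E}_U[Y^{\otimes2}]=\alpha I+\beta F$. The constants now come from $\tr[X\otimes X]=(\tr X)^2=0$ and $\tr[F(X\otimes X)]=\|X\|_2^2$. This gives $\beta=\|X\|_2^2/(d^2-1)$ and $\alpha=-\beta/d$. Direct counting gives $\tr W=K d_q b^2=db$ and $\tr[WF]=K d_q^2 b=dd_q$. Hence
\begin{equation*}
    \mathbb{E}_U[\Upsilon_U(X)]=\frac{dd_q-b}{d^2-1}\|X\|_2^2\geq \frac{d_q}{2d}\|X\|_2^2,
\end{equation*}
since $b=d/(Kd_q)\leq d d_q/2$. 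Paley--Zygmund with $\theta=1/2$ then yields
\begin{equation*}
    \Pr\left[\Upsilon_U(X)\geq \frac{d_q}{4d}\|X\|_2^2\right]\geq \frac14\cdot\frac{(\mathbb{E}\Upsilon)^2}{\mathbb{E}[\Upsilon^2]}.
\end{equation*}
It therefore suffices to show $\mathbb{E}[\Upsilon^2]\leq C'\,C_W\,(d_q/d)^2\|X\|_2^4$. The constant $3456$ will come from collecting the factor $1/4$, the factor $(1/2)^2$ lost in the first moment, and the number of surviving permutation terms.

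\textbf{Second moment.} Next, $\Upsilon_U(X)^2=\tr[Y^{\otimes4}(W\otimes W)]$. Apply the Weingarten formula:
\begin{equation*}
    \mathbb{E}[Y^{\otimes 4}]=\sum_{\sigma,\tau\in S_4}\mathrm{Wg}(\sigma\tau^{-1},d)\,\tr[P_\sigma^{-1}X^{\otimes4}]\,P_\tau.
\end{equation*}
Because $\tr X=0$, every $\sigma$ with a fixed point contributes zero. This leaves only the cycle types $(2,2)$ and $(4)$, for which $|\tr[P_\sigma^{-1}X^{\otimes4}]|\le\|X\|_2^4$ (using $\tr X^4\le\|X\|_2^4$). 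I then bound each remaining term by $|\mathrm{Wg}(\pi,d)|\leq C_W d^{-4-|\pi|}$, where $|\pi|$ is $4$ minus the number of cycles of $\pi$, times $|\tr[P_\tau(W\otimes W)]|$. Each $\tr[P_\tau(W\otimes W)]$ factorises over the $C$, $Q$ and $B$ registers into explicit powers of $K$, $d_q$ and $b$, determined by how $\tau$'s cycles interact with the pairings $(12)$ and $(34)$ that $W$ encodes.

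\textbf{Main obstacle.} The hard part is the bookkeeping in this last step. Some individual traces, for example $\tr[W\otimes W]=d^2b^2$, exceed the target $d^2d_q^2$ when $b\gg d_q$. So one cannot simply bound every term by the largest trace. Instead I will show that $d^{-|\sigma\tau^{-1}|}\,|\tr[P_\tau(W\otimes W)]|\le O(d^2d_q^2)$ for every fixed-point-free $\sigma$ and every $\tau$. The extra Weingarten decay in the distance between $\sigma$ and $\tau$ should pay for the extra powers of $b$. My first attempt is to check this via a cycle-counting inequality of the form $\#\mathrm{cyc}_B(\tau)\le 2+|\sigma\tau^{-1}|$ on the $B$ register, with analogous counts on $C$ and $Q$. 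If that fails for some pair, I would regroup the sum over $\tau$ at fixed $\sigma$, writing it as $\tr[P_\sigma^{-1}X^{\otimes4}]$ times the Weingarten-weighted functional applied to $W\otimes W$, and use the centring $\alpha=-\beta/d$ to produce cancellations.
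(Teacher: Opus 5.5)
Your proposal is correct and follows the paper's proof essentially step for step. Both write $\Upsilon_U(X)=\tr[(Y\otimes Y)W]$ via the swap trick and compute the exact first moment $\frac{dd_q-b}{d^2-1}\|X\|_2^2$ (your lower bound by $\frac{d_q}{2d}\|X\|_2^2$ silently uses $Kd_q^2\geq 2$, as the paper states). Both then expand the fourth moment with Weingarten over the $9$ fixed-point-free $\sigma$ and apply Paley--Zygmund at $\theta=1/2$, giving $\gamma=1/(16\cdot 9\cdot 24\,C_W)=1/(3456C_W)$.

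The inequality $\#\mathrm{cyc}(\tau)\le 2+|\sigma\tau^{-1}|$ you propose is exactly \cref{lemma:cycle_permutation}, and it holds, since it follows from $|\sigma|\geq 2$ and the triangle inequality for transposition length. You only need it on the $B$ register. With the crude bounds $K^2$ and $d_q^4$ on the $C$ and $Q$ factors, every term is already at most $C_W(d_q/d)^2\|X\|_2^4$, so your fallback via the centring is unnecessary.
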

 The proof of this theorem relies on the two lemmas below, whose proof can be found in~\cref{app:moment_bounds}.
While we follow the same analysis of Doosti \etal~with Weingarten coefficients to prove \cref{lemma:upsilon_second_moment}, we provide an alternative proof of \cref{lemma:first_moment_upsilon}, which does not rely on Weingarten coefficients, but instead only uses simple linear algebra. 
 
\begin{lemma}\label{lemma:first_moment_upsilon}
    Let $X$ be a traceless Hermitian. Then,
    \begin{equation*}
        \mathbb{E}_U[\Upsilon_U(X)]=\frac{d_qd-b}{d^2-1}\|X\|_2^2.
    \end{equation*}
    In particular, if $Kd_q^2\geq 2$, then
    \begin{equation*}
        \mathbb{E}_U[\Upsilon_U(X)]\geq \frac{1}{2}\cdot\frac{d_q}{d}\|X\|_2^2.
    \end{equation*}
\end{lemma}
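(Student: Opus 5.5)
The plan is to rewrite $\Upsilon_U(X)$ as a quadratic form of an orthogonal projector evaluated at the rotated operator $Y\eqdef UXU^\dagger$. Haar averaging will then reduce that projector to a multiple of the identity, exactly as in \cref{lemma:projector_expectation}. This needs no Weingarten calculus: only Schur's lemma (\cref{lemma:Schur}) and \cref{lemma:traceless_hermitian_operators}, both already used in the private-coin section.

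\emph{Step 1: a projector representation.} Write $P_c\eqdef|c\rangle\langle c|_C\otimes I_{QB}$. Define the linear map $G$ on $\mathbb{C}^{d\times d}$ by
\begin{equation*}
G(Y)\eqdef \sum_{c=1}^K |c\rangle\langle c|_C\otimes \frac{1}{b}\tr_B\left[(\langle c|_C\otimes I_{QB})Y(|c\rangle_C\otimes I_{QB})\right]\otimes I_B .
\end{equation*}
I will check that $G$ is the Hilbert--Schmidt orthogonal projector onto the subspace $S\eqdef\{\sum_c|c\rangle\langle c|\otimes Z_c\otimes I_B\}$. It fixes $S$, so $G^2=G$. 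It is self-adjoint, since pinching on $C$ and the map $Z\mapsto \tr_B(Z)\otimes I_B/b$ are both self-adjoint. As in \cref{lemma:W_projector}, $\|G(Y)\|_2^2=\frac1b\sum_c\|\mathcal{I}_{U,c}(X)\|_2^2$. Hence
\begin{equation*}
\Upsilon_U(X)=b\,\|G(UXU^\dagger)\|_2^2=b\,\langle X, R_U^{-1}GR_U X\rangle .
\end{equation*}

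\emph{Step 2: restrict to $H$ and average.} $G$ preserves traces, and $I\in S$. So $G$ commutes with the projector onto $\mathrm{span}(I)$, and its restriction to the traceless Hermitian space $H$ is the orthogonal projector onto $S\cap H$. That subspace has dimension $Kd_q^2-1$. Set $T\eqdef\mathbb{E}_U[R_U^{-1}GR_U]|_H$. By Haar invariance, exactly as in the proof of \cref{lemma:projector_expectation}, $T$ commutes with every conjugation $R_V$. By \cref{lemma:traceless_hermitian_operators} and \cref{lemma:Schur}, $T=\lambda I_H$. Taking traces over $H$ gives $\lambda(d^2-1)=Kd_q^2-1$. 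Therefore
\begin{equation*}
\mathbb{E}_U[\Upsilon_U(X)]=b\,\frac{Kd_q^2-1}{d^2-1}\|X\|_2^2=\frac{d_qd-b}{d^2-1}\|X\|_2^2,
\end{equation*}
using $bKd_q=d$.

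\emph{Step 3: the simplified bound.} Since $b=d/(Kd_q)$, the condition $Kd_q^2\ge2$ gives $b\le d_qd/2$, so $d_qd-b\ge \frac12 d_qd$. Combined with $d^2-1\le d^2$, this yields $\mathbb{E}_U[\Upsilon_U(X)]\ge\frac12\cdot\frac{d_q}{d}\|X\|_2^2$.

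The main obstacle I expect is Step 1. The normalization must be verified carefully: $G$ must be genuinely idempotent and self-adjoint, and the factor $b$ must come out right. I also need to confirm that restricting $G$ to $H$ yields a projector of rank exactly $Kd_q^2-1$, which relies on $I\in S$. Once this holds, the rest is the same Schur argument as in \cref{lemma:projector_expectation}.
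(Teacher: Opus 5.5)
Your proof is correct, and it takes a genuinely different route from the paper's.

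The paper works with two copies. It writes $\Upsilon_U(X)=\tr[R\,(UXU^\dagger)^{\otimes 2}]$ with $R=\sum_c|c\rangle\langle c|_{C_1}\otimes|c\rangle\langle c|_{C_2}\otimes F_{Q_1,Q_2}\otimes I_{B_1,B_2}$. It then plugs in the explicit twirl $\mathbb{E}_U[(UXU^\dagger)^{\otimes 2}]=-\frac{s}{d(d^2-1)}I+\frac{s}{d^2-1}F$ from \cref{lemma:traceless_second_moment}, and evaluates $\tr[R]=Kd_qb^2$ and $\tr[RF]=Kd_q^2b$.

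You instead stay on a single copy. You identify $\Upsilon_U(X)/b$ as the squared norm of the orthogonal projection of $X$ onto the rotated $(Kd_q^2-1)$-dimensional subspace $U^\dagger(S\cap H)U$. You then reuse the Schur argument of \cref{lemma:projector_expectation}.

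The step you flagged as the obstacle does go through. $G$ is the product of two commuting orthogonal projectors acting on different tensor factors: pinching on $C$, and $Z\mapsto\tr_B(Z)\otimes I_B/b$. So $G$ is the orthogonal projector onto $S$, and your normalization $\|G(UXU^\dagger)\|_2^2=\Upsilon_U(X)/b$ is right.

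Both arguments rest on the same representation-theoretic fact in different guises. After a partial transpose, the two-copy commutant being $\mathrm{span}\{I,F\}$ is the same statement as conjugation splitting $\mathbb{C}^{d\times d}$ into $\mathbb{C}I$ plus an irreducible traceless part.

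What each approach buys:
\begin{itemize}
\item Your version is more conceptual. It explains the constant as $b$ times a ratio of dimensions, and at $K=1$ it specializes exactly to \cref{lemma:channel_norm} and \cref{lemma:projector_expectation}. Also, since $\Upsilon_U(X)=b\langle X,R_U^{-1}GR_UX\rangle$ with $R_U^{-1}GR_U$ a genuine projector on $H$, the matrix Chernoff argument of \cref{lemma:deterministic_good_choice} would carry over directly to a fixed family of Haar instruments, at least for the norm-preservation part.
\item The paper's computation is less illuminating, but it introduces the operator $R$. Its tensor square $\mathbf{R}=R_{1,2}\otimes R_{3,4}$ is exactly what the fourth-order Weingarten calculation in \cref{lemma:upsilon_second_moment} needs. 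Your projector picture does not by itself yield that second moment.
\end{itemize}
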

\begin{lemma}\label{lemma:upsilon_second_moment}
    There is an absolute constant C, such that for every traceless Hermitian $X$, 
    \begin{equation*}
        \mathbb{E}_U[\Upsilon_U(X)^2]\leq C\left(\frac{d_q}{d}\right)^2\|X\|_2^4.
    \end{equation*}
    That is, $\mathbb{E}_U[\Upsilon_U(X)^2]\leq O\left(\left(\frac{d_q}{d}\right)^2\|X\|_2^4\right)$.
\end{lemma}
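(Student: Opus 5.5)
The plan is to express $\Upsilon_U(X)$ as a linear functional of $(UXU^\dagger)^{\otimes 2}$, so that $\Upsilon_U(X)^2$ becomes a linear functional of $(UXU^\dagger)^{\otimes 4}$. We then evaluate $\mathbb{E}_U[(UXU^\dagger)^{\otimes 4}]$ with fourth-order Weingarten calculus, following Doosti \etal's treatment of quantum channels. Write $Z\eqdef UXU^\dagger$ and $\mathcal{H}=C\otimes Q\otimes B$. For a matrix $Y$ on $\mathcal{H}$ one has $\|\tr_B(Y)\|_2^2=\tr[(Y\otimes Y)(F_{QQ'}\otimes I_{BB'})]$. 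Combining this with $\mathcal{I}_{U,c}(X)=\tr_B(\langle c|Z|c\rangle)$ gives
\begin{equation*}
\Upsilon_U(X)=\tr\big[(Z\otimes Z)\,S\big],\qquad S\eqdef\Big(\sum_{c=1}^K|c\rangle\langle c|_C\otimes|c\rangle\langle c|_{C'}\Big)\otimes F_{QQ'}\otimes I_{BB'}.
\end{equation*}
Hence $\Upsilon_U(X)^2=\tr[Z^{\otimes4}(S\otimes S)]$. The operator $S\otimes S$ factorizes across the $C$, $Q$ and $B$ tensor factors of the four copies.

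Next, apply the Weingarten formula:
\begin{equation*}
\mathbb{E}_U[Z^{\otimes4}]=\sum_{\pi,\sigma\in S_4}\mathrm{Wg}(\pi\sigma^{-1},d)\,\tr[P_\sigma X^{\otimes4}]\,P_\pi .
\end{equation*}
This reduces the expectation to the double sum $\sum_{\pi,\sigma}\mathrm{Wg}(\pi\sigma^{-1},d)\,\tr[P_\sigma X^{\otimes4}]\,\tr[P_\pi(S\otimes S)]$. Because $X$ is traceless, every $\sigma$ with a fixed point contributes zero. The surviving $\sigma$ are the three double transpositions and the eight $3$-cycles, which give products of $\tr[X^2]$ and $\tr[X^3]$, and the six $4$-cycles, which give $\tr[X^4]$. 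All of these are bounded in absolute value by $\|X\|_2^4$. For the Weingarten function I will use the standard asymptotic bound $|\mathrm{Wg}(\nu,d)|\le C_W\,d^{-4-|\nu|}$, where $|\nu|$ is the minimal number of transpositions composing $\nu$; this holds for $d$ larger than a fixed constant, and small $d$ is absorbed into $C$.

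For $\tr[P_\pi(S\otimes S)]$, the factorization gives a product of three traces. The $B$ factor contributes $b^{\#\mathrm{cyc}(\pi)}$. The $Q$ factor contributes $d_q^{\#\mathrm{cyc}(\pi\cdot(12)(34))}$. The $C$ factor is at most $K^{\#\text{blocks}}$, where the blocks are those of the partition generated by $\pi$ together with the pairing $\{1,2\},\{3,4\}$, so it is at most $K^2$. We need each nonzero term to satisfy
\begin{equation*}
d^{-4-|\pi\sigma^{-1}|}\,\tr[P_\pi(S\otimes S)]\lesssim \frac{d_q^2}{d^2}=\frac{1}{K^2b^2}.
\end{equation*}
Equivalently, $\tr[P_\pi(S\otimes S)]\lesssim K^2d_q^4b^2\,d^{|\pi\sigma^{-1}|}$. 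When $\pi$ is "far" from the identity, its cycle count is small and the $b$-power is fine. When $\pi$ is close to the identity, for example $\pi=\mathrm{id}$ with trace $K^2d_q^2b^4$, the constraint that $\sigma$ has no fixed points forces $|\pi\sigma^{-1}|\ge2$. The extra factor $d^{-2}=(Kd_qb)^{-2}$ then restores the bound: $K^2d_q^2b^4\le K^2d_q^4b^2\cdot d^2$.

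The main obstacle is this finite but fiddly bookkeeping over the pairs $(\pi,\sigma)$, where $\sigma$ ranges over the $17$ fixed-point-free-relevant permutations. In each case one must check the inequality
\begin{equation*}
\#\mathrm{cyc}_B(\pi)-2+\#\mathrm{cyc}_Q-4+\#\mathrm{blocks}_C-2\le |\pi\sigma^{-1}|,
\end{equation*}
where the exponents of $b$, $d_q$ and $K$ are treated jointly. Since $b,d_q,K\ge1$, it suffices that for each of $K$, $d_q$, $b$ separately the excess exponent is at most $|\pi\sigma^{-1}|$. I would verify this with the triangle inequality $|\pi\sigma^{-1}|\ge|\pi|-|\sigma|$ together with $|\sigma|\ge2$. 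I would also use that $\#\mathrm{cyc}(\pi)=4-|\pi|$ and $\#\mathrm{cyc}(\pi(12)(34))\le 4-|\pi(12)(34)|$. Any residual cases I would check by hand, mirroring Doosti \etal's channel computation, which is the special case $K=1$. Summing the at most $24\cdot17$ terms then gives $\mathbb{E}_U[\Upsilon_U(X)^2]\le C(d_q/d)^2\|X\|_2^4$.
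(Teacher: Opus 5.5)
Your proposal is correct and follows essentially the same route as the paper. Both write $\Upsilon_U(X)^2=\tr[\mathbf{R}(UXU^\dagger)^{\otimes 4}]$ and expand it with fourth-order Weingarten calculus, discard the fixed-point permutations by tracelessness, and bound the $C$ and $Q$ factors of $\tr[\mathbf{R}P_\pi]$ crudely by $K^2$ and $d_q^4$. The $b$-exponent is then controlled through the Cayley-metric triangle inequality against $|Wg(\nu,d)|\le C_W d^{-4-|\nu|}$.

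There are two harmless slips. First, the $3$-cycles in $S_4$ fix a point and therefore vanish, so only $9$ permutations survive, not $17$. Second, the inequality you actually need is $|\pi\sigma^{-1}|\ge|\sigma|-|\pi|\ge 2-|\pi|$, not $|\pi\sigma^{-1}|\ge|\pi|-|\sigma|$.
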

Assuming the above two lemmas, we show how to obtain \cref{thm:Haar_compression}.
\begin{proof}[Proof of~\cref{thm:Haar_compression}]
    We take $Y=\Upsilon_U(X)$. By \cref{lemma:first_moment_upsilon}, $\mathbb{E}[Y]\geq \frac{d_q}{2d}\|X\|_2^2$. Then, by \cref{lemma:upsilon_second_moment}, \smash{$\mathbb{E}[Y^2]\leq C\cdot \frac{d_q^2}{d^2}\|X\|_2^4$}. Applying the Paley-Zygmund inequality (\cref{lemma:Paley_Zygmund}) with $\theta=1/2$, we get
    \begin{equation*}
        \Pr[Y\geq \frac{1}{2}\mathbb{E}[Y]]\geq \frac{1}{4}\cdot\frac{1}{4C}=\frac{1}{16C}.
    \end{equation*}
    Then, on this event, we have that
    \begin{equation*}
        Y\geq\frac{1}{2}\mathbb{E}[Y]\geq \frac{d_q}{4d}\|X\|_2^2, 
    \end{equation*}
    as required.
\end{proof}
In particular for our quantum instruments, this implies the following by standard norm bounds.
\begin{corollary}\label{cor:post_instrument_gap}
    Let $\rho$ and $\sigma$ be density matrices such that $\|\rho-\sigma\|_1\geq\eps$. Then with constant probability over the choice of a Haar-random $U$, we have that
    \begin{equation*}
        \|\mathcal{I}_U(\rho-\sigma)\|_2^2\geq \frac{d_q\eps^2}{d^2}.
    \end{equation*}
\end{corollary}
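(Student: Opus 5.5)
The plan is to apply \cref{thm:Haar_compression} to the traceless Hermitian operator $X\eqdef\rho-\sigma$ and then convert the trace-norm lower bound into a Hilbert--Schmidt lower bound. First, $X$ is Hermitian because $\rho,\sigma$ are, and $\tr X=\tr\rho-\tr\sigma=0$, so $X$ is a valid input to \cref{thm:Haar_compression}. Since $\mathcal{I}_U$ is linear, $\mathcal{I}_U(\rho-\sigma)=\mathcal{I}_U(\rho)-\mathcal{I}_U(\sigma)=\mathcal{I}_U(X)$. Therefore
\begin{equation*}
\Pr_U\left[\|\mathcal{I}_U(\rho-\sigma)\|_2^2\geq \frac{d_q}{4d}\|\rho-\sigma\|_2^2\right]\geq\gamma,
\end{equation*}
where $\gamma>0$ is the absolute constant from \cref{thm:Haar_compression}.

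Next I relate $\|X\|_2$ to $\|X\|_1$. The operator $X$ acts on $\mathbb{C}^d$ and so has at most $d$ nonzero singular values. By Cauchy--Schwarz on those singular values, $\|X\|_1\leq\sqrt{d}\,\|X\|_2$, which gives $\|X\|_2^2\geq \|X\|_1^2/d\geq \eps^2/d$. This is the same step used in the proof of \cref{thm:private_coin_no_classical_ub}. Combining the two bounds, with probability at least $\gamma$ over $U$ we get
\begin{equation*}
\|\mathcal{I}_U(\rho-\sigma)\|_2^2\geq\frac{d_q}{4d}\cdot\frac{\eps^2}{d}=\frac{d_q\eps^2}{4d^2}.
\end{equation*}

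The only real obstacle is the constant. This argument yields $\frac{d_q\eps^2}{4d^2}$, while the statement omits the factor $1/4$. I expect the intended claim holds up to an absolute constant, as the phrase ``standard norm bounds'' suggests. I would therefore either state the result as $\Omega(d_q\eps^2/d^2)$, or absorb the $1/4$ by rescaling the threshold $\beta$ in the downstream algorithm. I would not try to sharpen \cref{thm:Haar_compression} itself. Lowering the Paley--Zygmund parameter $\theta$ does not remove the factor: \cref{lemma:first_moment_upsilon} only guarantees a mean of $\frac{d_q}{2d}\|X\|_2^2$, so the constant is at most $1/2$ along this route. For use in the algorithm, the constant is harmless: it only changes the threshold $\beta$ and the number of nodes by constant factors.
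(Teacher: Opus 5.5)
Your proposal is correct and is exactly the argument the paper has in mind when it says the corollary follows ``by standard norm bounds'': apply \cref{thm:Haar_compression} to the traceless Hermitian $X=\rho-\sigma$, then use $\|X\|_2^2\geq\|X\|_1^2/d\geq\eps^2/d$. The factor $1/4$ you flag is a slip in the statement, not in your argument: the paper itself relies on the $\frac{d_q\eps^2}{4d^2}$ version downstream, setting $\beta=\frac14\cdot\frac{d_q\eps^2}{d^2}$ in \cref{alg:public_coin}.
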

\subsection{A Public-Coin Algorithm}
With all that setup complete, we can finally give our algorithm in the public-coin setting, with both classical and quantum communication, shown in \cref{alg:public_coin}.
\begin{algorithm}[H]
    \caption{Public-coin distributed algorithm with both classical and quantum communication}\label{alg:public_coin}
    \begin{algorithmic}[1]
        \State Set $L\eqdef O(1)$, $\beta=\eqdef \frac{1}{4}\frac{d_q\eps^2}{d^2}$, and $t\eqdef O\left(\frac{1}{\sqrt{K}\beta}\right)$. Set $m\eqdef Lt$.
        \State\Comment{All nodes agree beforehand on a numbering $N_{i,j}$ for $i\in[t]$ and $j\in[L]$.}
        \State Using public randomness, all nodes jointly sample $L$ different Haar random unitary matrices, $U_1,...,U_L$, and create the corresponding quantum instruments $\mathcal{I}_{U_i}$.
        \State Each distributed node applies $\rho$ to the quantum instrument $\mathcal{I}_{U_i}$, to send a classical label $c$, with $c\in[K]$ consisting of $n_c$ bits, and sends the $Q$ register, consisting of a $d_q$-dimensional qudit.
        \State The referee node, with full knowledge of $\sigma$ feeds $\sigma$ through the quantum instrument to get $\tau_{i}=\mathcal{I}_{U_i}(\sigma)$, for each sampled unitary. 
        \State\label{algo:publiccoin:checkspreadout} The referee node checks the classical labels are sufficiently spread out by computing $\chi_{\tau_i}<\hat C/K$, where $C>0$ is an absolute constant. If not, it records $\reject$ for the corresponding instrument $i$, and only proceeds with what is below for the $L' \leq L$ instruments sufficiently spread out.
        \For {$i=1$ to $L'$}
            \State Take all messages corresponding to the $i$-th quantum instrument, which gives a CQ state $\rho_i$. Use the CQ HS estimator obtain an estimate $\hat{\Gamma}_i$ for $\Gamma_i=\|\tau_i-\rho_i\|_2^2$.
            \If {$\hat\Gamma> \beta/2$}
                \State record \reject,
            \Else \State record \accept.
            \EndIf
        \EndFor
        \State\Return \accept{} if the majority of the loops accept, else \reject.
    \end{algorithmic}
\end{algorithm}
In Line~\ref{algo:publiccoin:checkspreadout}, the referee node has to verify whether the quantum instruments sufficiently spread out $\sigma$. For analysis sake, we assume that all sampled unitaries produce good instruments, meaning it sufficiently spreads out over all classical labels for both $\rho$ and $\sigma$. By \cref{cor:Haar_double_good_spread}, we know this occurs with good probability, and hence we can always just adjust the (constant) number of repetitions $L$, costing only a constant factor in the result. We will show the following:
\begin{theorem}\label{thm:public_coin_ub}
    Assume the distributed nodes are allowed to communicate a $d_q$-dimensional qudit, and an $n_c$-length classical string. Then, to $\eps$-certify a $d$-dimensional state with probability at least $2/3$, it suffices to take $m= O\left(\frac{d^2}{d_q^2\sqrt{K}\eps^2}\right)$, i.e.,
    \begin{equation*}
        m= O\left(\frac{d^2}{2^{n_q+n_c/2}\eps^2}\right).
    \end{equation*}
\end{theorem}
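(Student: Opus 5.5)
The plan is to analyze \cref{alg:public_coin} one instrument at a time, and then combine the $L$ independent instruments. There are two things to note up front. First, with $\beta=\frac{d_q\eps^2}{4d^2}$ and $t=O\bigl(\frac{1}{\sqrt K\beta}\bigr)$ we get $m=Lt=O\bigl(\frac{d^2}{d_q\sqrt K\eps^2}\bigr)$, which is the bound $O\bigl(\frac{d^2}{2^{n_q+n_c/2}\eps^2}\bigr)$ in the statement. Second, because \cref{thm:Haar_compression} only guarantees compression with a small constant probability $\gamma$, I will change the final aggregation. Instead of a majority vote, the referee rejects if \emph{any} retained instrument records \reject. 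Instruments that fail the spread check on $\sigma$ are simply discarded rather than counted as rejections, so that completeness is unaffected. I would take $L=O(1/\gamma)$ and make the per-instrument false-rejection probability at most $1/(10L)$ by enlarging the constant in $t$; both are absolute constants, so this costs only a constant factor.

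\emph{Completeness} ($\rho=\sigma$). For any retained instrument we have $\Gamma_i=0$, and the check in Line~\ref{algo:publiccoin:checkspreadout} guarantees $\chi_r=\chi_s=\chi_{\tau_i}\le \hat C/K$. By \cref{thm:cq_hs_estimator} with $t$ copies, $\Var(\hat\Gamma_i)\le O\bigl(\frac{1}{Kt^2}\bigr)$. Chebyshev then gives $\Pr[\hat\Gamma_i>\beta/2]\le O\bigl(\frac{1}{Kt^2\beta^2}\bigr)$. This is at most $1/(10L)$ once $t\ge C'/(\sqrt K\beta)$ for a suitable constant $C'$. A union bound over the at most $L$ retained instruments shows the referee accepts with probability at least $9/10$.

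\emph{Soundness} ($\|\rho-\sigma\|_1\ge\eps$). I would call an instrument $U$ \emph{good} if three events hold simultaneously:
\begin{itemize}
\item $\|\mathcal{I}_U(\rho-\sigma)\|_2^2\ge \frac{d_q\eps^2}{4d^2}=\beta$, which holds with probability at least $\gamma$ by \cref{thm:Haar_compression} applied to $X=\rho-\sigma$ together with $\|X\|_2^2\ge\eps^2/d$;
\item $\chi_\sigma(U)\le A/K$;
\item $\chi_\rho(U)\le A/K$.
\end{itemize}
By \cref{cor:Haar_spread}, each of the last two events fails with probability at most $2/A$. Choosing the constant $A=8/\gamma$ (and $\hat C\ge A$ in the check) makes the intersection have probability at least $\gamma/2$. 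Since the $L$ unitaries are independent, at least one is good except with probability $(1-\gamma/2)^L\le 1/10$. A good instrument passes the spread check, and it satisfies $r_\infty\le\sqrt{\chi_r}\le\sqrt{A/K}$. \cref{thm:cq_hs_estimator} then gives
\[
\Var(\hat\Gamma_i)\le O\Bigl(\frac{\Gamma_i}{\sqrt K t}+\frac{1}{Kt^2}\Bigr),
\]
and since $\Gamma_i\ge\beta$, Chebyshev yields
\[
\Pr[\hat\Gamma_i\le \beta/2]\le\Pr[|\hat\Gamma_i-\Gamma_i|\ge\Gamma_i/2]\le O\Bigl(\frac{1}{\sqrt K t\beta}+\frac{1}{Kt^2\beta^2}\Bigr),
\]
which is at most $1/10$ for our choice of $t$. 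So the referee rejects with probability at least $8/10$.

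The main obstacle is exactly this mismatch between the weak, constant-probability compression guarantee $\gamma$ and the spreading guarantees. The spreading guarantees are each only $7/8$ or $3/4$, which is not enough to intersect with an event of probability $\gamma$. This is why I use the tunable Markov threshold $A=\Theta(1/\gamma)$ and the ``reject if any instrument rejects'' rule, rather than relying on a majority of instruments being good. Everything else is a direct application of \cref{thm:cq_hs_estimator} and Chebyshev.
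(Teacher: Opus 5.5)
Your argument is correct. It uses the same ingredients as the paper: Haar instruments, the CQ estimator of \cref{thm:cq_hs_estimator}, Chebyshev with $t=O(1/(\sqrt K\beta))$, and $L=O(1)$ instruments. What differs is how you combine the instruments, and that difference is substantive.

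The paper keeps the majority vote, proves a per-instrument error of $1/3$, and effectively treats every sampled instrument as both compressing and spread. But \cref{thm:Haar_compression} only gives compression with some small constant probability $\gamma$; Paley--Zygmund with $\theta=1/2$ can never certify more than $1/4$. So in the far case, most instruments may have small $\Gamma_i$ and vote \accept. Moreover, the spread guarantee of \cref{cor:Haar_double_good_spread} is only $3/4$, so the paper's proof never shows that compression and spreading hold simultaneously for any instrument.

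Your choices close exactly this gap at constant cost:
\begin{itemize}
\item the reject-if-any rule;
\item the tunable Markov threshold $A=8/\gamma$, which yields a good instrument with probability at least $\gamma/2$;
\item $L=\Theta(1/\gamma)$ with per-instrument false-rejection probability $1/(10L)$.
\end{itemize}
Discarding spread-failing instruments, rather than recording them as rejections, keeps the completeness union bound clean.

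Your count $m=O\bigl(d^2/(d_q\sqrt K\eps^2)\bigr)$ is also the right one. The $d_q^2$ in the statement, and in the last line of the paper's proof, does not follow from $\beta=\frac{d_q\eps^2}{4d^2}$. It would also contradict \cref{thm:public_coin_lb}, so it should be read as a typo for the $2^{n_q+n_c/2}$ form.
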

\begin{proof}
    Let $X=\rho-\sigma$. Then, by \cref{thm:Haar_compression}, with constant probability over a public-coin choice of unitary $U$, we have that
    \begin{equation*}
        \|\mathcal{I}_U(X)\|_2^2\geq \frac{d_q}{4d}\|X\|_2^2.
    \end{equation*}
    If $\rho=\sigma$, this quantity equals zero, while if $\|\rho-\sigma\|_1 \geq \eps$ this is at least $\frac{1}{4}\cdot\frac{d_q\eps^2}{d}$ by~ \cref{cor:post_instrument_gap}. Recall that the algorithm set a threshold at $\beta/2$, where $\beta\eqdef\frac{1}{4}\cdot\frac{d_q\eps^2}{d}$. Next, by \cref{cor:Haar_double_good_spread} we know that with good probability, the quantum instrument $I_{U_i}$ outputs CQ states $\rho_i$ and $\tau_i$ that are spread over all classical strings. In other words, we have that 
    \begin{equation*}
        \chi_{\rho_i}, \chi_{\tau_i}\leq O(1/K).
    \end{equation*}
    Conditional on this happening, the CQ HS estimator then has variance, by \cref{thm:cq_hs_estimator}
    \begin{equation*}
        \Var[\hat\Gamma_i]\leq O\left(\frac{\Gamma_i}{\sqrt{K}t}+\frac{1}{Kt^2}\right).
    \end{equation*}
    Now we consider two separate cases. First, assume $\rho=\sigma$, so $\Var[\hat\Gamma_i]\leq O\left(\frac{1}{Kt^2}\right)$. Then, by Chebyshev's inequality, we have that
    \begin{equation*}
        \Pr[\hat\Gamma_i\geq \beta/2]\leq O\left(\frac{\Var[\hat\Gamma_i]}{\beta^2}\right)=O\left(\frac{1}{Kt^2\beta^2}\right),
    \end{equation*}
    which is at most $1/3$ by setting the constant in $t$ large enough. Considering the other case, if $\|\rho-\sigma\|_1\geq\eps$, then $\Gamma_i\geq\beta$, so $\Gamma_i-\beta/2\geq \Gamma_i/2$. Then applying Chebyshev's in this case, we get that
    \begin{align*}
        \Pr[\hat\Gamma_i<\beta/2]&\leq\Pr[|\hat\Gamma_i-\Gamma_i|\geq \Gamma_i/2]\\
        &\leq O\left(\frac{\Var[\hat\Gamma_i]}{\Gamma_i^2}\right)\\
        &=O\left(\frac{1}{\sqrt{K}t\Gamma_i}+\frac{1}{Kt^2\Gamma_i^2}\right)\\
        &\leq O\left(\frac{1}{\sqrt{K}t\beta}+\frac{1}{Kt^2\beta^2}\right),
    \end{align*}
    which again is at most $1/3$ by increasing the constant in $t$. Then, the total number of nodes used here is
    \begin{equation*}
        m=Lt=O\left(\frac{d^2}{d_q^2\sqrt{K}\eps^2}\right),
    \end{equation*}
    establishing the theorem.
\end{proof}
\section{Conclusion}
In this work, we have shown lower bounds for distributed quantum state certification in both the public-coin and private-coin models, where classical and quantum communication are restrained but permitted. We have also shown that these lower bounds are tight in the public-coin model. We leave several avenues for followup work.
\paragraph{Tight Private-Coin Algorithm Without Classical Communication.} The algorithm we exhibit for the private-coin model when classical communication is not permitted in \cref{thm:private_coin_no_classical_ub} is tight up to a logarithmic factor. We leave it open to see whether this can be improved: note that this logarithmic factor is due to the use of the Matrix Chernoff bound to show that a list of good unitaries (\cref{thm:good_unitaries}) to argue the existence of a ``good'' set of unitaries. We believe our analysis of~\cref{thm:good_unitaries} to be tight in that respect, and so it seems a new idea would be needed.
\paragraph{Private-Coin Algorithms in the General Setting.} We also leave open the broader question of designing a private-coin algorithm when both classical and quantum communication are allowed. As we remarked above, the reason for why our public-coin methods do not generalize to the private-coin setting is that they only hold for fixed states. In other words, we can find good quantum instruments for fixed states, but the private-coin would require us to find quantum instruments that are good uniformly over all states. We believe a new approach is needed to generalize the result to the mixed communication setting: for instance, one could try to adapt our private-coin algorithm with no classical communication, to prove a list of good unitaries exist, and combine it with our refined analysis of the BOW estimator as a starting point.
\paragraph{Unequal Samples, Messages and Mixed Coins.} Recently, Canonne \etal~extended the problem of distributed, communication constrained Gaussian mean estimation in the classical setting by considering generalizations where the distributed nodes share a small number of random bits, rather than sharing all their randomness, where each distributed node holds a different number of samples, and where each user can send a different number of bits to the central node \cite{canonne2026distributedgaussianmeantesting}. These generalizations could apply here as well, and it would be interesting to see what results are obtained under these generalizations.
\paragraph{Acknowledgments.} The author would like to thank Ryan Sweke and Cl\'{e}ment Canonne for helpful discussions. The author would additionally like to thank Cl\'{e}ment Canonne for comments and assistance on the draft.
\printbibliography

@misc{doosti2026distributedquantumpropertytesting,
      title={Distributed Quantum Property Testing with Communication Constraints}, 
      author={Mina Doosti and Ryan Sweke and Chirag Wadhwa},
      year={2026},
      eprint={2604.05962},
      archivePrefix={arXiv},
      primaryClass={quant-ph},
      url={https://arxiv.org/abs/2604.05962}, 
}

@article{TemmeDivergence,
    author = {Temme, K. and Kastoryano, M. J. and Ruskai, M. B. and Wolf, M. M. and Verstraete, F.},
    title = {The  {$\chi^2$}-divergence and mixing times of quantum Markov processes},
    journal = {Journal of Mathematical Physics},
    volume = {51},
    number = {12},
    pages = {122201},
    year = {2010},
    issn = {0022-2488},
    doi = {10.1063/1.3511335},
    url = {https://doi.org/10.1063/1.3511335},
    eprint = {https://pubs.aip.org/aip/jmp/article-pdf/doi/10.1063/1.3511335/13353580/122201_1_online.pdf},
}

@misc{odonnell2025instanceoptimalquantumstatecertification,
      title={Instance-Optimal Quantum State Certification with Entangled Measurements}, 
      author={Ryan O'Donnell and Chirag Wadhwa},
      year={2025},
      eprint={2507.06010},
      archivePrefix={arXiv},
      primaryClass={quant-ph},
      url={https://arxiv.org/abs/2507.06010}, 
}

@book{Wilde_2016,
   title={Quantum Information Theory},
   ISBN={9781316809976},
   url={http://dx.doi.org/10.1017/9781316809976},
   DOI={10.1017/9781316809976},
   publisher={Cambridge University Press},
   author={Wilde, Mark M.},
   year={2016},
   month=Nov }

@misc{hashim2026understandingquantuminstruments,
      title={Understanding Quantum Instruments}, 
      author={Akel Hashim},
      year={2026},
      eprint={2604.18884},
      archivePrefix={arXiv},
      primaryClass={quant-ph},
      url={https://arxiv.org/abs/2604.18884}, 
}

@inproceedings{DBLP:conf/stoc/BadescuO019,
  author       = {Costin Badescu and
                  Ryan O'Donnell and
                  John Wright},
  editor       = {Moses Charikar and
                  Edith Cohen},
  title        = {Quantum state certification},
  booktitle    = {Proceedings of the 51st Annual {ACM} {SIGACT} Symposium on Theory
                  of Computing, {STOC} 2019, Phoenix, AZ, USA, June 23-26, 2019},
  pages        = {503--514},
  publisher    = {{ACM}},
  year         = {2019},
  url          = {https://doi.org/10.1145/3313276.3316344},
  doi          = {10.1145/3313276.3316344},
  bibsource    = {dblp computer science bibliography, https://dblp.org}
}

@article{Tropp_2011,
   title={User-Friendly Tail Bounds for Sums of Random Matrices},
   volume={12},
   ISSN={1615-3383},
   url={http://dx.doi.org/10.1007/s10208-011-9099-z},
   DOI={10.1007/s10208-011-9099-z},
   number={4},
   journal={Foundations of Computational Mathematics},
   publisher={Springer Science and Business Media LLC},
   author={Tropp, Joel A.},
   year={2011},
   month=Aug, pages={389–434} }

@article{DBLP:journals/cn/CaleffiAFIMC24,
  author       = {Marcello Caleffi and
                  Michele Amoretti and
                  Davide Ferrari and
                  Jessica Illiano and
                  Antonio Manzalini and
                  Angela Sara Cacciapuoti},
  title        = {Distributed quantum computing: {A} survey},
  journal      = {Comput. Networks},
  volume       = {254},
  pages        = {110672},
  year         = {2024},
  url          = {https://doi.org/10.1016/j.comnet.2024.110672},
  doi          = {10.1016/J.COMNET.2024.110672},
  bibsource    = {dblp computer science bibliography, https://dblp.org}
}

@article{DBLP:journals/tit/AcharyaCLST22,
  author       = {Jayadev Acharya and
                  Cl{\'{e}}ment L. Canonne and
                  Yuhan Liu and
                  Ziteng Sun and
                  Himanshu Tyagi},
  title        = {Interactive Inference Under Information Constraints},
  journal      = {{IEEE} Trans. Inf. Theory},
  volume       = {68},
  number       = {1},
  pages        = {502--516},
  year         = {2022},
  url          = {https://doi.org/10.1109/TIT.2021.3123905},
  doi          = {10.1109/TIT.2021.3123905},
  bibsource    = {dblp computer science bibliography, https://dblp.org}
}

@article{DBLP:journals/corr/abs-2401-09650,
  author       = {Yuhan Liu and
                  Jayadev Acharya},
  title        = {The role of shared randomness in quantum state certification with
                  unentangled measurements},
  journal      = {CoRR},
  volume       = {abs/2401.09650},
  year         = {2024},
  url          = {https://doi.org/10.48550/arXiv.2401.09650},
  doi          = {10.48550/ARXIV.2401.09650},
  eprinttype   = {arXiv},
  eprint       = {2401.09650},
  bibsource    = {dblp computer science bibliography, https://dblp.org}
}

@misc{AlAssal2014InvitationLieAlgebras,
  author       = {Al Assal, Fernando},
  title        = {Invitation to Lie Algebras and Representations},
  year         = {2014},
  note         = {University of Chicago REU paper},
  url          = {https://math.uchicago.edu/~may/REU2014/REUPapers/AlAssal.pdf},
}

@misc{Morgan2022SimpleLieAlgebrasCompactLieGroups,
  author       = {Morgan, John W.},
  title        = {Simple Lie Algebras and of Compact Lie Groups},
  year         = {2022},
  note         = {Lie Groups: Fall 2022, Lecture VI},
  url          = {https://www.math.columbia.edu/~jmorgan/LieGpLectureVI.pdf},
}

@article{HANSEN_2003,
   title={Jensen's operator inequality},
   volume={35},
   ISSN={1469-2120},
   url={http://dx.doi.org/10.1112/S0024609303002200},
   DOI={10.1112/s0024609303002200},
   number={04},
   journal={Bulletin of the London Mathematical Society},
   publisher={Wiley},
   author={Hansen, Frank and Pederson, Gert K.},
   year={2003}, 
   pages={553–564} 
}

@article{DBLP:journals/quantum/Mele24,
  author       = {Antonio Anna Mele},
  title        = {Introduction to Haar Measure Tools in Quantum Information: {A} Beginner's
                  Tutorial},
  journal      = {Quantum},
  volume       = {8},
  pages        = {1340},
  year         = {2024},
  url          = {https://doi.org/10.22331/q-2024-05-08-1340},
  doi          = {10.22331/Q-2024-05-08-1340},
  bibsource    = {dblp computer science bibliography, https://dblp.org}
}

@article{Collins_2010,
   title={Random Quantum Channels I: Graphical Calculus and the Bell State Phenomenon},
   volume={297},
   ISSN={1432-0916},
   url={http://dx.doi.org/10.1007/s00220-010-1012-0},
   DOI={10.1007/s00220-010-1012-0},
   number={2},
   journal={Communications in Mathematical Physics},
   publisher={Springer Science and Business Media LLC},
   author={Collins, Benoît and Nechita, Ion},
   year={2010},
   month=Feb, pages={345–370} 
}

@article{steele2004paley,
  title={The Paley-Zygmund argument and three variations},
  author={Steele, M},
  journal={Class note},
  year={2004}
}

@misc{etingof2011introductionrepresentationtheory,
      title={Introduction to representation theory}, 
      author={Pavel Etingof and Oleg Golberg and Sebastian Hensel and Tiankai Liu and Alex Schwendner and Dmitry Vaintrob and Elena Yudovina},
      year={2011},
      eprint={0901.0827},
      archivePrefix={arXiv},
      primaryClass={math.RT},
      url={https://arxiv.org/abs/0901.0827}, 
}

@article{collins2006integration,
  title={Integration with respect to the Haar measure on unitary, orthogonal and symplectic group},
  author={Collins, Beno\^it and {\'S}niady, Piotr},
  journal={Communications in Mathematical Physics},
  volume={264},
  number={3},
  pages={773--795},
  year={2006},
  publisher={Springer}
}

@inproceedings{DBLP:conf/stoc/ODonnellW15,
  author       = {Ryan O'Donnell and
                  John Wright},
  editor       = {Rocco A. Servedio and
                  Ronitt Rubinfeld},
  title        = {Quantum Spectrum Testing},
  booktitle    = {Proceedings of the Forty-Seventh Annual {ACM} on Symposium on Theory
                  of Computing, {STOC} 2015, Portland, OR, USA, June 14-17, 2015},
  pages        = {529--538},
  publisher    = {{ACM}},
  year         = {2015},
  url          = {https://doi.org/10.1145/2746539.2746582},
  doi          = {10.1145/2746539.2746582},
  bibsource    = {dblp computer science bibliography, https://dblp.org}
}

@InProceedings{pmlr-v247-liu24a,
  title = 	 {The role of randomness in quantum state certification with unentangled measurements},
  author =       {Liu, Yuhan and Acharya, Jayadev},
  booktitle = 	 {Proceedings of Thirty Seventh Conference on Learning Theory},
  pages = 	 {3523--3555},
  year = 	 {2024},
  editor = 	 {Agrawal, Shipra and Roth, Aaron},
  volume = 	 {247},
  series = 	 {Proceedings of Machine Learning Research},
  publisher =    {PMLR},
  url = 	 {https://proceedings.mlr.press/v247/liu24a.html},
}

@article{DBLP:journals/tit/AcharyaCT20,
  author       = {Jayadev Acharya and
                  Cl{\'{e}}ment L. Canonne and
                  Himanshu Tyagi},
  title        = {Inference Under Information Constraints {I:} Lower Bounds From Chi-Square
                  Contraction},
  journal      = {{IEEE} Trans. Inf. Theory},
  volume       = {66},
  number       = {12},
  pages        = {7835--7855},
  year         = {2020},
  url          = {https://doi.org/10.1109/TIT.2020.3028440},
  doi          = {10.1109/TIT.2020.3028440},
  bibsource    = {dblp computer science bibliography, https://dblp.org}
}

@book{355459,
author = {Peleg, David},
title = {Distributed computing: a locality-sensitive approach},
year = {2000},
isbn = {0898714648},
publisher = {Society for Industrial and Applied Mathematics},
address = {USA}
}

@article{DBLP:journals/corr/abs-2602-15529,
  author       = {Fabien Dufoulon and
                  Fr{\'{e}}d{\'{e}}ric Magniez and
                  Gopal Pandurangan},
  title        = {Tight Communication Bounds for Distributed Algorithms in the Quantum
                  Routing Model},
  journal      = {CoRR},
  volume       = {abs/2602.15529},
  year         = {2026},
  url          = {https://doi.org/10.48550/arXiv.2602.15529},
  doi          = {10.48550/ARXIV.2602.15529},
  eprinttype   = {arXiv},
  eprint       = {2602.15529},
  bibsource    = {dblp computer science bibliography, https://dblp.org}
}

@inproceedings{10.1145/3662158.3662767,
author = {Fraigniaud, Pierre and Luce, Ma\"{e}l and Magniez, Frederic and Todinca, Ioan},
title = {Even-Cycle Detection in the Randomized and Quantum CONGEST Model},
year = {2024},
isbn = {9798400706684},
publisher = {Association for Computing Machinery},
address = {New York, NY, USA},
url = {https://doi.org/10.1145/3662158.3662767},
doi = {10.1145/3662158.3662767},
booktitle = {Proceedings of the 43rd ACM Symposium on Principles of Distributed Computing},
pages = {209–219},
numpages = {11},
series = {PODC '24}
}

@inproceedings{DBLP:conf/focs/Chen0HL22,
  author       = {Sitan Chen and
                  Jerry Li and
                  Brice Huang and
                  Allen Liu},
  title        = {Tight Bounds for Quantum State Certification with Incoherent Measurements},
  booktitle    = {63rd {IEEE} Annual Symposium on Foundations of Computer Science, {FOCS}
                  2022, Denver, CO, USA, October 31 - November 3, 2022},
  pages        = {1205--1213},
  publisher    = {{IEEE}},
  year         = {2022},
  url          = {https://doi.org/10.1109/FOCS54457.2022.00118},
  doi          = {10.1109/FOCS54457.2022.00118},
  bibsource    = {dblp computer science bibliography, https://dblp.org}
}

@article{DBLP:journals/corr/abs-2408-17439,
  author       = {Yuhan Liu and
                  Jayadev Acharya},
  title        = {Quantum state testing with restricted measurements},
  journal      = {CoRR},
  volume       = {abs/2408.17439},
  year         = {2024},
  url          = {https://doi.org/10.48550/arXiv.2408.17439},
  doi          = {10.48550/ARXIV.2408.17439},
  eprinttype   = {arXiv},
  eprint       = {2408.17439},
  bibsource    = {dblp computer science bibliography, https://dblp.org}
}

@misc{canonne2026distributedgaussianmeantesting,
      title={Distributed Gaussian Mean Testing under Communication Constraints: messages, samples, and coins}, 
      author={Clément L. Canonne and Nimitt},
      year={2026},
      eprint={2605.29426},
      archivePrefix={arXiv},
      primaryClass={cs.DS},
      url={https://arxiv.org/abs/2605.29426}, 
}
\appendix
\section{Proof of \cref{lemma:instrument_minimax_maximin_bounds}}\label{app:lemma_instrument_minimax_maximin_bounds}
Here, we adapt the proof of Doosti \etal~to prove \cref{lemma:instrument_minimax_maximin_bounds}.
\begin{proof}
    In the public-coin setting, we will treat the action of distributed nodes as a set of quantum instruments $\{\mathcal{I}_i\}_{i\in[m]}$, where each $\mathcal{I}_i$ can depend on some shared random string $\mathbf{r}$. Let $\mathcal{A}(\rho,\mathbf{r})$ be the output of the referee node. Let $m$ be large enough to succeed with probability at least $3/4$. Then, we have that
    \begin{align*}
        \frac{1}{2}\Pr_{\rho\sim D}[\mathcal{A}(\rho,\mathbf{r})=\reject]+\frac{1}{2}\Pr_{\rho=I/d}[\mathcal{A}(\rho,\mathbf{r})=\accept]&\geq \frac{1}{2}\cdot\frac{3}{4}\cdot\Pr_{\rho\sim D}\left[\|\rho-I/d\|_1\geq\eps\right]+\frac{1}{2}\cdot\frac{3}{4}\\
        &\geq \frac{9}{16}.
    \end{align*}
    Then, for any $D$, there exists some random string $\mathbf{r}=r$ such that
    \begin{equation*}
        \frac{1}{2}\Pr_{\rho\sim D}[\mathcal{A}(\rho,\mathbf{r})=\reject|\mathbf{r}=r]+\frac{1}{2}\Pr_{\rho=I/d}[\mathcal{A}(\rho,\mathbf{r})=\accept|\mathbf{r}=r]\geq \frac{9}{16}.
    \end{equation*}
    Then, for each $D$, there is a deterministic choice of quantum instruments $\mathcal{I}_1^{(r)},...,\mathcal{I}_m^{(r)}$ such that the referee node can distinguish between the equally like cases of $\rho\sim D$ and $\rho=I/d$ with probability at least 9/16. Then, if we define
    \begin{equation*}
        P_r=\mathbb{E}_{\rho\sim D}\left[\bigotimes_{i=1}^m\mathcal{I}_i^{(r)}(\rho)\right],~Q_r=\bigotimes_{i=1}^m\mathcal{I}_i^{(r)}(I/d),
    \end{equation*}
    Although in this setting, the referee node obtains both a classical string and a quantum state, we can view the classical string as being stored in a qubit register, meaning we can still view the output of the quantum instrument as a quantum object with a density matrix. Then, we can still apply the Helstrom bound, meaning the optimal success probability for distinguishing equally likely $P_r$ and $Q_r$ is given by 
    \begin{equation*}
        \frac{1}{2}+\frac{1}{4}\|P_r-Q_r\|_1\geq \frac{9}{16}.
    \end{equation*}
    Using \cref{lemma:1_norm_qchi_inequality}, we get the bound for the public coin. 
    
    Moving to the private-coin model, each distributed node has its own internal randomness, $r_1,...,r_m$. Then, fixing the random strings each distributed node gets, this determines fixed quantum instruments $\mathcal{I}_1^{r_1},...,\mathcal{I}_m^{r_m}$. Then, we can take an expectation over all random strings, and model the action of each distributed node as a deterministic instrument $\mathcal{I}_i=\mathbb{E}_{r_i}[\mathcal{I}_i^{r_i}]$. This is a valid quantum instrument, since it is an average of quantum instruments, and hence is completely positive and trace preserving. Then, we can define for these instruments
    \begin{equation*}
        P=\mathbb{E}_{\rho\sim D}\left[\bigotimes_{i=1}^m\mathcal{I}_i(\rho)\right],~Q=\bigotimes_{i=1}^m\mathcal{I}_i(I/d).
    \end{equation*}
    By the same argument as in the public-coin case, we must have that for any $D$,
    \begin{equation*}
        \|P-Q\|_1\geq \frac{1}{4}.
    \end{equation*}
    Since this holds true for any $D$, it holds true for the minimum:
    \begin{equation*}
        \min_{D\in\mathcal{D}_\eps(I_d/d)}\|P-Q\|_1\geq \frac{1}{4}.
    \end{equation*}
    Finally, we maximize over all possible instruments, giving that
    \begin{equation*}
        \max_{\mathcal{I}_1,...,\mathcal{I}_m}\min_{D\in\mathcal{D}_\eps(I_d/d)}\QChi{\mathbb{E}_{\rho\sim D}\left[\bigotimes_{i=1}^m\mathcal{I}_i(\rho)\right]}{\bigotimes_{i=1}^m\mathcal{I}_i(I_d/d)}\geq\frac{1}{16}.
    \end{equation*}
    Using \cref{lemma:1_norm_qchi_inequality} again gives the result for private coins.
\end{proof}
\section{Symmetrized Quantum Ingster-Suslina Lemma}\label{app:symmetrized_quantum_ingster_suslina}
We prove in this section \cref{lemma:quantum_ingster-suslina_symm}, by following the work of O'Donnell \etal, but adapting the analysis for the different $\chi^2$-divergence \cite{odonnell2025instanceoptimalquantumstatecertification}.
\begin{proof}
    Let $R_\theta=\rho_\sigma^{(m)}$, $S=\sigma^{(m)}$, $P=\mathbb{E}_\theta[R_\theta]$ for convenience. Then, notice by the definition of the $\chi^2$ divergence, we have that
    \begin{align*}
        \QChi{\mathbb{E}_\theta[\rho^{(m)}_{\sigma}]}{\sigma^{(m)}}&=\QChi{P_\theta}{S}\\
        &=\tr[S^{-1/2}P_\theta S^{-1/2}P_{\theta'}]-1\\
        1+\QChi{P_\theta}{S}&=\tr[S^{-1/2}P_\theta S^{-1/2}P_{\theta'}]\\
        &=\tr[S^{-1/2}E_\theta[R_\theta] S^{-1/2}E_{\theta'}[R_{\theta'}]]\\
        &=\mathbb{E}_{\theta,\theta'}[\tr[S^{-1/2}R_\theta S^{-1/2}R_{\theta'}]]\\
        &=\mathbb{E}_{\theta,\theta'}\left[\tr\left[\bigotimes_{i=1}^m\sigma_i^{-1/2}\rho_{i,\theta}\sigma_{i}^{-1/2}\rho_{i,\theta'}\right]\right]\\
        &=\mathbb{E}_{\theta,\theta'}\left[\prod_{i=1}^m\tr[\sigma_i^{-1/2}\rho_{i,\theta}\sigma_{i}^{-1/2}\rho_{i,\theta'}]\right],
    \end{align*}
    where in line six, we revert to the full expressions and  use that the product of tensor products is the tensor product of products, and in line seven, we use that the trace of tensor products is the product of traces. Now, let $\Delta_{i,\theta}=\rho_{i,\theta}-\sigma_i$, so $\rho_{i,\theta}=\sigma_i+\Delta_{i,\theta}$. Then notice, we have that
    \begin{equation*}
        \tr[\Delta_{i,\theta}]=\tr[\rho_{i,\theta}-\sigma_i]=\tr[\rho_{i,\theta}]-\tr[\sigma_i]=0,
    \end{equation*}
    since both terms are quantum states. We can now rewrite the trace above in terms of $\Delta$ and $\sigma$:
    \begin{align*}
        \tr[\sigma_i^{-1/2}\rho_{i,\theta}\sigma_{i}^{-1/2}\rho_{i,\theta'}]&=\tr[\sigma_i^{-1/2}(\sigma_i+\Delta_{i,\theta})\sigma_i^{-1/2}(\sigma_i+\Delta_{i,\theta'})]\\
        &=\tr[\sigma_i^{-1/2}\sigma_i\sigma_i^{-1/2}\sigma_i]+\tr[\sigma_i^{-1/2}\sigma_i\sigma_i^{-1/2}\Delta_{i,\theta'}]\\
        &+\tr[\sigma_i^{-1/2}\Delta_{i,\theta}\sigma_i^{-1/2}\sigma_i]+\tr[\sigma_i^{-1/2}\Delta_{i,\theta}\sigma_i^{-1/2}\Delta_{i,\theta'}],
    \end{align*}
    where in the last line we expand and use linearity of traces. Notice now that we have, considering each of the four terms,
    \begin{align*}
        \tr[\sigma_i^{-1/2}\sigma_i\sigma_i^{-1/2}\sigma_i]=\tr[\sigma_i]&=1\\
        \tr[\sigma_i^{-1/2}\sigma_i\sigma_i^{-1/2}\Delta_{i,\theta'}]=\tr[\Delta_{i,\theta'}]&=0\\
        \tr[\sigma_i^{-1/2}\Delta_{i,\theta}\sigma_i^{-1/2}\sigma_i]=\tr[\Delta_{i,\theta}\sigma_i^{-1/2}\sigma_i\sigma_i^{-1/2}]=\tr[\Delta_{i,\theta}]&=0\\
        \tr[\sigma_i^{-1/2}\Delta_{i,\theta}\sigma_i^{-1/2}\Delta{i,\theta'}]&=Z_i(\theta,\theta').
    \end{align*}
    Hence, we have that
    \begin{align*}
        \tr[\sigma_i^{-1/2}\rho_{i,\theta}\sigma_{i}^{-1/2}\rho_{i,\theta'}]&=1+Z_i(\theta,\theta')\\
        1+\QChi{P_\theta}{S}&=\mathbb{E}_{\theta,\theta'}\left[\prod_{i=1}^m(1+Z_i(\theta,\theta'))\right].
    \end{align*}
    We can get the final inequality using that $1+x\leq e^x$ when $x\geq 0$.
\end{proof}
\section{BOW Estimator Analysis}\label{app:BOW_analysis}
We start by outlining the theorem we want to prove:
\begin{theorem}
    Given $m$ copies of a CQ state $\rho$, and full knowledge of a CQ state $\sigma$, there is an estimator $\hat{\Gamma}$ such that
    \begin{equation*}
        \mathbb{E}[\hat{\Gamma}]=\Gamma=D_{HS}^2(\rho,\sigma),
    \end{equation*}
    and 
    \begin{equation*}
        \Var(\hat\Gamma)\leq O\left(\frac{r_\infty\Gamma}{m}+\frac{\chi_r+\chi_s}{m^2}\right).
    \end{equation*}
\end{theorem}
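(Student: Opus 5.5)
The plan is to mimic the Badescu--O'Donnell--Wright construction, written as a U-statistic, while exploiting two facts: the classical register can be measured first without disturbance, and $\sigma$ is fully known. Write $\Gamma=\tr[\rho^2]-2\tr[\rho\sigma]+\tr[\sigma^2]$. The last term is computed exactly by the referee. For the other two I would use the observables
\[
  O_{ij}\eqdef \Pi_{\mathrm{eq}}^{(ij)}\,F_Q^{(ij)},\qquad \Pi_{\mathrm{eq}}^{(ij)}\eqdef \sum_{c=1}^K |c\rangle\langle c|_{C_i}\otimes|c\rangle\langle c|_{C_j},
\]
where $F_Q$ is the swap of the two quantum registers, together with $\sigma$ acting on a single copy. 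The estimator is
\[
  \hat\Gamma=\binom{m}{2}^{-1}\sum_{i<j}O_{ij}-\frac{2}{m}\sum_{i}\sigma^{(i)}+\tr[\sigma^2].
\]
Operationally, the referee reads all classical labels, so each $\sigma^{(i)}$ reduces to measuring $\sigma_{c_i}$ on the $i$-th quantum register, and each pair with $c_i=c_j$ gets a swap-type measurement. All the observables involved commute (the standard symmetric-subspace implementation used by BOW makes this consistent). Unbiasedness then follows because $\tr[(\rho\otimes\rho)\Pi_{\mathrm{eq}}F_Q]=\sum_c\tr[\rho_c^2]=\tr[\rho^2]$ and $\tr[\rho\,\sigma]=\sum_c\tr[\rho_c\sigma_c]$.

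For the variance I would use the Hoeffding decomposition of the order-two U-statistic with kernel $h=O_{12}-\sigma\otimes I-I\otimes\sigma+\tr[\sigma^2]$. This gives $\Var(\hat\Gamma)\le \frac{4}{m}\Var(h_1)+\frac{2}{m(m-1)}\mathbb{E}[h^2]$, where $h_1$ is the one-copy projection. Computing $h_1$ amounts to applying the partial trace against $\rho$ on one slot, which gives the single-copy observable $2(\rho-\sigma)$ up to a constant. Hence $\Var(h_1)\lesssim \tr[\rho(\rho-\sigma)^2]$. Because $\rho-\sigma=\sum_c(\rho_c-\sigma_c)\otimes|c\rangle\langle c|$ is block diagonal, this equals $\sum_c\tr[\rho_c(\rho_c-\sigma_c)^2]\le\sum_c\|\rho_c\|_\infty\|\rho_c-\sigma_c\|_2^2\le r_\infty\Gamma$. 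That is the first term of the claimed bound.

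For the degenerate part it suffices to bound $\mathbb{E}[h^2]$ term by term. Since $(\Pi_{\mathrm{eq}}F_Q)^2=\Pi_{\mathrm{eq}}$, we get $\tr[(\rho\otimes\rho)\Pi_{\mathrm{eq}}]=\sum_c r_c^2=\chi_r$. The $\sigma$ terms contribute $\tr[\rho\sigma^2]=\sum_c\tr[\rho_c\sigma_c^2]$ and $\tr[\sigma^2]^2=(\sum_c\tr\sigma_c^2)^2$. Using $\sigma_c\preceq s_c I$ and $\tr[\sigma_c^2]\le s_c^2$, these are bounded by $\sum_c r_c s_c^2$ and $\chi_s^2$. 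Both are at most $\chi_r+\chi_s$, by AM--GM and $\chi_s\le 1$. The cross terms are handled by Cauchy--Schwarz.

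The main obstacle is the bookkeeping in the linear term. One has to verify that the cross contributions between the swap part and the $\sigma$ part combine exactly into $\tr[\rho(\rho-\sigma)^2]$ rather than something like $\tr[\rho\sigma^2]$ alone, because only the combined form gives the factor $\Gamma$. My first attempt would be to compute $h_1$ directly as the operator $\tr_2[h(I\otimes\rho)]$ and check that it is $2(\rho-\sigma)$ plus a scalar. If the implementation of $O_{ij}$ via symmetric-subspace measurements introduces extra commutation terms, I would fall back on BOW's eigenvalue-based analysis, applied separately within each label block.
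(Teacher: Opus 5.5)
Your proposal is correct and is essentially the paper's proof. Your $O_{ij}$ and $\sigma^{(i)}$ are its $S_{ij}$ and $L_i$, your Hoeffding decomposition is its disjoint/identical/one-overlap case analysis, and your bounds $\tr[\rho(\rho-\sigma)^2]\le r_\infty\Gamma$ on the linear part and $\mathbb{E}[h^2]=O(\chi_r+\chi_s)$ on the degenerate part are exactly its bounds on $A_1$ and $A_2$.

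There are two small corrections. First, the individual $O_{ij}$ do not commute in general: on the block where all three labels agree, $O_{12}O_{13}$ and $O_{13}O_{12}$ act as mutually inverse $3$-cycles on the quantum registers. You do not need them to commute, though. Taking partial traces against $\rho$ one copy at a time gives $\tr[\rho^{\otimes 3}\tilde h_{12}\tilde h_{13}]=\tr[\rho\,\tilde h_1^2]$ for the centered kernels with no commutation required, and this is how the paper justifies the quantum version of the Hoeffding bound. Second, $h_1$ is $\rho-\sigma$ plus a scalar rather than $2(\rho-\sigma)$, which only changes constants.
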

Before going into the proof, it is worth overviewing the approach of Badescu \etal, since we will essentially mimic their arguments. To do this, we will need a few definitions and results on quantum probability, which we give now. These are all taken from \cite{DBLP:conf/stoc/BadescuO019}
\begin{definition}[Observable]
    An operator $X$, which is an endomorphism on $V$ (i.e., homomorphism from $V$ to $V$) is called an observable.
\end{definition}
\begin{definition}[Expectation, Covariance, Variance]
    Let $\rho$ be a quantum state on Hilbert space $V$. Then an observable $X$ has expectation
    \begin{equation*}
        \mathbb{E}_\rho[X]=\tr(\rho X).
    \end{equation*}
    For two such observables $X$ and $Y$, the covariance is defined as
    \begin{equation*}
        \Cov_\rho[X,Y]=\mathbb{E}_\rho[(X-\mu_XI)^\dagger(Y-\mu_YI)],
    \end{equation*}
    where $\mu_X=\mathbb{E}_\rho[X]$ and $\mu_Y=\mathbb{E}_\rho[Y]$. This also gives that $\Cov_\rho[X,Y]=\mathbb{E}_\rho[X^\dagger Y]-\mu_X^\dagger\mu_Y$. The variance is defined by
    \begin{equation*}
        \Var_\rho[X]=\Cov_\rho[X,X].
    \end{equation*}
    In particular, when $X$ is Hermitian, this recovers the recognizable result $\Var_\rho[X]=\mathbb{E}_\rho[X^2]=\mathbb{E}_\rho[X]^2$. When clear, we will also omit the subscripts. We will also use operator and observable interchangeably.
\end{definition}
We can then show the following properties.
\begin{lemma}\label{lemma:operator_expectation}
    Let $\rho$ and $\sigma$ be quantum states, with an operator $A$ acting on the first system, and $B$ on the second system. Then, $\mathbb{E}_{\rho\otimes\sigma}[A\otimes B]=\mathbb{E}_\rho[A]\mathbb{E}_\sigma[B]$.
\end{lemma}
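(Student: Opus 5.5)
The plan is to prove this directly from the definition $\mathbb{E}_\omega[X]=\tr(\omega X)$, using two standard facts about tensor products: the trace is multiplicative over tensor factors, and operators on different factors multiply factorwise. Write $\omega=\rho\otimes\sigma$ for the state on $V_1\otimes V_2$. Then
\begin{equation*}
    \mathbb{E}_{\rho\otimes\sigma}[A\otimes B]=\tr\big((\rho\otimes\sigma)(A\otimes B)\big).
\end{equation*}

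The first step is the mixed-product property, $(\rho\otimes\sigma)(A\otimes B)=(\rho A)\otimes(\sigma B)$. One can check it on elementary tensors $|u\rangle\otimes|v\rangle$, extend by linearity, or see it entrywise in the Kronecker product representation. The second step is the identity $\tr(X\otimes Y)=\tr(X)\tr(Y)$. This follows by summing the diagonal entries in a product basis $\{|i\rangle\otimes|j\rangle\}$:
\begin{equation*}
    \sum_{i,j}\langle i|X|i\rangle\langle j|Y|j\rangle=\tr(X)\tr(Y).
\end{equation*}
Combining the two steps gives
\begin{equation*}
    \tr\big((\rho\otimes\sigma)(A\otimes B)\big)=\tr(\rho A)\,\tr(\sigma B)=\mathbb{E}_\rho[A]\,\mathbb{E}_\sigma[B].
\end{equation*}

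I expect no real obstacle. The only thing to be careful about is that the identification of $A$ (acting on the first system) with $A\otimes I$ on the joint space is consistent with the notation $A\otimes B$, and it is, since $A\otimes B=(A\otimes I)(I\otimes B)$. The argument needs no Hermiticity or positivity assumptions, so it holds for arbitrary observables $A,B$. It also extends verbatim to more than two factors by induction, which is how it will be used for the $m$-fold tensor powers in the variance calculations that follow.
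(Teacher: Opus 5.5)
Your proof is correct and takes the same approach as the paper, which writes $\tr[(\rho\otimes\sigma)(A\otimes B)]=\tr[\rho A]\tr[\sigma B]$ directly from the definition. You additionally justify that step by spelling out the mixed-product property and the multiplicativity of the trace over tensor factors.
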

\begin{proof}
    Just from the definition, we have
    \begin{equation*}
        \tr[(\rho\otimes\sigma)(A\otimes B)]=\tr[\rho A]\tr[\sigma B],
    \end{equation*}
    as required.
\end{proof}
This also gives the following covariance result:
\begin{corollary}\label{cor:operator_covariance}
    If $X$ and $Y$ act on disjoint spaces, then $\Cov_{\rho\otimes\sigma}[X,Y]=\Cov[X,Y]=0$.
\end{corollary}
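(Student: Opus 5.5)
The plan is to reduce the claim to \cref{lemma:operator_expectation} together with the closed form of the covariance recorded just after the definition. Set $\Cov_{\rho\otimes\sigma}[X,Y]=\mathbb{E}_{\rho\otimes\sigma}[X^\dagger Y]-\mu_X^\dagger\mu_Y$, and interpret "acting on disjoint spaces" as meaning $X = A\otimes I$ and $Y = I\otimes B$. Here $A$ acts on the first system (carrying $\rho$) and $B$ on the second (carrying $\sigma$).

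First, compute the means. Applying \cref{lemma:operator_expectation} with the identity in the other factor gives
\[
\mu_X=\mathbb{E}_{\rho\otimes\sigma}[A\otimes I]=\mathbb{E}_\rho[A]\,\tr[\sigma]=\mathbb{E}_\rho[A],
\]
and similarly $\mu_Y=\mathbb{E}_\sigma[B]$. Both use that $\rho$ and $\sigma$ are normalized.

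Next, compute the cross term. We have $X^\dagger Y=(A^\dagger\otimes I)(I\otimes B)=A^\dagger\otimes B$. Another application of \cref{lemma:operator_expectation} then yields
\[
\mathbb{E}_{\rho\otimes\sigma}[X^\dagger Y]=\mathbb{E}_\rho[A^\dagger]\,\mathbb{E}_\sigma[B].
\]

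Finally, observe that $\mathbb{E}_\rho[A^\dagger]=\tr(\rho A^\dagger)=\overline{\tr(A\rho)}=\mu_X^\dagger$, using that $\rho$ is Hermitian. Hence the two terms in the covariance formula coincide and cancel, giving $\Cov=0$. The second equality in the statement, $\Cov[X,Y]=0$, is simply the same quantity with the state subscript dropped, as the paper's convention allows.

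There is no real obstacle here. The only point needing care is the conjugation step: the mean of $A^\dagger$ must equal the complex conjugate of the mean of $A$, which relies on hermiticity of $\rho$ and on cyclicity of the trace. Everything else is a direct application of \cref{lemma:operator_expectation}.
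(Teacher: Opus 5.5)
Your proposal is correct and follows the paper's (implicit) argument: the paper gives no separate proof and presents the corollary as an immediate consequence of \cref{lemma:operator_expectation}, which factors $\mathbb{E}_{\rho\otimes\sigma}[X^\dagger Y]$ into a product of means so that it cancels against $\mu_X^\dagger\mu_Y$. You also make explicit two details the paper leaves unstated: that the identity factors contribute $\tr[\sigma]=1$ to the means, and that $\mathbb{E}_\rho[A^\dagger]=\overline{\mathbb{E}_\rho[A]}$ because $\rho$ is Hermitian.
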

To recap the proof of the BOW estimator (not in full generality), we assume one is given $m$ copies of an unknown quantum state $\rho$ and $m$ copies of a known quantum state $\sigma$. Then, the observable $\mathcal{O}_{(\rho\rho)}-2\mathcal{O}_{(\rho\sigma)}+\mathcal{O}_{(\sigma\sigma)}$ is an estimator for the HS distance between $\rho$ and $\sigma$. To explain this notation, recall this observable will be measured on the product state $\rho^{\otimes m}\otimes\sigma^{\otimes m}$, which we can identify with indices in $\{1,...,m,m+1,...,2m\}$. Each subscript identifies an orbit, which is a collection of cycles. For example $(\rho\rho)$ identifies the orbit which is all cycles which swap one index in $\{1,...,m\}$ with another, and $(\rho\sigma)$ identifies the orbit which is all cycles which swap one index in $\{1,...,m\}$ with one in $\{m+1,...2m\}$. Importantly, Badescu \etal~show this observable estimates the trace
\begin{lemma}[Corollary 4.24, \cite{DBLP:conf/stoc/BadescuO019}]
    $\mathcal{O}_{(\rho\sigma)}$ is an (efficient) estimator for $\tr[\rho\sigma]$.
\end{lemma}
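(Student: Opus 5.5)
The plan is to unpack $\mathcal{O}_{(\rho\sigma)}$ as an explicit average of swap operators, compute its expectation on $\rho^{\otimes m}\otimes\sigma^{\otimes m}$ term by term, and then argue separately that it can be measured efficiently. Concretely, I take $\mathcal{O}_{(\rho\sigma)}\eqdef\frac{1}{m^2}\sum_{i=1}^{m}\sum_{j=m+1}^{2m}F_{(i\,j)}$. Here $F_{(i\,j)}$ is the operator swapping tensor factors $i$ and $j$ and acting as the identity elsewhere; this is the average over the orbit of transpositions described in the text. Each $F_{(i\,j)}$ is Hermitian, since it is a self-inverse permutation matrix, so $\mathcal{O}_{(\rho\sigma)}$ is a bona fide Hermitian observable in the sense of the definitions above.

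The expectation follows from linearity of $\mathbb{E}_{\rho^{\otimes m}\otimes\sigma^{\otimes m}}[\cdot]=\tr[(\rho^{\otimes m}\otimes\sigma^{\otimes m})\,\cdot\,]$. For a single term, write $F_{(i\,j)}=F\otimes I_{\text{rest}}$, where $F$ acts on the pair of factors $(i,j)$. By \cref{lemma:operator_expectation}, the expectation factors into the expectation of $F$ on $\rho\otimes\sigma$ times the expectation of $I$ on the remaining factors, and the latter equals $1$. The remaining quantity is the swap-trick identity already used in the proof of \cref{lemma:Haar_pair}, namely $\tr[F(A\otimes B)]=\tr[AB]$, which gives $\tr[F(\rho\otimes\sigma)]=\tr[\rho\sigma]$. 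Every one of the $m^2$ terms therefore contributes exactly $\tr[\rho\sigma]$, and averaging yields $\mathbb{E}[\mathcal{O}_{(\rho\sigma)}]=\tr[\rho\sigma]$, i.e.\ the estimator is unbiased. The same computation applied to $\mathcal{O}_{(\rho\rho)}$ and $\mathcal{O}_{(\sigma\sigma)}$ gives $\tr[\rho^2]$ and $\tr[\sigma^2]$. Combining the three shows that $\mathcal{O}_{(\rho\rho)}-2\mathcal{O}_{(\rho\sigma)}+\mathcal{O}_{(\sigma\sigma)}$ is unbiased for $D_{HS}^2(\rho,\sigma)$, which is the form in which the lemma will be used.

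The main obstacle is the word ``efficient'', i.e.\ showing that this observable can actually be measured with polynomial overhead. The transpositions $F_{(i\,j)}$ do not commute with one another, so one cannot simply measure all of them simultaneously and average the outcomes. My first attempt would avoid the issue for the purpose of estimating the mean: pick a uniformly random pairing between the first $m$ and last $m$ factors and run a swap test on each of the $m$ resulting disjoint pairs. Each swap test uses one ancilla qubit, a controlled-swap and a Hadamard measurement, and it returns a $\pm1$ outcome whose expectation is $\tr[F(\rho\otimes\sigma)]$. The random pairing symmetrizes over the orbit, so the average of the $\pm1$ outcomes has the same mean as $\mathcal{O}_{(\rho\sigma)}$. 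The cost of this shortcut is a different (worse) variance than the one Badescu \etal\ get. If the exact observable, with their variance, is required, I would instead follow their route. Because $\mathcal{O}_{(\rho\sigma)}$ lies in the algebra generated by $S_m\times S_m$-symmetrized permutations, it can be measured by performing weak Schur sampling and the associated Fourier transform over the symmetric group, which is efficient by known quantum Schur transform constructions. At that point I would simply cite their Corollary~4.24 rather than re-derive the implementation.
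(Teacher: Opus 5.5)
Your proposal is sound and does more than the paper does. The paper gives no proof of this statement: it imports it from Badescu \etal\ and uses it only to motivate the CQ estimator of \cref{app:BOW_analysis}. The part you prove, unbiasedness via linearity over the $m^2$ cross transpositions together with $\tr[F(A\otimes B)]=\tr[AB]$, is the same computation the paper carries out for its CQ analogue in \cref{lemma:cq_hs_estimator_expectation}. For the ``efficient'' part you, like the paper, ultimately rely on the citation. So nothing you claim goes beyond what the paper itself relies on.

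Two comments on your efficiency discussion.

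\textbf{The swap-test shortcut.} It has the right mean; any fixed pairing already does, so the random pairing is unnecessary on a product state. But it is not a measurement of $\mathcal{O}_{(\rho\sigma)}$, so it should not be offered as establishing the lemma in the sense in which it is used. Expand $\mathcal{O}_{(\rho\sigma)}^2$ over pairs of cross transpositions (identical, sharing the $\rho$-index, sharing the $\sigma$-index, disjoint). This gives
\[
\Var\bigl[\mathcal{O}_{(\rho\sigma)}\bigr]=\frac{1}{m}\Bigl(\Var_\rho[\sigma]+\Var_\sigma[\rho]\Bigr)+O\Bigl(\frac{1}{m^2}\Bigr),
\]
which is $O(1/m^2)$ at $\rho=\sigma=I/d$. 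By contrast, $m$ independent swap tests have variance $(1-\tr[\rho\sigma]^2)/m\approx 1/m$ there. The fact that the $1/m$ coefficient is a sum of quantum variances rather than a constant is what yields the $O(1/t^2+D_{HS}^2/t)$ bound of \cref{lemma:hilbert_schmidt_estimator}, once combined with $\mathcal{O}_{(\rho\rho)}$ and $\mathcal{O}_{(\sigma\sigma)}$. It is also what the paper's $A_1/m+A_2/m^2$ split in \cref{lemma:cq_hs_estimator_variance} imitates, and the swap-test average loses it.

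\textbf{The justification for weak Schur sampling.} Membership in the algebra of $S_m\times S_m$-symmetrized permutations is not by itself a reason the observable can be measured this way. That centralizer algebra is noncommutative once Littlewood--Richardson multiplicities exceed one. The right reason is the identity
\[
m^2\,\mathcal{O}_{(\rho\sigma)}=T_{2m}-T^{(1)}_m-T^{(2)}_m,
\]
where each $T$ is the sum of all transpositions on the indicated index set: all $2m$ registers, the $\rho$-block, and the $\sigma$-block respectively.
\begin{itemize}
\item Each $T$ is central in its group algebra.
\item Each acts on the isotypic component labelled $\lambda$ by the content sum $\mathrm{cont}(\lambda)=\sum(b-a)$, summed over the boxes of $\lambda$ in row $a$, column $b$.
\item The isotypic projectors of the three actions commute.
\end{itemize}
Hence measuring $(\lambda,\mu,\nu)$ by weak Schur sampling and outputting $\frac{1}{m^2}\bigl(\mathrm{cont}(\lambda)-\mathrm{cont}(\mu)-\mathrm{cont}(\nu)\bigr)$ realizes $\mathcal{O}_{(\rho\sigma)}$ exactly and efficiently.
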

This intuitively explains why the above operator estimates the squared Hilbert--Schmidt distance, since $\|\rho-\sigma\|_2^2=\tr[\rho^2]-2\tr[\rho\sigma]+\tr[\sigma^2]$. It is also worth pointing out that the subscripts denote how many copies of a state is used in one instance of the observable. That is the whole observable, for example $\mathcal{O}_{(\rho\rho)}$, operates on the product register above, but it is an average of all cycles in the orbit, and each cycle in this orbit acts on two copies of the state $\rho$. Similarly, each cycle in the orbit of $\mathcal{O}_{(\rho\sigma)}$ acts on one element of $\rho$ and one element of $\sigma$.

We aim to mirror this logic then, to produce our CQ estimator, with some simplifications, since we do not need the most general form. The first simplification we will make is that our operators will not operate on copies of $\sigma$, the known state. This is because the central node has full knowledge of $\sigma$, and hence can create this themselves. Hence, assume we are given $m$ independent copies of the unknown CQ state $\rho$. This means that the estimator will be an observable on $(Q\otimes C)^{\otimes m}$, measured on the product state $\rho^{\otimes m}$. Let the projector onto classical string $c$ be $\Pi_c=|c\rangle\langle c|$. Then to create this estimator, we essentially need to mimic the three observables used by Badescu \etal~to create the overall estimator observable. We also introduce some more definitions and results that we will need, to take advantage of the fact that all of our states are CQ states.
\begin{definition}[Partial Expectation]
    Assume we have two copies of a quantum state $\rho$, and we have an operator $X_{1,2}$ that acts on two copies of the state. In other words, it acts on $\rho_1\otimes\rho_2$, where all $\rho_i$ are identical (this is just for clarity). Then, define the partial trace over the second copy as follows:
    \begin{equation*}
        \mathbb{E}_{2}[X_{1,2}]=\tr_2[(I_1\otimes\rho_2)X_{1,2}],
    \end{equation*}
    where $\tr_2$ represents taking the trace over the second copy. Note we have, as one would expect, that $\mathbb{E}_1[\mathbb{E}_2[X_{1,2}]]=\mathbb{E}_{\rho_1\otimes\rho_2}[X_{1,2}]$. When clear from now on, we will also drop the subscripts on $\rho_i$ when the states are identical.
\end{definition}
Again, one could try to generalize this to operators that work over $M$ copies of a state. However, we will only ever need to take the partial expectation of operators that work only on two copies of a state, so we do not bother with this. One particular case we will be concerned about is when we have two different observables, which each operate on two copies of a state, but they overlap on one index. We prove a result about this now:
\begin{lemma}\label{lem:partial_operator_variance}
    Assume we have $m$ copies of a state $\rho$. Let $H_{1,2}$ and $H_{1,3}$ be two observables each operating on two copies of a state $\rho$, on copies $(1,2)$ and $(1,3)$ respectively. Denote their mean $\mu=\mathbb{E}_{\rho\otimes \rho}[H_{a,b}]$. Note since all copies are identical, this mean is the same for all $a\neq b$, and hence we do not put a subscript on it. Denote $\widetilde{H}_{a,b}=H_{a,b}-\mu I$. Let $G_1=\mathbb{E}_i[H_{1,i}]$, and $\widetilde{G}_1=G_1-\mu I$. Then, 
    \begin{equation*}
        \mathbb{E}_{\rho^{\otimes 3}}[\widetilde{H}_{1,2}\widetilde{H}_{1,3}]=\mathbb{E}_\rho[\widetilde{G}_1^2]  =\Var_\rho[G_1].  
    \end{equation*}
\end{lemma}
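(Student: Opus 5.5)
Fix the operator $\widetilde{H}_{1,2}\widetilde{H}_{1,3}$ on three copies, and integrate out copies $2$ and $3$ with the partial expectation just defined, leaving an operator on copy $1$ alone. Throughout, $\widetilde{H}_{1,2}$ acts as the identity on copy $3$, and $\widetilde{H}_{1,3}$ acts as the identity on copy $2$.

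\textbf{Factoring the expectation.} Write
\begin{equation*}
    \mathbb{E}_{\rho^{\otimes 3}}[\widetilde{H}_{1,2}\widetilde{H}_{1,3}]=\tr\big[(\rho_1\otimes\rho_2\otimes\rho_3)\,\widetilde{H}_{1,2}\widetilde{H}_{1,3}\big].
\end{equation*}
Take the partial trace over copy $3$ first. Since $\widetilde{H}_{1,2}$ is the identity on copy $3$, it can be pulled outside $\tr_3$, which gives $\tr_3[(I\otimes\rho_3)\widetilde{H}_{1,3}]=\mathbb{E}_3[\widetilde{H}_{1,3}]=\widetilde{G}_1$. Here I use that $\mathbb{E}_3[H_{1,3}]=G_1$ (by identical copies) and $\mathbb{E}_3[\mu I]=\mu I$. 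The result is $\tr_{1,2}[(\rho_1\otimes\rho_2)\widetilde{H}_{1,2}\widetilde{G}_1]$.

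Next take the trace over copy $2$. Now $\widetilde{G}_1$ is the identity on copy $2$, so it factors out on the right. The remaining piece $\tr_2[(I\otimes\rho_2)\widetilde{H}_{1,2}]$ equals $\widetilde{G}_1$, by the same reasoning. This leaves $\tr_1[\rho\,\widetilde{G}_1\widetilde{G}_1]=\mathbb{E}_\rho[\widetilde{G}_1^2]$.

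The one subtlety is the ordering of factors in these partial traces. The clean fact to invoke is that for an operator $A_{1}\otimes I_2$ and any $B_{12}$,
\begin{equation*}
    \tr_2[(I\otimes\rho_2)(A\otimes I)B]=A\,\tr_2[(I\otimes\rho_2)B],
\end{equation*}
together with its right-multiplication analogue. Both follow from the identity $\tr_2[(A\otimes I)Y]=A\,\tr_2[Y]$, and from $(I\otimes\rho_2)$ commuting with $A\otimes I$. Because the state is a product, no cyclicity across copy $1$ is needed.

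\textbf{Identifying the variance.} Since $\mathbb{E}_\rho[G_1]=\mathbb{E}_{\rho\otimes\rho}[H_{1,2}]=\mu$, which is the stated property of iterated partial expectations, we get $\widetilde{G}_1=G_1-\mathbb{E}_\rho[G_1]I$. By the definition of variance (with $G_1$ Hermitian, since $H$ is an observable and $\rho$ is Hermitian), $\mathbb{E}_\rho[\widetilde{G}_1^2]=\Var_\rho[G_1]$.

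\textbf{Main obstacle.} I expect the only real difficulty to be the careful bookkeeping of operator ordering in the partial traces. Since $H_{1,2}$ and $H_{1,3}$ need not commute on copy $1$, I would state the pull-out identity explicitly rather than appeal to cyclicity.
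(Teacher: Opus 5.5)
Your proof is correct and follows essentially the same route as the paper. You integrate out copy $3$ and then copy $2$ via partial expectations to reach $\mathbb{E}_\rho[\widetilde{G}_1^2]$, then use $\mathbb{E}_\rho[G_1]=\mu$ to identify this with $\Var_\rho[G_1]$; your explicit pull-out identity $\tr_2[(I\otimes\rho_2)(A\otimes I)B]=A\,\tr_2[(I\otimes\rho_2)B]$ and the Hermiticity check on $G_1$ simply make rigorous the step $\mathbb{E}_2[\mathbb{E}_3[\widetilde{H}_{1,2}\widetilde{H}_{1,3}]]=\mathbb{E}_2[\widetilde{H}_{1,2}]\,\mathbb{E}_3[\widetilde{H}_{1,3}]$ that the paper takes for granted.
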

\begin{proof}
    First, note we choose the indices $(1,2)$ and $(1,3)$ without loss of generality, since all copies of $\rho$ are the same. We can use any indices, so long as exactly one overlap, so for ease, we choose these. The observables share copy 1, and copies 2 and 3 of the state $\rho$ are independent under the product state $\rho_1\otimes\rho_2\otimes\rho_3$ (where the subscripts have been put in to indicate the different copies). Then, notice that taking the expectation over $\widetilde{H}_{1,i}$ the latter index in both $(1,2)$ and $(1,3)$ give $\widetilde{G}_1$. Then, we can compute the expectation:
    \begin{align*}
        \mathbb{E}_{\rho^{\otimes 3}}[\widetilde{H}_{1,2}\widetilde{H}_{1,3}]&=\mathbb{E}_1[\mathbb{E}_2[\mathbb{E}_3[\widetilde{H}_{1,2}\widetilde{H}_{1,3}]]]\\
        &=\mathbb{E}_1[\mathbb{E}_2[\widetilde{H}_{1,2}]\mathbb{E}_3[\widetilde{H}_{1,3}]]\\
        &=\mathbb{E}_1[\widetilde{G}_1^2]\\
        &=\mathbb{E}_\rho[\widetilde{G}_1^2].
    \end{align*}
    To get the second part of the equality, notice that $\mathbb{E}_\rho[G_1]=\mu$. Then,
    \begin{align*}
        \mathbb{E}_\rho[\widetilde{G}_1^2]&=\mathbb{E}_\rho[(G_1-\mu I)^2]\\
        &=\mathbb{E}_\rho[G_1^2]-2\mu I\mathbb{E}_\rho[G_1]+\mu^2I\\
        &=\mathbb{E}_\rho[G_1^2]-\mathbb{E}_\rho[G_1]^2\\
        &=\Var_\rho[G_1].
    \end{align*}
\end{proof}
We aim first to create the observable that uses two copies of $\rho$. In other words, we want to estimate $\tr[\rho^2_{CQ}]$. However, we make an observation here first. Note that 
\begin{equation*}
    \tr[\rho^2]=\tr[\rho_{CQ}^2]=\sum_c\tr[\rho^2_c].
\end{equation*}
This is because the classical registers are orthonormal, so when taking the square, the cross terms with differing classical registers will vanish, and the classical registers that are identical will equal out to $1$. Hence, it is sufficient to estimate this latter sum. To estimate just one term in the sum, we know the swap operator is sufficient, since the swap operator is just a cycle of two copies of $\rho$. Let $F_{Q_i,Q_j}$ be the operator that swaps the quantum registers $Q_i$ and $Q_j$ respectively. This corresponds to the orbit $(\rho\rho)$. However, in the CQ setting, we only want to perform this swap if the classical registers are the same. Hence, the full operator, if we consider a fixed classical string $c$, will be defined as
\begin{equation*}
    F_{Q_i,Q_j}\otimes\Pi_c^{C_i}\otimes\Pi_c^{C_j},
\end{equation*}
where abusing notation $\Pi_c^{C_i}$ is the projector projecting onto string $c$ the $i$-th copy of $\rho$. Here, we have also abused notation to rearrange the operators. Strictly speaking, this operator should act on a spaced $(Q_i\otimes C_i)\otimes(Q_j\otimes C_j)$, but for ease of notation, we have just shifted the classical registers to the end. Note this works for a fixed string $c$, and fixed indices $i$ and $j$. To estimate $\tr[\rho_c^2]$, one needs to take the average over all such swaps. In other words, the observable
\begin{equation*}
    \frac{1}{\binom{m}{2}}\sum_{i<j}F_{Q_i,Q_j}\otimes\Pi_c^{C_i}\otimes\Pi_c^{C_j}
\end{equation*}
estimates $\tr[\rho_c^2]$. Finally, this is for a fixed string $c$, but we need to estimate the entire sum over all $c$. Hence, fixing some notation, let
\begin{equation*}
    S_{ij}=\sum_{c=1}^KF_{Q_i,Q_j}\otimes \Pi_c^{C_i}\otimes \Pi_C^{C_j}.
\end{equation*}
Then, the observable that estimates $\tr[\rho^2]$ is given by 
\begin{equation*}
    \frac{1}{{\binom{m}{2}}}\sum_{i<j}S_{ij}.
\end{equation*}
The next analogue we try to draw is that of the $(\sigma\sigma)$ orbit, which is an estimate of $\tr[\sigma^2]$. However, here we simplify some of Badescu's work. Namely, instead of giving a general operator, we will exploit the fact that we have a full description of our state $\sigma$, and hence we can directly compute $\tr[\sigma^2]$. Thus, we do not need an operator to arrive at this quantity. Expressed as an operator, this is $\tr[\sigma^2]I$, where $I$ is the identity over the entire space $\rho^{\otimes m}$. Hence, our (incomplete) operator to estimate HS distance currently looks like
\begin{equation*}
    \underbrace{\frac{1}{\binom{m}{2}}\sum_{i<j}F_{Q_i,Q_j}\otimes\Pi_c^{C_i}\otimes\Pi_c^{C_j}}_{\mathcal{O}_{(\rho\rho)}}+\underbrace{\tr[\sigma^2]I}_{\mathcal{O}_{(\sigma\sigma)}}.
\end{equation*}
The final observable we need is the equivalent of the $\mathcal{O}_{(\rho\sigma)}$ term, which estimates the quantity $\tr[\rho\sigma]$. Again, for the same reason above, due to the orthonormality of the classical registers, it is sufficient to estimate the quantity $\sum_c\tr[\sigma_c\rho_c]$. Let us fix a string $c$. Normally, in Badesscu \etal's framework, this would correspond to the cycle that swaps an index of $\rho$ with an index of $\sigma$. However, we have chosen to not to include copies of $\sigma$ for simplicity. However, with full knowledge of $\sigma$, we can create an observable that mimics that. For instance, summing over all strings $c$ and a fix index $i$ of $\rho$, we can define the observable
\begin{equation*}
    L_i=\sum_c\sigma_c\otimes\Pi_c^{C_i},
\end{equation*}
which has expectation $\tr[\sigma_c\rho_c]$, as desired. Then, if we take the average over all such indices $i$, this gives an estimator for $\tr[\sigma_C\rho_C]$:
\begin{equation*}
    \frac{1}{m}\sum_{i=1}^mL_i.
\end{equation*}
Hence, we now have the observable
\begin{equation*}
    \hat\Gamma=\frac{1}{\binom{m}{2}}\sum_{i<j}S_{ij}-\frac{2}{m}\sum_{i=1}^mL_i+\tr[\sigma^2]I
\end{equation*}
We claim this is an estimator for $\Gamma=D_{HS}^2(\rho,\sigma)$ with the variance as in \cref{thm:cq_hs_estimator}. First, we prove it has the required expectation.
\begin{lemma}\label{lemma:cq_hs_estimator_expectation}
    The above estimator satisfies $\mathbb{E}[\hat\Gamma]=\Gamma$.
\end{lemma}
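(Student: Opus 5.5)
The plan is to use linearity of expectation and compute the expectation of each of the three pieces of $\hat\Gamma$ separately under the product state $\rho^{\otimes m}$. Then I would recombine them using the CQ decomposition
\begin{equation*}
\Gamma=\|\rho-\sigma\|_2^2=\tr[\rho^2]-2\tr[\rho\sigma]+\tr[\sigma^2].
\end{equation*}
Here the cross terms between distinct classical labels vanish, because the projectors $|c\rangle\langle c|$ are orthogonal. Hence $\tr[\rho^2]=\sum_c\tr[\rho_c^2]$ and $\tr[\rho\sigma]=\sum_c\tr[\rho_c\sigma_c]$. The constant term is immediate: $\mathbb{E}[\tr[\sigma^2]I]=\tr[\sigma^2]\tr[\rho^{\otimes m}]=\tr[\sigma^2]$. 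I will also read the first piece as $\frac{1}{\binom m2}\sum_{i<j}S_{ij}$, i.e.\ with the sum over $c$ included, matching the final displayed definition of $\hat\Gamma$.

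For the linear term, each $L_i$ acts nontrivially only on copy $i$, so \cref{lemma:operator_expectation} reduces $\mathbb{E}_{\rho^{\otimes m}}[L_i]$ to a single-copy expectation. This gives
\begin{equation*}
\mathbb{E}_\rho[L_i]=\sum_c\tr\big[(\sigma_c\otimes|c\rangle\langle c|)\textstyle\sum_{c'}\rho_{c'}\otimes|c'\rangle\langle c'|\big]=\sum_c\tr[\sigma_c\rho_c]=\tr[\rho\sigma].
\end{equation*}
Averaging over $i$ leaves this value unchanged.

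For the quadratic term, I similarly reduce $\mathbb{E}[S_{ij}]$ to a two-copy expectation $\tr[(\rho\otimes\rho)S_{12}]$. Inserting the projectors $\Pi_c^{C_1}\otimes\Pi_c^{C_2}$ kills every block of $\rho\otimes\rho$ except $\rho_c\otimes\rho_c$ on the $QQ$ registers. The swap identity $\tr[F(A\otimes B)]=\tr[AB]$ (already used in \cref{lemma:Haar_pair}) then gives $\tr[\rho_c^2]$, and summing over $c$ yields $\tr[\rho^2]$. Averaging over the $\binom m2$ pairs again leaves this unchanged.

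Combining the three expectations gives $\tr[\rho^2]-2\tr[\rho\sigma]+\tr[\sigma^2]=\Gamma$. The only mildly delicate step is the bookkeeping in the quadratic term. There I must keep straight that the reordering of $Q$ and $C$ registers in $S_{ij}$ is harmless, and that the classical projectors correctly restrict to matching labels before the swap acts. I expect this to be routine once the CQ block structure of $\rho\otimes\rho$ is written out explicitly.
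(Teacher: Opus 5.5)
Your proposal is correct and follows essentially the same route as the paper's proof. Both use linearity of expectation and compute $\mathbb{E}[S_{12}]=\sum_c\tr[\rho_c^2]$ via the swap identity, $\mathbb{E}[L_1]=\sum_c\tr[\rho_c\sigma_c]$, and the constant $\tr[\sigma^2]$, then recombine these into $\tr[(\rho-\sigma)^2]=\Gamma$. Your bookkeeping is cleaner than the paper's final display: you keep the sum over $c$ in the first term and the minus sign on the cross term, whereas the paper's display writes that term with a plus sign.
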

\begin{proof}
    We can individually calculate the expectations of each operator. First, let us compute the expectation of $S_{1,2}$, which is equivalent to the expectation of every other pair. This gives us that
    \begin{equation*}
        \mathbb{E}[S_{1,2}]=\sum_c\tr[(\rho_c\otimes\rho_c)F_{Q_1,Q_2}].
    \end{equation*}
    Then, it is known for the swap operator that $\tr[(A\otimes B)F]=\tr[AB]$, and hence this gives that $\mathbb{E}[S_{1,2}]=\sum_c\tr[\rho_c^2]$, which is equal to $\tr[\rho^2]$. Next, we compute the expectation of $L_1$, which has the same expectation as all other $L_i$. This is simply given by
    \begin{equation*}
        \mathbb{E}[L_1]=\tr[L_1\rho]=\sum_c\tr[\rho_c\sigma_c].
    \end{equation*}
    Then, the expectation the last operator is a constant,and hence has expectation $\tr[\sigma^2]=\sum_c\tr[\sigma_c^2]$. Then, putting all three together, the expectation of the CQ estimator is
    \begin{align*}
        \mathbb{E}[\hat\Gamma]&=\frac{1}{\binom{m}{2}}\sum_{i<j}\sum_c\tr[\rho_c^2]+\frac{2}{m}\sum_c\tr[\rho_c\sigma_c]+\sum_c\tr[\sigma^2_c]\\
        &=\sum_c\tr[(\rho_c-\sigma_c)^2]\\
        &=\sum_c\tr[\Delta^2]\\
        &=\Gamma.
    \end{align*}
\end{proof}
We next prove that is the required variance. To do this, we will rewrite our operator notation. Let
\begin{equation*}
    H_{i,j}=S_{i,j}-L_i-L_j+\tr[\sigma^2]I.
\end{equation*}
It is then easy to check that $\hat\Gamma=\frac{1}{\binom{m}{2}}\sum_{i<j}H_{i,j}$. We now prove the following result on the variance.
\begin{lemma}\label{lemma:cq_hs_estimator_variance}
    Let $\mu=\mathbb{E}[H_{i,j}]$, and $\widetilde{H}_{i,j}=H_{i,j}-\mu I$. Let $G_i=\mathbb{E}_j[H_{i,j}]$, the expectation over just the second term in $H_{i,j}$. Define $A_1=\Var_\rho(G_i)$ and $A_2=\mathbb{E}_{\rho\otimes\rho}[H_{1,2}^2]$. Then,
    \begin{equation*}
        \Var(\hat\Gamma)\leq O\left(\frac{A_1}{m}+\frac{A_2}{m^2}\right).
    \end{equation*}
\end{lemma}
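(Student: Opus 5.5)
This is the standard Hoeffding-type variance decomposition for a U-statistic. Since $\hat\Gamma=\binom{m}{2}^{-1}\sum_{i<j}H_{i,j}$ and $\mathbb{E}[\hat\Gamma]=\mu$, we have $\hat\Gamma-\mu I=\binom{m}{2}^{-1}\sum_{i<j}\widetilde H_{i,j}$. Because every $H_{i,j}$ is Hermitian, expanding the square gives
\begin{equation*}
\Var(\hat\Gamma)=\binom{m}{2}^{-2}\sum_{i<j}\sum_{k<l}\mathbb{E}_{\rho^{\otimes m}}\bigl[\widetilde H_{i,j}\widetilde H_{k,l}\bigr].
\end{equation*}
I would sort the pairs of index pairs $(\{i,j\},\{k,l\})$ by the size of their overlap, which is $0$, $1$ or $2$, and bound each class separately.

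For disjoint pairs, the operators $\widetilde H_{i,j}$ and $\widetilde H_{k,l}$ act on different tensor factors of the product state $\rho^{\otimes m}$. By \cref{lemma:operator_expectation} the expectation factorizes into $\mathbb{E}[\widetilde H_{i,j}]\,\mathbb{E}[\widetilde H_{k,l}]=0$; this is \cref{cor:operator_covariance}. So these terms contribute nothing.

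For pairs sharing exactly one index, say $\{i,j\}$ and $\{i,l\}$, I use that all copies are identical and that $H_{i,j}$ is symmetric in $i,j$. The latter holds because $S_{ij}=S_{ji}$ and $L_i+L_j$ is symmetric, so the shared index can always be relabeled to be the first one. \cref{lem:partial_operator_variance} then gives $\mathbb{E}[\widetilde H_{i,j}\widetilde H_{i,l}]=\Var_\rho(G_i)=A_1$. One step needs care here: the partial expectations over the two distinct indices are independent. The two operators act on different factors $j$ and $l$, and both commute with nothing problematic, since they agree on factor $i$ only through operators on factor $i$. This is exactly the setting of that lemma. There are $O(m^3)$ ordered pairs of this type, so their total contribution is $\binom{m}{2}^{-2}\,O(m^3)\,A_1=O(A_1/m)$.

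For identical pairs, I bound $\mathbb{E}[\widetilde H_{1,2}^2]=\mathbb{E}[H_{1,2}^2]-\mu^2\le A_2$. There are $\binom{m}{2}$ such terms, contributing $\binom{m}{2}^{-1}A_2=O(A_2/m^2)$. Summing the three classes gives the claim.

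The main obstacle is justifying the one-overlap identity rigorously when the operators are noncommuting matrices rather than scalars. Specifically, I need $\mathbb{E}_{2}\mathbb{E}_{3}[\widetilde H_{1,2}\widetilde H_{1,3}]=\mathbb{E}_2[\widetilde H_{1,2}]\,\mathbb{E}_3[\widetilde H_{1,3}]$ as operators on copy $1$. I would check this by writing $\widetilde H_{1,2}=\sum_a X_a\otimes Y_a$ and $\widetilde H_{1,3}=\sum_b X'_b\otimes Z_b$ and computing the partial traces directly. Since $Y_a$ and $Z_b$ live on different factors, the product is $\sum_{a,b}X_aX'_b\otimes Y_a\otimes Z_b$. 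Its partial expectations then factor into $\tr[\rho Y_a]\tr[\rho Z_b]$, giving the required product $\bigl(\sum_a \tr[\rho Y_a]X_a\bigr)\bigl(\sum_b\tr[\rho Z_b]X'_b\bigr)$. This is what \cref{lem:partial_operator_variance} asserts, so it can be cited once the symmetry of $H$ is noted.
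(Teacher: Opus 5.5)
Your proposal is correct and follows essentially the same route as the paper. Both expand $\sum_{i<j}\sum_{k<l}\mathbb{E}[\widetilde H_{i,j}\widetilde H_{k,l}]$ and split by overlap size: disjoint pairs vanish by \cref{cor:operator_covariance}; the $m(m-1)(m-2)$ singly-overlapping pairs each equal $A_1$ by \cref{lem:partial_operator_variance}, giving $O(A_1/m)$; and identical pairs are bounded by $A_2$, giving $O(A_2/m^2)$. Your explicit check of the symmetry $H_{i,j}=H_{j,i}$ and of the operator-valued factorization of partial expectations spells out details the paper leaves implicit.
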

\begin{proof}
    Since shifts by the mean do not change variance, we have that
    \begin{equation*}
        \Var(\hat\Gamma)=\Var\left(\frac{1}{\binom{m}{2}}\sum_{i<j}H_{i,j}\right)=\Var\left(\frac{1}{\binom{m}{2}}\sum_{i<j}\widetilde{H}_{i,j}\right).
    \end{equation*}
    Since $\mathbb{E}[\widetilde{H}_{1,2}]=0$, this gives that
    \begin{equation*}
        \Var(\hat\Gamma)=\mathbb{E}\left[\left(\frac{1}{\binom{m}{2}}\sum_{i<j}\widetilde{H}_{i,j}\right)^2\right]=\frac{1}{\binom{m}{2}^2}\sum_{i<j}\sum_{k<l}\mathbb{E}[\widetilde{H}_{i,j}\widetilde{H}_{k,l}].
    \end{equation*}
    We now consider three cases, depending on the relationship between the pairs of tuples $(i,j)$ and $(k,l)$.

    \paragraph{Case 1: Disjoint tuples.} It the tuples are completely disjoint, then by \cref{cor:operator_covariance}, we have that $\mathbb{E}[\widetilde{H}_{i,j}\widetilde{H}_{k,l}]=\mathbb{E}[\widetilde{H}_{i,j}]\mathbb{E}[\widetilde{H}_{k,l}]=0$.

    \paragraph{Case 2: Identical tuples.} If $(i,j)=(k,l)$, then the contribution is given by $\mathbb{E}[\widetilde{H}_{i,j}^2]=\Var(H_{i,j})\leq \mathbb{E}[H_{i,j}^2]=A_2$. Since we have $\binom{m}{2}$ such terms, the total contribution to the variance is at most
    \begin{equation*}
        \frac{A_2}{\binom{m}{2}}\leq O\left(\frac{A_2}{m^2}\right).
    \end{equation*}

    \paragraph{Case 3: Pairs that overlap once.} Assume now the two pairs overlap on one index. By symmetry, it is sufficient to consider the contribution to the variance by pairs $H_{1,2}$ and $H_{1,3}$, and then sum over all pairs. Similarly, we can also equate $A_1=\Var(G_i)=\Var(G_1)$. Then, by \cref{lem:partial_operator_variance}, we have that $\mathbb{E}[\widetilde{H}_{1,2}\widetilde{H}_{1,3}]=A_1$. For each fixed tuple $(i,j)$, exactly $2(m-2)$ other tuples share exactly one index with it, with $m-2$ coinciding on index $i$, and the other $m-2$ coinciding on index $j$. Hence, by summing over all pairs, the total contribution to the variance in this case is at most
    \begin{align*}
        \frac{2(m-2)}{\binom{m}{2}}A_1&\leq O\left(\frac{A_1}{m}\right).
    \end{align*}
    Hence, summing all three cases, we have that the variance of the estimator is at most
    \begin{equation*}
        \Var(\hat\Gamma)\leq O\left(\frac{A_1}{m}+\frac{A_2}{m^2}\right).
    \end{equation*}
\end{proof}
The task now, is to bound the quantities $A_1$ and $A_2$. We begin by bounding $A_1$. For ease of notation, let 
\begin{equation*}
    L_\rho=\sum_c\rho_c\otimes \Pi_c^C.
\end{equation*} 
Then, we prove the following bound
\begin{lemma}\label{lemma:A_1_bound}
    $A_1\leq r_\infty\Gamma$.
\end{lemma}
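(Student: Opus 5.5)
The plan is to compute the one-index partial expectation $G_1=\mathbb{E}_2[H_{1,2}]$ explicitly. We will find that it reduces to a constant shift of a single ``block-diagonal'' observable on one copy, whose second moment is easy to bound. Recall $H_{1,2}=S_{1,2}-L_1-L_2+\tr[\sigma^2]I$. We treat the four terms separately, using the partial expectation $\mathbb{E}_2[X_{1,2}]=\tr_2[(I_1\otimes\rho_2)X_{1,2}]$ with $\rho=\sum_c\rho_c\otimes\Pi_c$.

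\textbf{Step 1: the swap term.} We will show that $\mathbb{E}_2[S_{1,2}]=\sum_c\rho_c\otimes\Pi_c^{C_1}=L_\rho$. For a fixed $c$, tracing out copy 2 against $\rho$ kills every branch $c'\neq c$ in the classical projector $\Pi_c^{C_2}$, which leaves $\tr_{Q_2}[(I\otimes\rho_c)F_{Q_1,Q_2}]$. The standard identity $\tr_2[(I\otimes B)F]=B$ then gives $\rho_c$ acting on $Q_1$, tensored with $\Pi_c^{C_1}$.

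\textbf{Step 2: the remaining terms.} The term $L_1$ acts only on copy 1, so $\mathbb{E}_2[L_1]=L_1\tr[\rho]=L_\sigma$. The term $L_2$ acts only on copy 2, so $\mathbb{E}_2[L_2]=\tr[\rho L_2]\,I=\sum_c\tr[\rho_c\sigma_c]\,I$. The last term is a scalar multiple of $I$ and is unchanged. Hence
\begin{equation*}
G_1=L_\rho-L_\sigma+\kappa I=L_\Delta+\kappa I,\qquad L_\Delta\eqdef\sum_c\Delta_c\otimes\Pi_c,\quad \Delta_c\eqdef\rho_c-\sigma_c,
\end{equation*}
for some real constant $\kappa$.

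\textbf{Step 3: the bound.} Shifting by a multiple of the identity does not change the variance, so
\begin{equation*}
A_1=\Var_\rho[L_\Delta]\leq\mathbb{E}_\rho[L_\Delta^2]=\tr\Big[\rho\sum_c\Delta_c^2\otimes\Pi_c\Big]=\sum_c\tr[\rho_c\Delta_c^2],
\end{equation*}
where the cross terms vanish because the $\Pi_c$ are orthogonal. Since $\Delta_c^2\succeq0$ and $0\preceq\rho_c$, each term satisfies $\tr[\rho_c\Delta_c^2]\leq\|\rho_c\|_\infty\tr[\Delta_c^2]\leq\tr[\rho_c]\,\|\Delta_c\|_2^2\leq r_\infty\|\Delta_c\|_2^2$. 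Summing over $c$ and using $\Gamma=\sum_c\|\Delta_c\|_2^2$ (by orthogonality of the classical labels) gives $A_1\leq r_\infty\Gamma$.

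\textbf{Main obstacle.} The only delicate point is Step 1. We must check carefully that the partial trace of the classically controlled swap yields exactly $L_\rho$, with the projectors $\Pi_c$ landing on copy 1 and no leftover cross-label terms. This requires tracking the register reordering convention used in defining $S_{ij}$. The remaining steps are direct computations.
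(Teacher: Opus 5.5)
Your proposal is correct and follows the paper's proof essentially step for step. You compute the partial expectations of the four terms of $H_{1,2}$ to get $G_1 = L_\rho - L_\sigma + \kappa I$, discard the scalar shift, and bound $\mathbb{E}_\rho[(L_\rho-L_\sigma)^2]=\sum_c \tr[\rho_c\Delta_c^2]\le \sum_c r_c\tr[\Delta_c^2]\le r_\infty\Gamma$ via $0\preceq\rho_c\preceq r_c I$, exactly as the paper does (and you are slightly more explicit about why the cross-label terms vanish in the partial trace of $S_{1,2}$).
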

\begin{proof}
    We can look at $H_{1,2}$ wlog. Recall that $G_1$ is the partial expectation over the second index, and hence we can compute the partial expectation over each individual term in $H_{1,2}$. We start with $\mathbb{E}_2[S_{1,2}]$:
    \begin{align*}
        \mathbb{E}_2[S_{1,2}]&=\tr_2[(1\otimes\rho)S_{1,2}]\\
        &=\tr_2\left[\sum_c(I\otimes\rho_c)F_{Q_1,Q_2}\otimes|c\rangle\langle c|_{C_1}\otimes|c\rangle\langle c|_{C_2}\right]\\
        &=\tr_{Q_2}\left[\sum_c(\rho_C\otimes I)\otimes|c\rangle\langle c|_{C_1}\right]\\
        &=\sum_c\rho_C\otimes|c\rangle\langle c|_{C_1}=L_\rho.
    \end{align*}
    Here, we have abused notation to let $C_i$ and $Q_i$ denote the classical and quantum registers respectively of index $i$. Second, it is clear that $\mathbb{E}_2[L_1]=L_1$, since it is independent of the second register. We also get this equals $L_\sigma$, since $L_1$ is just $L_\sigma$ acting only on the first index. On the other hand, we have that
    \begin{align*}
        \mathbb{E}_2[L_2]&=\tr_2[(I\otimes\rho)(I\otimes L_\sigma)]\\
        &=\tr_2\left[I\otimes(\rho L_\sigma)\right]\\
        &=I_1\tr[L_\sigma\rho],
    \end{align*}
    which is a scalar multiple of the identity. We get this because $L_2$ is simply the observable $L_\sigma$ in index $2$. Finally, we note $\tr[\sigma^2]I$ also becomes a scalar multiple of the identity after tracing out the second copy. Hence, we have that $G_1=L_\rho-L_\sigma-\alpha I$, for some constant $\alpha$, which we can ignore when computing the variance, since scalar shifts do not impact the variance. Then, we have that
    \begin{align*}
        L_\rho-L_\sigma&=\sum_c(\rho_c-\sigma_c)\otimes\Pi_c=\sum_c\Delta_c\otimes\Pi_c\\
        (L_\rho-L_\sigma)^2&=\sum_c\Delta_c^2\otimes\Pi_c\\
        \Var_\rho(L\rho-L_\sigma)&\leq \mathbb{E}_\rho[(L_\rho-L_\sigma)^2]\\
        &=\sum_c\tr(\rho_c\Delta_c^2)\\
        &\leq \sum_c r_c\tr[\Delta_c^2]\\
        &\leq r_\infty\Gamma,
    \end{align*}
    where we use that by definition $0\preceq\rho_c\preceq p_cI$. This then gives that $A\leq p_\infty\Gamma$.
\end{proof}
We now bound $A_2$:
\begin{lemma}\label{lemma:A_2_bound}
    $A_2\leq \chi_r+\chi_s$.
\end{lemma}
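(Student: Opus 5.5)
The plan is to compute $A_2=\mathbb{E}_{\rho\otimes\rho}[H_{1,2}^2]$ exactly, exploiting the block structure coming from the classical registers, and only then bound it. Write $t\eqdef\tr[\sigma^2]=\sum_c\tr[\sigma_c^2]$ and $a\eqdef\mathbb{E}[L_1]=\sum_c\tr[\rho_c\sigma_c]$. Recall $H_{1,2}=S_{1,2}-L_1-L_2+tI$. Every term here is block diagonal with respect to the pair of classical labels $(c_1,c_2)$ of the two copies. Hence $H_{1,2}^2$ is block diagonal as well, and we can expand it into the ten products
$S^2,\ \{S,L_1\},\ \{S,L_2\},\ S,\ L_1^2,\ L_2^2,\ \{L_1,L_2\},\ L_1,\ L_2,\ I$,
where $\{\cdot,\cdot\}$ denotes the anticommutator. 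We then take the expectation of each product under $\rho\otimes\rho$, using $\tr[(A\otimes B)F]=\tr[AB]$ and \cref{lemma:operator_expectation}.

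The individual terms are as follows.
\begin{itemize}
\item Since $F^2=I$ and the projectors are orthogonal, $S_{1,2}^2=\sum_c I\otimes\Pi_c\otimes\Pi_c$, so $\mathbb{E}[S^2]=\sum_c r_c^2=\chi_r$.
\item Each anticommutator $\{S,L_1\}$ has expectation $2\sum_c\tr[\rho_c^2\sigma_c]=2\sum_c\tr[\rho_c\sigma_c\rho_c]\ge0$. The same holds for $L_2$ by symmetry. These terms therefore enter with a minus sign and can only help: they contribute $-4\sum_c\tr[\rho_c\sigma_c\rho_c]$.
\item The term $2t\,\mathbb{E}[S]=2t\,\tr[\rho^2]$ also enters negatively.
\item The remaining terms are $2\,\mathbb{E}[L_1^2]+2a^2-4ta+t^2$, where $\mathbb{E}[L_1^2]=\sum_c\tr[\rho_c\sigma_c^2]$ and $\mathbb{E}[\{L_1,L_2\}]=2a^2$ by independence.
\end{itemize}
Altogether,
\begin{equation*}
A_2=\chi_r-4\sum_c\tr[\rho_c\sigma_c\rho_c]-2t\,\tr[\rho^2]+2\sum_c\tr[\rho_c\sigma_c^2]+2(a-t)^2-t^2 .
\end{equation*}

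The second step is to bound the positive pieces by $\chi_s$ plus something that the negative pieces absorb. For the scalar part, Cauchy--Schwarz in Hilbert--Schmidt gives $0\le a\le\sqrt{\tr[\rho^2]\,t}$. So $2(a-t)^2-t^2-2t\tr[\rho^2]\le 2a^2-4at+t^2-2t\tr[\rho^2]$. Since $2a^2\le 2t\tr[\rho^2]$, this is at most $t^2-4at\le t^2\le t\le\chi_s$, using $t=\sum_c\tr[\sigma_c^2]\le\sum_c s_c^2=\chi_s$. This is the first place where care is needed. The goal is to keep the constant in front of $\chi_s$ at exactly one, which means not spending the scalar part on anything else.

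The main obstacle is the term $2\sum_c\tr[\rho_c\sigma_c^2]$. A crude bound $\tr[\rho_c\sigma_c^2]\le r_cs_c\le\tfrac12(r_c^2+s_c^2)$ would add another $\chi_r+\chi_s$ and destroy the constant one. Instead, I will try to pair this term with the negative term $-4\sum_c\tr[\rho_c\sigma_c\rho_c]$ and with the slack left over from the scalar estimate. Concretely, I will work blockwise in $c$, diagonalize $\sigma_c=\sum_k\lambda_k|k\rangle\langle k|$, and write $\tr[\rho_c\sigma_c^2]=\sum_k\lambda_k^2\langle k|\rho_c|k\rangle$. I then want to show
\begin{equation*}
2\tr[\rho_c\sigma_c^2]-4\tr[\rho_c\sigma_c\rho_c]\le s_c^2-\tr[\sigma_c^2]+(\text{share of }4at),
\end{equation*}
so that summing over $c$ reproduces $\chi_s$ exactly; the check will use $\lambda_k\le s_c$ and $\langle k|\rho_c|k\rangle\le r_c$. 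If this pairing fails to close with constant one, the fallback is to note that the variance analysis (Case~2 of \cref{lemma:cq_hs_estimator_variance}) only needs $\Var(H_{1,2})=A_2-\mu^2$ with $\mu=\Gamma$. The extra $-\Gamma^2$ term would then supply the missing slack.
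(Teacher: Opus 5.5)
\textbf{There is a genuine gap.} The inequality you are aiming for, with constant exactly one, is false. The step that seems to make it work rests on a sign error.

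In $H_{1,2}=S_{1,2}-L_1-L_2+tI$, both $S_{1,2}$ and $tI$ carry a plus sign. So the cross term in $H_{1,2}^2$ is $+2tS_{1,2}$, and the correct exact expansion is
\begin{equation*}
A_2=\chi_r-4\sum_c\tr[\rho_c\sigma_c\rho_c]+2t\,\tr[\rho^2]+2\sum_c\tr[\rho_c\sigma_c^2]+2a^2-4at+t^2 ,
\end{equation*}
with $+2t\tr[\rho^2]$, not $-2t\tr[\rho^2]$. Your scalar estimate used precisely this nonexistent negative term to absorb $2a^2$.

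No blockwise pairing can rescue constant one. Take $K=1$, a qubit quantum register, $\rho=|0\rangle\langle 0|$ and $\sigma=|1\rangle\langle 1|$. Then $H_{1,2}=F-\sigma\otimes I-I\otimes\sigma+I$ satisfies $H_{1,2}|00\rangle=2|00\rangle$, so $A_2=4$, while $\chi_r+\chi_s=2$. This agrees with the corrected formula: $a=0$, $t=\tr[\rho^2]=1$, and the remaining traces vanish.

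What is true is $A_2\le O(\chi_r+\chi_s)$, and that is also all the paper's own proof establishes, despite how the statement is written. For Hermitian $B_1,\dots,B_4$ one has
\begin{equation*}
4(B_1^2+\dots+B_4^2)-(B_1+\dots+B_4)^2=\sum_{i<j}(B_i-B_j)^2\succeq 0 .
\end{equation*}
Apply this with $B_1=S_{1,2}$, $B_2=-L_1$, $B_3=-L_2$, $B_4=tI$, and use three facts:
\begin{itemize}
\item $\mathbb{E}[S_{1,2}^2]=\chi_r$;
\item $\mathbb{E}[L_i^2]=\sum_c\tr[\rho_c\sigma_c^2]\le\sum_c r_cs_c^2\le\chi_s$, since $\sigma_c\preceq s_cI$;
\item $t^2\le t\le\chi_s$.
\end{itemize}
This gives $A_2\le 4\chi_r+12\chi_s$. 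That is all \cref{lemma:cq_hs_estimator_variance} and \cref{thm:cq_hs_estimator} require, since both are stated up to absolute constants.

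So the ``crude'' route you rejected is the right one, and the constant-one goal is unattainable. Your fallback of bounding $\Var(H_{1,2})=A_2-\Gamma^2$ instead (which is $0$ in the example above) would also suffice for Case~2 of the variance lemma. However, it proves a different statement from the one asked, and you have not carried it out.
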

\begin{proof}
    We start by an operator inequality. By the operator Jensen's inequality \cite[Theorem 2.1]{HANSEN_2003}, for Hermitian operators $B_i$, we have that
\begin{equation*}
    (B_1+B_2+B_3+B_4)^2\preceq O\left(B_1^2+B_2^2+B_3^2+B_4^2\right),
\end{equation*}
with slight abuse of big-Oh notation (that is, there is some constant C such that this inequality holds). Recalling that $H_{1,2}$ is a sum of four operators, it suffices to bound the second moment of each of the operators. First, consider $S_{1,2}^2$, which simply applies the swap twice. Hence, 
\begin{equation*}
    S_{1,2}^2=\sum_cI_{Q_1,Q_2}\otimes \Pi_c^{C_1}\otimes \Pi_c^{C_2}.
\end{equation*}
Hence, we have that $\mathbb{E}[S_{1,2}^2]=\sum_c r_c^2=\chi_r$. Next, consider $L_\sigma^2=\sum_c\sigma_c^2\otimes\Pi_c$.Note that since $\sigma_c$ is positive semidefinite, and $\tr[\sigma_c]=s_c$, every eigenvalue of $\sigma_c$ is at most $s_c$. Hence, we have that $\sigma_c^2\preceq s_c^2I$. Then, $\tr(\rho_c\tau_c^2)\leq r_cs_c^2\leq s_c^2$. Summing over all possible $c$, this gives $\mathbb{E}[L_\sigma^2]\leq\sum_c s_c^2=\chi_s$. This bound applies to both $L_1^2$ and $L_2^2$.Then, by definition, we have that $\tr[\sigma^2]I\leq \sum_cs_c^2=\chi_s$. Since $\chi_s\leq 1$, the second moment of this operator is also less than $\chi_s$. Combining these four estimates along with the operator Jensen's inequality gives $A_2\leq O(\chi_r+\chi_s)$.
\end{proof}
Combining the variance bound, along with the bounds on $A_1$ and $A_2$, we prove \cref{thm:cq_hs_estimator}.
\section{Moment Bounds}\label{app:moment_bounds}
The aim of this section is to prove \cref{lemma:first_moment_upsilon} and \cref{lemma:upsilon_second_moment}. We start by proving some small results that we will need to prove the former.
\subsection{First Moment Bound}
\begin{lemma}\label{lemma:swap_branch_representation}
    Recalling the definition of $A_c(U)$ from \cref{def:Haar_instrument}, we have that $\|\mathcal{I}_{U,c}(X)\|_2^2=\tr[(A_c(U)\otimes A_c(U))(F_{Q_1,Q_2}\otimes I_{B_1,B_2})]$ for every classical label $c$.
\end{lemma}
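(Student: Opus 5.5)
We will prove the identity by combining two standard facts. The first is the swap trick: $\tr[M^2]=\tr[(M\otimes M)F]$ for any operator $M$, where $F$ is the swap on two copies. The second is the compatibility of partial trace with tensor products. Throughout, write $A\eqdef A_c(U)=(\langle c|_C\otimes I_{QB})UXU^\dagger(|c\rangle_C\otimes I_{QB})$, an operator on $Q\otimes B$. By \cref{def:Haar_instrument}, $\mathcal{I}_{U,c}(X)=\tr_B(A)$. Since $X$ is Hermitian and $A$ is a compression of the Hermitian $UXU^\dagger$, both $A$ and $\tr_B(A)$ are Hermitian. Hence
\begin{equation*}
\|\mathcal{I}_{U,c}(X)\|_2^2=\tr[\tr_B(A)^2].
\end{equation*}

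\textbf{Step 1 (swap trick on $Q$).} Apply the swap identity on the quantum register alone:
\begin{equation*}
\tr[\tr_B(A)^2]=\tr_{Q_1Q_2}\big[(\tr_{B}(A)\otimes\tr_{B}(A))F_{Q_1,Q_2}\big].
\end{equation*}
To check this, expand in a basis $\{|i\rangle\}$ of $Q$. Writing $M=\tr_B(A)$, we have $\tr[(M\otimes M)F]=\sum_{i,j}M_{ij}M_{ji}=\tr[M^2]$.

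\textbf{Step 2 (pull the partial traces out).} Use $\tr_{B}(A)\otimes\tr_{B}(A)=\tr_{B_1B_2}(A\otimes A)$, after reordering the registers as $Q_1B_1Q_2B_2$. Also use that $F_{Q_1,Q_2}$ acts trivially on $B_1B_2$, so $\tr_{B_1B_2}(Y)\,F_{Q_1,Q_2}=\tr_{B_1B_2}\big(Y(F_{Q_1,Q_2}\otimes I_{B_1B_2})\big)$. Taking the full trace then gives
\begin{equation*}
\tr\big[(A\otimes A)(F_{Q_1,Q_2}\otimes I_{B_1,B_2})\big],
\end{equation*}
which is the claimed identity.

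\textbf{Main obstacle.} There is no real mathematical difficulty here. The only care needed is in bookkeeping the register ordering, since $F_{Q_1,Q_2}\otimes I_{B_1,B_2}$ is written with the $B$ registers grouped together, and in making sure the Hermiticity used in Step 1 is justified. If Hermiticity were not available, the same computation would instead produce $\tr[M^\dagger M]$, with $A^\dagger$ appearing in place of one copy of $A$. I would therefore state Hermiticity of $A$ explicitly before Step 1.
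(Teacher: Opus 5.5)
Your proposal is correct and follows essentially the same route as the paper. Both arguments apply the swap identity $\tr[M^2]=\tr[(M\otimes M)F_{Q_1,Q_2}]$ to $M=\tr_B(A_c(U))$, and then pull the partial traces over $B_1B_2$ back out via $\tr[(\tr_{B_1B_2}Z)Y]=\tr[Z(Y\otimes I_{B_1,B_2})]$. Your explicit check that $A_c(U)$ is Hermitian, so that $\|\cdot\|_2^2=\tr[(\cdot)^2]$, is a step the paper leaves implicit; it relies on $X$ being Hermitian, as it is in the surrounding context.
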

\begin{proof}
    Using that $\tr[(A\otimes B)F]=\tr[AB]$, we have that 
    \begin{equation*}
        \|\tr_B[A_c(U)]\|_2^2=\tr[(\tr_B[A_c(U)\otimes\tr_B(A_c(U))F_{Q_1,Q_2}])]=\tr[(A_c(U)\otimes A_c(U))(F_{Q_1,Q_2}\otimes I_{B_1,B_2})].
    \end{equation*}
    The second equality we get by recalling that we can pull back partial traces, in the sense that $\tr[(\tr_{A,B}Z)Y)=\tr[Z(Y\otimes I_{B_1,B_2})]$, where $I_{B_1,B_2}$ is shorthand for $I_{B_1}\otimes I_{B_2}$.
\end{proof}
Using this, we can get another way to express $\Upsilon_U(X)$,
\begin{lemma}\label{lemma:upsilon_expression}
    Define an operator acting on $(C_1Q_1B_1)\otimes(C_2Q_2B_2)$ by
    \begin{equation*}
        R=\sum_{c=1}^K|c\rangle\langle c|_{C_1}\otimes |c\rangle\langle c|_{C_2}\otimes F_{Q_1,Q_2}\otimes I_{B_1,B_2}.
    \end{equation*}
    Then, for every traceless Hermitian $X$, $\Upsilon_U(X)=\tr[R(UXU^\dagger)^{\otimes 2}]$.
\end{lemma}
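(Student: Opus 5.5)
The plan is to start from the decomposition of $\Upsilon_U(X)$ over classical labels and then apply \cref{lemma:swap_branch_representation} to each branch. Since the classical outcomes are orthogonal, $\Upsilon_U(X)=\sum_{c=1}^K\|\mathcal{I}_{U,c}(X)\|_2^2$, as already noted before \cref{thm:Haar_compression}. For each $c$, \cref{lemma:swap_branch_representation} gives
\[
\|\mathcal{I}_{U,c}(X)\|_2^2=\tr\bigl[(A_c(U)\otimes A_c(U))(F_{Q_1,Q_2}\otimes I_{B_1,B_2})\bigr],
\]
where $A_c(U)=(\langle c|_C\otimes I_{QB})UXU^\dagger(|c\rangle_C\otimes I_{QB})$ is an operator on $Q\otimes B$.

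The key step is to absorb the two compressions by $\langle c|$ into the trace on the full space $C\otimes Q\otimes B$. We write $A_c(U)=\langle c|_C\, Y\, |c\rangle_C$ with $Y\eqdef UXU^\dagger$. Then
\[
A_c(U)\otimes A_c(U)=(\langle c|_{C_1}\otimes\langle c|_{C_2})\, Y^{\otimes 2}\, (|c\rangle_{C_1}\otimes|c\rangle_{C_2}).
\]
By cyclicity of the trace, the projectors move around, giving
\[
\tr\bigl[(A_c(U)\otimes A_c(U))(F_{Q_1,Q_2}\otimes I_{B_1,B_2})\bigr]=\tr\bigl[(|c\rangle\langle c|_{C_1}\otimes|c\rangle\langle c|_{C_2}\otimes F_{Q_1,Q_2}\otimes I_{B_1,B_2})\,Y^{\otimes 2}\bigr].
\]
This is the same identity as $\tr[\langle c|Z|c\rangle W]=\tr[Z(|c\rangle\langle c|\otimes W)]$, applied once for each copy.

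Finally, we sum over $c$ and use linearity of the trace. The operator that appears is exactly $R$, so $\Upsilon_U(X)=\tr[R(UXU^\dagger)^{\otimes2}]$.

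The only real obstacle is bookkeeping: reordering the tensor factors consistently. Both the classical projectors and $R$ are written with $C_1,C_2$ first, while $Y^{\otimes 2}$ lives on $(C_1Q_1B_1)\otimes(C_2Q_2B_2)$. We handle this by fixing one canonical ordering of the registers and conjugating everything by the corresponding permutation unitary, which leaves traces unchanged. Hermiticity and tracelessness of $X$ are not needed for this identity, though they are used later.
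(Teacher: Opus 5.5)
Your argument is correct and is the same as the paper's: decompose $\Upsilon_U(X)=\sum_c\|\mathcal{I}_{U,c}(X)\|_2^2$, apply \cref{lemma:swap_branch_representation} to each branch, reinsert $|c\rangle\langle c|_{C_1}\otimes|c\rangle\langle c|_{C_2}$ by cyclicity of the trace, and sum over $c$; you simply spell out the projector-insertion step that the paper states in one line.

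One correction to your closing aside: tracelessness is indeed unused, but Hermiticity of $X$ is needed. \cref{lemma:swap_branch_representation} relies on $\|M\|_2^2=\tr[M^2]$ for $M=\mathcal{I}_{U,c}(X)$, which fails for non-Hermitian $M$. Concretely, replacing $X$ by $iX$ leaves $\Upsilon_U$ unchanged but flips the sign of $\tr[R(UXU^\dagger)^{\otimes 2}]$.
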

\begin{proof}
    Recall \cref{lemma:swap_branch_representation} gives the contribution for each branch of a classical label $c$, over the space of registers $QB$. We can embed this back into the full $CQB$ space by inserting the projectors corresponding to the classical registers, for each copy. Then, we can take the sum over all $c$, which gives that 
    \begin{equation*}
        \Upsilon_U(X)=\tr[R(UXU^\dagger)^{\otimes 2}]
    \end{equation*}
\end{proof}
This second moment expectation is already something we handled previously in \cref{lemma:Haar_pair}, so we follow the analysis there. \begin{lemma}\label{lemma:traceless_second_moment}
    Let $X$ be a traceless Hermitian, and let $s=\|X\|_2^2$. Then, 
    \begin{equation*}
        \mathbb{E}_U[(UXU^\dagger)^{\otimes 2}]=-\frac{s}{d(d^2-1)}I+\frac{s}{d^2-1}F
    \end{equation*}
\end{lemma}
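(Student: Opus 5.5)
I will follow the template of \cref{lemma:Haar_pair}, now applied to a traceless Hermitian $X$ rather than a density matrix. The plan has three steps.

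\textbf{Step 1: reduce to two unknown coefficients.} The operator $M\eqdef\mathbb{E}_U[(UXU^\dagger)^{\otimes 2}]$ commutes with $V\otimes V$ for every unitary $V$, by invariance of the Haar measure. By the same result of Mele invoked in the proof of \cref{lemma:Haar_pair} (the $k=2$ moment operator is spanned by $I$ and $F$), this gives
\begin{equation*}
M=a_1I+a_2F.
\end{equation*}
That argument never used that $\rho$ is a state, only that it is a fixed operator. So it applies verbatim to $X$.

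\textbf{Step 2: two trace constraints.} I will pin down $a_1,a_2$ by computing two traces of $M$, using $\tr[I]=d^2$, $\tr[F]=d$ and $F^2=I$ on $\mathbb{C}^d\otimes\mathbb{C}^d$.

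First, $\tr[M]=\mathbb{E}_U\tr[UXU^\dagger]^2=\tr[X]^2=0$, since $X$ is traceless. This gives
\begin{equation*}
a_1d^2+a_2d=0.
\end{equation*}
Second, using $\tr[F(A\otimes B)]=\tr[AB]$ and Hermiticity of $X$,
\begin{equation*}
\tr[FM]=\mathbb{E}_U\tr[(UXU^\dagger)^2]=\tr[X^2]=\|X\|_2^2=s.
\end{equation*}
This gives
\begin{equation*}
a_1d+a_2d^2=s.
\end{equation*}

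\textbf{Step 3: solve.} The first equation gives $a_2=-a_1d$. Substituting into the second yields $a_1(d-d^3)=s$. Hence
\begin{equation*}
a_1=-\frac{s}{d(d^2-1)},\qquad a_2=\frac{s}{d^2-1},
\end{equation*}
which matches the statement.

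The only step needing any care is Step 1: confirming that the commutant argument does not depend on positivity or unit trace. It does not, since the twirl of any operator $Y\otimes Y$ lies in the span of $I$ and $F$. The rest is routine linear algebra, requiring $d\geq 2$ so that the $2\times 2$ system is nonsingular.
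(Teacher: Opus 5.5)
Your proposal is correct and follows essentially the same route as the paper's proof. Both arguments reduce $\mathbb{E}_U[(UXU^\dagger)^{\otimes 2}]$ to the span of $I$ and $F$ via the same moment-operator argument as in \cref{lemma:Haar_pair}, then fix the two coefficients from $\tr[M]=\tr[X]^2=0$ and $\tr[FM]=\tr[X^2]=s$. Your algebra is right, and you are also right that the positivity and unit trace of $\rho$ were never used in the span argument.
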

\begin{proof}
    By the same argument as in \cref{lemma:Haar_pair}, we get that the expectation equals $aI+bF$, for some constants. We again try to equate traces. Since $X$ is traceless, we have that $\tr[X\otimes X]=0$. which gives that
    \begin{equation*}
        ad^2+bd=0.
    \end{equation*}
    We can then use the swap operator again to get that
    \begin{equation*}
        ad+bd^2=s.
    \end{equation*}
    Solving both of these gives the desired constants.
\end{proof}
We can also find the trace of the operator $R$:
\begin{lemma}\label{lemma:R_trace}
    We have that $\tr[R]=Kd_qb^2$ and $\tr[RF]=Kd_q^2b$. Here $F$ denotes the full swap between $C_1Q_1B_1$ and $C_2Q_2B_2$.
\end{lemma}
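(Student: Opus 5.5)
The plan is to compute both traces by factoring over the three tensor factors $C$, $Q$, $B$. Two copies of $C\otimes Q\otimes B$ give $(C_1\otimes C_2)\otimes(Q_1\otimes Q_2)\otimes(B_1\otimes B_2)$. On this space the operator $R$ is, for each label $c$, a product of an operator on the $C$-registers, one on the $Q$-registers and one on the $B$-registers. The full swap $F$ between $C_1Q_1B_1$ and $C_2Q_2B_2$ also factorizes as $F=F_{C_1,C_2}\otimes F_{Q_1,Q_2}\otimes F_{B_1,B_2}$. Traces of tensor products are products of traces, so both quantities reduce to elementary traces of identities, swaps and rank-one projectors on each pair of registers.

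For $\tr[R]$, write
\[
\tr[R]=\sum_{c=1}^K \tr\big[|c\rangle\langle c|_{C_1}\otimes|c\rangle\langle c|_{C_2}\big]\cdot\tr[F_{Q_1,Q_2}]\cdot\tr[I_{B_1,B_2}].
\]
The first factor is $1$. The swap on two $d_q$-dimensional registers has trace $d_q$. The identity on $B_1\otimes B_2$ has trace $b^2$. Summing over the $K$ labels gives $Kd_qb^2$.

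For $\tr[RF]$, multiply factor by factor:
\[
\big(|c\rangle\langle c|\otimes|c\rangle\langle c|\big)F_{C_1,C_2}\otimes F_{Q_1,Q_2}^2\otimes F_{B_1,B_2}.
\]
Since $F_{Q_1,Q_2}^2=I_{Q_1,Q_2}$, the $Q$-factor has trace $d_q^2$. The $B$-factor is a swap, with trace $b$. For the $C$-factor, use $\tr[(A\otimes B)F]=\tr[AB]$, which is already used earlier in the paper. This gives $\tr[|c\rangle\langle c|\,|c\rangle\langle c|]=1$. Summing over $c$ yields $Kd_q^2b$.

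No real obstacle is expected. The only point needing care is justifying that the full swap factorizes across the $C$, $Q$ and $B$ subsystems. This holds because swapping the composite systems $C_1Q_1B_1\leftrightarrow C_2Q_2B_2$ is the same as swapping each pair of corresponding subsystems simultaneously, which can be checked on product basis vectors $|c,q,\beta\rangle\otimes|c',q',\beta'\rangle$. It also requires keeping the register reordering consistent, which is harmless because traces are invariant under this reordering.
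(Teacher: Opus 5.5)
Your proposal is correct and follows the paper's proof essentially step for step: both factor $R$ and the full swap $F=F_C\otimes F_Q\otimes F_B$ across the $C$, $Q$, $B$ registers, compute the per-label traces $1\cdot d_q\cdot b^2$ and $1\cdot d_q^2\cdot b$, and sum over the $K$ labels. Your check of the swap factorization on product basis vectors is a small justification the paper leaves implicit.
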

\begin{proof}
    For a fixed value of $c$, the term inside the sum of $R$ is given by $|c\rangle\langle c|_{C_1}\otimes |c\rangle\langle c|_{C_2}\otimes F_{Q_1,Q_2}\otimes I_{B_1,B_2}$. This has trace equal to $\tr(F_{Q_1,Q_2})\cdot\tr[I_{B_1,B_2}]=d_qb^2$. Summing over all $K$ labels gives $\tr[R]=Kd_qb^2$. For the second, we can view the overall swap operator as a tensor of swaps $F=F_C\otimes F_Q\otimes F_B$. For a fixed label $c$, we have that $\tr[c\rangle\langle c|\otimes |c\rangle\langle c|F_C]=1$. Next, $\tr[F_{Q_1,Q_2}F_Q]=\tr[I_{Q_1,Q_2}]=d_q^2$. Finally, $\tr[I_{B_1,B_2}F_B]=\tr[F_B]=b$. Then, summing over all $K$ labels gives $\tr[RF]=Kd_q^2b$.
\end{proof}
We can now give the proof of \cref{lemma:first_moment_upsilon}.
\begin{proof}
    Starting from \cref{lemma:upsilon_expression}, we get that $\Upsilon_U(X)=\tr[R(UXU^\dagger)^{\otimes 2}]$. Then, we can take expectations, and apply \cref{lemma:traceless_second_moment}, to get that
    \begin{equation*}
        \mathbb{E}_U[\Upsilon_U(X)]=\tr\left[R\left(-\frac{s}{d(d^2-1)}I+\frac{s}{d^2-1}F\right)\right].
    \end{equation*}
    We can bound the operators with \cref{lemma:R_trace} to get that this quantity equals
    \begin{align*}
        \mathbb{E}_U[\Upsilon_U(X)]&=-\frac{s}{d(d^2-1)}Kd_qb^2+\frac{s}{d^2-1}Kd_q^2b\\
        &=\frac{s}{(d^2-1)}b+\frac{s}{d^2-1}dd_q\\
        &=\frac{d_qd-b}{d^2-1}\|X\|_2^2.
    \end{align*}
    To see the final inequality, note $b=\frac{d}{Kd_q}$. Then, we can rewrite $d_qd-b=d(d_q-1/(Kd_q))$. Assuming $Kd_q^2\geq 2$, we have that $q-1/(Kd_q)\geq d_q/2$, and hence $dd_q-b\geq dd_q/2$. Then, using that $d^2-1\leq d^2$, we get the stated inequality.
\end{proof}
\subsection{Permutation Facts}
Before proving the second moment bound of our operator $\Upsilon_U$, we will need to analyze permutations and Weingarten coefficients, like Doosti \etal. Hence, we lay out some results we will use. The first result we use is the same Weingarten expression for the expectation, also courtesy of Mele \cite{DBLP:journals/quantum/Mele24}:
\begin{lemma}\label{lemma:Weingarten_expression}
    Given a permutation $\pi\in S_k$, let $P_\pi\in \mathbb{C}^{d_k\times d_k}$ be the permutation operator and $Wg(\pi,d)$ be the Weingarten coefficient. Then, for a state M,
    \begin{equation*}
        \mathbb{E}_U[U^{\otimes k}MU^{\dagger\otimes k}]=\sum_{\pi,\kappa\in S_k}Wg(\pi^{-1}\kappa,d)\tr[P_\kappa^\dagger M]P_\pi.
    \end{equation*}
\end{lemma}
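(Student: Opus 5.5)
The plan is the standard ``twirl-and-invert-the-Gram-matrix'' argument, organized in three steps.

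First, set $\mathcal{T}(M)\eqdef\mathbb{E}_U[U^{\otimes k}MU^{\dagger\otimes k}]$. By left- and right-invariance of the Haar measure, $V^{\otimes k}\mathcal{T}(M)V^{\dagger\otimes k}=\mathcal{T}(M)$ for every unitary $V$. So $\mathcal{T}(M)$ lies in the commutant of $\{V^{\otimes k}\}$. By Schur--Weyl duality (as in \cite{DBLP:journals/quantum/Mele24}, the same input already used in \cref{lemma:Haar_pair}), this commutant is $\mathrm{span}\{P_\pi:\pi\in S_k\}$. Hence we may write $\mathcal{T}(M)=\sum_{\pi}c_\pi P_\pi$ for some coefficients $c_\pi$ depending on $M$.

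Second, we pin down the coefficients by taking inner products against the $P_\kappa$. Each $P_\kappa$ commutes with $U^{\otimes k}$. Using cyclicity of the trace inside the expectation, we get
\begin{equation*}
\tr[P_\kappa^\dagger\mathcal{T}(M)]=\mathbb{E}_U\tr[U^{\dagger\otimes k}P_\kappa^\dagger U^{\otimes k}M]=\tr[P_\kappa^\dagger M].
\end{equation*}
Substituting the expansion gives the linear system
\begin{equation*}
\sum_\pi G_{\kappa\pi}c_\pi=b_\kappa,\qquad G_{\kappa\pi}\eqdef\tr[P_\kappa^\dagger P_\pi]=d^{\#\mathrm{cyc}(\kappa^{-1}\pi)},\qquad b_\kappa\eqdef\tr[P_\kappa^\dagger M].
\end{equation*}
The Gram matrix $G$ depends only on $\kappa^{-1}\pi$. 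The Weingarten function is by definition the function $Wg(\cdot,d)$ on $S_k$ whose associated matrix $Wg(\pi^{-1}\kappa,d)$ is the inverse of $G$, or its Moore--Penrose pseudo-inverse when $d<k$. Inverting then gives $c_\pi=\sum_\kappa Wg(\pi^{-1}\kappa,d)\,b_\kappa$. Plugging back into the expansion of $\mathcal{T}(M)$ yields exactly the claimed formula.

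Third, the main obstacle is the degenerate regime $d<k$. There the $P_\pi$ are linearly dependent, $G$ is singular, and the $c_\pi$ are not unique. My approach is to use the pseudo-inverse definition of $Wg$ and check two facts. The vector $b$ lies in the range of $G$, since $b_\kappa=\tr[P_\kappa^\dagger\mathcal{T}(M)]=\sum_\pi G_{\kappa\pi}c_\pi$ for any valid $c$. Consequently $c^\ast\eqdef G^{+}b$ is also a valid solution of the system. Any two solutions differ by a vector in $\ker G$, which is exactly the set of coefficient vectors with $\sum_\pi c_\pi P_\pi=0$. Therefore $c^\ast$ produces the same operator $\mathcal{T}(M)$.

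In the regime we actually use later ($k=4$ with $d\ge 4$), $G$ is invertible and this subtlety does not arise.
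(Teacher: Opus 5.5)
Your proof is correct. The paper gives no proof of this lemma and simply imports it from Mele's tutorial \cite{DBLP:journals/quantum/Mele24}. Your argument is the standard derivation behind that citation: Haar invariance places the twirl in the commutant of $\{U^{\otimes k}\}$, Schur--Weyl identifies that commutant with $\mathrm{span}\{P_\pi\}$, and pairing with $P_\kappa$ reduces everything to inverting the Gram matrix $G_{\kappa\pi}=d^{\#\mathrm{cyc}(\kappa^{-1}\pi)}$. So you are supplying a self-contained argument for what the paper treats as a black box.

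The step you label ``by definition'' deserves one line of justification. You need that $G^{-1}$, or $G^{+}$ when $d<k$, again has the form $Wg(\pi^{-1}\kappa,d)$ for a class function $Wg$. This holds because $G$ is the real symmetric matrix of multiplication by the central element $\sum_\sigma d^{\#\mathrm{cyc}(\sigma)}\sigma\in\mathbb{C}[S_k]$. Hence $G^{+}$ is a polynomial in $G$ and inherits the same central convolution structure. Centrality, together with the fact that every permutation in $S_k$ is conjugate to its inverse, also gives $Wg(\sigma,d)=Wg(\sigma^{-1},d)$. This is why the paper can later apply the formula with the roles of $\pi$ and $\kappa$ interchanged.

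Your handling of the singular case is sound. You show $b$ lies in the range of $G$ and that $\ker G=\{c:\sum_\pi c_\pi P_\pi=0\}$; the latter holds because $G$ is a positive semidefinite Gram matrix, so $Gc=0$ iff $\|\sum_\pi c_\pi P_\pi\|_2^2=c^\dagger Gc=0$.
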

Now, we assume our permutations are in the symmetric group $S_4$, and introduce some notation.
\begin{definition}[Permutation notation]
    For a permutation $\pi$, we let $|\cyc(\pi)|$ be the number of cycles in the permutation (including singleton cycles). Further, let $|\pi|=4-|\cyc(\pi)|$ be the \emph{transposition length}.
\end{definition}
For example, $\pi=(12)(3)(4)$ has $|\cyc(\pi)|=3$ and $|\pi|=1$. Given four copies of a state $\rho$, with each copy defined on the space $H=C\otimes Q\otimes B$, and for a given permutation $\pi\in S_4$, we can define a permutation operator that swaps the copies according to the permutation $\pi$:
\begin{equation*}
    P_\pi^H=P_\pi^C\otimes P_\pi^Q\otimes P^B_\pi,
\end{equation*}
where the operators on the right swap over specific registers of the overall quantum states $\rho$. For instance, if $\pi=(12)(3)(4)$, the the operator $P_\pi$ swaps the first two copies of $\rho$, and keeps the third and fourth copies untouched (that is, it is the standard swap operator). We specifically let $\alpha=(12)(34)$, the double swap permutation. Note when expanded to four copies of a state, we can denote $R$ from \cref{lemma:upsilon_expression} as $R_{1,2}$ and $R_{3,4}$ to operate on the pairs. Then, we let $\mathbf{R}=R_{1,2}\otimes R_{3,4}$. Then, we have that our second moment is given by
\begin{equation*}
    \Upsilon_U(X)^2=\tr[\mathbf{R}(UXU^\dagger)^{\otimes 4}].
\end{equation*}
We then give the following result on the trace of a permutation.
\begin{lemma}\label{lemma:permutation_trace}
    Let $\pi\in S_4$ be a permutation, and $\gamma$ a cycle in $\cyc(\pi)$. Let $\ell(\gamma)$ denote the length of the cycle. Let $A$ be an operator on a finite dimensional Hilbert space. Then, we have
    \begin{equation*}
        \tr[P_\pi A^{\otimes 4}]=\prod_{\gamma\in\cyc(\pi)}\tr[A^{\ell(\gamma)}].
    \end{equation*}
\end{lemma}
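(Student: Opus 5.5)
The identity is the standard trace formula for permutation operators. I would prove it by expanding everything in a fixed orthonormal basis $\{|e_j\rangle\}$ of the underlying Hilbert space. The first step is to fix a convention for $P_\pi$, namely
\[
P_\pi\,|j_1\rangle\otimes\cdots\otimes|j_4\rangle=|j_{\pi^{-1}(1)}\rangle\otimes\cdots\otimes|j_{\pi^{-1}(4)}\rangle .
\]
The other convention replaces $\pi$ by $\pi^{-1}$. Since $\pi$ and $\pi^{-1}$ have the same cycle type, the right-hand side of the claimed identity does not depend on this choice.

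The computation proceeds as follows.
\begin{enumerate}
\item Write $\tr[P_\pi A^{\otimes 4}]$ as
\[
\sum_{j_1,\dots,j_4}\langle j_1\cdots j_4|P_\pi A^{\otimes4}|j_1\cdots j_4\rangle .
\]
\item Use the fact that the adjoint of $P_\pi$ acts on bras by the inverse permutation. Each summand then becomes a product of matrix elements $\prod_{k=1}^4 A_{j_{\pi(k)}\,j_k}$, or the same product with $\pi$ replaced by $\pi^{-1}$, depending on the convention.
\item Decompose $\{1,2,3,4\}$ into the cycles of $\pi$. The indices $j_k$ belonging to different cycles appear in disjoint factors, so the sum over all $(j_1,\dots,j_4)$ factorizes into a product over cycles $\gamma\in\cyc(\pi)$.
\item For a single cycle $\gamma=(k_1\,k_2\,\cdots\,k_\ell)$, the corresponding factor is
\[
\sum_{j_{k_1},\dots,j_{k_\ell}}A_{j_{k_2}j_{k_1}}A_{j_{k_3}j_{k_2}}\cdots A_{j_{k_1}j_{k_\ell}} .
\]
This is exactly the closed matrix-product chain $\tr[A^{\ell}]$, with the product read in one direction or the other; cyclicity of the trace makes the direction irrelevant.
\end{enumerate}

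As sanity checks, the identity permutation gives $(\tr A)^4$, and a transposition $(12)$ recovers the swap trick $\tr[F(A\otimes A)]=\tr[A^2]$ already used in \cref{lemma:Haar_pair}, multiplied by $(\tr A)^2$.

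I expect the main obstacle to be purely bookkeeping. The index order in $A_{j_{\pi(k)}j_k}$ has to chain consistently around each cycle so that one really gets $\tr[A^\ell]$ and not something like $\tr[(A^\top)^\ell]$. This does not matter in the end, since $\tr[(A^\top)^\ell]=\tr[A^\ell]$. For that reason I would handle the order by explicitly checking the $3$-cycle $(123)$ and then arguing inductively along the cycle.

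Nothing specific to $S_4$ is used, so the proof works for general $k$ verbatim. In particular, the lemma can be applied to the register-wise factors $P^C_\pi$, $P^Q_\pi$, $P^B_\pi$ separately.
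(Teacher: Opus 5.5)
Your proof is correct. The paper does not actually prove this lemma. It only calls it a generalization of the swap trick and defers to Collins--Nechita, who state the identity $\tr[P_\pi(A_1\otimes\cdots\otimes A_n)]=\prod_{\gamma}\tr[\prod_{i\in\gamma}A_i]$ for possibly distinct operators, ordered along each cycle. Your basis expansion is the standard derivation of that identity, specialized to $A_1=\cdots=A_4=A$, so the two routes agree in substance.

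The citation buys generality. For distinct $A_i$ the orientation of each cycle genuinely matters, since $\tr[A_1A_2A_3]\neq\tr[A_3A_2A_1]$ in general, so the convention for $P_\pi$ must be fixed. Your version is instead self-contained and makes the convention-independence explicit. That independence is useful downstream: the paper applies the lemma to $\tr[P_\pi^\dagger X^{\otimes 4}]$, which is justified precisely because $P_\pi^\dagger=P_{\pi^{-1}}$ and $\pi^{-1}$ has the same cycle type as $\pi$.

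One small wording fix in step 4. Cyclicity of the trace only takes care of the choice of starting index $k_1$. The direction of reading is handled by $\tr[(A^\top)^\ell]=\tr[A^\ell]$, as you note in your final paragraph. With your stated convention it does not even arise, since reordering the factors as $A_{j_{k_1}j_{k_\ell}}A_{j_{k_\ell}j_{k_{\ell-1}}}\cdots A_{j_{k_2}j_{k_1}}$ gives $(A^\ell)_{j_{k_1}j_{k_1}}$ directly.
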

One can think of this as a special case of the swap property on traces, and it itself is a special case of a result of Collins and Nechita, whom define it for $n$ non-identical operators \cite[Section 2]{Collins_2010}. We also make use of another result by Collins and Nechita, that bounds the Weingarten function \cite[Section 2]{Collins_2010}:
\begin{lemma}\label{lemma:Weingarten_bound}
    For every permutation $\pi\in S_4$, there is an absolute constant $C_W$ such that $|W_g(\pi;d)|\leq C_wd^{-4-|\pi|}$. In other words, $|W_g(\pi;d)|\leq O(d^{-4-|\pi|})$.
\end{lemma}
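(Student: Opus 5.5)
The plan is to treat $Wg(\cdot,d)$ as the inverse, in the group algebra $\mathbb{C}[S_4]$, of the Gram function $G(\sigma)=d^{|\cyc(\sigma)|}=d^{4-|\sigma|}$. This is the standard characterization underlying \cref{lemma:Weingarten_expression}: the coefficients $Wg(\pi^{-1}\kappa,d)$ are exactly those making $\sum_\kappa Wg(\pi^{-1}\kappa,d)\tr[P_\kappa^\dagger P_\tau]=\delta_{\pi\tau}$. Because $k=4$ is fixed, the group $S_4$ is finite, and the only real question is the dependence on $d$. I would therefore split into large $d$, which needs a perturbative argument, and small $d$, which has finitely many cases.

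For large $d$, say $d\ge d_0$ for an absolute $d_0$ such as $48$, factor
\[
G = d^4(\delta_e + A), \qquad A(\sigma)=d^{-|\sigma|}\mathbbm{1}[\sigma\neq e].
\]
In the group algebra, $Wg=d^{-4}\sum_{n\ge 0}(-1)^nA^{*n}$, where $A^{*n}$ is the $n$-fold convolution. Each term is a sum over chains $e=\sigma_0,\sigma_1,\dots,\sigma_n=\pi$ with every step $\tau_j=\sigma_{j-1}^{-1}\sigma_j\neq e$. Such a chain contributes $d^{-\sum_j|\tau_j|}$.

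Two facts about the transposition length control every term. First, it is subadditive, $|\sigma\tau|\le|\sigma|+|\tau|$, since it equals the Cayley-graph distance to $e$; hence $\sum_j|\tau_j|\ge|\pi|$. Second, each nontrivial step has $|\tau_j|\ge 1$, so $\sum_j|\tau_j|\ge n$. There are at most $23^n$ chains of length $n$. Together these give
\[
|A^{*n}(\pi)|\le 23^n d^{-\max(n,|\pi|)}.
\]
Summing over $n$: the terms with $n\le|\pi|\le 3$ contribute $O(d^{-|\pi|})$, and the terms with $n>|\pi|$ form a geometric series $\sum_n (23/d)^n$, which is $O(d^{-|\pi|-1})$ once $d\ge 48$. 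This yields $|Wg(\pi,d)|\le C\,d^{-4-|\pi|}$ for all $d\ge d_0$.

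For $d<d_0$ there are finitely many pairs $(\pi,d)$. For $d\ge 4$ the Gram matrix is invertible and $Wg(\pi,d)$ is a finite number. For $d<4$ the permutation operators are linearly dependent, and $Wg$ is defined via the pseudo-inverse, which is still finite. In either case $d^{4+|\pi|}|Wg(\pi,d)|$ is bounded by the maximum over this finite set, and $C_W$ is taken as the larger of this maximum and the large-$d$ constant.

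The main obstacle I expect is justifying the group-algebra inversion and the convergence of the Neumann series cleanly. This amounts to checking that $\|A\|$, in the $\ell^1$ norm on $\mathbb{C}[S_4]$, is at most $23/d<1$; with that in hand, the combinatorial bound on chains does the rest. As a sanity check, the leading term should reproduce the known asymptotics $Wg(\pi,d)=d^{-4-|\pi|}(\mathrm{Moeb}(\pi)+O(d^{-2}))$ from \cite{collins2006integration}, which confirms that the exponent $-4-|\pi|$ is the right one.
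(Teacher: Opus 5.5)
Your proof is correct, but it takes a different route from the paper. The paper gives no proof of this lemma: it imports the bound from Collins--Nechita \cite{Collins_2010}, which rests on the Collins--\'Sniady asymptotics $Wg(\pi,d)=d^{-4-|\pi|}(\mathrm{Moeb}(\pi)+O(d^{-2}))$ that you use only as a sanity check.

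You derive the order of magnitude directly, and each step holds:
\begin{enumerate}
\item You write the Gram function as $d^4(\delta_e+A)$, where $\|A\|_{\ell^1}=6/d+11/d^2+6/d^3\le 23/d$.
\item You invert it by a Neumann series in the Banach algebra $\ell^1(S_4)$.
\item You bound each chain $\tau_1\cdots\tau_n=\pi$ by $d^{-\max(n,|\pi|)}$, using subadditivity of $|\cdot|$ together with $|\tau_j|\ge 1$.
\item You absorb the finitely many small $d$ into the constant, including $d<4$, where $Wg$ is the pseudo-inverse.
\end{enumerate}

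The citation gives more: the exact leading coefficient $\mathrm{Moeb}(\pi)$ and a relative error $O(d^{-2})$ instead of your $O(d^{-1})$. However, it is an asymptotic statement as $d\to\infty$. To get a constant that is truly uniform in $d$, which is how the lemma is used in the proof of \cref{lemma:upsilon_second_moment} with $d=Kd_qb$ arbitrary, one still needs a finite small-$d$ check like yours.

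Your route has several advantages:
\begin{itemize}
\item It is elementary, with no representation theory or M\"obius functions.
\item It gives explicit, if crude, constants.
\item Your $C_W$ is uniform over $\pi$ as well. The paper needs this, since it sums over all $\pi,\kappa$, even though the statement's quantifier order is looser.
\item It extends verbatim to any fixed $k$, replacing $23$ by $k!-1$.
\end{itemize}

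One optional refinement: $\mathrm{sgn}(\pi)=\prod_j\mathrm{sgn}(\tau_j)$ forces $\sum_j|\tau_j|\equiv|\pi|\pmod 2$. With this, your expansion also recovers the $O(d^{-2})$ correction. The surviving leading terms are then signed counts of geodesic factorizations, which is exactly $\mathrm{Moeb}(\pi)$.
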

We can prove one more small result about the composition of permutations:
\begin{lemma}\label{lemma:cycle_permutation}
    Let $\pi_1\in S_4$ be a permutation with no fixed points, and $\pi_2\in S_4$ be a permutation. Then, letting, $\pi_3=\pi_1^{-1}\pi_2$, we have that
    \begin{equation*}
        |\cyc(\pi_2|)|\leq 2+|\pi_3|.
    \end{equation*}
\end{lemma}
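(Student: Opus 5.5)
The plan is to rewrite everything in terms of the transposition length $|\cdot|$ and use that it is a word-length (Cayley) metric on $S_4$. Since $|\cyc(\pi_2)| = 4-|\pi_2|$ by definition, the claim $|\cyc(\pi_2)|\leq 2+|\pi_3|$ is equivalent to
\begin{equation*}
    |\pi_2| \geq 2-|\pi_3|.
\end{equation*}
It therefore suffices to combine two facts: a lower bound $|\pi_1|\geq 2$ coming from the fixed-point-free hypothesis, and a triangle inequality relating $|\pi_1|$, $|\pi_2|$ and $|\pi_3|$.

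First, I will recall the standard fact that $|\pi| = 4-|\cyc(\pi)|$ equals the minimum number of transpositions whose product is $\pi$. This holds because a cycle of length $\ell$ is a product of $\ell-1$ transpositions, and multiplying by a single transposition changes the number of cycles by exactly $\pm1$. Two consequences follow. The length is subadditive, $|\sigma\tau|\leq|\sigma|+|\tau|$, since we can concatenate minimal words. It is also inversion invariant, $|\pi^{-1}|=|\pi|$, since $\pi$ and $\pi^{-1}$ have the same cycle type.

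Second, I will classify fixed-point-free permutations of $S_4$. Every cycle must have length at least $2$ and the lengths must sum to $4$, so the only cycle types are a $4$-cycle, with $|\cyc|=1$ and $|\pi_1|=3$, and a product of two disjoint transpositions, with $|\cyc|=2$ and $|\pi_1|=2$. In either case $|\pi_1|\geq 2$.

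Finally, from $\pi_3=\pi_1^{-1}\pi_2$ we get $\pi_1=\pi_2\pi_3^{-1}$. Subadditivity and inversion invariance then give $|\pi_1|\leq|\pi_2|+|\pi_3|$. Hence
\begin{equation*}
    |\cyc(\pi_2)| = 4-|\pi_2| \leq 4-|\pi_1|+|\pi_3| \leq 2+|\pi_3|.
\end{equation*}
The only step requiring any care is justifying the transposition-length interpretation, and in particular the fact that multiplying by a transposition changes the cycle count by exactly one, either merging two cycles or splitting one. This is standard and I will cite it rather than reprove it in detail.
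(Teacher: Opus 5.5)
Your proof is correct and is essentially the paper's argument: the paper also applies the triangle inequality for the transposition length to get $|\pi_2|\geq|\pi_1|-|\pi_3|$, uses that a fixed-point-free $\pi_1\in S_4$ is a $4$-cycle or a double transposition so that $|\pi_1|\geq 2$, and rearranges using $|\cyc(\pi_2)|=4-|\pi_2|$. Your word-length justification of subadditivity and inversion invariance fills in a step the paper only asserts.
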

\begin{proof}
    Note since $|\pi|$ satisfies the triangle inequality, and hence we have that
    \begin{equation*}
        |\pi_2|=|\pi_1\pi_3|\geq |\pi_1|-|\pi_3|.
    \end{equation*}
    Since $\pi_1$ has no fixed points it either is two swap cycles, or a cycle between all four elements. Hence, $|\cyc(\pi_1)\leq 2$. This implies $|\pi_1|\geq 2$. We get the result by rearranging.
\end{proof}
We now need a specific result about the action of the permutation operator with our operator $\mathbf{R}$.
\begin{lemma}\label{lemma:R_permutation_bound}
    For every $\pi\in S_4$, we have that $|\tr[\mathbf{R}P_\pi]|\leq K^2d_q^4b^{|\cyc(\pi)|}$.
\end{lemma}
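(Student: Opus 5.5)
The plan is to exploit the fact that both $\mathbf{R}$ and $P_\pi$ factor across the three registers $C$, $Q$, $B$, so the trace splits into a product of three traces, each of which can be bounded separately. By the definition of $R$ in \cref{lemma:upsilon_expression}, we have
\begin{equation*}
    \mathbf{R}=R_{1,2}\otimes R_{3,4}=\sum_{c,c'=1}^K\left(\Pi_c^{C_1}\otimes\Pi_c^{C_2}\otimes\Pi_{c'}^{C_3}\otimes\Pi_{c'}^{C_4}\right)\otimes P^Q_\alpha\otimes I_{B_1B_2B_3B_4},
\end{equation*}
where $\alpha=(12)(34)$ and $P^Q_\alpha=F_{Q_1,Q_2}\otimes F_{Q_3,Q_4}$. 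I would first reorder the tensor factors, as the paper does elsewhere, so that all $C$ registers are grouped together, and likewise all $Q$ and all $B$ registers. Then, using $P_\pi^H=P_\pi^C\otimes P_\pi^Q\otimes P_\pi^B$ and the multiplicativity of the trace over tensor products, we get
\begin{equation*}
    \tr[\mathbf{R}P_\pi]=\sum_{c,c'}\tr\left[(\Pi_c\otimes\Pi_c\otimes\Pi_{c'}\otimes\Pi_{c'})P^C_\pi\right]\cdot\tr[P^Q_\alpha P^Q_\pi]\cdot\tr[P^B_\pi].
\end{equation*}

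Next I bound each factor. For the $B$ factor, the trace of a permutation operator equals the dimension raised to the number of cycles; this is \cref{lemma:permutation_trace} with $A=I_B$. Hence $\tr[P^B_\pi]=b^{|\cyc(\pi)|}$. For the $Q$ factor, we have $P^Q_\alpha P^Q_\pi=P^Q_{\alpha\pi}$, so $\tr[P^Q_\alpha P^Q_\pi]=d_q^{|\cyc(\alpha\pi)|}\leq d_q^4$, since a permutation in $S_4$ has at most four cycles.

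For the $C$ factor, the operator $\Pi_c\otimes\Pi_c\otimes\Pi_{c'}\otimes\Pi_{c'}$ is the rank-one projector onto the computational basis vector $|c,c,c',c'\rangle$. Its trace against $P^C_\pi$ is therefore $\langle c,c,c',c'|P^C_\pi|c,c,c',c'\rangle$. This quantity is $1$ if the permuted tuple equals the original and $0$ otherwise, so in either case it lies in $\{0,1\}$. Summing over all pairs $(c,c')$ gives at most $K^2$.

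Every factor is a nonnegative real number, so multiplying the three bounds yields $|\tr[\mathbf{R}P_\pi]|\leq K^2d_q^4b^{|\cyc(\pi)|}$. I do not expect a real obstacle here. The only care needed is bookkeeping: checking that the reordering of tensor factors is consistent between $\mathbf{R}$ and $P_\pi$, and that $P_\pi$ acts diagonally (as a permutation of copies) on each register type. Both follow directly from the definition of $P_\pi^H$ given just before \cref{lemma:permutation_trace}.
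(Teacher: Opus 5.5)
Your proposal is correct and follows the paper's proof. You factor $\mathbf{R}=\mathbf{R}_C\otimes P^Q_\alpha\otimes I_B$ and $P_\pi=P^C_\pi\otimes P^Q_\pi\otimes P^B_\pi$, split the trace, and bound the factors by $K^2$, $d_q^{|\cyc(\alpha\pi)|}\le d_q^4$, and $b^{|\cyc(\pi)|}$. The only difference is the classical factor: the paper bounds it by unitarity of $P^C_\pi$ against the rank-$K^2$ projector $\mathbf{R}_C$, while you compute it directly as a count of basis vectors $|c,c,c',c'\rangle$ fixed by $\pi$, which is slightly more explicit.
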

\begin{proof}
    To write out the operators explicitly, let us denote
    \begin{equation*}
        \mathbf{R}_C=\sum_{c,d}|c,c,d,d\rangle\langle c,c,d,d,|_{C_1,C_2,C_3,C_4},
    \end{equation*}
    which is a projector of rank $K^2$. Then, recall $\mathbf{R}=R_{1,2}\otimes R_{3,4}$ which factors across the registers as $\mathbf{R}=\mathbf{R}_C\otimes P^Q_\alpha\otimes I_{B_1,B_2,B_3,B_4}$. Then, we can factor this among the traces of $P_\pi$ to get
    \begin{equation*}
        \tr[\mathbf{R}P_\pi]=\tr[\mathbf{R}_CP_\pi^C]\tr[P_\alpha^QP_\pi^Q]\tr[P_\pi^B].
    \end{equation*}
    We now aim to bound each individual term. Since each permutation operator is also a unitary, we have that
    \begin{equation*}
        |\tr[\mathbf{R}_CP_\pi^C]|\leq |\tr[\mathbf{R}_C]|=K^2.
    \end{equation*}
    For the second term, we apply the cycle trace identity to get that
    \begin{equation*}
        \tr[P_\alpha^QP_\pi^Q]=\tr[P_{\alpha\pi}^Q]=d_q^{|\cyc(\alpha\pi)|}\leq d_q^4,
    \end{equation*}
    where implicitly, the permutations are working on the operator $I_Q^{\otimes 4}$. With the same argument, we get that
    \begin{equation*}
        \tr[P_\pi^B]=b^{|\cyc(\pi)|}.
    \end{equation*}
    Putting this all together, we get that $|\tr[\mathbf{R}P_\pi]|\leq K^2d_q^4b^{|\cyc(\pi)|}$.
\end{proof}
With all this setup out of the way, we can finally prove the second moment bound, \cref{lemma:upsilon_second_moment}.
\begin{proof}
    Recall by definition, $\Upsilon_U(X)^2=\tr[\mathbf{R}(UXU^\dagger)^{\otimes 4}]$, and thus taking expectations, we have that
    \begin{equation*}
        \mathbb{E}_U[\Upsilon_U(X)^2]=\tr[\mathbf{R}\mathbb{E}_U(UXU^\dagger)^{\otimes 4}].
    \end{equation*}
    Using \cref{lemma:Weingarten_expression}, we get an expression for this expectation
    \begin{equation*}
        \mathbb{E}_U[(UXU^\dagger)^{\otimes 4}]=\sum_{\pi,\kappa\in S_k}Wg(\pi^{-1}\kappa,d)\tr[P_\pi^\dagger X^{\otimes 4}]P_\kappa.
    \end{equation*}
    Then, applying the operator $\mathbf{R}$ and taking traces, we can recover $\Upsilon^2_U(X)$. Note on the right hand side of the above expression, the only nonconstant term is the $P_\pi$ operator, so the trace and operator can be applied to just this term:
    \begin{equation*}
        \mathbb{E}_U[\Upsilon_U(X)^2]=\sum_{\pi,\kappa\in S_k}Wg(\pi^{-1}\kappa,d)\tr[P_\pi^\dagger X^{\otimes 4}]\tr[\mathbf{R}P_\kappa].
    \end{equation*}
    We can use \cref{lemma:permutation_trace} to bound the middle term first. This gives that
    \begin{equation*}
        \tr[P_\pi^\dagger X^{\otimes 4}]=\prod_{\gamma\in\cyc(\pi)}\tr[X^{l(\gamma)}].
    \end{equation*}
    Since $X$ is traceless, this will only be nonzero for permutations with no singletons, which is either two cycles of length two, or a cycle of length four. In the first case, we get $\tr[X^2]^2=\|x\|_2^4$. In the latter, we get $\tr[X^4]$, which is at most $\|X\|_2^4$ for Hermitian $X$. Hence in either case, we get
    \begin{equation*}
        \tr[P_\pi^\dagger X^{\otimes 4}]\leq \|X\|_2^4.
    \end{equation*}
    Using \cref{lemma:permutation_trace}, we can also bound the latter term, which is at most $K^2d_q^4b^{|\cyc(\kappa)|}$. For some $\pi$ that contributes, and for some arbitrary $\kappa$, we can set $\upsilon=\pi^{-1}\kappa$. Then, by \cref{lemma:cycle_permutation}, we get that $|\tr[\mathbf{R}P_\kappa]|\leq K^2d_q^4b^{2+|\upsilon|}\leq K^2d_q^{4+|\upsilon|}b^{2+|\upsilon|}$, since $d_q\geq 1$. We can use the Weingarten bound in \cref{lemma:Weingarten_bound} to get that $|Wg(\upsilon;d)|\leq C_Wd^{-4-|\upsilon|}$. Hence using this and that $d=Kd_qb$, we get that
    \begin{align*}
        |Wg(\upsilon;d)||\tr[\mathbf{R}P_\kappa]|&\leq C_W\frac{K^2d_q^{4+|\upsilon|}b^{2+|\upsilon|}}{d^{4+|\upsilon|}}\\
        &=C_W\frac{1}{K^{2+|\upsilon|}b^2}\\
        &\leq C_W\frac{1}{K^2b^2}=C_W\frac{d_q^2}{d^2}.
    \end{align*} We can then sum over all possible choices of $\pi$ and $\kappa$. Only $9$ such choices of $\pi$ are nonzero, and there are $24$ choices of $\kappa$, so summing over all of these, which is a constant amount, we get that
    \begin{equation*}
        \mathbb{E}_U[\Upsilon_U(X)^2]\leq O\left(\frac{d_q^2}{d^2}\|X\|_2^4\right).
    \end{equation*}
    The explicit constant can be taken as $\tilde{C}=9\cdot24\cdot C_W$.
\end{proof}
\end{document}